\definecolor{weborange}{rgb}{.8,.3,.3}
\definecolor{webblue}{rgb}{0,0,.8}
\definecolor{internallinkcolor}{rgb}{0,.5,0}
\definecolor{externallinkcolor}{rgb}{0,0,.5}
\newcommand{\remove}[1]{}
\newcommand{\Draft}[1]{\ifdefined\IsDraft\texttt{ #1} \fi}
\newcommand{\TLLNCS}[2]{\ifdefined\IsLLNCS#1\else #2 \fi}
\newcommand{\authnote}[2]{{\bf [{\color{red} #1's Note:} {\color{blue} #2}]}}
\newcommand{\authnote}[2]{}
\newcommand{\deleted}[1]{{\color{blue} ~Deleted:~{\color{red} #1}}}
\newcommand{\deleted}[1]{}
\newcommand{\sdotfill}{\textcolor[rgb]{0.8,0.8,0.8}{\dotfill}} 
\newenvironment{algorithm}{\begin{algo}}{\vspace{-\topsep}\sdotfill\end{algo}}
\newenvironment{experiment}{\begin{expr}}{\vspace{-\topsep}\sdotfill\end{expr}}
\newcommand{\Ensuremath}[1]{\ensuremath{#1}\xspace}
\newcommand{\MathAlg}[1]{\mathsf{#1}}
\newcommand{\RND}{{\mathsf{rnd}}}
\newcommand{\rnd}[1]{\mathsf{rnd}_{#1}}
\newcommand{\MathAlgX}[1]{\Ensuremath{\MathAlg{#1}}}
\newcommand{\ie}  {i.e.,\xspace}
\newcommand{\eg}  {e.g.,\xspace}
\newcommand{\wrt} {with respect to\xspace}
\newcommand{\wlg} {without loss of generality\xspace}
\newcommand{\cf}{cf.,\xspace}
\newcommand{\set}[1]{\ens{#1}}
\newcommand{\floor}[1]{\left \lfloor#1 \right \rfloor}
\newcommand{\vct}{{\bf v}}
\newcommand{\ith}{\Ensuremath{i^{\rm th}}}
\newcommand{\Lap}[1]{\mathsf{Lap}(#1)}
\newcommand{\eqdef}{:=}
\newcommand{\R}{{\mathbb R}}
\newcommand{\N}{{\mathbb{N}}}
\newcommand{\Z}{{\mathbb Z}}
\newcommand{\zo}{\{0,1\}}
\newcommand{\zs}{{\zo^\ast}}
\newcommand{\ot}{\set{1,2}}
\newcommand{\eps}{\varepsilon}
\newcommand{\la}{\gets}
\newcommand{\poly}{\operatorname{poly}}
\newcommand{\polylog}{\operatorname{polylog}}
\newcommand{\loglog}{\operatorname{loglog}}
\newcommand{\negl}{\operatorname{neg}}
\newcommand{\Supp}{\operatorname{Supp}}
\newcommand{\Img}{\operatorname{Im}}
\newcommand{\MathFam}[1]{\mathcal{#1}}
\newcommand{\Rng}{\MathFam{R}}
\renewcommand{\P}{\class{P}}
\renewcommand{\cref}{\Cref}
	\newaliascnt{claiml}{theorem}
	\newtheorem{claiml}[claiml]{Claim}
	\renewenvironment{claim}{\begin{claiml}}{\end{claiml}}
	\newtheorem{theorem}{Theorem}[section]
	\newaliascnt{lemma}{theorem}
	\newtheorem{lemma}[lemma]{Lemma}
	\newaliascnt{claim}{theorem}
	\newtheorem{claim}[claim]{Claim}
	\newaliascnt{corollary}{theorem}
	\newtheorem{corollary}[corollary]{Corollary}
	\newaliascnt{proposition}{theorem}
	\newtheorem{proposition}[proposition]{Proposition}
	\newaliascnt{conjecture}{theorem}
	\newaliascnt{definition}{theorem}
	\newtheorem{definition}[definition]{Definition}
	\newaliascnt{remark}{theorem}
	\newtheorem{remark}[remark]{Remark}
	\newaliascnt{example}{theorem}
\crefname{lemma}{Lemma}{Lemmas}
\crefname{figure}{Figure}{Figures}
\crefname{claim}{Claim}{Claims}
\crefname{corollary}{Corollary}{Corollaries}
\crefname{proposition}{Proposition}{Propositions}
\crefname{conjecture}{Conjecture}{Conjectures}
\crefname{definition}{Definition}{Definitions}
\crefname{remark}{Remark}{Remarks}
\crefname{exmaple}{Example}{Examples}
\newcommand{\stepref}[1]{Step \ref{#1}}
\newaliascnt{construction}{theorem}
\crefname{construction}{Construction}{Constructions}
\newaliascnt{fact}{theorem}
\newtheorem{fact}[fact]{Fact}
\crefname{fact}{Fact}{Facts}
\newaliascnt{notation}{theorem}
\newtheorem{notation}[notation]{Notation}
\crefname{notation}{Notation}{Notation}
\crefname{equation}{Equation}{Equations}
\newaliascnt{proto}{theorem}
\newtheorem{proto}[proto]{Protocol}
\crefname{proto}{protocol}{protocols}
\newaliascnt{algo}{theorem}
\newtheorem{algo}[algo]{Algorithm}
\crefname{algo}{algorithm}{algorithms}
\newaliascnt{expr}{theorem}
\newtheorem{expr}[expr]{Experiment}
\crefname{experiment}{experiment}{experiments}
\def\FullBox{$\Box$}
\def\qed{\ifmmode\qquad\FullBox\else{\unskip\nobreak\hfil
		\penalty50\hskip1em\null\nobreak\hfil\FullBox
		\parfillskip=0pt\finalhyphendemerits=0\endgraf}\fi}
\def\qedsketch{\ifmmode\Box\else{\unskip\nobreak\hfil
		\penalty50\hskip1em\null\nobreak\hfil$\Box$
		\parfillskip=0pt\finalhyphendemerits=0\endgraf}\fi}
\newcommand{\eex}[2]{\Ex_{#1}\left[#2\right]}
\newcommand{\ex}[1]{\Ex\left[#1\right]}
\newcommand{\Ex}{{\mathbf E}}
\renewcommand{\Pr}{{\mathrm {Pr}}}
\newcommand{\pr}[1]{\Pr\left[#1\right]}
\newcommand{\ppr}[2]{\Pr_{#1}\left[#2\right]}
\newcommand{\Ac}{\MathAlgX{A}}
\newcommand{\Pc}{\mathsf{P}}
\renewcommand{\H}{H}
\newcommand{\ens}[1]{\left\{#1\right\}}
\newcommand{\size}[1]{\left|#1\right|}
\newcommand{\out}{\mathsf{out}}
\newcommand{\Uni}{{\mathord{\mathcal{U}}}}
\newcommand{\prob}[1]{\mathsf{\textsc{#1}}}
\newcommand{\SD}{\prob{SD}}
\newcommand{\ppt}{{\sc ppt}\xspace}
\newcommand{\pptm}{{\sc pptm}\xspace}
\newcommand{\polynu}{\ensuremath{{\text{\pptm}}^{\sf NU}}\xspace}
\newcommand{\cH}{{\cal{H}}}
\newcommand{\cA}{\mathcal{A}}
\newcommand{\cB}{\mathcal{B}}
\newcommand{\cU}{\mathcal{U}}
\newcommand{\cD}{\mathcal{D}}
\newcommand{\cR}{\mathcal{R}}
\newcommand{\cT}{\mathcal{T}}
\newcommand{\cP}{\mathcal{P}}
\newcommand{\cC}{\mathcal{C}}
\newcommand{\st}{\text{ s.t.\ }}
\newcommand{\cu}{{\cal{U}}}
\newcommand{\cs}{{\cal{S}}}
 \newcommand{\ocs}{{\overline{\cs}}}
\newcommand{\cx}{{\cal{X}}}
\theoremstyle{definition}
\theoremstyle{plain}
\theoremstyle{remark}
\renewcommand{\c}{\mathbf{c}}
\newcommand{\supp}{\mathrm{supp}}
\newcommand{\wt}[1]{\widetilde{#1}}
\newcommand{\A}{\mathcal{A}}
\renewcommand{\S}{\mathcal{S}}
\renewcommand{\H}{\mathcal{H}}
\newcommand{\wb}[1]{\overline{#1}}
\renewcommand{\P}{\mathcal{P}}
\newcommand{\rmv}[1]{\setminus  #1 }
\newcommand{\Biased}{{\mathcal{N}\!\mathsf{on\mathcal{S}imilar}}}
\newcommand{\Unbiased}{{\mathcal{S}\mathsf{imilar}}}
\newcommand{\np}{n}
\newcommand{\hp}{\chi}
\newcommand{\GvAttack}{\mathsf{MartAttack}}
\newcommand{\DpAttack}{\mathsf{DpAttack}}
\newcommand{\ApproxAttack}{\mathsf{SingAttack}}
\newcommand{\aux}{{\mathsf{NoJump}}}
\newcommand{\hist}{{\mathsf{agt}}}
\newcommand{\backup}{\mathsf{Bckp}}
\newcommand{\AvgBackup}{\mathsf{AvgBckp}}
\newcommand{\cQ}{\mathcal{Q}} 
\newcommand{\adv}{\mathsf{adv}}
\newcommand{\trigg}{\mathsf{trig}} 
\newcommand{\coef}[1]{\mathsf{coef}_{\np}(k,#1)}
\newcommand{\ks}{{k^\ast}}
\newcommand{\bS}{{\mathbb{S}}}
\newcommand{\bE}{{\mathbb{E}}}
\newcommand{\LapExp}{\mathsf{LapExp}}
\newcommand{\mh}{{\setminus h}}
\newcommand{\mH}{{\setminus H}}
\newcommand{\tsh}{\mathsf{tsh}}
\newcommand{\party}{b}	
\newcommand{\bj}{{1-\party}}
\newcommand{\noh}{s}
\newcommand{\Inote}[1]{\authnote{Iftach}{#1}}
\newcommand{\Nnote}[1]{\authnote{Nikos}{#1}}
\newcommand{\Enote}[1]{\authnote{Eran}{#1}}
\newcommand{\enote}{\Enote}
\title{Tighter Bounds on   Multi-Party Coin Flipping   via \\Augmented Weak Martingales and Differentially Private   Sampling\footnote{An extended abstract of this work appears in the 59th Annual IEEE Symposium on 
Foundations of Computer Science (FOCS'2018).} \Draft{\\{\small \sc Working Draft: Please Do Not Distribute}}}
  \author{Amos Beimel\thanks{Department of Computer Science, Ben Gurion University. 
	E-mail: \texttt{amos.beimel@gmail.com}. Research supported by ISF grant 152/17.}
 	\and Iftach Haitner\thanks{School of Computer Science, Tel Aviv University. E-mail: \texttt{iftachh@cs.tau.ac.il}. Director of the Check Point Institute for Information Security.} %
 	\footnote{Research supported by ERC starting grant 638121.}
 	\and Nikolaos Makriyannis\thanks{School of Computer Science, Tel Aviv University. E-mail: \texttt{n.makriyannis@gmail.com}.}~\footnotemark[3]
 	\and Eran Omri\thanks{Department of Computer Science, Ariel University. E-mail: \texttt{omrier@ariel.ac.il}. Research supported by ISF grant 152/17.}
 	}
\begin{document}

	\sloppy
	  \maketitle

\begin{abstract}
In his seminal work, \citeauthor{Cleve86} [STOC '86] has proved that any $r$-round coin-flipping protocol can be efficiently  biased  by $\Theta(1/r)$. This  lower bound  was met for the two-party case by  \citeauthor*{MoranNS16} [Journal of Cryptology '16], and  the three-party case (up to a $\polylog$ factor) by \citeauthor{HaitnerT17} [SICOMP '17], and was approached for $\np$-party protocols when $\np< \loglog r$ by     \citeauthor*{BHLT17} [SODA '17]. For $\np> \loglog r$, however, the best bias for $\np$-party coin-flipping protocols remains  $O(\np/\sqrt{r})$ achieved by the majority protocol of  \citeauthor*{AwerbuchBCGM1985}  [Manuscript '85].

Our main result is a  tighter lower bound on the bias of   coin-flipping protocols, showing that, for every constant $\eps >0$,  an $r^{\eps}$-party $r$-round coin-flipping protocol can be efficiently biased by $\widetilde{\Omega}(1/\sqrt{r})$.  As far as we know, this  is the first improvement of  \citeauthor{Cleve86}'s bound, and is only $\np=r^{\eps}$ (multiplicative) far from the aforementioned upper bound of \citeauthor{AwerbuchBCGM1985} 

We  prove the above   bound using   two new results that we believe are of independent interest. The first result is that a sequence  of (``augmented'') weak martingales have large gap:  with constant probability there exists two adjacent variables whose gap is  at least the ratio between the gap between the first and last variables and  the square root of the number of variables. This  generalizes over the  result of \citeauthor{CleveI93} [Manuscript '93], who showed that the above holds for strong martingales, and allows in some setting to exploit this gap by efficient algorithms.  We prove the above using   a  novel argument  that does not follow the more complicated approach of  \cite{CleveI93}. The second result is a new sampling algorithm  that uses a differentially private mechanism to minimize  the effect of data divergence.

\end{abstract}
\noindent\textbf{Keywords:} multi-party computation; coin-flipping;  augmented weak martingales; differential privacy; oblivious sampling;



\section{Introduction}\label{sec:intro}
In a coin-flipping protocol, introduced by \citet{Blum83}, the parties wish to output a common (close to) unbiased bit, even though some of the parties may be corrupted and try to bias the output. More formally, such protocols should satisfy the following two properties: first, when all parties are honest (\ie follow the prescribed protocol), they all output the \emph{same} unbiased bit. Second, even when some parties are corrupted (\ie collude and arbitrarily deviate from the protocol), the remaining parties should still output the same bit, and this bit should not be too biased (\ie its distribution should be close to being uniform over $\zo$). We emphasize that
the above requirements stipulate  
that the honest parties should \emph{always} output a common bit, regardless of what the corrupted parties do, and in particular  they are not allowed to abort if a cheat was detected.\footnote{Such protocols are typically  addressed as having \emph{guaranteed output delivery}, or, abusing terminology, as \emph{fair}.} Coin flipping is a fundamental  primitive with numerous  applications, and thus  lower bounds on coin flipping  protocols
imply analogous  bounds on many other basic cryptographic primitives,  including other inputless primitives and secure computation of functions that have input (\eg XOR).  

 In his seminal work, \citet{Cleve86} showed that for \emph{any} efficient two-party $r$-round coin-flipping protocol, there exists an efficient  adversarial strategy  that biases the output of the honest party by $\Theta(1/r)$, and his bound  extends to the multi-party case with no honest majority,  via a simple reduction. The above lower bound on coin-flipping protocols was met for the two-party case by \citet*{MoranNS16} and for the three-party case (up to a $\polylog$ factor) by \citet{HaitnerT17}, and was approached for $\np$-party coin-flipping protocols when $\np< \loglog r$ by   \citet*{BHLT17}. For $\np> \loglog r$, however, the smallest bias for $\np$-party coin-flipping  protocol remains  $\Theta(\np/\sqrt{r})$, achieved by the  majority protocol of  \citet*{AwerbuchBCGM1985}.

\subsection{Our Results}
Our main result is  the following  lower bound on the security of  coin-flipping protocols.
\begin{theorem}[Main result, informal]\label{thm:Infmain}
	For any $\np$-party $r$-round  coin-flipping protocol with  $\np^k\geq r$  for some $k \in \N$,  there exists a fail-stop\footnote{Acts honestly, but might abort prematurely. }  adversary running in time $\np^k$, corrupting all parties but one,  that biases the output of honest party  by $1/(\sqrt{r}\cdot \log(r)^k)$. 
\end{theorem}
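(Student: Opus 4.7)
The plan is to implement a fail-stop attack based on the classical martingale-gap approach to coin-flipping lower bounds, but extended to handle many parties and approximate information. Let $X_0,X_1,\ldots,X_r$ denote the sequence of conditional expectations of the honest output given the public transcript through round $i$, which is a (strong) martingale with $X_0\approx 1/2$ and $X_r\in\{0,1\}$ when the protocol is close to unbiased. The classical Cleve--Impagliazzo gap lemma gives that with constant probability some adjacent pair satisfies $|X_{i+1}-X_i|=\Omega(1/\sqrt{r})$. If an adversary controlling all but one party could identify such a jump and abort to force the outcome in its preferred direction, it would bias the honest output by $\Omega(1/\sqrt{r})$.

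Turning this scheme into an efficient fail-stop attack faces two coupled obstacles. First, the adversary does not know $X_i$ exactly: it must \emph{estimate} it by simulating continuations of the protocol on behalf of the honest party, and to detect jumps of size $\Theta(1/\sqrt{r})$ the estimates must be accurate to within $o(1/\sqrt{r})$, which is expensive. Second, any abort decision based on these samples perturbs the \emph{real} distribution of the transcript, so the estimates $\widetilde X_i$ can drift away from the true conditional expectations, breaking the martingale structure and hence the gap argument. I would address both obstacles simultaneously by performing the round-by-round estimation via the paper's \emph{differentially private sampling} procedure (the second main technical tool). The DP guarantee ensures that the adversary's abort choices depend only weakly on the internal randomness of the estimator, so the statistical divergence between the simulated and real transcript distributions remains bounded uniformly across all $r$ rounds.

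Because the $\widetilde X_i$'s are no longer exact conditional expectations, they are not a strong martingale; however, each $\widetilde X_i$ should be close to $\widetilde X_{i-1}$ when averaged over the adversary's internal coins, which is exactly the \emph{augmented weak martingale} condition established in the paper's first technical theorem. Invoking the new gap theorem for such sequences, I deduce that with constant probability adjacent estimates satisfy $|\widetilde X_{i+1}-\widetilde X_i|=\widetilde\Omega(1/\sqrt{r})$. The attack then proceeds round by round as follows: at round $i$, the adversary uses DP sampling to estimate $\widetilde X_i$ and, for each party it controls, the value that would result after that party sends its prescribed next message; if some party's message causes an $\widetilde\Omega(1/\sqrt{r})$ jump against the desired bit, the adversary aborts that party. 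Because the adversary controls $\np-1$ parties and needs only to locate one large jump over the course of the protocol, having $\np$ opportunities per round is crucial.

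The main obstacle will be the parameter tuning that simultaneously (i) makes the per-round sampler accurate enough to detect $\Theta(1/\sqrt{r})$ jumps, (ii) keeps the DP noise small so the gap remains visible in the $\widetilde X_i$ sequence, and (iii) keeps the DP noise large enough that the composition over $r$ rounds of noisy abort decisions does not drive the real-vs-simulated transcript divergence above the bias we aim to prove. The constraint $\np^k\geq r$ is exactly what lets the adversary amortize $\approx r$ samples per round into total running time $\np^k$, and each layer of composition in the DP analysis (together with the sample-complexity overhead) loses one logarithmic factor, accumulating to $\log(r)^k$ and yielding the claimed bias of $1/(\sqrt{r}\cdot\log(r)^k)$. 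Putting these ingredients together---the augmented weak martingale gap theorem applied to the DP-sampled estimates, together with the DP guarantee controlling the simulated-vs-real divergence---produces the desired efficient fail-stop adversary.
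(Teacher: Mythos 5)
Your proposal misses the paper's central idea and, as stated, the key steps would not go through. Let me name the gaps concretely.

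\textbf{Conditioning on the transcript cannot be made efficient by sampling.} You start from the transcript-conditioned Doob martingale $X_i = \ex{\out\mid T_{\le i}}$ and propose to \emph{estimate} it via simulation, arguing that differentially private sampling controls the resulting drift. But the issue is not mere estimation noise: $X_i$ is a function on the (typically exponentially large) transcript space, and there is no obvious low-dimensional sufficient statistic to estimate. The paper explicitly abandons transcript conditioning. Instead it conditions on the \emph{backup values} $Z^\cs_i$ of tuples of corrupted parties (a single number in $[0,1]$ per round with polynomial support), together with the previous rounded $X$-value and the running sum-of-squares. Because the conditioning triple has small support, the Doob-like map is described by a polynomial-size table, and the sum-of-squares augmentation is precisely what lets a \emph{weak} martingale (one that forgets the history) still admit the Cleve--Impagliazzo gap bound. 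Without that switch, you never escape the exponential support.

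\textbf{Your DP-sampled estimates are not an augmented weak martingale.} You assert that $\widetilde X_i$ being ``close to $\widetilde X_{i-1}$ averaged over internal coins'' is the augmented weak martingale condition. It is not: the condition in the paper is $\ex{X_{i+1}\mid X_i,\sum_{j<i}(X_{j+1}-X_j)^2}\in X_i\pm\delta$, and the hard part is establishing it for the \emph{actual} sequence the attacker computes. The paper's resolution (Section 1.2.5) is subtle: it does \emph{not} try to approximate the true Doob maps $\chi_i$, precisely because errors would compound through the conditioning. Instead it defines $\widehat X_i$ as a conditional expectation given the \emph{estimated} maps $\Mu_{i-1}$, treating those estimates as ground truth; this makes the weak martingale property hold by construction even when the estimates diverge from the ``real'' $\chi_i$. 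That mechanism has nothing to do with differential privacy, and your sketch has no substitute for it.

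\textbf{DP is used for a different purpose.} In the paper, the Laplace mechanism does not appear in the estimator. It appears in the \emph{oblivious sampling game}: the adversary can see average backup values of tuples that exclude the honest party $h$, but not those containing $h$, and must decide when to abort based only on what it sees. Adding Laplace noise to the observed gap $B^{\bS_1\setminus h}_i - B^{\bS_0\setminus h}_i$ makes the abort decision approximately independent of the identity of $h$, so the honest party cannot consistently ``dodge'' the attack. This is a decorrelation argument about the honest party's hidden value, not a drift-control argument about simulated vs.\ real transcripts. (Also, since the adversary is fail-stop and honest messages are sent as prescribed, the real transcript distribution never ``drifts''---the divergence you worry about is not a thing in this model.)

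\textbf{The role of $\np^k\ge r$ and the source of $\log(r)^k$.} These do not come from amortizing samples or from DP composition. The constraint $\np^k\ge r$ is what makes the average backup value over a random $\np/2$-subset (level $k=1$) concentrate to within $O(1/\sqrt r)$ of the complement's average, and for larger $k$ the argument iterates over $k$-tuples with a ``nugget-finding'' procedure: at each level one either finds similarity (enabling the martingale or DP-sampling attack) or a violation (enabling descent to the next level). Each level contributes a polylogarithmic loss in the bias, yielding the $\log(r)^k$ factor. Your proposal has no analogue of the tuple structure, the similarity/non-similarity dichotomy, or the level-by-level descent, so the dependence on $k$ is unaccounted for.
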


As a concrete example, assume the number of parties is $\np= r^{1/100}$. The above theorem yields an attack of bias  $\widetilde{\Omega}(1/\sqrt{r}) = \widetilde{\Omega}(1/r^{0.5})$, to be compared to the $\np/\sqrt{r}= 1/r^{0.49}$ upper bound of \citet{AwerbuchBCGM1985}. As far as we know, \cref{thm:Infmain}  is the first improvement over the  $\Omega(1/r)$ bound of \citeauthor{Cleve86}~\cite{Cleve86}.

 \cref{thm:Infmain}  is only applicable when the adversary is able to corrupt all parties but one. However, by grouping parties together, we note that any $\np$-party protocol is a $ \floor{\np/\noh}$-party protocol, for any $s<n$, and thus the following theorem dealing with more versatile corruption strategies follows by simple reduction.  

\begin{theorem}[Main result, fewer  corruptions variant, informal]\label{thm:InfMainLessCorrupt}
	For  $\np$-party $r$-round  coin-flipping protocol with  $(\np/\noh)^k\geq r$ for some $\noh<\np/2$ and $k\in \N$,  there exists an  adversary running in time $(\np/\noh)^k$, corrupting all parties but a subset of size $\noh$, that biases  the output of honest parties  by $1/(\sqrt{r}\cdot \log(r)^k)$. 
\end{theorem}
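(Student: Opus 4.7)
The plan is a direct reduction to \cref{thm:Infmain}. Given an $\np$-party $r$-round coin-flipping protocol $\Pi$ satisfying $(\np/\noh)^k \geq r$ with $\noh < \np/2$, I would partition the $\np$ parties into $\np' := \lfloor \np/\noh \rfloor \geq 2$ disjoint groups $G_1, \dots, G_{\np'}$, each of size between $\noh$ and $2\noh - 1$ (one group absorbing the remainder when $\noh \nmid \np$). Define a new $\np'$-party, $r$-round coin-flipping protocol $\Pi'$ in which each meta-party $P'_i$ internally simulates every original party assigned to $G_i$: it maintains their joint state, performs intra-group communication internally, and sends/receives inter-group messages on their behalf. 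Honest execution of $\Pi'$ yields the same output distribution as honest execution of $\Pi$, and the round complexity is preserved.

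Next, I would invoke \cref{thm:Infmain} on $\Pi'$. Because $\noh < \np/2$, we have $\np' \geq \np/(2\noh)$, hence $(\np')^k \geq (\np/\noh)^k / 2^k \geq r / 2^k$, and using $\np' \geq 2$ we get $(\np')^{2k} \geq r$. Applying \cref{thm:Infmain} with parameter $2k$ produces a fail-stop adversary $A'$ corrupting all but one meta-party, running in time $(\np')^{2k}$, and biasing the remaining honest meta-party's output by $1/(\sqrt{r} \cdot \log(r)^{2k}) = \widetilde{\Omega}(1/\sqrt{r})$.

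Finally, I would translate $A'$ into an adversary $A$ against $\Pi$: let $i^\ast$ index the meta-party left uncorrupted by $A'$, and let $A$ corrupt the $\np - |G_{i^\ast}| \leq \np - \noh$ parties outside $G_{i^\ast}$ (if $|G_{i^\ast}| > \noh$, $A$ additionally marks $|G_{i^\ast}| - \noh$ parties inside $G_{i^\ast}$ as corrupted but instructs them to play honestly, so that exactly $\noh$ parties remain fully honest). Internally, $A$ runs $A'$, simulating the group-internal execution of each corrupted meta-party and issuing the corresponding outgoing messages in $\Pi$. Since $A'$ is fail-stop, $A$ is fail-stop as well: each corrupted original party follows $\Pi$ honestly until at most one halt instruction. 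The running time of $A$ is $(\np/\noh)^{2k}$ up to polynomial simulation overhead, and $A$ biases the honest parties' output by exactly the bias that $A'$ induces on the uncorrupted meta-party of $\Pi'$.

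The only obstacle is careful bookkeeping: handling non-divisibility of $\np$ by $\noh$, verifying that the fail-stop property transfers through the group simulation (it does, because each grouped party independently follows $\Pi$ honestly until at most one halt), and tracking the factor-of-two shift in the exponent of $k$. The latter simply replaces $k$ by $2k$ in the runtime and $\log$-factor of the conclusion, which is a cosmetic constant change absorbed by the asymptotic notation in the statement. No new algorithmic or probabilistic ideas are needed beyond the grouping reduction; the substance of the argument lives entirely in \cref{thm:Infmain}.
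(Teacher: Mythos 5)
Your reduction is exactly the paper's own: view $\Pi$ as an $\np'=\floor{\np/\noh}$-party, $r$-round protocol by grouping, invoke the all-but-one-corrupted bound on $\Pi'$, and translate the attack back by having some corrupted members of the honest meta-party play honestly. One small inaccuracy in your write-up: passing from $k$ to $2k$ is \emph{not} ``a cosmetic constant change,'' since $\log(r)^{2k}$ exceeds $\log(r)^k$ by a $\log(r)^k$ factor; nonetheless this extra loss is consistent with the paper's own treatment (the worked example immediately after the statement also lands on $1/(\sqrt{r}\log(r)^{2k})$, and the formal version, \cref{thm:FewerUnimain}, sidesteps it by re-defining $k$ directly in terms of $\np'$), so your proof is sound.
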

For instance, if $\np^{k} > r$, by corrupting all parties but a subset of size $\np^{1/2}$, the adversary achieves a bias of $1/(\sqrt{r}\cdot \log(r)^{2k})$. That is, up to a factor of $1/\log(r)^k$ in the bias, we derive the same result as \cref{thm:Infmain}, but with fewer corrupted parties (only all parties but a subset of size $\np^{1/2}$ instead of all parties but one).  

\Inote{When \citet*{CHOR18} is published, we should address it, as it shows that for any constant $r$, there is an optimal coin flipping protocol against adversary controlling all but a constant number of the parties (the large enough constant is a function of $r$). Our result shows, that this cannot happen for small enough constant. }
We prove the above theorems using  the following  two results that we believe to be  of independent interest.

\subsubsection{Augmented Weak Martingales have Large Gap}\label{sec:intro:martingales}
A sequence  $X_1,\dots,X_r$ of random variables is  a \textit{(strong) martingale},  if $\ex{X_i\mid X_{\le i-1}} = X_{i-1}$ for every $i\in [r]$ (letting $X_{\le j} = (X_1,\ldots,X_j)$). \citet{CleveI93} showed that any strong martingale  sequence with $X_1 = \frac12$ and $X_r\in \zo$  has a $1/\sqrt{r}$  gap with  constant probability: with constant probability, $\size{X_{i} - X_{i-1}} \ge \Omega(1/\sqrt{r})$ for some $i\in[r]$. This result is the core of their proof showing that there exists an \emph{inefficient} (fail-stop) attack for any coin-flipping protocol that yields a bias of order $1/\sqrt{r}$ (see \cref{sec:intro:Technique}). The  result of \cite{CleveI93} is used \wrt the \textit{Doob martingale} sequence defined by  $X_i = \ex{f(Z) \mid  Z_{\le i}}$, for  random variables $Z= (Z_1,\ldots,Z_r)$  and  a function $f$ of interest. \enote{Should we cite Doob?}\Nnote{Wikipedia has a paper from the 40's we can cite, but I think it s folklore} To be applicable in a computational setting, we require that  $X_{i} = \ex{f(Z) \mid  Z_{\le i}}$ is an efficiently computable function of  $Z_{\le i}$. In many cases however, including the one considered by \cite{CleveI93}, $\Supp(Z_{\le i})$ is huge, resulting in $X_i$ not being  efficiently computable.

\textit{Weak martingales}, introduced by \citet{Nelson70}, is a relaxation of strong martingales where it is only required that $\ex{X_{i}\mid X_{i-1}} = X_{i-1}$. Namely, the conditioning is only on the value of the preceding variable, and not on the whole ``history''.  As in the case of (strong) martingales, for arbitrary $Z= (Z_1,\ldots,Z_r)$ and a function $f$ of interest, we can consider the Doob-like sequence $X_i = \ex{f(Z) \mid  Z_i,X_{i-1}}$. The support size of the function for computing $X_{i}$ is only of size $\size{\Supp(Z_{i}) \times \Supp(X_{i-1})}$, and we can use discretization to further reduce the support size of $X_i$ (\ie we let $X_{i}$ be a \emph{rounding} of $\ex{f(Z) \mid  Z_{i},X_{i-1}}$).  Hence, if the support of $Z_i$ is small, the computation of the $X_i$'s can be done efficiently. (Discretization is not useful for the (strong) Doob martingale described above, since, even if the support of each individual $Z_1$ is small, even $2$, the domain of $Z_1,\ldots,Z_r$ is typically huge). Unfortunately, it is unclear whether weak martingales have large gaps, and thus we are unable to apply the attack of \citet{CleveI93} using such a sequence.

We prove that a slightly different variant of the Doob construction results in a sequence that is efficiently computable and has a large gap, at the same time. A sequence $X_1,\dots,X_r$ of random variables is a \emph{sum-of-squares-augmented weak martingales}, if  $\ex{X_{i} \mid X_{i-1}, \sum_{j\in[ i-1]} (X_j - X_{j-1})^2}=X_{i-1}$.   Namely, $X$ has the ``martingale property'' when conditioning on some small amount of information about the past. For such a sequence, we prove the following result:

\begin{theorem}[Informal]\label{theorem:SosMartingalesInf}
	Let $X_1,\ldots,X_r$ be a sequence of sum-of-squares-augmented weak martingale with $X = 1/2$ and $X_r\in \zo$, then   $$\pr {\exists i\in [r] \colon \size{X_i - X_{i-1}} \ge 1/\sqrt{r}} \in \Omega(1).$$   
\end{theorem}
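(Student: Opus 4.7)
The plan hinges on constructing an auxiliary \emph{strong} martingale whose quadratic variation matches that of $X$, then controlling both the first and second moments of this quadratic variation via orthogonality. Let $\Delta_i := X_i - X_{i-1}$, $S_k := \sum_{i=1}^{k} \Delta_i^2$, and let $\mathcal{G}_k$ denote the $\sigma$-algebra generated by the pair $(X_k, S_k)$. The augmented weak martingale hypothesis reads exactly as $\ex{\Delta_k \mid \mathcal{G}_{k-1}} = 0$. I take as a standing assumption $X_k \in [0,1]$ for every $k$, which is automatic in the Doob-like instantiations intended for the application.

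\emph{Step 1 (the key auxiliary martingale).} Set $W_k := X_k^2 - S_k$. A direct algebraic expansion gives $W_k - W_{k-1} = 2 X_{k-1} \Delta_k$, and since $X_{k-1}$ and $S_{k-1}$ are both $\mathcal{G}_{k-1}$-measurable by construction, the augmented weak martingale property yields $\ex{W_k - W_{k-1} \mid \mathcal{G}_{k-1}} = 2 X_{k-1} \ex{\Delta_k \mid \mathcal{G}_{k-1}} = 0$. So $W$ is a genuine \emph{strong} martingale with respect to $\{\mathcal{G}_k\}$. This is exactly why carrying $S_{k-1}$ inside the conditioning is indispensable: the cross-term cancellation inside $\ex{X_k^2 \mid X_{k-1}}$ holds even for an ordinary weak martingale, but without $S_{k-1}$ the process $W_{k-1}$ itself would fail to be $\sigma(X_{k-1})$-measurable, breaking the $W$-martingale identity. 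Taking expectations of $W_r = W_0$, and using $X_r^2 = X_r$ together with $\ex{X_r} = X_0 = 1/2$ (obtained by iterating the weak martingale property on $X$ alone), yields $\ex{S_r} = 1/4$.

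\emph{Step 2 (second moment of $S_r$ and Paley-Zygmund).} Orthogonality of the increments of the strong martingale $W$ gives
\[
\Var(W_r) \;=\; \sum_{k=1}^{r} \ex{(W_k - W_{k-1})^2} \;=\; 4 \sum_{k=1}^{r} \ex{X_{k-1}^2 \Delta_k^2} \;\le\; 4 \ex{S_r} \;=\; 1,
\]
using $X_{k-1}^2 \le 1$. Therefore $\ex{W_r^2} \le 1 + 1/16$, and writing $S_r = X_r - W_r$ and applying $(a-b)^2 \le 2a^2 + 2b^2$ yields $\ex{S_r^2} \le 2\ex{X_r^2} + 2\ex{W_r^2} = O(1)$. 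Paley-Zygmund (with parameter $1/2$) then gives
\[
\pr{S_r \ge 1/8} \;\ge\; \pr{S_r \ge \tfrac{1}{2}\ex{S_r}} \;\ge\; \frac{\ex{S_r}^2 / 4}{\ex{S_r^2}} \;=\; \Omega(1),
\]
and on the event $\{S_r \ge 1/8\}$ the pigeonhole principle produces some $i \in [r]$ with $\Delta_i^2 \ge 1/(8r)$, i.e., $\size{X_i - X_{i-1}} = \Omega(1/\sqrt{r})$, which is the desired gap.

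The main conceptual obstacle is spotting the auxiliary process $W_k = X_k^2 - S_k$ and recognising that the augmented weak martingale structure is precisely the minimal amount of history needed to promote it to a \emph{strong} martingale with respect to the coarse filtration $\{\mathcal{G}_k\}$; once this is in hand, everything else is a standard Doob-style computation combined with a routine second-moment inequality, thereby sidestepping the more intricate stopping-time argument used by \citet{CleveI93} for ordinary strong martingales.
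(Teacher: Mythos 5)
Your Step 1 is correct, and is in fact a slick reformulation of the paper's \cref{prop:exMachina}: the identity $W_k - W_{k-1} = 2X_{k-1}\Delta_k$ together with $\ex{\Delta_k \mid X_{k-1}, S_{k-1}} = 0$ gives $\ex{W_k} = \ex{W_{k-1}}$, hence $\ex{S_r} = 1/4$. The gap is in Step 2. You call $W$ a ``strong martingale with respect to $\{\mathcal{G}_k\}$'' and then invoke orthogonality of increments, but $\mathcal{G}_k = \sigma(X_k, S_k)$ is \emph{not} a filtration: $(X_k,S_k)$ does not determine $(X_{k-1},S_{k-1})$, so $\mathcal{G}_{k-1}\not\subseteq\mathcal{G}_k$. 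Concretely, for $j<k$ the cross term $\ex{(W_k - W_{k-1})(W_j - W_{j-1})} = 4\,\ex{X_{k-1}\Delta_k X_{j-1}\Delta_j}$ need not vanish, because the only available hypothesis is $\ex{\Delta_k\mid X_{k-1},S_{k-1}}=0$ and $X_{j-1}\Delta_j$ is not $\sigma(X_{k-1},S_{k-1})$-measurable, so it cannot be pulled out of the conditional expectation. A small counterexample with $r=3$: let $(X_1,X_2,X_3)$ be $(1/4,0,0)$, $(1/4,1/2,1)$, $(3/4,1/2,0)$, or $(3/4,1,1)$, each with probability $1/4$, and $X_0=1/2$. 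This is a sum-of-squares-augmented weak martingale (here $S_1,S_2$ are constants, so the augmented conditioning is vacuous, and one checks $\ex{\Delta_k\mid X_{k-1}}=0$), yet $\ex{(W_2-W_1)(W_3-W_2)} = 1/16 \neq 0$. Without orthogonality, $\Var(W_r)\le 1$ and hence $\ex{S_r^2}=O(1)$ are unjustified, and Paley--Zygmund has nothing to bite on. This loss of orthogonality is exactly the feature that separates weak from strong martingales, and why a tail/second-moment argument in the style of \citet{CleveI93} is not available here.

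The paper gets around this differently: its proof of \cref{claim:SOSAsquares} defines the stopped/coupled sequence $U_i$, equal to $X_i$ while $\sum_{j<i}(X_j-X_{j-1})^2 \le 1/16$ and frozen thereafter, so that the quadratic variation of $U$ is deterministically at most $1/16+1$. It then applies the first-moment identity (\cref{prop:exMachina}, equivalently your $\ex{W_r}=W_0$ computation) to $U$ rather than $X$, producing an upper bound on $\ex{U_r^2}$, and combines it with the coupling bound $\pr{U_r\neq X_r}\le\pr{\sum_i Y_i^2>1/16}$ for a lower bound on $\ex{U_r^2}$; comparing the two yields $\pr{\sum_i Y_i^2>1/16}\ge 1/20$. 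In short: your first-moment computation is the right starting observation, but the missing second-moment control is supplied by the stopping trick, not by an orthogonality that the augmented-weak hypothesis does not give.
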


We  prove that the above holds for a \emph{rounded} variant of $X_i$, \ie $X_i$ are rounded to the closest multiplicative of some  $\delta>0$. 

Consider the sequence of sum-of-squares-augmented weak martingales defined by the Doob-like sequence $X_i = \ex{f(Z) \mid  Z_i,X_{i-1},\sum_{j\in[ i-1]} (X_j - X_{j-1})^2}$, for arbitrary  $Z = (Z_1,\ldots,Z_r)$  and  a function $f$ of interest. If the support of the $Z_i$'s small, the computation of the (rounding of) $X_i$'s can be done \emph{efficiently}.  This efficiency plays a critical role in our attack on coin-flipping protocols, allowing us, in some cases, to mount an efficient variant of the attack of \cite{CleveI93}.

Our proof actually yields the following stronger statement. 
\begin{theorem}[Informal]\label{theorem:SosMartingalesSSSstrongInf}
	Let $X_1,\ldots,X_r$ be a sequence of sum-of-squares-augmented weak martingales with $X = 1/2$ and $X_r\in \zo$, then 
	
	$$\Pr\bigr[\sum_{i\in[ r]} (X_i - X_{i-1})^2 \ge 1\bigr] \in \Omega(1).$$
\end{theorem}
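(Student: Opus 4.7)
The plan is to build an auxiliary weak martingale that is tightly coupled to $S_i := \sum_{j \in [i]}(X_j - X_{j-1})^2$ and then extract a probability bound by optional stopping at a hitting time. (As $X_0 = 1/2$ and $X_r \in \zo$ force $\ex{S_r} = 1/4$, the conclusion I aim for is of the form $\Pr[S_r \geq c_0] \in \Omega(1)$ for an absolute constant $c_0 > 0$; the ``$\geq 1$'' in the informal statement appears to be an over-normalization, since Markov already gives $\Pr[S_r \geq 1] \leq 1/4$.)

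Consider the ``potential-plus-quadratic-variation'' process $V_i := X_i(1 - X_i) + S_i$. A direct expansion yields $V_i - V_{i-1} = (X_i - X_{i-1})(1 - 2 X_{i-1})$; taking the conditional expectation on $(X_{i-1}, S_{i-1})$ and invoking the augmented weak martingale identity $\ex{X_i \mid X_{i-1}, S_{i-1}} = X_{i-1}$ gives $\ex{V_i \mid X_{i-1}, S_{i-1}} = V_{i-1}$. Thus $V$ is itself an augmented weak martingale; iterating the tower property yields $\ex{V_r} = V_0 = 1/4$, which---since $X_r \in \zo$ zeroes out the potential term---already recovers $\ex{S_r} = 1/4$.

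To convert this expectation identity into a probability bound, fix a small constant $c > 0$ and let $T := \min\{i : S_i \geq c\} \wedge r$. The crux is that $\mathbf{1}\{T \geq i\} = \mathbf{1}\{S_{i-1} < c\}$ is measurable with respect to $(X_{i-1}, S_{i-1})$---the native conditioning variables of the weak martingale---so the stepwise identity $\ex{(V_i - V_{i-1})\, \mathbf{1}\{T \geq i\}} = 0$ survives, and summing over $i$ produces the optional-stopping conclusion $\ex{V_T} = 1/4$. On $\{S_r \geq c\}$ the bounds $X_i \in [0,1]$ and $|X_i - X_{i-1}| \leq 1$ give $V_T \leq X_T(1-X_T) + S_T \leq 1/4 + (c+1) = c + 5/4$; on the complement $\{S_r < c\}$ one has $T = r$, and $X_r \in \zo$ forces $V_T = S_r < c$. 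Combining these two bounds with $\ex{V_T} = 1/4$ yields $1/4 \leq c + (5/4) \Pr[S_r \geq c]$, i.e., $\Pr[S_r \geq c] \geq (1 - 4c)/5$, which is $\Omega(1)$ for any $c < 1/4$ (e.g., $c = 1/8$ gives $\Pr[S_r \geq 1/8] \geq 1/10$).

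The main obstacle is justifying the optional-stopping step in the weak-martingale setting, since standard textbook proofs condition on the full past (which we cannot do here). The rescue is that the specific stopping time $T$ depends only on $S_{i-1}$ and therefore lives inside the coarse conditioning $\sigma(X_{i-1}, S_{i-1})$; this keeps the stepwise identity intact for this particular $T$, even though a general stopping time in the full filtration would \emph{not} admit such a decomposition. A secondary technicality is the rounded/discretized variant of $X_i$ alluded to after~\cref{theorem:SosMartingalesInf}: I would accommodate it by tracking the $O(\delta)$ slack in the identity $\ex{V_i \mid X_{i-1}, S_{i-1}} = V_{i-1}$ through each of the above inequalities and absorbing it into the final constants.
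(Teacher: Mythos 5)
Your proof is correct, and it takes a genuinely different route from the paper's, though the two share the same essential insight. The paper (in the formal version, \cref{claim:SOSAsquares}) defines a \emph{coupled frozen sequence} $U_i$ that tracks $X_i$ until $\sum_{j<i}Y_j^2$ first exceeds $1/16$ and is constant afterward, observes that the freezing predicate depends only on the sum-of-squares statistic (which is exactly what the augmented conditioning makes available), proves $U$ is a $\delta$-weak martingale, and then sandwiches $\ex{U_r^2}$ from above (via $\ex{X_r^2-X_0^2}=\ex{\sum Y_i^2}\pm 2r\delta$, \cref{prop:exMachina}) and below (via $\Pr[U_r\neq X_r]$). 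You instead package the same algebra into the explicit potential process $V_i=X_i(1-X_i)+S_i$, verify that it is itself an augmented weak martingale, and invoke optional stopping at the hitting time of $\{S_i\ge c\}$; the crucial measurability observation---that $\mathbf{1}\{T\ge i\}=\mathbf{1}\{S_{i-1}<c\}$ lives in $\sigma(X_{i-1},S_{i-1})$---plays exactly the role that the paper's freezing rule plays, namely making the weak-martingale property usable despite the coarse filtration. Two things your version buys: it skips the paper's preliminary case split on $\Pr[\exists i:|Y_i|\ge 1/4]$ (because the crude bound $Y_T^2\le 1$ is simply absorbed into $V_T\le c+5/4$), and the constants come out slightly better ($\Pr[S_r\ge 1/8]\ge 1/10$ versus the paper's $\Pr[S_r\ge 1/16]\ge 1/20$). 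You are also right that the informal ``$\ge 1$'' threshold is an overstatement---the paper's formal statement uses $1/16$---and your $O(\delta)$-bookkeeping remark is consistent with the paper's $2r\delta$ slack. The only cosmetic nit is that your final inequality $1/4\le c+(5/4)\Pr[S_r\ge c]$ should really read $1/4-r\delta\le c+(5/4)\Pr[S_r\ge c]$ once the $\delta$-weak correction is carried through, but with $\delta<1/100r$ this changes nothing.
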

Namely, the  sum-of-squares is constant with constant probability.  In particular, the probability that $\size{X_i - X_{i-1}} \ge 1/\sqrt{r}$ , for some $i$,  is also constant, implying \cref{theorem:SosMartingalesInf}.    But  \cref{theorem:SosMartingalesSSSstrongInf} yields a stronger result: if we are guaranteed that all gaps are at most $1/\sqrt{r}$ (\ie  $\size{X_i - X_{i-1}} \in O(1/\sqrt{r})$ for all $i$), then \cref{theorem:SosMartingalesSSSstrongInf} implies that, with  constant probability, the sequence has a \emph{linear} number of $1/\sqrt{r}$-gaps (as opposed to only one such gap guaranteed by \cite{CleveI93}). 

 Our proof for \cref{theorem:SosMartingalesSSSstrongInf} is surprisingly simple, and does not follow the more complicated approach of \citet{CleveI93}.\footnote{To be fair, \citet{CleveI93} derive their result  by  proving an Azuma-like tail inequality for bounded strong martingales that  have large gap with only small probability, a bound that we do not prove  here. }

\subsubsection{Oblivious Sampling via Differential Privacy}\label{sec:intro:DP}
   Consider the following $r$-round game in which the goal is to maximize the revenue of the chosen party: in the beginning, a party $H$ is drawn  uniformly  from $\H$ (for  $\H$ being a finite set of parties). In each round $i$, values $\set{s_i^h\in [0,1]}_{h\in \H}$ are assigned to the  parties   of  $\H$, and  the values  of all parties but $H$, \ie  $\set{s^{h}_i}_{h\in \H\setminus \set{H}}$,   are published. Seeing the published values, you can either decide to \emph{abort}, and then party $H$ is rewarded with  (the unseen) value $s_i^H$, or to continue to the next round. If you never choose to abort, then party $H$ is rewarded with $s_r^H$ (the value of the last round). Your goal is to get a reward as close to the optimal value $\gamma = \max_i \set{s_i \eqdef  \eex{h\la \H}{s_i^h} }$.  To make the game reasonable, it is guaranteed that the values assigned to the parties in each round  are \emph{similar}: $\size{s_i^h - s_i}\le \sigma$ for every $h \in \H$. Namely, the individual values are $\sigma$-close to the mean.  
   
   We will be interested in a distributional variant of the above game in which the  values of $\set{s_i^h}$ are not fixed, but rather drawn from some underlying distribution (in our setting, the values of $\set{s_i^h}$ will be induced by the randomness of the attacked coin-flipping protocol), while satisfying the above guarantees with regards to $\gamma$ and $\sigma$ with good enough probability. \enote{I think it would be instructive to make the connection to the actual way we use the attack clearer. Specifically, we should probably say that the $s_i$s stand for the changes in the probability of the average back-up value (or just game-value). }\Nnote{they denote the gap between the average of the backups of two distinct sets of tuples the honest party belongs to} We refer to the resulting game as an \emph{oblivious sampling game} with parameters $r$, $\size{\H}$, $\gamma$, and $\sigma$. An aborting  strategy for the above game can  only depend on the game parameters (\ie $r,\size{\H},\gamma,\sigma$) and the values published online.

   The simplest aborting strategy for such a game is to abort if the average of all other parties, \ie $\set{s_{h}}_{h\in \H \setminus \set{H}}$, is  larger than (roughly) $\gamma - \sigma$. The reward of such a strategy is roughly $\gamma - \sigma$, which is useless if $\sigma\ge \gamma$. As we show next, this linear loss in $\sigma$ is inherent for this strategy; consider  a deterministic threshold strategy that aborts if $s_i^{\setminus h} = \eex{h' \la \H \setminus h}{s_i^{h'}} \ge \tsh$ for some threshold $\tsh \in[0, \gamma]$. Namely, an aborts occurs if the average value at hand in a given round is greater than $\tsh$.  Consider the game defined by  $\H = [r-1]$, $s_r^h= \gamma$ for all $h$, and for  $i\in [r-1]$:  $s_i^h = \tsh - \sigma$ if  $i=h$, and $\tsh$ otherwise.  It follows that for every value of $h$, the strategy seeing the values  of $\set{s^{\setminus  h}_{i}}$ aborts at round $h$,  and gets reward $\tsh- \gamma$. Hence, the reward of this strategy is $\tsh -\sigma \le \gamma -\sigma$.

   We show that  using a \textit{differentially private} mechanism, and in particular adding Laplace  noise to the estimated revenue $s_i^{\setminus h} = \eex{h' \la \H \setminus h}{s_i^{h'}}$,  significantly improves  upon the above deterministic strategy. By introducing such noise, the aborting decision is less correlated to the choice of the random party $H$.  More accurately, the value of $H$ is $\sigma$-\emph{differentially private}, according to the definition of \citet*{dwork2016calibrating}, from the aborting decision, and thus we avoid the pitfalls caused by strong correlation between  $H$ and the aborting round, as illustrated by the above example for the deterministic threshold strategy. We exploit this  ``privacy'' guarantee to prove the following improvement in the expected reward.

   \begin{theorem}[Informal]\label{thm:LaplaceInf} 
   	For every oblivious sampling game, the  randomized strategy that adds  Laplace noise in every round (whose magnitude depends on the game parameters) to $s_i^{\setminus h}$, and  aborts if the result is greater than $\gamma/2$, achieves expected reward   $\gamma/2  - \sigma^2$.   
   \end{theorem}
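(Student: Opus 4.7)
The plan is to use differential privacy to show that the abort round $T$ is nearly independent of the hidden party $H$, and then exploit this near-independence to bound the expected reward.

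First, I will specify the noise. Since $s_i^{\setminus h} = \frac{1}{\size{\H}-1}\sum_{h'\ne h} s_i^{h'}$ and $\size{s_i^h - s_i}\le \sigma$, the sensitivity of the map $h \mapsto s_i^{\setminus h}$ is $O(\sigma/\size{\H})$. Adding noise $\eta_i \sim \mathrm{Lap}(b)$ of appropriate scale $b$ and applying basic composition across the $r$ rounds, the resulting noisy transcript (and hence the induced abort round $T$) is $O(\sigma)$-differentially private in the identity of $H$. The calibration of $b$ trades off the DP parameter against the magnitude of noise introduced into the threshold test; I expect $b$ of the form $\Theta(r/\size{\H})$.

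Second, I apply a DP-based decoupling. Let $q_h(t) = \pr{T=t \mid H=h}$ and $q(t) = \eex{h\la \H}{q_h(t)}$, so the expected reward is $R = \sum_{t,h} \frac{q_h(t)}{\size{\H}} s_t^h$. Introducing the ``decoupled'' quantity $R' = \sum_t q(t)\, s_t$ and using $\sum_h (q_h(t) - q(t)) = 0$, I will write
\begin{align*}
R - R' = \sum_{t,h} \frac{q_h(t) - q(t)}{\size{\H}}\paren{s_t^h - s_t}.
\end{align*}
The $O(\sigma)$-DP guarantee yields $\size{q_h(t) - q(t)} \le O(\sigma)\, q(t)$, while the similarity hypothesis gives $\size{s_t^h - s_t}\le \sigma$, so $\size{R - R'} \le O(\sigma^2)$.

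Third, I lower bound $R'$ by $\gamma/2 - O(\sigma^2)$. By the stopping rule, conditioned on aborting at round $t$ one has $s_t^{\setminus H} \ge \gamma/2 - \eta_t$; combined with $\size{s_t^{\setminus H} - s_t} \le \sigma/(\size{\H}-1)$, this gives $s_t \ge \gamma/2 - \eta_t - O(\sigma/\size{\H})$. To ensure that abort occurs with overwhelming probability (at or before the maximizer $i^{*} = \argmax_i s_i$), I use that $s_{i^{*}}^{\setminus H} \ge \gamma - O(\sigma/\size{\H})$, so the threshold $\gamma/2$ is crossed at or before $i^{*}$ with high probability over $\eta$. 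The main obstacle is controlling $\ex{\eta_T \mid T \le r}$: since abort triggers precisely when $\eta_t$ is sufficiently large, this conditional expectation is biased upward and cannot be dismissed as mean-zero. The plan is to exploit both the Laplace tail behavior and the slackness coming from $s_{i^{*}} = \gamma$ (rather than merely $\gamma/2$) so that this positive bias is absorbed into the $O(\sigma^2)$ error term. Tuning $b$ so that the DP error and the noise bias both land at $O(\sigma^2)$ is the delicate point of the argument.
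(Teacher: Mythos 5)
The algebraic ``decoupling'' step you propose is correct and is essentially the same cancellation the paper uses (it writes $\ex{s_J^H} = \ex{s_J^{\mH}} - \frac{1}{n}\sum_{i,h} d^h_i\,\pr{J=i\mid H=h}$ and then subtracts $\sum_{i,h}d^h_i\,q_i$, which vanishes because $\sum_h d^h_i = 0$). Your step 3 is also morally the same as the paper's: rather than tracking $\ex{\eta_T}$, the paper simply bounds $\pr{J\ne r\,\wedge\, s_J^{\mH}\le\gamma/2}\le \tfrac{r}{2}e^{-\gamma/2\lambda}$, which is a cleaner way of absorbing the conditional upward bias of the noise into an explicit exceptional event.

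The genuine gap is in step 2. You want the multiplicative bound $\size{q_h(t)-q(t)}\le O(\sigma)\,q(t)$ and you get it via \emph{basic} composition over $r$ rounds. With per-round sensitivity $O(\sigma/\size{\H})$ and noise $\mathrm{Lap}(b)$, basic composition indeed forces $b=\Theta(r/\size{\H})$ to land at privacy loss $O(\sigma)$. But with that large a noise scale the threshold test collapses: the false-abort term behaves like $r\cdot \tfrac12 e^{-\gamma/(2b)}=\tfrac{r}{2}\,e^{-\Theta(\gamma\size{\H}/r)}$, which is $\Omega(r)$ in the regime $\gamma\approx 1/\sqrt r$ and $\size{\H}\le r$ that the application needs, so the adversary essentially always aborts at a worthless round. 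The two requirements you flag as ``the delicate point'' are in fact incompatible under basic composition, not merely hard to tune. The paper escapes this by not using multiplicative DP-per-round at all: because an abort can occur only once, the ``first-success'' structure gives the much stronger $L_1$ bound $\sum_t\size{q_h(t)-q(t)}\le 4\cdot(\text{per-round multiplicative gap})\cdot (1-q_r)$ (\cref{fact:basic2}, built on \cref{fact:laplace} and \cref{fact:basic3,fact:basic4,fact:basic5}), \emph{without} the factor of $r$. This lets the paper pick the tiny noise scale $\lambda=\gamma/(4\log r)$, for which $r\,e^{-\gamma/2\lambda}=O(1/r)$ is negligible, while the $L_1$ bound still yields the $O(\sigma^2)$-type error after multiplying by $\size{d^h_i}=O(\sigma^h)$. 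To repair your argument you need to replace the multiplicative-DP/basic-composition step with this ``first-success'' total-variation bound; the rest of your plan then goes through.
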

Namely, the penalty for having imperfect similarity  is reduced from $\sigma$ to $\gamma/2 +\sigma^2$, a significant improvement when $ \gamma< \sigma <1$.  We also  prove a generalization  of the above theorem where each party has a different similarity  guarantee.

   \subsection{Our Techniques}\label{sec:intro:Technique}
   
     Below, we  describe   the approach for proving  \cref{thm:Infmain} using \cref{theorem:SosMartingalesInf,thm:LaplaceInf}.  We do not discuss here the proofs of these theorems, but we do explain in \cref{sec:intro:CompDoob} why the weak martingale used by the attack is computable by an efficient uniform algorithm.

		Let $\Pi$ be an $r$-round $\np$-party coin-flipping protocol and let $\out$ denote the (always common) output of the parties in a  random honest execution. By definition,  $\out\in \zo$ and $\ex{\out}=1/2$.   Our goal is to obtain an efficient attacker that, by controlling $\np-1$ of the parties, biases the honest parties'  output by  $1/\sqrt{r}$ (we ignore $\log$ factors).	We start by describing the $1/\sqrt{r}$ inefficient attack of \citet{CleveI93}.

\subsubsection{\citeauthor{CleveI93}'s Inefficient Attack} \label{sec:intro:CI}
Let $\np=2$ and let $(\Pc_0,\Pc_1)$ be the parties of $\Pi$. Let $T_1,\ldots,T_r$ denote the messages in a random execution of $\Pi$. Let  $X_i = \ex{\out \mid T_{\le i} }$; namely, $X_i$ is the expected outcome of the protocol given the first $i$ messages $T_{\le i} = T_1,\ldots,T_i$. It is easy to see that $X_1,\ldots,X_r$ is a (strong) martingale sequence. Hence, the result of \cite{CleveI93} described in \cref{sec:intro:martingales} yields that  (omitting  absolute values and constant factors)
\begin{align}\label{eq:CIGap}
\text{Jump:}& \hfill  &\pr{\phantom{,}\exists i\in [r] \colon  X_i- X_{i-1} \ge 1/\sqrt{r}\phantom{,}} \in \Omega(1).
\end{align}

  \paragraph{Backup values.} 	
  For $\party \in \zo$, let the \emph{backup value} $Z_i^\party$ denote the output of party $\Pc_\party$ if party $\Pc_\bj$ aborts \emph{after} the \ith message was sent, letting $Z^\party_r$ be the final output of $\Pc_\party$ (if no abort occurs).  Using this notation, $\ex{Z_{i}^\party \mid  T_{\le i}}$ is the expected outcome of $\Pc_\party$ if $\Pc_\bj$ aborts after the \ith round. 
  We can assume \wlg that
\begin{align}\label{eq:CIGapZZ}
\text{Backup value follows game value:}& \hfill  & \pr{\exists i\in [r] \colon  \size{X_{i} -  \ex{Z_{i}^\party \mid  T_{\le i}}} \ge 1/\sqrt r} \in o(1).
\end{align}
for both $\party \in \zo$. Otherwise, the attacker controlling $\Pc_\bj$ that computes $X_{i}$ and $\ex{Z_{i}^\party \mid  T_{\le i}}$ for each round $i$, and aborts if $X_{i}-\ex{Z_{i}^\party \mid  T_{\le i}}\ge 1/\sqrt{r}$, would bias $\Pc_\party$'s output towards $0$ by $1/\sqrt{r}$.\footnote{To be more precise, at least one of two attacks would succeed, depending on the aimed direction of the bias.}

 \paragraph{The martingale attack.} 	
 The above two observations yield the following attack.  From \cref{eq:CIGap,eq:CIGapZZ}, it follows that \wlg   
\begin{align}\label{eq:CIGapZ}
\text{Attack slot:}& \hfill  & \pr{\exists i\in [r] \colon \text{$\Pc_\party$ sends the \ith message} \land     X_{i} -  \ex{Z_{i-1}^\bj \mid  T_{\le i} }  \ge 1/2\sqrt{r}} \in \Omega(1).
\end{align}
This yields the following attack for party $\Pc_\party$ to bias the output of party $\Pc_\bj$ towards zero. Before sending the \ith message $T_i$,  party $\Pc_\party$ aborts if  $ X_{i} - \ex{Z_{i-1}^\bj \mid  T_{\le i} } \ge 1/2\sqrt{r}$. By \cref{eq:CIGapZ}, under this attack, the output of $\Pc_\bj$ is biased towards zero by $\Omega(1/\sqrt{r})$.\footnote{In more detail, assume for simplicity that $\Pc_0$ sends the messages $T_1,T_3,\dots$ and $\Pc_1$ sends the messages $T_2,T_4,\dots$. For at least one  party $\Pc_\party$, \cref{eq:CIGapZ} holds when limiting $i$ to be a round where $\Pc_\party$ is supposed to send the \ith message. The above attack is effective when executed by the relevant party.}

The clear limitation of the above attack is that, in many cases, the values of both $X_i=\ex{\out \mid T_{\le i} }$ and $\ex{Z_i^\bj \mid T_{\le i}}$  are \emph{not} efficiently computable (given  $T_{\le i}$). Indeed (assuming the existence of oblivious transfer), the above $\Theta(1/\sqrt{r})$ lower bound does not  hold  for  $\np< \loglog r$  \cite{BHLT17,HaitnerT17,MoranNS16}. 

\subsubsection{Towards an Efficient Attack via  Augmented Weak Martingales}
 The first step towards making the above attack efficient is \emph{not} to  define the $X_i$'s as a function of the  transcript. Indeed, even given the first message $T_1$, computing  $\ex{\out \mid T_1}$ might involve inverting a one-way function. Our solution is to define $X^\party_i= \ex{\out \mid Z^\party_{\le i}}$; namely, the expected outcome given $\Pc^\party$'s  backup values.  The immediate advantage is that the backup values are only bits. Thus, $X^\party_1$ has only two possible values, and  computing it from $Z_1$ can be done efficiently. Yet, for  large values of $i$, the computation of $X^\party_i$ (depending on $Z_1^\party,\ldots,Z_i^\party$) might still be infeasible. 
    
Thankfully, our new result for sum-of-squares-augmented weak martingales (\cref{theorem:SosMartingalesInf}) circumvents this problem. Let $f(Z_1^\party,\ldots,Z_r^\party) = \ex{\out \mid Z^\party_{\le r}}$.  By definition, it holds that $f(Z_1^\party,\ldots,Z_r^\party) = Z^\party_r \in \zo$, and thus $\ex{f(Z_1^\party,\ldots,Z_r^\party)} = 1/2$. \cref{theorem:SosMartingalesInf} yields that for the Doob-like  sequence $X_i^\party= \ex{\out \mid  Z^\party_{i},X_{i-1}^\party,\sum_{j\in [i-1]} (X_j^\party - X_{j-1}^\party)^2}$, it holds that  (again, omitting  absolute values and constant factors)
    
 \begin{align}\label{eq:AWMGap}
   \text{Jump:}& \hfill  &  \pr{\exists i\in [r] \colon   X^\party_{i} - X^\party_{i-1} \ge 1/\sqrt{r}} \in \Omega(1).
 \end{align}
    
Using a rounded variant of the $X^\party_i$'s, the value of $X^\party_i$ is only a function of $\size{\Supp(Z^\party_{i})} \cdot r^2  \in O(r^3)$ bits, and thus can be computed efficiently.  Namely, the martingale attack of \cite{CleveI93}  (\ie aborting in the event of an observed gap) \wrt this definition of $X_i$ is now efficient.  Similarly to   \cite{CleveI93}, we obtain an $\Omega(1/\sqrt{r})$ attack  if   
\begin{align}\label{eq:eq:AWMGapZ}
\text{Attack slot:}& \hfill  & \pr{\exists i\in [r] \colon X^\party_{i}  -  \ex{Z_{i-1}^\bj  \mid Z^\party_{i},X^\party_{i-1}, \sum_{j\le i-1} (X^\party_{j} - X^\party_{j-1})^2} \ge 1/2\sqrt{r}} \in \Omega(1).
\end{align}
The coin-flipping protocols of \cite{BHLT17,HaitnerT17,MoranNS16} show that the equation above does not hold in general. Nevertheless, we show that  (for a suitable variant of) the above inequality does hold for the case $\np \ge  r$, and thus the ``martingale'' attack achieves the desired bias for this case. The case $\np^{k} \ge  r$ for $k \ge 2$ is significantly more complex, but follows the same principle.  Details below.

\subsubsection{An Efficient Attack for $\np = r$}\label{sec:intro:tec:3}
 Let $(\Pc_1,\ldots,\Pc_\np)$ be the parties of $\Pi$. For $\party\in [\np]$,  let   $Z_i^\party \in \zo$  be the  output   (backup value) party $\Pc_\party$ outputs if \emph{all} other parties abort right after  the \ith round, and for $\cs \subseteq [\np]$ let  $Z_i^\cs =  \frac{1}{\size{\cs}} \cdot \sum_{s \in \cs} Z_i^s$.  For a  subset   $\cs\subseteq [\np]$,   consider the sequence of augmented weak martingales $X^\cs_i = \ex{\out \mid  Z^\cs_{i},X_{i-1}^\cs,\sum_{j\in [i-1]} (X_j^\cs - X_{j-1}^\cs)^2}$.  As before,  with constant probability  $X^\cs_{i} - X^\cs_{i-1} \ge 1/\sqrt{r}$  for some $i\in [r]$.   Hence, \wlg,  
 \begin{align}\label{eq:AWMGapZMany}
  \text{Jump:}& \hfill  &  \pr{\exists i\in [r] \colon X^\cs_{i}  - Z^\cs_{i-1}    \ge 1/2\sqrt{r}} \in \Omega(1).
 \end{align} 
 A crucial observation, and the reason why considering a number of parties that is \emph{linear} in the round complexity is rewarding, is that, with high probability over the choice of $\cs$ of size $\np/2$, it holds that
 \begin{align}\label{eq:intro:appx}
  \text{Similar backup values:}& \hfill  &   \forall i\in[r] \colon  \ \ Z^\cs_i = Z_i^{\ocs} \pm 1/3\sqrt{r}.
 \end{align} 
 Namely, $Z^\cs_i $ is a good estimation for $Z^\ocs_i$, for all rounds $i\in [r]$ \emph{simultaneously}.\footnote{Actually, this requires $\np = r\log r$, but we ignore such log factors in  this informal discussion.}  
 
 Indeed, since $\cs$ is chosen at random,  $Z_i^\cs$ ($=\frac{1}{\size{\cs}} \cdot \sum_{s \in \cs} Z_i^s$) is  a $1/3\sqrt{r}$ approximation of $Z_i^{[\np]}$ and thus of   $Z_i^{ \ocs}$. Fix such a good set $\cs$. The following martingale attack biases the output of a random  party  $\Pc_h$ not in $\cs$ (\ie  $h\la {\ocs}$) towards zero.  In the \ith round, the attacker  aborts all parties but $\Pc_h$ if  $ X^\cs_{i} - Z_{i-1}^\cs  \ge 1/6\sqrt{r}$. \cref{eq:AWMGapZMany,eq:intro:appx} implies that  the above adversary biases the output of $\Pc_h$ towards zero by $\Omega(1/\sqrt{r})$.

    \subsubsection{An Efficient Attack for $\np^{k} \ge r$  via  Differentially Private   Sampling}\label{sec:intro:tec:4}
    We describe the attack for $\np^2 \ge r$, and then briefly highlight the extension for $k\ge 3$. 
    
    A critical part of the above  attack  for $\np = r$ (stated in \eqref{eq:intro:appx}) is that for a random (and thus for some) subset $\cs \subseteq [\np]$ of size $\np/2$, it holds that $Z^\cs_i$  is at most  $O(1/\sqrt{r})$-far from $Z_i^{\ocs}$. This is not the case for  $\np^2 = r$, where  we are only guaranteed that $Z^\cs_i$  is at most  $O(1/\sqrt{\np}) = O(1/\sqrt[4]{r})$-far from $Z_i^{\ocs}$, a too-rough approximation for our needs, since the error is larger than the potential gain of $O(1/\sqrt{r})$). 
    
    Our  solution is to consider the \emph{joint} backup values for \emph{pairs} of parties.  That is, the joint output of such a pair given that all other parties abort. Considering the pairs' backup values, however, raises a different  problem. The adversary can no longer examine the  values of a random large subset $\cP \subsetneq\binom{[\np]}{2} $ of backup values, as we described in the case $\np=r$, since \emph{each}  party in $[\np]$ (and, in particular, the honest party) takes part in $\Theta(1/\np)$ fraction of $\cP$, with high probability. Rather, we let the attacker examine the backup values of the pairs $\binom{\cs}{2}$, for some subset  $\cs \subsetneq [\np]$. If (the average of) these backup values are a good approximation for the backup value of pairs that contain the honest party, then the previous aborting strategy results in a bias of suitable magnitude. If not, then  we can employ a different type of attack using differentially private sampling  (\cref{thm:LaplaceInf}). More details follow.
   
    For a pair $p  = ( j_1,j_2)\in \binom{[\np]}{2}$,  let   $Z_i^p \in \zo$  be the  joint output   (backup value) of the parties  $\Pc_{j_1}$ and $\Pc_{j_2}$, if \emph{all} parties but them abort right after  the \ith round. For $\cP \subseteq  \binom{\np}{2}$,  let  $Z_i^\cP =  \frac{1}{\size{\cP}} \cdot \sum_{p \in \cP} Z_i^p$. Consider  the sequence of augmented weak martingales $X^\cs_i = \ex{\out \mid  Z^{\binom{\cs}{2}}_{i},X^\cs_{i-1},\sum_{j\in [i-1]} (X^\cs_{j} - X^\cs_{j-1})^2}$, for some  subset   $\cs\subseteq [\np]$. As before,  with constant probability  $X^\cs_{i+1} - X^\cs_i \ge 1/\sqrt{r}$  for some $i$. Assuming that 
   \begin{align}\label{eq:intro:2:1}
   \text{Similar backup values:}& \hfill  &  Z_i^{\binom{\cs}{2}}  =    Z_i^{\cs \times \overline{\cs}}  \pm o(1/\sqrt{r}),
   \end{align} 
   for every $i$. Namely, the average backup values of pairs of parties seen by an attacker controlling all parties in $\cs$, is very close to the average of the backup values of pairs containing one party in $\cs$ and one party not in $\cs$. Similarly to the case $\np = r$, the above assumption enables the following martingale attack biasing the output of a random party $\Pc_h$ not in $\cs$ (\ie $h\gets \overline{\cs}$) towards zero by $\Omega(1/\sqrt{r})$. In the \ith round, if  $X^\cs_{i} - X^\cs_{i-1} > 1/2\sqrt{r}$ the attacker aborts  all parties but $(\Pc_h,\Pc_{s})$, for a random $s \la\cs$.

  Unlike  the  case $\np = r$, \cref{eq:intro:2:1} might be false (for any set $\cs$).   Fortunately, if this happens, we can mount a different attack, described below.

   \paragraph{The differentially private sampling attack.} 
      Assume for simplicity that  
    \begin{align}\label{eq:intro:2:p}
  \text{Non-similar backup values:}&   \hfill  & \pr{\exists i\in [r]\colon Z_i^{\binom{\cs}{2}}  -   Z_i^{\cs \times \overline{\cs}}  > 1/\sqrt{r}} \in \Omega(1).
   \end{align} 
   This calls for the following attack biasing the output of a random honest party $\Pc_h$, for $h\la \cs$,  towards zero. For a pair-subset $\cP \subseteq \binom{[n]}{2}$, let $\cP \rmv{h}$  stand for all pairs in $\cP$ that do not include $h$. In the \ith round,  the attacker checks whether  $G_i^{\setminus  h} = Z_i^{\binom{\cs}{2} \rmv{h}} - Z_i^{(\cs \times\ocs)\rmv{h}}  > 1/2\sqrt{r}$. If so, it  aborts  all parties but  $(\Pc_h,\Pc_{s})$ for a random $s \la\cs$.   The attack performs well if the backup values of the corrupted parties are a good approximation of the expected value of the honest party output.  In particular, if for every $h$ and $i$:
   \begin{align}\label{eq:intro:25:p}
   \text{Strong gap similarity:}&   \hfill  & G_i^{\setminus  h} = G_i^h \pm o(1/\sqrt{r}),
    \end{align}
    for $G_i^h = Z_i^{\set{h} \times (\cs\setminus\set{h}) }- Z_i^{ \set{h} \times \ocs}$. Unfortunately,  \cref{eq:intro:25:p} might be false, without yielding any useful consequences. Rather, we can only assume the weaker guarantee that 
   \begin{align}\label{eq:intro:3:p}
  \text{Weak gap similarity:}&   \hfill  & G^{\setminus h}_i = G_i^h \pm o(1/\sqrt{\np})= G_i^h \pm o(1/\sqrt[4]{r}),
   \end{align} 
    for every $h$ and $i$. Indeed,  if \cref{eq:intro:3:p} does not hold, then (w.l.o.g) for some party $h'  \in \cs $  it holds that $Z_i^{ \set{h} \times \ocs} \neq Z_i^{ \set{h'} \times \ocs}\pm  \Theta(1/\sqrt[4]{r}) $.  That is, when restricting our attention to the $\np-1$ pairs containing $h'$, the gap is $\Theta(1/\sqrt{\np})$. Such gap  yields that an attack in  the spirit of the one used for  the $\np = r$ case induces a large bias on an honest party chosen randomly from $\ocs$.

    The guarantee of \cref{eq:intro:3:p} does not suffice for the simple attack described below \cref{eq:intro:2:p} to go through. Roughly, the reason is that the approximation error (\ie $ o(1/\sqrt[4]{r})$) is larger than the expected gain of $\Omega(1/\sqrt{r})$. Indeed, we find ourselves  in the setting of the oblivious sampling game considered in \cref{sec:intro:DP}, letting $\gamma = 1/\sqrt{r}$, $s_i^h = G_i^h $ and $s_i = G_i =  \eex{h\la \H}{s_i^h} $, and $\sigma = \size{s_i - s_i^h} \in o(1/\sqrt[4]{r}) > 1/\sqrt{r}$. As explained in \cref{sec:intro:DP}, a threshold  deterministic attack for this sampling game, i.e.~seeing the values of $\set{s_i^h}_{i\in [r]}$ one by one, until one decides to abort, might achieve no reward for random $h$, and neither for any fixed $h$. In particular, it might hold that   $s_i^h=0$ for $i$ being the aborting round.

      Fortunately, since we are in the setting of the oblivious sampling game, \cref{thm:LaplaceInf} yields that  randomized aborting online strategy that adds noise to its halting decision,  in every round, performs significantly better. Specifically, the strategy that adds the right Laplace noise to $s_i^{\setminus h}= \eex{h' \la \H \setminus h}{s_i^{h'}}$, and aborts if the result is greater than  $\gamma/2$, achieves expected revenue $\gamma/2 - \sigma^2 =  1/2\sqrt{r} -  o(1/\sqrt[4]{r})^2 > 1/4\sqrt{r}$.  The above holds for a random $h$, and thus for some fixed $h$ as well. It follows that the induced attacker on the protocol, controlling all parties by $\Pc_h$, and applying the same randomized aborting strategy, obtains   $Z_i^{\set{h} \times (\cs\setminus\set{h}) }- Z_i^{ \set{h} \times \ocs} \ge 1/4\sqrt{r}$ where $i$ denotes the aborting round.  This translates, using the same means as in the deterministic threshold attack for choosing the non aborting pair, into a $1/4\sqrt{r}$-bias of $\Pc_h$'s output.

	Intuitively, the point of using the differentially private sampling mechanism is to avoid identifying the choice of honest party. That is, the effect of the back-up values of pairs containing any single party on the adversary's decision to trigger the attack is diminished (\cref{eq:intro:3:p} keeps the sensitivity of this decision small).

   \paragraph{The case $\np^{k} \ge r$ for $k\ge 3$.}
    To begin, assume that $k=3$ (\ie $\np^{3} \ge r$).  For such value of $\np$, it holds that  $1/\sqrt{\np} = 1/{\sqrt[6]{r}} \gg 1/\sqrt[4]{r}$. Thus, the promise $G_i = G_i^h \pm o(1/\sqrt{\np}) $  does not suffice for the  differentially private based attack to go through. Rather, we need to assume that  $G_i = G_i^h \pm o(1/\sqrt[4]{r}) = G_i^h \pm o(1/\np^{3/4})$. We show that if the latter does not hold, the attacker can fix a party and never abort it (\ie  we restrict the subset of all backup values to those containing  this party) we are essentially in the setting of   $\np^{2} \ge r$. Namely, either we have differentially private based attack, or we have a martingale attack (both \wrt the above fixing of a never aborting party). 
    
    For larger values of  $k$, we iterate the above, fixing non-aborting parties one after the other, until one of the  differentially private based attacks or the  martingale attack go through.

\subsubsection{Computing  Doob-like Weak Martingales} \label{sec:intro:CompDoob}
In \cref{sec:intro:martingales}, we claimed that  if the support size of the $Z_i$'s is small, then  the sum-of-squares-augmented weak martingales $X_1,\ldots,X_r$ defined by the rounded  Doob-like sequence $X_i = \RND(\ex{f(Z) \mid  Z_i,X_{i-1},\sum_{j\in[ i-1]} (X_j - X_{j-1})^2})$ can be efficiently computable, where $\RND$ is a  small support rounding function. We use this guarantee above to argue that our attack is efficient. While this claim trivially holds when considering non-uniform algorithms,  the argument for \emph{uniform} algorithms is more subtle, and since we  believe it to be of independent interest, we highlight it below. 

 For simplicity, we focus on  the weak martingales defined by the Doob-like sequence $X_i = \RND(\ex{f(Z) \mid  Z_i,X_{i-1}})$. Consider the mappings $\chi_1,\ldots,\chi_r$ inductively defined by  $\chi_i(z) =  \RND(\ex{f(Z) \mid  Z_i = z_i,X_{i-1} = \chi_{i-1}(z)})$. It is easy to verify that  $X_i = \chi_i(Z_{\le r})$, and since each of these mappings has a small description, the sequence  $X_0,\ldots,X_r$  can be computed from $Z_1,\ldots,Z_r$ by a small circuit holding these mappings.   Arguing that the above can be performed by an efficient (uniform) \emph{algorithm}, things get slightly more involved.  While we can  estimate well the mapping $\chi_1(z) = \RND(\ex{f(Z) \mid  Z_1= z_1})$ via sampling, even a small unavoidable error in the estimation might cause a larger error in the estimation of $\chi_2 = \RND(\ex{f(Z) \mid  Z_2 = z_2,X_{i-1} = \chi_1(z)})$. This is since the dependency on $\chi_1$ is in the  conditioning, and thus estimating $\chi_2$ using an estimate of $\chi_1$ amplifies the error. This might lead to very large errors when trying to use the estimated mapping for calculating $X_i$'s of large indices.

\newcommand{\Xu}{\widehat{X}}
\newcommand{\Mu}{\mu}

So rather, we consider the efficiently computable sequence $\Xu = (\Xu_1,\ldots,\Xu_r)$ defined by $\Xu_i = \Mu(Z_{\le i})$,  for $\Mu_1,\ldots,\Mu_r$ being an  \emph{estimation} for  $\chi_1,\ldots,\chi_r$ done via sampling.\footnote{The mapping $\Mu_1,\ldots,\Mu_r$  are constructed   iteratively. After constructing $\Mu_1,\ldots,\Mu_{i-1}$,  the value  of $\Mu_i(z)$ is set by approximating  via sampling (a rounding of) $\ex{f(Z) \mid Z_i = z_i, \Mu_{i-1}(Z) = \Mu_{i-1}(z)}$.} Since $\Xu$ is defined \wrt the approximated mappings, it is  a weak martingale, even if the approximated  mappings \emph{wrongly approximate} the real ones. The reason is that the quality of $\Xu_i$ as a ``Doob-like sequence'' --- i.e.~how well it approximates $\ex{f(Z) \mid Z_i, \Xu_{i-1}}$ --- is not affected by the quality of $\Mu_1,\ldots,\Mu_{i-1}$,  and thus errors do not accumulate.  Taking the same approach for the sum-of-squares-augmented weak martingales, our construction yields that  with high probability over the choice of the estimated mappings $\Mu_1,\ldots,\Mu_r$, the sequence $\Xu$ satisfies all the properties required by \cref{theorem:SosMartingalesInf}, and thus we can  invoke our  attack  using this  sequence.


\subsection{Related Work}

\subsubsection{Coin Flipping}

A coin-flipping protocol  is  $\delta$-fair, if no efficient attacker (controlling any number of parties) can bias the output (bit) of the honest parties by more than $\delta$. 


\paragraph{Upper bounds.} 
\citet{Blum83} presented a two-party two-round coin-flipping protocol with  bias $1/4$. \citet{AwerbuchBCGM1985} presented an $\np$-party $r$-round protocol with bias $O(\np/\sqrt{r})$ (the two-party case appears also in  \citet{Cleve86}).  This was improved to (almost) $O(1/\sqrt{r})$ in \cite{BeimelOO10, CHOR18}, for the case where the fraction of honest parties is constant. \citet*{MoranNS09} resolved the two-party case, presenting a two-party $r$-round  coin-flipping protocol with bias $O(1/r)$. \citet{HaitnerT14}  resolved the three-party case up to poly  logarithmic factor,  presenting a three-party coin-flipping protocol with bias  $O(\polylog(r)/r)$. \citet{BHLT17} constructed an $\np$-party $r$-round  coin-flipping protocol with bias $\widetilde{O}(\np^3 2^\np/r^{\frac{1}{2}+\frac{1}{2^{\np-1}-2}})$. In particular,  their four-party coin-flipping protocol the bias is $\widetilde{O}(1/r^{2/3})$, and for $\np = \log \log r$ their  protocol has bias smaller than  \cite{AwerbuchBCGM1985}.

For the case where less than $2/3$ of the parties are corrupt, \citet{BeimelOO15} have constructed 
an $\np$-party $r$-round  coin-flipping protocol with bias  $2^{2^k}/r$, tolerating up to $t=(\np+k)/2$ corrupt parties.  \citet{AO16} constructed an $\np$-party $r$-round coin-flipping protocol with bias $\widetilde{O}(2^{2^\np}/r)$, tolerating up to $t$ corrupted parties, for constant  $\np$ and $t<3\np/4$.

\paragraph{Lower bounds.} 

\citet{Cleve86} proved that for every $r$-round two-party coin-flipping protocol there exists is an efficient adversary that can bias the output by  $\Omega(1/r)$.  \citet{CleveI93} proved that   for every $r$-round two-party coin-flipping protocol there exists an inefficient  fail-stop adversary that  biases the output by  $\Omega(1/\sqrt{r})$. They also showed that a similar  attack  exists also if the parties have access to an ideal  commitment scheme. All above bounds extend to multi-party protocol  (with no honest majority) via a simple reduction. 

A different line of work examines the  minimal assumptions required to achieve an  $o(1/\sqrt{r})$-bias  two-party coin-flipping protocols. \citet{Dachman11} have shown that any fully black-box construction of $O(1/r)$-bias two-party protocols based on  one-way functions with $r$-bit input and output needs $\Omega(r/\log r)$ rounds. \citet{DachmanMM14} have shown that there is no fully black-box and function \textit{oblivious}
construction of  $O(1/r)$-bias two-party protocols from one-way functions (a protocol is function oblivious if the outcome of the protocol is independent of the choice of the one-way function used in the protocol).   Very recently,  \citet{HaitnerMO18} have used  an attack in the spirit of the one used in this paper, together with the dichotomy result of \citet{HaitnerNOSS18}, to prove that key-agreement is a necessary assumption for \emph{two-party} $r$-round coin-flipping  protocol of bias $o(1/\sqrt{r})$, as long as $r$ is independent of the security parameter.

A different type of lower bound was given by \citet{CohenHOR2016}. They focused on the communication model required for fully secure computation,  and in particular showed that in the setting where broadcast is impossible   (\eg peer-to-peer network  with $1/3$ fraction  of dishonest parties), there exists no many-party  coin-flipping protocol with  non-trivial bias (\ie noticeably smaller then $1/2$).

\subsubsection{$1/p$-Secure Protocols}
 \citet{Cleve86} result implies that for many functions fully-secure computation without an honest majority is not possible. \citet{GordonK10} suggested the notion of $1/p$-secure computation to bypass this impossibility result. Very informally, a  protocol is $1/p$-secure if every poly-time adversary can harm the protocol with probability at most $1/p$ (e.g., with probability $1/p$ the adversary can learn the inputs of honest parties, get the output and prevent the honest parties from getting the output, or bias the output).
\citet{GordonK10} constructed for every polynomial $p(\kappa)$ (where $\kappa$ is the security parameter) an efficient two-party  $1/p(\kappa)$-secure protocol for computing a function $f$, provided that  the size of the domain of at least one party in $f$ or the size of the range of $f$ is bounded by a polynomial. 
\citet{BLOO11} generalized this result to multi-party protocols when the number of parties is constant -- for every function $f$ with $O(1)$ inputs such that the domain of each party (or the size of the range of $f$) is bounded by a polynomial and for every polynomial $p(\kappa)$, they presented  
an efficient $1/p(\kappa)$-secure protocol for computing the function.

\citet{GordonK10} and \citet{BLOO11} also provided impossibility results explaining why their protocols require bounding the size of the domain or range of the functions. Specifically, \citet{GordonK10} described a two-party function whose size of domain of each party and size of range is $\kappa^{\omega(1)}$ such that this function cannot be computed by any poly-round protocol achieving $1/3$-security. \citet{BLOO11} used this result to construct a function
$f\colon \zo^{\omega(\log n)}\rightarrow \kappa^{\omega(1)}$ (\ie a function with $\omega(\log n)$ parties where the domain of each party is Boolean) such that this function cannot be computed by any poly-round protocol achieving $1/3$-security. They also showed the same impossibility result for a function with $\omega(1)$ parties where the domain of each party is bounded by a polynomial is the security parameter. We emphasize that these impossibility results do not apply to coin-flipping protocols, where the parties do not have inputs.

\subsubsection{Complete Fairness Without Honest Majority}
\citet{Cleve86} result was interpreted as saying that non-trivial functions cannot be computed with complete fairness without an honest majority. In a surprising result, \citet{GordonHKL11}
have shown that the millionaire problem with a polynomial size domain and other interesting functions can be computed with complete fairness in the two-party setting. The two-party functions that can be computed with complete fairness were further studied in~\cite{AsharovLR13,Asharov14,Makriyannis14,AsharovBMO15}; in particular,  \citet{AsharovBMO15} characterized the Boolean functions that can be computed with complete fairness.
\citet{GordonK09} have studied complete fairness in the multi-party case and constructed completely-fair protocols for non-trivial functions in this setting.



\subsubsection{Differential Privacy}
Differential privacy, introduced by \citet{dwork2016calibrating}, provides  a provable guarantee of privacy for data of individuals. Assume there is a database containing private information of individuals and there is an algorithm computing some function of the database. We say that such randomized algorithm  is differentially private if changing the data of one individual has small affect on the output of the algorithm. For example, if, for a database $D$, a function $f(D)$ returns a numerical value in $[0,1]$, then an algorithm returning $f(D)+\text{\em noise}$, where {\em noise} is distributed according to the Laplace distribution (with suitable
parameters), is a differentially private algorithm. Since the introduction of differential privacy in 2006, many algorithms satisfying differential privacy were introduced, see, \eg \citet{DworkRoth14}. In this work we use differential privacy (i.e., Laplace noise) not for protecting privacy, but rather to provide oblivious sampling. This is similar in spirit to the usage of differential privacy, by \citet{DworkFHPRR17},  to enable adaptive queries to a database.

\subsection*{Open Questions} \Nnote{see edits}
Our lower bound is only applicable if the number of parties $\np$ is greater than $\log(r)$, and it is very close to the $\np/\sqrt r$-upper bound (protocol) of  \cite{AwerbuchBCGM1985}, if the number of parties is $\np = r^{\eps}$ for ``small'' $\eps>0$.   For $\np <\loglog r$ parties, the upper bounds of \cite{MoranNS16,HaitnerT17,BHLT17} tell us that no attack achieving $\widetilde{O}(1/\sqrt r)$-bias exists. For $ \loglog (r) < \np \le \log(r)$ parties, we know of no such limitation, yet our attack is either innaplicable, or it yields a bias that is smaller than $1/r$. Thus,  for the latter choice of parameters, the only known  bound remains the  $1/r$-lower bound of \cite{Cleve86}, which is  far from meeting the upper bound of \cite{AwerbuchBCGM1985}.

\subsection*{Paper Organization}

Basic definitions and notation used throughout the paper, are given in \cref{sec:Prlim},  we also prove therein some useful inequalities used by the different sections.   Our result  for augmented week martingales is stated and proved in \cref{sec:martingales},  and the oblivious sampling result is given in \cref{sec:Laplace}.  The proof of the main theorem is given in \cref{sec:attack}.

\remove{
Our results are formally stated in \cref{sec:OurResults}. Their full proofs  are given in the appendix. 

}

\newcommand{\sosa}{\textsf{SoS}-augmented }

\section{Preliminaries}\label{sec:Prlim}

\subsection{Notation}\label{sec:prelim:notation}
We use calligraphic letters to denote sets, uppercase for random variables and functions,  lowercase for values, and boldface for vectors. All logarithms considered here are in base two. For a vector $\vct$, we denote its \ith entry by $\vct_i$ or $\vct[i]$. 
For $a\in \R$ and $b\geq 0$,  let $a\pm b$ stand for the interval $[a-b,a+b]$. Given sets $\cs_1,\ldots,\cs_k$ and $k$-input function $f$, let  $f(\cs_1,\ldots,\cs_k) \eqdef \set{f(x_1,\ldots,x_j) \colon x_i\in \cs_i}$, \eg $f(1\pm 0.1) = \set{f(x) \colon x\in [.9,1.1]}$. For $n\in \N$, let $[n] \eqdef \set{1,\ldots,n}$ and $(n) \eqdef \set{0,\ldots,n}$. Given a vector $\vct \in \zo^\ast$, let $w(\vct)\eqdef \sum_{i\in [\size{\vct}]} \vct_i$.  For $x,\delta\in [0,1]$ let $\rnd{\delta} (x)= k\delta$, for $k\in \Z$ being the largest number with $k\delta \le x$.  For a function $f\colon \cA \mapsto \cB$, let $\Img(f)= \set{f(a)\colon a\in \cA}$. 

Let $\poly$ denote the set of all polynomials, let \ppt stand for probabilistic  polynomial time, let  \pptm denote a \ppt algorithm (Turing machine) and   let \polynu stands for a \emph{non-uniform} \pptm.  A function $\nu \colon \N \to [0,1]$ is \textit{negligible}, denoted $\nu(n) = \negl(n)$, if $\nu(n)<1/p(n)$ for every $p\in\poly$ and large enough $n$.

\subsection{Coin-Flipping Protocols}
Since the focus of this paper is showing the non-existence of coin-flipping protocols with small bias, we  will only focus on the correctness and bias of such protocols.   See \cite{HaitnerT17} for a complete definition of such protocols. 

\begin{definition}[correct coin-flipping protocols]\label{def:CorrectCT}
	A multi-party protocol is a {\sf correct coin-flipping protocol}, if
	\begin{itemize}
		\item When interacting with an efficient adversary controlling a subset of the parties, the honest parties {\sf always} output the same bit, and
		\item The common  output in  a random {\sf honest} execution of the protocol is a uniform bit.
	\end{itemize}
\end{definition}

\begin{definition}[Biassing coin-flipping protocols]\label{def:BiassingCT}
	An adversary $\Ac$ controlling a strict subsets of the parties of a correct coin-flipping protocol {\sf biases}  its output by $\delta\in [1/2,1]$, if when interacting with  the parties controlled by $\Ac$, the remaining honest parties output some  a priory fixed bit $b\in\zo$ with probability $\frac12 + \delta$.
	
	Such an adversary is called {\sf fail stop}, if the parties in its control honestly follow the prescribed protocol, but might abort prematurely. The adversary is a rushing adversary, that is, in each round, first the honest parties send their messages, then the adversary might instruct some of the parties to abort (that is, send a special ``abort'' message to all other parties), and finally, all corrupt parties that have not aborted send their messages.
\end{definition}


\subsection{Basic Probability Facts}

Given a distribution $D$, we write $x\gets D$ to indicate that $x$ is selected according to $D$. Similarly, given a random variable $X$, we write $x\gets X$ to indicate that $x$ is selected according to $X$. Given a finite set $\cs$, we let $s\la \cs$ denote that $s$ is selected according to the uniform distribution on $\cs$. Let $D$ be a distribution over a finite set $\Uni$, for $u\in\Uni$, denote $D(u) =\ppr{X\la D}{X=u}$ and for $\cs\subseteq\Uni$ denote $D(\cs) =\ppr{X\la D}{X\in\cs}$. Let the support of $D$, denoted $\Supp(D)$, be defined as $\set{u\in\Uni: D(u)>0}$. The \emph{statistical distance} between two distributions $P$ and $Q$ over a finite set $\Uni$, denoted as $\SD(P,Q)$, is defined as $\max_{\cs\subseteq \Uni} \size{P(\cs)-Q(\cs)} = \frac{1}{2} \sum_{u\in \Uni}\size{P(u)-Q(u)}$.

\begin{fact}[Hoeffding's inequality]\label{fact:Hoeffding}
	Let $\cx=\{x_i\in \set{0,1}\}_{i=1}^n$ and $\mu=\frac{1}{n}\cdot \sum_{i=1}^n x_i$. Let $E\la {[n]\choose n/2}$ i.e.~ $E$ denotes a random subset of $[n]$ of size $n/2$. For any $\eps\geq 0$, it holds that 
	\begin{align*}
		\pr{\,\size{\mu-\frac{2}{n}\cdot \sum_{\ell\in E} x_i}\geq \eps/\sqrt{n}} \leq 2\exp(-\eps^2)\enspace .
	\end{align*}
\end{fact}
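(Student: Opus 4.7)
The plan is to recognize this as the concentration inequality for sampling without replacement (equivalently, a hypergeometric tail bound). First, I would observe that since each index $i \in [n]$ lies in the random set $E$ with probability exactly $1/2$, the expectation of $S_E := \frac{2}{n}\sum_{\ell \in E} x_\ell$ is exactly $\mu$. The random variable $\sum_{\ell \in E} x_\ell$ follows a hypergeometric distribution with population size $n$, number of ``successes'' $n\mu$, and sample size $n/2$; so we are after a Chernoff-type tail bound for this hypergeometric sum.

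The key step is to invoke Hoeffding's classical reduction principle: for any convex $\phi$, the expectation $\Ex[\phi(S_E)]$ in the sampling-without-replacement model is dominated by the corresponding expectation $\Ex[\phi(\widetilde S)]$ in the sampling-with-replacement model, where $\widetilde S = \frac{2}{n}\sum_{j=1}^{n/2} \widetilde x_j$ and the $\widetilde x_j$ are i.i.d.~uniform draws from the multiset $\{x_1, \ldots, x_n\}$. Applying this to $\phi(y) = e^{t y}$ for arbitrary $t \in \R$ bounds the moment generating function of $S_E - \mu$ by that of $\widetilde S - \mu$.

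With this reduction in hand, I would apply the standard Chernoff--Hoeffding inequality to the i.i.d.~average $\widetilde S$: since the summands $\widetilde x_j$ take values in $[0,1]$ and there are $m = n/2$ of them,
\begin{align*}
\pr{|\widetilde S - \mu| \geq t} \leq 2 \exp(-2 m t^2) = 2\exp(-n t^2).
\end{align*}
Setting $t = \eps/\sqrt{n}$ gives exactly $2\exp(-\eps^2)$. Since the MGF domination transfers to a tail bound through the usual Chernoff argument (apply Markov to $e^{t(S_E - \mu)}$ and optimize over $t$), the same tail inequality carries over to $S_E$, yielding the claimed bound.

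The main obstacle is justifying the reduction rigorously; but this is a well-known classical result (provable via an exchangeability / conditional expectation argument, writing the without-replacement sum as a conditional expectation of the with-replacement sum and using Jensen's inequality), and is standard enough to be cited rather than reproved. As an alternative, one could apply Serfling's inequality for sampling without replacement, which in fact yields a slightly sharper bound with a $(1 - (m-1)/n)$ factor inside the exponent; either route is sufficient for the statement we need.
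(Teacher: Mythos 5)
The paper does not prove this fact; it is stated as a known result (Hoeffding's inequality for sampling without replacement) with an author note to add a citation, and no argument appears in the body or in the missing-proofs appendix. Your derivation is the standard one and is correct: the expectation computation, the reduction from without-replacement to with-replacement sampling via convex domination of the moment generating function (Hoeffding 1963, Theorem 4), the application of the two-sided Chernoff--Hoeffding bound with $m=n/2$ summands in $[0,1]$ giving $2\exp(-2mt^2)=2\exp(-nt^2)$, and the substitution $t=\eps/\sqrt{n}$ all check out. This matches what the authors implicitly rely on when they invoke the fact, so your writeup fills in the omitted citation/derivation rather than deviating from the paper's approach.
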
 
\Nnote{don't forget to cite}

\subsubsection{The Laplace Distribution}
\begin{definition}\label{def:LaplaceDis} The Laplace distribution with parameter $\lambda \in \R^{+}$, denoted $\Lap{\lambda}$,  is defined by the density function $f(x)=\exp(-\size{x}/\lambda)/2\lambda$ .  
\end{definition}

The following facts easily follow from the definition of the Laplace distribution. 
\begin{fact}\label{fact:LaplacePr} For every $x\in \R$, it holds that 
\begin{eqnarray*}
\pr{\mathsf{Lap}(\lambda)\geq \lambda \size{x}} &=& \frac{1}{2}\cdot \exp({-\size{x}}),\\ 
\pr{\Lap{\lambda}\geq -\lambda\size{x}}&= &1-\frac{1}{2}\cdot \exp({-\size{x}}).
\end{eqnarray*}
\end{fact}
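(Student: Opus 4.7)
The plan is to compute both probabilities directly from the density $f(y) = \exp(-|y|/\lambda)/(2\lambda)$ of $\Lap{\lambda}$ given in \cref{def:LaplaceDis}. Since both thresholds depend only on $|x|$, I can assume without loss of generality that $x \geq 0$, and argue about the thresholds $\lambda x$ and $-\lambda x$; the general case follows by replacing $x$ by $|x|$ throughout.

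For the first identity, I would use that for $y \geq \lambda x \geq 0$ the density simplifies to $\exp(-y/\lambda)/(2\lambda)$, so
\[
\pr{\Lap{\lambda} \geq \lambda x} \;=\; \int_{\lambda x}^{\infty} \frac{1}{2\lambda}\exp(-y/\lambda)\, dy \;=\; \tfrac{1}{2}\exp(-x),
\]
which equals $\tfrac{1}{2}\exp(-|x|)$ under the above reduction. This is just an elementary antiderivative calculation (or the substitution $u = y/\lambda$), and the factor $\tfrac12$ appears because the full Laplace mass on the positive half-line is $\tfrac12$.

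For the second identity, I would exploit the symmetry $f(y) = f(-y)$ of the Laplace density, which gives
\[
\pr{\Lap{\lambda} < -\lambda x} \;=\; \pr{\Lap{\lambda} > \lambda x} \;=\; \tfrac{1}{2}\exp(-x),
\]
where the second equality combines the first identity with the continuity of $\Lap{\lambda}$ (so strict and non-strict inequalities carry the same probability). Taking complements then yields $\pr{\Lap{\lambda} \geq -\lambda x} \;=\; 1 - \tfrac{1}{2}\exp(-|x|)$, as claimed.

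There is no real obstacle here; both identities reduce to a single integration combined with the symmetry of the Laplace density. The only small points to track are the reduction from general $x \in \R$ to $x \geq 0$ via $|x|$, and the swap between $>$ and $\geq$ via continuity of the Laplace distribution—neither of which requires more than a line to justify.
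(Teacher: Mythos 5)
Your proof is correct, and it is exactly the routine computation the paper alludes to but does not write out: the paper simply states that \cref{fact:LaplacePr} ``easily follows from the definition'' of the Laplace distribution and gives no proof, reserving a written proof only for \cref{fact:laplace}. Your integration for the first identity and the symmetry-plus-complement argument for the second are both sound, and the remarks about reducing to $x\ge 0$ and swapping strict/non-strict inequalities via continuity cover the only minor technicalities.
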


\begin{fact}\label{fact:laplace}
	Let  $\gamma, \gamma' \in \R$ and $\lambda \in \R^{+}$. Let $p=\pr{\Lap{\lambda}\geq \lambda\gamma}$ and $p'=\pr{\Lap{\lambda}\geq \lambda\gamma'}$. If $ \size{\eps =  \gamma' - \gamma} \leq 1 $, then $p/ p'\in 1\pm 5\eps$.
\end{fact}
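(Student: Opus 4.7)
The plan is to prove this as a quantitative form of the standard $\eps$-differential privacy property of the Laplace mechanism, in two short steps: first derive an exponential multiplicative bound $p/p' \in [\exp(-|\eps|), \exp(|\eps|)]$, then convert it to the claimed linear bound via elementary Taylor estimates.

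For the first step, I would rewrite $p$ by a change of variables as $p = \pr{\Lap{\lambda} + \lambda\eps \geq \lambda\gamma'}$, so that both $p$ and $p'$ are probabilities that a (shifted or unshifted) Laplace variable lands in the same half-line $[\lambda\gamma', \infty)$. The Laplace density $f(x) = \exp(-|x|/\lambda)/2\lambda$ satisfies the pointwise ratio bound $f(x-\lambda\eps)/f(x) = \exp((|x|-|x-\lambda\eps|)/\lambda) \in [\exp(-|\eps|), \exp(|\eps|)]$ by the reverse triangle inequality. Integrating this pointwise inequality over $[\lambda\gamma', \infty)$ then yields $p/p' \in [\exp(-|\eps|), \exp(|\eps|)]$, uniformly in $\gamma$ and $\gamma'$.

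For the second step, I would use the elementary estimates $\exp(x) \leq 1 + 2x$ for $x \in [0, 1]$ (which holds because the function $1 + 2x - \exp(x)$ vanishes at $0$, has positive derivative at $0$, a unique interior critical point at $x = \ln 2$ where it is positive, and is positive at $x = 1$ since $e < 3$) together with $\exp(-x) \geq 1 - x$ for $x \geq 0$. Combined with the hypothesis $|\eps| \leq 1$, these give $p/p' \in [1-|\eps|, 1 + 2|\eps|] \subseteq 1 \pm 5|\eps|$, as claimed.

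I expect no real obstacle here: this is a two-step elementary computation. The main reason I prefer the shift/density-ratio viewpoint over a direct case analysis on the signs of $\gamma$ and $\gamma'$ is that the ``mixed'' case, where $\gamma$ and $\gamma'$ have opposite signs and thus $p$ and $p'$ straddle $1/2$, would otherwise require somewhat fiddly case-by-case arithmetic using the two formulas of \cref{fact:LaplacePr}, whereas the density ratio bound applies uniformly across all sign regimes.
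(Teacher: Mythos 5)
Your proposal is correct, and it takes a genuinely different and cleaner route than the paper. The paper proves the fact by a four-way case analysis on the signs of $\gamma$ and $\gamma'$, plugging the two explicit CDF formulas from \cref{fact:LaplacePr} into each case and bounding the resulting expressions directly; the mixed-sign cases (where one of $p,p'$ exceeds $1/2$ and the other does not) are exactly the fiddly arithmetic you anticipated and deliberately avoided. Your shift/density-ratio argument --- rewriting $p$ as the probability that a shifted Laplace variable lands in the same half-line $[\lambda\gamma',\infty)$ as $p'$, and bounding the pointwise density ratio by $\exp(\pm\size{\eps})$ via the reverse triangle inequality --- is the standard $\eps$-differential-privacy property of the Laplace mechanism, and it yields $p/p'\in[\exp(-\size{\eps}),\exp(\size{\eps})]$ uniformly across all sign regimes, with no case distinction. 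The conversion to a linear bound via $\exp(-x)\ge 1-x$ and $\exp(x)\le 1+2x$ on $[0,1]$ is sound (your monotonicity argument for the latter is correct: $1+2x-\exp(x)$ vanishes at $0$, increases until $x=\ln 2$, then decreases but remains positive at $x=1$ since $e<3$). Your route even yields the sharper constant $1\pm 2\size{\eps}$, which of course implies the stated $1\pm 5\eps$. The paper's approach is more elementary in that it only uses the closed-form CDF, but it pays for that in bookkeeping; your argument is shorter, sign-uniform, and reusable for any shift-based differentially private noise distribution.
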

For completeness, the proof of \cref{fact:laplace} is given in \cref{sec:MissinProofs}.

\subsubsection{Useful Observations about Iterated Bernoulli Trials}
 

The next lemma bounds the statistical distance between the first success for two experiments of $r$ independent Bernoulli trials satisfying a certain notion of closeness. 

\begin{lemma}\label{fact:basic2}
	Consider two iterative sequences, each of $r$ independent Bernoulli trials.  Let $p_i,p'_i\in [0,1]$ denote the success probability of the \ith trial of the first and second sequence, respectively. Assume that $p_r=p'_r=1$. For $i\in[r]$, let $q_i=  p_i\cdot \prod_{j<i}(1-p_j)$ and $q'_i=  p'_i\cdot \prod_{j<i}(1-p'_j)$. 
		Let $\eps$ be such that  for all $i\in[r]$, it holds that $\frac{p_i}{p'_i},\frac{p'_i}{p_i}, \frac{(1-p'_i)}{(1-p_i)},\frac{(1-p_i)}{(1-p'_i)}\in (1\pm \eps)$. Then, $\sum_{i=1}^{r-1} \size{q_i -q_i'} \le 4\eps(1 - q_r)$.
\end{lemma}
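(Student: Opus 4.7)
The plan is a hybrid argument. For each target index $i \in [r-1]$, I will introduce interpolating probabilities $q_i^{(\ell)}$ for $0 \le \ell \le i$, defined by $q_i^{(\ell)} = p_i^{(\ell)} \prod_{j=1}^{i-1}(1 - p_j^{(\ell)})$, where $p_j^{(\ell)} = p_j$ if $j \le \ell$ and $p_j^{(\ell)} = p_j'$ otherwise. Then $q_i^{(0)} = q_i'$ and $q_i^{(i)} = q_i$, so by telescoping $q_i - q_i' = \sum_{\ell=1}^{i}(q_i^{(\ell)} - q_i^{(\ell-1)})$.

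Two consecutive hybrids differ only at position $\ell$. For $\ell<i$, I will compute directly that $q_i^{(\ell)} - q_i^{(\ell-1)} = p_i'\,(p_\ell' - p_\ell) \prod_{j<\ell}(1-p_j) \prod_{\ell<j<i}(1-p_j')$, and then apply the ratio bound $|p_\ell - p_\ell'| \le \eps p_\ell$ (immediate from $p_\ell'/p_\ell \in 1\pm\eps$) to obtain $|q_i^{(\ell)} - q_i^{(\ell-1)}| \le \eps \cdot q_\ell \cdot p_i' \prod_{\ell<j<i}(1-p_j')$, where $q_\ell = p_\ell \prod_{j<\ell}(1-p_j)$ appears because $p_\ell \prod_{j<\ell}(1-p_j) = q_\ell$. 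For the terminal step $\ell = i$ the same bound applied to $p_i$ yields $|q_i^{(i)} - q_i^{(i-1)}| \le \eps q_i$. A triangle inequality then gives the per-index bound $|q_i - q_i'| \le \eps q_i + \eps p_i' \sum_{\ell=1}^{i-1} q_\ell \prod_{\ell<j<i}(1-p_j')$.

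Finally, I will sum over $i \in [r-1]$ and swap the order of summation to obtain $\sum_{i=1}^{r-1}|q_i - q_i'| \le \eps(1-q_r) + \eps \sum_{\ell=1}^{r-2} q_\ell \bigl(\sum_{i=\ell+1}^{r-1} p_i' \prod_{\ell<j<i}(1-p_j')\bigr)$. The crucial observation is that the parenthesized inner sum telescopes to $1 - \prod_{j=\ell+1}^{r-1}(1-p_j') \le 1$, since it is exactly the probability of a first success somewhere in rounds $\ell+1,\ldots,r-1$ for the shifted primed process. Combined with $\sum_{\ell=1}^{r-2} q_\ell \le 1 - q_r$, this yields $\sum_{i=1}^{r-1}|q_i - q_i'| \le 2\eps(1-q_r) \le 4\eps(1-q_r)$. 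The main subtlety I expect is picking the right form of the ratio bound at each hybrid step: using $|p_\ell - p_\ell'| \le \eps p_\ell$ (rather than $\le \eps(1-p_\ell)$, for instance) is exactly what makes the factor $q_\ell$ materialize and lets the inner survival-probability sum collapse to $\le 1$, thereby avoiding the $(1\pm\eps)^r$-type blow-ups that would arise from a naive bound on the ratio $q_i/q_i'$.
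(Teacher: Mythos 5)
Your proof is correct, and it takes a genuinely different route from the paper's. The paper builds the bound in three auxiliary steps: an exact identity $\sum_{i\le r} q_i\bigl(\sum_{j\le i}p_j\bigr)=1$ (proved by induction on $r$), an inductive bound on $\bigl|\prod_{j\le i}(1-p_j')-\prod_{j\le i}(1-p_j)\bigr|$ in terms of the ``minimum'' probabilities $\wt p_j=\min(p_j,p_j')$, and from these a per-index bound $\abs{q_i-q_i'}\le 3\eps\,\wt p_i\prod_{j<i}(1-\wt p_j)\bigl(\tfrac13+\sum_{j\le i}\wt p_j\bigr)$, which it then sums using the identity. Your hybrid decomposition $q_i-q_i'=\sum_{\ell=1}^{i}\bigl(q_i^{(\ell)}-q_i^{(\ell-1)}\bigr)$ avoids both the min-process and the inductive product lemma entirely: the one hybrid difference factors explicitly, the ratio hypothesis delivers the $q_\ell$ weight directly via $\abs{p_\ell-p_\ell'}\le\eps p_\ell$, and the inner sum $\sum_{i>\ell}p_i'\prod_{\ell<j<i}(1-p_j')$ collapses by telescoping to a survival probability $\le 1$. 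Both arguments rest on the same essential point --- multiplicative closeness of the $p$'s produces a $q_\ell$ factor that, summed, is at most $1-q_r$ --- but your route is a single clean telescope rather than a cascade of lemmas, and it in fact yields the slightly sharper constant $2\eps(1-q_r)$, which you then relax to match the stated $4\eps(1-q_r)$.
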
 
The proof of \cref{fact:basic2} is given in \cref{sec:MissinProofs} (see \cref{fact:basic2-appendix}).

\subsubsection{Useful Observations about Conditional Expectation} 
The proofs of the following facts are given in \cref{sec:MissinProofs}.


\begin{fact} \label{lemma:z:2}
	For arbitrary random variables $A$ and $B$, it holds that
	\begin{align*}
	\ex{AB\mid B }= B\cdot \ex{A\mid B}\enspace .
	\end{align*}
\end{fact}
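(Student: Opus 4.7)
The plan is to invoke the standard ``take out what is known'' property of conditional expectation, specializing it to the elementary discrete setting (which is the relevant regime throughout the paper: backup values, rounded martingales, etc., all have finite support). Since $B$ is measurable with respect to the $\sigma$-algebra generated by itself, it should behave as a constant inside $\ex{\cdot \mid B}$.

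First I would unpack the notation. Recall that $\ex{AB \mid B}$ denotes the random variable whose value on the event $\{B = b\}$ is $\ex{AB \mid B = b}$. It therefore suffices to check the identity pointwise, \ie for each $b$ in the support of $B$:
\begin{equation*}
\ex{AB \mid B = b} \;=\; b \cdot \ex{A \mid B = b}.
\end{equation*}

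Next, I would verify this equation directly. On the event $\{B = b\}$, the random variable $B$ is the constant $b$, so $AB = bA$ on this event. By linearity of (conditional) expectation and the fact that constants factor out:
\begin{equation*}
\ex{AB \mid B = b} \;=\; \ex{bA \mid B = b} \;=\; b \cdot \ex{A \mid B = b}.
\end{equation*}
In the discrete setting, this linearity step is immediate from the definition $\ex{X \mid B = b} = \sum_x x \cdot \pr{X = x \mid B = b}$ applied to $X = bA$. Finally, re-expressing both sides as random variables (\ie plugging $B$ back in place of $b$) gives $\ex{AB \mid B} = B \cdot \ex{A \mid B}$, as desired.

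There is essentially no obstacle here; the only subtlety is purely notational, namely distinguishing the random variable $\ex{AB \mid B}$ from the number $\ex{AB \mid B = b}$. In a fully general measure-theoretic treatment one would instead verify that $B \cdot \ex{A \mid B}$ is $\sigma(B)$-measurable and satisfies $\ex{B \cdot \ex{A \mid B} \cdot \mathbf{1}_C} = \ex{AB \cdot \mathbf{1}_C}$ for every $C \in \sigma(B)$, but for the uses in this paper the discrete argument above suffices.
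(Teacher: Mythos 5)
Your proof is correct, and it is the standard argument: condition on $B=b$, observe that $AB$ agrees with $bA$ on that event, and factor the constant $b$ out by linearity. The paper actually omits a proof of this fact in \cref{sec:MissinProofs} (despite the surrounding text claiming it is there), presumably treating it as immediate; your pointwise argument is exactly what one would write, and it matches in style the adjacent proofs the paper does supply, such as the one for \cref{lemma:tower}.
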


\begin{fact}[Tower Law]\label{lemma:tower}
	For arbitrary random variables $A$, $B$ and $C$, it holds that
	\begin{align*}
	\ex{\ex{A\mid B, C}\mid B }= \ex{A\mid B}\enspace .
	\end{align*}
\end{fact}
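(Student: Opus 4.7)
The plan is to verify the identity pointwise, i.e., show that $\ex{\ex{A\mid B,C}\mid B}$ and $\ex{A\mid B}$ agree as functions of $B$ by evaluating both sides at an arbitrary value $b \in \Supp(B)$. Since the random variables in the paper are discrete with finite support, I would unfold the conditional expectations directly from their definition rather than invoke any measure-theoretic machinery.

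First I would introduce $g(b,c) = \ex{A \mid B = b, C = c}$, so that by definition $\ex{A\mid B,C} = g(B,C)$. Then, for fixed $b$, the outer expectation becomes
\begin{align*}
\ex{g(B,C) \mid B = b} \;=\; \sum_{c} g(b,c)\cdot \Pr[C = c \mid B = b].
\end{align*}
Substituting the definition of $g(b,c)$ and then expanding the inner expectation as $\sum_a a \cdot \Pr[A=a\mid B=b, C=c]$, the expression becomes a double sum over $a$ and $c$. Swapping the order of summation (permissible because everything is a finite sum, or absolutely convergent in the countable case), the inner sum over $c$ collapses by the law of total probability,
\begin{align*}
\sum_c \Pr[A=a\mid B=b,C=c]\cdot \Pr[C=c\mid B=b] \;=\; \Pr[A=a\mid B=b],
\end{align*}
which leaves $\sum_a a\cdot \Pr[A=a\mid B=b] = \ex{A\mid B=b}$, as required.

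There is really no obstacle here; the only mildly delicate point is justifying the exchange of sums, which is immediate in the discrete-support setting relevant to this paper. The identity then follows by varying $b$, since two functions of $B$ that agree on $\Supp(B)$ are equal as random variables.
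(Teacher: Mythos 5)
Your proof is correct and follows essentially the same route as the paper's: fix $b$, expand both conditional expectations into a double sum over $a$ and $c$, swap the order of summation, and collapse the inner sum over $c$ to recover $\Pr[A=a\mid B=b]$. The only cosmetic difference is that the paper passes through the intermediate joint form $\Pr[A=a\land C=c\mid B=b]$ before summing, whereas you invoke the law of total probability in one step; the content is identical.
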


\begin{fact}\label{lemma:z:4}
	For arbitrary random variables $A$, $B$, and arbitrary function $f$, it holds that
	\begin{align*}
	\ex{A\mid \ex{A\mid B}, f(B) }= \ex{A\mid B} \enspace .
	\end{align*}
\end{fact}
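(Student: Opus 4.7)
The plan is to invoke the tower law (\cref{lemma:tower}) twice, exploiting the fact that $\ex{A\mid B}$ is itself a function of $B$, so the $\sigma$-algebra generated by $(\ex{A\mid B}, f(B))$ is a sub-$\sigma$-algebra of the one generated by $B$. This coarsening is the only fact about $B$ one needs beyond the tower property.

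Concretely, set $G = \ex{A\mid B}$ and view $G$ as a random variable (which is measurable with respect to $B$). First I would apply \cref{lemma:tower} with the ``outer'' conditioning on $(G, f(B))$ and the ``inner'' conditioning on $B$:
\[
\ex{A \mid G, f(B)} = \ex{\,\ex{A \mid B} \mid G, f(B)\,} = \ex{G \mid G, f(B)}.
\]
Here the first equality uses the tower law together with the observation that $\sigma(G, f(B)) \subseteq \sigma(B)$, and the second equality substitutes $G$ for $\ex{A \mid B}$ by definition.

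The final step is to note that $G$ is trivially measurable with respect to $\sigma(G, f(B))$, so $\ex{G \mid G, f(B)} = G$. Chaining everything gives $\ex{A \mid \ex{A\mid B}, f(B)} = G = \ex{A \mid B}$, as desired.

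I do not foresee any real obstacle: the statement is essentially a one-line application of the tower property combined with the fact that conditioning on additional information that is itself a function of the original conditioning variable does not change the conditional expectation. The only pedantic point worth verifying is that $\ex{A\mid B}$ really is $\sigma(B)$-measurable (which is part of its definition) so that the sub-$\sigma$-algebra inclusion required for the tower law is in force.
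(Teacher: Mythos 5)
Your proof is correct and follows essentially the same route as the paper's: the paper's one-liner also reduces $\ex{A\mid \ex{A\mid B}, f(B)}$ to $\ex{\ex{A\mid B}\mid \ex{A\mid B}, f(B)}$ via the tower property (justified because $B$ determines both $\ex{A\mid B}$ and $f(B)$, i.e.\ exactly the sub-$\sigma$-algebra inclusion you point out) and then concludes by measurability of $\ex{A\mid B}$ with respect to the conditioning. You simply spell out the last measurability step that the paper leaves implicit.
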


\begin{fact}\label{lemma:z:3}
	Let $A$, $B$, and $C$ be random variables such that $\supp(B)\subseteq \R$. If \/ $\ex{A\mid B, C}=B$ then
	\begin{align*}
	\ex{A\mid \rnd{\delta}(B), C }\in \rnd{\delta}(B) \pm \delta \enspace .
	\end{align*}
\end{fact}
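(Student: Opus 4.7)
The plan is to reduce the claim to an observation about conditional expectation of $B$ given its rounding (together with $C$), and then use the fact that rounding introduces at most $\delta$ error by definition.

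First, I would exploit the fact that $\rnd{\delta}(B)$ is a deterministic function of $B$, so the $\sigma$-algebra generated by $(\rnd{\delta}(B),C)$ is coarser than that generated by $(B,C)$. Applying the tower law (\cref{lemma:tower}) with the finer conditioning being on $(B,C)$ and the coarser on $(\rnd{\delta}(B),C)$, I would write
\begin{align*}
\ex{A \mid \rnd{\delta}(B), C} \;=\; \ex{\ex{A \mid B,C} \mid \rnd{\delta}(B), C} \;=\; \ex{B \mid \rnd{\delta}(B), C},
\end{align*}
where the second equality uses the hypothesis $\ex{A\mid B,C}=B$.

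The remaining step is to control $\ex{B \mid \rnd{\delta}(B), C}$. By the definition of $\rnd{\delta}$ recalled in \cref{sec:prelim:notation}, for any $b \in \Supp(B)$ we have $b \in [\rnd{\delta}(b), \rnd{\delta}(b) + \delta)$; equivalently, $0 \le B - \rnd{\delta}(B) < \delta$ pointwise. Writing $B = \rnd{\delta}(B) + (B-\rnd{\delta}(B))$ and pulling out $\rnd{\delta}(B)$ from the expectation (it is measurable \wrt the conditioning via \cref{lemma:z:2}), I obtain
\begin{align*}
\ex{B \mid \rnd{\delta}(B), C} \;=\; \rnd{\delta}(B) + \ex{B - \rnd{\delta}(B) \mid \rnd{\delta}(B), C},
\end{align*}
and the rightmost term lies in $[0,\delta) \subseteq [-\delta,\delta]$ since its argument does so pointwise. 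Combining the two displays yields $\ex{A \mid \rnd{\delta}(B), C} \in \rnd{\delta}(B) \pm \delta$, as desired.

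Since $\rnd{\delta}(B)$ is a function of $B$ and the bound on $B-\rnd{\delta}(B)$ is deterministic, there is essentially no obstacle to overcome; the only subtlety is to be careful that conditioning on $(B,C,\rnd{\delta}(B))$ is the same as conditioning on $(B,C)$, which justifies the opening application of the tower law.
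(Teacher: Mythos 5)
Your proof is correct, but it takes a genuinely different route from the paper's. The paper proves the fact by direct computation on the underlying discrete probability space: fix values $b'=\rnd{\delta}(B)$ and $c$, expand $\ex{A\mid B'=b',C=c}$ as a double sum over $a$ and over $b\in[b',b'+\delta]$, substitute the hypothesis $\ex{A\mid B=b,C=c}=b$, and observe that the resulting weighted average of $b$'s is trapped in $[b',b'+\delta]$. You instead work abstractly: one application of the tower law (valid because $\sigma(\rnd{\delta}(B),C)\subseteq\sigma(B,C)$, as $\rnd{\delta}$ is a deterministic function of $B$) collapses the problem to estimating $\ex{B\mid\rnd{\delta}(B),C}$, which you decompose as $\rnd{\delta}(B)+\ex{B-\rnd{\delta}(B)\mid\rnd{\delta}(B),C}$ and bound using the pointwise inequality $0\le B-\rnd{\delta}(B)<\delta$. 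Your argument is shorter, avoids any explicit expansion into sums (so it carries over verbatim to non-discrete settings), and makes the role of the two ingredients — tower law and the $\delta$-accuracy of rounding — cleanly separate; the paper's version is more elementary and self-contained but tied to the discrete setting. One small nit: you cite \cref{lemma:z:2} (which is about pulling a measurable \emph{factor} out of a conditional expectation) to justify moving the additive term $\rnd{\delta}(B)$ out; the step you actually perform is just linearity of conditional expectation together with the fact that a $\mathcal{G}$-measurable variable is its own conditional expectation, which needs no appeal to that fact.
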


\subsection{Martingales}\label{sec:prelims:martingales}

In this section we define weaker variants of martingales.

\begin{definition}[$\delta$-martingales]\label{def:DMartingales} Let $X_0, \ldots, X_r$ be a sequence of random variables. 
We say that the sequence is a  {\sf $\delta$-strong} martingale sequence if \/ $\ex{X_{i+1} \mid X_{\le i} = x_{\le i} } \in x_{i} \pm \delta$ for every $i\in[r-1]$. 
We say that the sequence is a  {\sf $\delta$-weak} martingale sequence if\/  $\ex{X_{i+1} \mid X_{i} = x_{i} } \in x_{i} \pm \delta$ for every $i\in[r-1]$.
If $\delta=0$, the above are just  called {\sf strong} and {\sf weak} martingale sequence respectively. 
\end{definition}

\noindent In plain terms, a sequence is a strong martingale if the expectation of the next point conditioned on the entire history is exactly the last observed point. Analogously, a sequence is a weak martingale if the expectation of the next point conditioned on the previous point is equal to the previous point. 

 \begin{definition}[\sosa $\delta$-weak martingale]\label{def:SOSADMartingale} Let $X_0, \ldots, X_r$ be a sequence of random variables. We say that the sequence is a \sosa {\sf $\delta$-weak} martingale sequence if \/ $\ex{X_{i+1} \mid X_{  i} = x_{   i} , \sum_{j<i} (X_{j+1}-X_j)^2=\sigma} \in x_{i} \pm \delta$ for every $i\in[r-1]$.
\end{definition}

\noindent In a sense, a sequence is a \sosa weak martingale if it satisfies the weak martingale property and it is ``distance-oblivious'', i.e.~the expectation is unaffected by conditioning on the quantity $\sum_{j<i} (X_{j+1}-X_j)^2$, which captures the distance the sequence has traveled thus far.

\begin{definition}[Associated difference sequence]\label{def:DMartingaleDif} 
	Let $X_0, \ldots, X_r$ be a an arbitrary sequence and define $Y_{i}=X_{i}-X_{i-1}$, for every $i\in [r]$. The sequence $Y_1\ldots Y_r$ is referred to as the difference sequence associated with $X_0, \ldots, X_r$.
\end{definition}

\noindent By \cref{def:DMartingales,def:DMartingaleDif}, it follows immediately that a sequence $Y_1\ldots Y_r$ is a {\sf $\delta$-strong} martingale difference if and only if $\ex{Y_i \mid Y_1,\ldots, Y_{i-1}}\in \pm \delta$, 
and that a sequence $Y_1\ldots Y_r$ is a {\sf $\delta$-weak} martingale difference if and only if $\ex{Y_i \mid \sum_{\ell< i} Y_\ell}\in \pm \delta$.  By \cref{def:SOSADMartingale,def:DMartingaleDif}, a sequence $Y_1\ldots Y_r$ is a {\sf \sosa $\delta$-weak} martingale difference if and only if $\ex{Y_i \mid \sum_{\ell< i} Y_\ell, \sum_{\ell< i} Y_\ell^2}\in \pm \delta$.

\paragraph{Sequences that behave like martingales most of the time.} We also define sequences that satisfy the different flavors of the martingale property with high probability. Such sequences are referred to as {\sf $(\gamma,\delta)$-martingales}. 

\begin{definition}[$(\gamma,\delta)$-martingales]\label{def:GDMartingales} Let $X_0, \ldots, X_r$ be a sequence of random variables. 
We say that the sequence is a  {\sf$(\gamma,\delta)$-strong} martingale sequence if $$\ppr{x_{\le r}\la X_{\le r}}{\exists i \st\ex{X_{i+1} \mid X_{\le i} = x_{\le i} } \notin x_{i} \pm \delta }\le \gamma \enspace .$$ 
We say that the sequence is a  {\sf $(\gamma,\delta)$-weak} martingale sequence if  $$\ppr{x_{\le r}\la X_{\le r}}{\exists i \st \ex{X_{i+1} \mid X_{i} = x_{i} } \notin x_{i} \pm \delta}\le \gamma \enspace .$$ 
\end{definition}

 \begin{definition}[\sosa $(\gamma,\delta)$-weak martingale]\label{def:SOSAGDMartingale} Let $X_0, \ldots, X_r$ be a sequence of random variables. We say that the sequence is a \sosa {\sf $\delta$-weak} martingale sequence if 
 $$\ppr{x_{\le r}\la X_{\le r}}{\exists i \st\ex{X_{i+1} \mid X_{  i} = x_{   i} , \sum_{j<i} (X_{j+1}-X_j)^2=\sigma} \notin x_{i} \pm \delta}\le \gamma\enspace .$$ 
\end{definition}
 
\section{Augmented Weak Martingales have Large Gaps}\label{sec:martingales}

In this section, we prove a result about sequences that satisfy a weaker version of the ``martingale property''. Namely, we show that for any sequence satisfying the \sosa $\delta$-weak martingale property, if $X_0=1/2$ and $X_r\in \set{0,1}$, then the quantity $\sum_{i=1}^r (X_{i}-X_{i-1})^2$ is greater than $1/16$ with constant probability. As a corollary, we obtain a generalization of the result of \citet{CleveI93}, who showed that (strong) martingales have large gap between consecutive points. We emphasize that our results extend immediately to the usual notion of (strong) martingale sequences. The reader is referred to \cref{sec:intro:martingales} for an informal discussion and motivation for the present section. 

Recall (\cf \cref{sec:prelims:martingales}) that a sequence $X_0,\ldots, X_r$ is a $\delta$-weak martingale, if $\ex{X_{i+1} \mid X_{i} = x_{i} } \in x_{i} \pm \delta$ for every $i\in[r-1]$ and $x_i\in \supp(X_i)$. Further recall that the difference sequence associated with $X_0,\ldots, X_r$ is the sequence $Y_1,\ldots, Y_r$ defined by $Y_i=X_i-X_{i-1}$, for every $i\in [r]$. We begin by extending to weak martingales a result of \citeauthor{dasgupta2011} \cite{dasgupta2011} for strong martingales. We will use this result in the proof of our main theorem.

\begin{lemma}\label{prop:exMachina}
Let $X_0\ldots X_r$ be a $\delta$-weak martingale and let $Y_i = X_i - X_{i-1}$. If $X_i\in [0,1]$ for every $i\in [r]$, then $\ex{X_r^2-X_0^2}\in \ex{\sum_{i\in [r]} Y_i^2} \pm 2r\delta$.
\end{lemma}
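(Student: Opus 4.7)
The plan is to use the elementary telescoping identity $X_r^2 - X_0^2 = \sum_{i=1}^r (X_i^2 - X_{i-1}^2)$ and expand each term using $X_i = X_{i-1} + Y_i$. This yields
\[
X_r^2 - X_0^2 \;=\; \sum_{i=1}^r \bigl(2 X_{i-1} Y_i + Y_i^2\bigr),
\]
so taking expectations,
\[
\ex{X_r^2 - X_0^2} \;=\; \ex{\sum_{i=1}^r Y_i^2} + 2\sum_{i=1}^r \ex{X_{i-1} Y_i}.
\]
The whole task reduces to bounding $\bigl|\sum_{i=1}^r \ex{X_{i-1} Y_i}\bigr|$ by $r\delta$.

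For each fixed $i$, I would apply the tower law (\cref{lemma:tower}) together with \cref{lemma:z:2}, conditioning on $X_{i-1}$:
\[
\ex{X_{i-1} Y_i} \;=\; \ex{\,\ex{X_{i-1} Y_i \mid X_{i-1}}\,} \;=\; \ex{\,X_{i-1}\cdot \ex{Y_i \mid X_{i-1}}\,}.
\]
The $\delta$-weak martingale property (\cref{def:DMartingales}) states exactly that $\ex{Y_i \mid X_{i-1}} = \ex{X_i - X_{i-1} \mid X_{i-1}} \in \pm \delta$, and since $X_{i-1} \in [0,1]$, the absolute value of the product $X_{i-1}\cdot \ex{Y_i \mid X_{i-1}}$ is pointwise at most $\delta$. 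Taking outer expectation preserves this bound, so $|\ex{X_{i-1} Y_i}| \le \delta$.

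Summing over the $r$ terms gives $|2\sum_i \ex{X_{i-1} Y_i}| \le 2r\delta$, which yields the claim. There is no real obstacle here—the only subtlety is that the weak martingale property only gives information when conditioning on $X_{i-1}$ alone (not on the full history $X_{\le i-1}$), which is precisely why the argument goes through via the tower law using $X_{i-1}$ as the intermediate $\sigma$-field and why the dependence of the error on $r$ (rather than, say, some concentration-type bound) is unavoidable in this statement.
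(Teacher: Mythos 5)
Your proof is correct and is essentially the same argument as the paper's: you expand $X_i^2 - X_{i-1}^2 = 2X_{i-1}Y_i + Y_i^2$ and bound the cross terms $\ex{X_{i-1}Y_i}$ by $\delta$ using the tower law and the $\delta$-weak martingale property together with $X_{i-1}\in[0,1]$, which is exactly the content of the paper's $\Delta_i = \ex{X_i \mid X_{i-1}} - X_{i-1}$ manipulation; the two proofs differ only in how the algebra is organized.
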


\begin{proof}
Write  $\ex{\sum_i Y_i^2}=\ex{\sum_i (X_i-X_{i-1})^2}=\ex{\sum_i \left (X_i^2 -2X_{i}X_{i-1} + X_{i-1}^2\right )}$, and let $\Delta_i=-X_{i-1}+\ex{X_i\mid X_{i-1}}$. By the $\delta$-weak martingale property,  $\Delta_i\in \pm \delta$.  Since   $X_{i-1}\in [0,1]$ and $\Delta_i\in \pm \delta$, it holds that

\begin{align}
 \ex{X_{i-1} \cdot \Delta_i} \in \pm \ex{ \size{\Delta_i} }\in \pm \delta.
\end{align}
Furthermore, 
\begin{align}
\ex{ X_iX_{i-1}}&= \ex{ \ex{X_iX_{i-1} \mid X_{i-1}}}\label{eqn:expoexp}\\
&=  \ex{X_{i-1} \cdot\ex{X_i \mid X_{i-1}}} \label{eqn:exptako}\\
&=  \ex{X_{i-1} \cdot\left (X_{i-1}  -X_{i-1}+\ex{X_i \mid X_{i-1}}\right )} \nonumber \\
& = \ex{X_{i-1}^2 } + \ex{X_{i-1}\Delta_i} \nonumber\\
& \in \ex{X_{i-1}^2 } \pm \delta.\nonumber
\end{align}
\cref{eqn:expoexp,eqn:exptako} follow from \cref{lemma:tower} and \cref{lemma:z:2}, respectively. To conclude, we observe that $\ex{\sum_{i\in [r]} Y_i^2}\in \ex{\sum_{i\in [r]} (X_i^2-X_{i-1}^2) } \pm 2r\delta = \ex{X_r^2-X_{0}^2 }\pm 2r\delta$.
\end{proof}

 
\noindent
Recall  that a sequence $X_0\ldots X_r$ is a \sosa $\delta$-weak martingale if $\ex{X_{i+1} \mid X_{i} = x_{i}, \sum_{\ell\le i} (X_i-X_{i-1})^2=\sigma } \in x_{i} \pm \delta$ for every $i\in[r-1]$, $x_i\in \supp(X_i)$ and $\sigma\in \supp(\sum_{\ell\le i} (X_i-X_{i-1})^2 )$. Following is the main result of this section.


\begin{theorem}\label{claim:SOSAsquares}
For $\delta<1/100r$, let $X_0,\ldots, X_r$ be a \sosa $\delta$-weak martingale sequence such that $X_i\in [0,1]$ for every $i\in [r]$. Assuming $X_0=1/2$ and $\pr{X_r\in \set{0,1}}=1$, then $\pr{\sum_{i\in [r]} (X_i-X_{i-1})^2\ge 1/16 }\ge 1/20$.
\end{theorem}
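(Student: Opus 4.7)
The plan is to combine an \emph{algebraic almost-martingale identity} for $N_j \eqdef X_j(1-X_j) + T_j$ (where $T_j = \sum_{i\le j} Y_i^2$ and $Y_i = X_i - X_{i-1}$) with a stopping-time argument, thereby sidestepping any second-moment bound on $T_r$. A direct expansion (using $T_j - T_{j-1} = Y_j^2$ and $X_j^2 - X_{j-1}^2 = (X_j+X_{j-1})Y_j$) yields the exact identity
\[
N_j - N_{j-1} = (1 - 2X_{j-1})\,Y_j.
\]
Since $|1-2X_{j-1}|\le 1$, the \textsf{SoS}-augmented $\delta$-weak martingale property forces $\ex{N_j - N_{j-1}\mid X_{j-1},T_{j-1}} \in \pm\delta$. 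Summing and using $N_r = T_r$ (as $X_r\in\{0,1\}$) and $N_0 = 1/4$, this already gives $\ex{T_r}\in \tfrac14 \pm r\delta$, strengthening \cref{prop:exMachina} in this setting.

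Next, I would pass to the stopped process. Set $\tau \eqdef \min\{j\in[r] : T_j \ge 1/16\}$ (with $\tau = \infty$ if no such $j$), $\sigma\eqdef \min(\tau,r)$, and $X^\sigma_j \eqdef X_{\min(j,\sigma)}$. The critical observation is that $T$ is non-decreasing, so $T^\sigma_{j-1}\ge 1/16$ holds exactly when the process has already been stopped; hence the stopping indicator is measurable with respect to the conditioning $(X^\sigma_{j-1},T^\sigma_{j-1})$ itself. A short case-analysis then shows that $X^\sigma$ is again a \textsf{SoS}-augmented $\delta$-weak martingale, and applying the identity above to $X^\sigma$ gives
\[
\ex{X^\sigma_r(1 - X^\sigma_r)} + \ex{T_\sigma} \;\in\; \tfrac14 \pm r\delta.
\]

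Finally, I would bound both terms in terms of $p\eqdef \pr{T_r \ge 1/16}$. On $\{T_r < 1/16\}$ we have $\sigma = r$ and $X^\sigma_r = X_r\in\{0,1\}$, so $X^\sigma_r(1-X^\sigma_r) = 0$; hence $\ex{X^\sigma_r(1-X^\sigma_r)} \le p/4$. For the second term, on $\{T_r < 1/16\}$ one has $T_\sigma < 1/16$, while on $\{T_r\ge 1/16\}$, $T_\sigma = T_{\tau-1}+Y_\tau^2 < 1/16 + 1$ using $|Y_\tau|\le 1$, so $\ex{T_\sigma} \le \tfrac{1}{16}(1-p) + \tfrac{17}{16}p = \tfrac{1}{16}+p$. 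Plugging these into the displayed identity,
\[
\tfrac{1}{4} - r\delta \;\le\; \tfrac{p}{4} + \tfrac{1}{16} + p,
\]
which rearranges to $p \ge \tfrac{4}{5}\bigl(\tfrac{3}{16}-r\delta\bigr)$. Since $\delta < 1/(100r)$, this comfortably exceeds $1/20$.

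The main subtlety is verifying that the stopped sequence retains the \textsf{SoS}-property. The \textsf{SoS}-property controls only $\ex{X_{j+1}\mid X_j,T_j}$, which is a substantially coarser conditioning than the full past, so in principle one cannot legally ``stop'' at events that depend on finer history. It is precisely the monotonicity of $T$ that makes ``have we stopped yet?'' a deterministic function of $T^\sigma_{j-1}$, aligning the stopping decision with the conditioning permitted by the \textsf{SoS}-definition; without this alignment, the whole detour through $X^\sigma$ collapses.
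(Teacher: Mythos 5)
Your proof is correct, and while it rests on the same core move as the paper's argument --- cutting off the sum-of-squares at a threshold and exploiting the fact that the cutoff decision is measurable in the information the \sosa{}property permits one to condition on --- the packaging is genuinely different and in several respects cleaner. The paper constructs a \emph{coupled} sequence $U_i$ (equal to $X_i$ until $\sum_{j<i}Y_j^2$ exceeds $1/16$, then frozen), shows $U$ is a $\delta$-weak martingale, and invokes \cref{prop:exMachina} ($\ex{X_r^2-X_0^2}\in\ex{\sum Y_i^2}\pm 2r\delta$) together with a WLOG reduction to $\pr{X_r=1}\ge 1/2$ and a side assumption $\pr{\exists i:\size{Y_i}\ge 1/4}<1/20$. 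You replace \cref{prop:exMachina} with the exact algebraic identity $N_j-N_{j-1}=(1-2X_{j-1})Y_j$ for $N_j=X_j(1-X_j)+T_j$, which gives the slightly sharper $\pm r\delta$ and --- more usefully --- is symmetric in $X_r\in\{0,1\}$, so the WLOG step disappears. You also phrase the cutoff as a bona fide optional stopping at $\tau=\min\{j:T_j\ge 1/16\}$; the verification that $X^\sigma$ inherits the \sosa{}property is the same case-split the paper does for $U$ (whether $T^\sigma_{j-1}\ge 1/16$, which is exactly the monotonicity point you flag at the end), and I checked that conditioning on $(X^\sigma_{j-1},T^\sigma_{j-1})=(x,t)$ with $t<1/16$ is literally the same event as conditioning on $(X_{j-1},T_{j-1})=(x,t)$. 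Finally, by bounding $\ex{T_\sigma}\le \tfrac{1}{16}+p$ in terms of the target probability $p$ itself rather than a fixed constant, you avoid the paper's side assumption on $\max_i\size{Y_i}$, and the arithmetic $p\ge\tfrac{4}{5}(\tfrac{3}{16}-r\delta)>1/20$ goes through comfortably. In short: same stopping idea, but your $N_j$ identity and direct-in-$p$ bookkeeping give a shorter and more symmetric derivation.
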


\begin{remark}
	\cref{claim:SOSAsquares} also holds for the ``standard'' flavor of martingales, \ie strong martingales. Readers who are only interested in the strong case are advised to carry on reading by replacing  below ``\sosa $\delta$-weak'' with ``strong'' and taking $\delta=0$; the proof remains coherent.
\end{remark}

\cref{claim:SOSAsquares} is proven below, but we first sketch its proof. Assume \wlg that $\pr{X_r=1}\ge 1/2$ (otherwise apply the argument to the sequence $X_0',\ldots, X_r'$ defined by $X_i'=1-X_i$, for every $i\in [r]$). Notice that if $\pr{\sum_{i=1}^r (X_{i}-X_{i-1})^2\ge \frac{1}{16}}=0$ then $\ex{\sum_{i=1}^r (X_{i}-X_{i-1})^2}\le \frac{1}{16}$, in contradiction with \cref{prop:exMachina} which states that $\ex{\sum_{i=1}^r (X_{i}-X_{i-1})^2}=\ex{X_r^2-X_0^2}\ge \frac{1}{4}$. We argue that a similar contradiction can be derived if $\pr{\sum_{i=1}^r (X_{i}-X_{i-1})^2\ge \frac{1}{16}}<1/20$. Unfortunately, we cannot apply the same inequality as before because we have no control over the quantity $\sum_{i=1}^r (X_{i}-X_{i-1})^2$ when it is greater than $1/16$ (a crude upper bound is $r$ which is utterly unhelpful). Our solution is to construct a weak martingale sequence $U_0,\ldots, U_r$ which is ``coupled'' with the $X$-sequence in the following way: $U_i$ is equal to $X_i$ as long as $\sum_{\ell=1}^{i-1}(X_{\ell}-X_{\ell-1})^2\le \frac{1}{16}$, and $U_i=U_{i-1}$ otherwise. Then, we argue that $\ex{U_r^2-U_0^2}\ge \frac{1}{4}- \pr{\sum_{i=1}^r (X_{i}-X_{i-1})^2\ge \frac{1}{16}}$ by observing that $\pr{\sum_{i=1}^r (X_{i}-X_{i-1})^2\ge \frac{1}{16}}$ roughly corresponds to the probability that the two sequences diverge. We then upper bound the latter by applying \cref{prop:exMachina} to the sequence $U_0,\ldots, U_r$ which we have a much better grasp on, since, by construction, $\sum_{i=1}^r (U_{i}-U_{i-1})^2$ can never exceed $1/16$ by much.

\begin{proof}[Proof of \cref{claim:SOSAsquares}]
Assume \wlg that $\pr{X_r=1}\ge 1/2$. Further assume that $\pr{\exists i\st \size{Y_i}\ge \frac{1}{4}} < \frac{1}{20}$, as otherwise our theorem is trivially true.    Define the sequence $U_0\ldots U_r$ by $U_i=X_i$ if $\sum_{j<i} Y^2_i \le  \frac{1}{16} $, and $U_i=U_{i-1}$ otherwise. We show that $U_0,\ldots, U_r$ is a $\delta$-weak martingale, \ie  $\ex{U_i\mid U_{i-1}=u}\in u\pm \delta$.  Write $Z_i = U_i - U_{i-1}$ and fix  $u\in \supp(U_{i-1})$ and $\sigma\in \supp(\sum_{j<i} Z_j^2)$ . Observe that $\ex{Z_i\mid U_{i-1}=u, \sum_{j<i} Z_i^2=\sigma}=0$ if $\sigma> \frac{1}{16}$, and $\ex{Z_i\mid U_{i-1}=u, \sum_{j<i} Z_i^2=\sigma}=\ex{Y_i\mid X_{i-1}=u, \sum_{j<i} Y_i^2=\sigma}$ otherwise. Thus,  by the \sosa property of $Y$, it holds that 

\begin{align}
\ex{Z_i\mid U_{i-1}=u, \sum_{j<i} Z_i^2=\sigma}\in \pm \delta.
\end{align}

\noindent
Since $u$ and $\zeta$ were chosen arbitrarily, we deduce that $\ex{Z_i\mid U_{i-1}=u}\in \pm \delta$, and thus $U_0,\ldots, U_r$ is a $\delta$-weak martingale sequence. Furthermore, since $U_i\in [0,1]$, by \cref{prop:exMachina}, 
\begin{align}\label{eq:upper} 
\ex{U_r^2} - \frac{1}{4}&\le  \ex{\sum_{ i\in [r]} Z_i^2} +2r\delta\\
&\le   \pr{\exists i\in [r]\st \size{Y_i}\ge \frac{1}{4}}\cdot\left  ( \frac{1}{16}+1\right ) + \pr{\forall i\in [r] \;  \size{Y_i}< \frac{1}{4}}\cdot     \left ( \frac{1}{16}+\frac{1}{16}\right ) +2r\delta \label{eqn:chopoff}\\
&\le   \frac{1}{20}\cdot\left  ( \frac{1}{16}+1\right ) + 1\cdot   \left  ( \frac{1}{16}+\frac{1}{16}\right ) +2r\delta \nonumber\\
&\le  0.18 +2r\delta.\nonumber
\end{align}

\noindent
\cref{eqn:chopoff} follows from the fact that, by construction, the quantity $\sum_{i=1}^{r} Z_i^2$ is equal to $\sum_{i=1}^{\Gamma+ 1} Y_i^2$, where $\Gamma$ is the largest index such that $\sum_{i=1}^{\Gamma} Y_i^2\le \frac{1}{16}$. On the other hand, by noting that $\pr{X_r\neq U_r}\le \pr{\sum_{i\in [r]} Y_i^2> \frac{1}{16}}$,
\begin{align}\label{eq:lower} 
\ex{U_r^2}&\ge 1^2\cdot \pr{U_r=1}\\
& \ge  1^2\cdot \pr{X_r=1 \land X_r=U_r}\nonumber\\
& \ge  1^2\cdot \left (\pr{X_r=1}-\pr{ X_r\neq U_r}\right )\nonumber\\
&\ge 1/2 - \pr{X_r\neq U_r}\nonumber \\
&\ge 1/2 -  \pr{\sum_{i\in [r]} Y_i^2> \frac{1}{16}} .\nonumber
\end{align}

\noindent
Combine \cref{eq:upper,eq:lower} we deduce that 
\begin{align*}
\pr{\sum_i Y_i^2> \frac{1}{16}} &\ge \frac{1}{4}- 0.18 -2r\delta \ge \frac{1}{20},
\end{align*}
where the last inequality is true since $\delta\le 1/100r$.

\end{proof}
 
\cref{claim:SOSAsquares} immediately yields the following corollary. 
\begin{corollary}\label{claim:SOSAjumps}
For $\delta<1/100r$, let $X_0,\ldots, X_r$ be a \sosa $\delta$-weak martingale sequence such that $X_i\in [0,1]$ for every $i\in [r]$. If $X_0=1/2$ and $\pr{X_r\in \set{0,1}}=1$, then $\pr{\exists i\in [r]\st  \size{X_i-X_{i-1}}\ge \frac{1}{4\sqrt{r}} }\ge \frac{1}{20}$.
\end{corollary}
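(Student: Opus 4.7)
My plan is to derive the corollary directly from Theorem~\ref{claim:SOSAsquares} by a simple contrapositive/pigeonhole argument, since the only nontrivial content has already been established. Concretely, I will show that the event ``no large gap exists'' is contained in the event ``the sum of squares is small'', and then invoke Theorem~\ref{claim:SOSAsquares} to bound the probability of the latter.

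\textbf{Key step.} Let $\mathcal{E}$ denote the event that $\size{X_i - X_{i-1}} < \tfrac{1}{4\sqrt{r}}$ for every $i\in[r]$. On $\mathcal{E}$, each squared increment satisfies $(X_i - X_{i-1})^2 < \tfrac{1}{16 r}$, and summing over the $r$ indices yields
\[
\sum_{i\in[r]}(X_i - X_{i-1})^2 \;<\; r \cdot \tfrac{1}{16 r} \;=\; \tfrac{1}{16}.
\]
Hence $\mathcal{E}$ is contained in the event $\bigl\{\sum_{i\in[r]}(X_i - X_{i-1})^2 < \tfrac{1}{16}\bigr\}$, whose probability is at most $1 - \tfrac{1}{20} = \tfrac{19}{20}$ by Theorem~\ref{claim:SOSAsquares}. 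Taking complements gives $\Pr[\exists i\in[r]\colon \size{X_i - X_{i-1}} \ge \tfrac{1}{4\sqrt{r}}] \ge \tfrac{1}{20}$, as required.

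\textbf{Remarks on obstacles.} There is essentially no obstacle: the entire technical work is carried out in Theorem~\ref{claim:SOSAsquares} (via the coupled martingale $U_0,\ldots,U_r$ and the application of Lemma~\ref{prop:exMachina}). The only thing one might double-check is the direction of the inequality $(X_i - X_{i-1})^2 < \tfrac{1}{16r}$ versus $\le \tfrac{1}{16r}$ and the strictness in the target statement, but since $r \cdot \tfrac{1}{16r} = \tfrac{1}{16}$ matches exactly the threshold in Theorem~\ref{claim:SOSAsquares}, the bound goes through without any loss, and the constant $\tfrac{1}{20}$ transfers verbatim.
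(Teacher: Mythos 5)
Your argument is exactly the contrapositive/pigeonhole step the paper has in mind when it says Theorem~\ref{claim:SOSAsquares} "immediately yields" the corollary: if every increment satisfies $\size{X_i-X_{i-1}}<\frac{1}{4\sqrt r}$ then the sum of squares is $<r\cdot\frac{1}{16r}=\frac{1}{16}$, so the event in Theorem~\ref{claim:SOSAsquares} forces a large gap. The proof is correct and takes the same (essentially unique) route.
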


Ans also the following corollary follows via a simple coupling argument.

\begin{corollary}\label{claim:SOSAjumpsWHP}
For $\gamma<1/1000$ and $\delta<1/100r$, let $X_0,\ldots, X_r$ be a \sosa $(\gamma,\delta)$-weak martingale sequence such that $X_i\in [0,1]$ for every $i\in [r]$. If $X_0=1/2$ and $\pr{X_r\in \set{0,1}}=1$, then $\pr{\exists i\in [r]\st  \size{X_i-X_{i-1}}\ge \frac{1}{4\sqrt{r}} }\ge \frac{1}{20}-\gamma$.
\end{corollary}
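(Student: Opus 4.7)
The plan is to prove the corollary by a coupling argument that reduces it to \cref{claim:SOSAjumps}. Specifically, I will construct a sequence $X'_0, \ldots, X'_r$ that is a genuine \sosa $\delta$-weak martingale satisfying all hypotheses of \cref{claim:SOSAjumps}, namely $X'_0 = 1/2$, $X'_i \in [0,1]$ for every $i$, and $X'_r \in \set{0,1}$ almost surely, coupled with $X_0, \ldots, X_r$ so that $\pr{X \neq X'} \le \gamma$. Granted such a coupling, \cref{claim:SOSAjumps} applied to $X'$ yields $\pr{\exists i \in [r] \st \size{X'_i - X'_{i-1}} \ge 1/(4\sqrt{r})} \ge 1/20$, and since on the event $\set{X = X'}$ the two jump events coincide, a union bound immediately gives $\pr{\exists i \in [r] \st \size{X_i - X_{i-1}} \ge 1/(4\sqrt{r})} \ge 1/20 - \gamma$, which is the required conclusion.

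For the coupling, I would use a stopping-time construction. Let $\tau$ denote the smallest $i \in \set{0, \ldots, r-1}$ for which the \sosa property is violated at step $i$, \ie $\ex{X_{i+1} \mid X_i, \sum_{j<i}(X_{j+1}-X_j)^2} \notin X_i \pm \delta$, with $\tau = r$ if no such $i$ exists. By hypothesis, $\pr{\tau < r} \le \gamma$. Set $X'_i = X_i$ for every $i \le \tau$, and on the event $\set{\tau < r}$ complete the trajectory independently of $X$: freeze at $X'_i = X_\tau$ for $\tau < i < r$, and draw $X'_r$ as an independent Bernoulli variable with mean $X_\tau$. By construction, $X'_0 = 1/2$, $X'_i \in [0,1]$, $X'_r \in \set{0,1}$, and $\pr{X \neq X'} \le \pr{\tau < r} \le \gamma$.

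It then remains to verify that $X'$ is a genuine \sosa $\delta$-weak martingale. At each transition $i \to i+1$, the conditional law of $X'_{i+1}$ given $(X'_i, \sum_{j<i}(X'_{j+1}-X'_j)^2) = (x, \sigma)$ is a mixture over the ``unstopped'' paths ($\tau > i$) and the ``stopped'' paths ($\tau \le i$). The stopped contribution has conditional mean exactly $x$ because the prescribed continuation is either constant at $x$ or an unbiased Bernoulli with mean $x$. The unstopped contribution equals $\ex{X_{i+1} \mid X_i = x, \sum_{j<i}(X_{j+1}-X_j)^2 = \sigma, \tau > i}$ and lies in $x \pm \delta$ by the definition of $\tau$, since $\tau > i$ forces the value $(x,\sigma)$ to satisfy the \sosa condition. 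Averaging preserves the $\pm \delta$ bound, so $X'$ satisfies the \sosa $\delta$-weak martingale property at every transition.

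I expect the main delicacy to lie in the unstopped contribution, since restricting to the event $\set{\tau > i}$ conditions on the past and could in principle shift the conditional expectation of $X_{i+1}$ given $(X_i = x, \sum_{j<i}(X_{j+1}-X_j)^2 = \sigma)$ away from its unconditional value. This can be handled either by augmenting the coupling with fresh independent randomness in the continuation so that the past conditioning averages out, or by passing to the natural Markov chain whose one-step kernel is $\ex{X_{i+1}\mid X_i,\sum_{j<i}(X_{j+1}-X_j)^2}$, for which past conditioning is irrelevant by construction. Either route suffices to complete the verification of $X'$, after which the corollary follows from the coupling bound described in the opening paragraph.
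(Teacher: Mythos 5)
Your high-level framework---couple $X$ to a genuine \sosa $\delta$-weak martingale $X'$ with $\pr{X\neq X'}\le\gamma$ and invoke \cref{claim:SOSAjumps}---is the right one, and the opening-paragraph reduction via a union bound is sound. The delicacy you flag at the end, however, is a genuine gap, and neither of your two suggested repairs closes it.

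Write $\Sigma_i=\sum_{j<i}(X_{j+1}-X_j)^2$. The event $\{\tau>i\}$ is not a function of the pair $(X_i,\Sigma_i)$: it asserts that $(X_j,\Sigma_j)$ is a good pair for every $j\le i$, which depends on the entire history $X_{\le i}$. Hence, for a good pair $(x,\sigma)$, conditioning on $\{X_i=x,\Sigma_i=\sigma,\tau>i\}$ strictly refines conditioning on $\{X_i=x,\Sigma_i=\sigma\}$, and the \sosa hypothesis controls only the latter conditional mean. The shift incurred by the additional history conditioning can be as large as $2\pr{\tau\le i\mid X_i=x,\Sigma_i=\sigma}$, which the $(\gamma,\delta)$-hypothesis does not constrain at all; thus the assertion that ``the unstopped contribution lies in $x\pm\delta$ by the definition of $\tau$'' does not follow. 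Contrast this with the stopping argument inside the proof of \cref{claim:SOSAsquares}: there the freezing rule is $\sum_{j<i}Y_j^2>1/16$, a monotone function of a quantity already present in the conditioning, so ``not yet frozen at step $i$'' is \emph{equivalent} to a deterministic condition on $(U_{i-1},\sum_{j<i}Z_j^2)$ and adds no information beyond the state. ``Stop at the first bad pair'' lacks exactly this feature.

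Your two fixes do not supply it. Fresh randomness in the continuation cannot repair the problematic \emph{unstopped} branch, where $X'_{i+1}=X_{i+1}$ must literally agree with $X$ or the bound $\pr{X\neq X'}\le\gamma$ is lost. Passing to the Markov chain with one-step kernel $\Pr[X_{i+1}\in\cdot\mid X_i,\Sigma_i]$ does remove the history dependence, but that chain is a \emph{different} process: only its one-time marginals match those of $X$, its trajectory law need not be $\gamma$-close to that of $X$ (so the coupling bound is again lost), and it may visit bad pairs with probability far exceeding $\gamma$, since the $(\gamma,\delta)$-hypothesis bounds only the union probability under the \emph{original} trajectory law. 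Closing this gap requires a new ingredient; as written, the verification that $X'$ is a genuine \sosa $\delta$-weak martingale does not go through.
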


\begin{remark}
	We mention that for any constant $\gamma<1/2$, it can be shown that the sequence has gaps of order $1/\sqrt{r}$, with constant probability. For the specific choice of parameters $1/4\sqrt{r}$ and $1/20$, the value of $\gamma$ should be smaller than $1/1000$. 
\end{remark}

\begin{remark}[Extensions] We remark that we can replace the requirement $X_i\in [0,1]$ for the less restrictive $\size{X_i-X_{i-1}}\le 1$ in \cref{claim:SOSAsquares} and its corollaries. However, the resulting claims offer no gains for the purposes of the present paper and their proofs are significantly longer, as far as we can tell. Therefore, we only prove here the more restrictive versions as stated in the present section.
\end{remark}

\section{ Oblivious Sampling via  Differential Privacy}\label{sec:Laplace}
Consider the following $r$-round game in which your goal is to maximize the revenue of a random ``party'' $H\la \H$. In the beginning, a party $H$ is chosen with uniform distribution from $\H$ (where $\H$ is a finite set of parties). In each round,   values $\set{s_i^h\in [0,1]}_{h\in \H}$ are assigned to the  parties   of  $\H$, but   only the values   $\set{s_{h}}_{h\in \H\setminus \set{H}}$ of  the other parties  are published. You can decide to \emph{abort}, and then party $H$ is rewarded by  $s_i^H$, or to continue to the next round. If an abort never occurs, party $H$ is rewarded by   $s_r^H$ (last round value). You have the  \emph{similarity} guarantee that  $\size{s_i^h - s_i}\le \sigma$ for every $h \in \H$,  letting $s_i = \eex{h\la \H}{s_i^h}$. You are also guarantee that  $\max_i \set{s_i} \ge \gamma$.  

In this section we analyze  the following ``differentially private based''  approach for this task, which is described by the following experiment (the basic game described above is captured by the experiment for $p=1/\np$).
\begin{experiment} [$\LapExp$: Oblivious sampling experiment]\label{Exp:Laplace}
\item Parameters:  $\H=\set{1,\ldots, n}$,  $\cs = \set{s_i^{h} \in [-1,1]}_{i\in [r],h\in \H}$, $p\in[0,1/2]$, $\gamma \in[0,1]$ and $\lambda\in \R^+$.

\item Notation: Let $s_i=\frac{1}{n}\sum_{h\in \H} s_i^{h}$ and for $h\in \H$ let $s_i^{\mh}=\frac{1}{1-p}(s_i-p\cdot s_i^h)$. 

\item Description:

\begin{enumerate}
\item Sample $h\la \H$.
\item For $i=1,\ldots, r-1$: 
\begin{enumerate}
\item  Sample $\nu_i\la  \Lap{\lambda}$. 
\item If $s_i^{\mh}+\nu_i\geq \gamma$, output $s_i^{h}$ and halt. 
\label{step:halt}
\end{enumerate}
\item Output $s_r^{h}$.
\end{enumerate} 
\end{experiment}

Let  $\LapExp(\H,\cs,\gamma,\lambda)$ denote the above experiment with parameters $\H$, $\cs$, $\gamma$ and $\lambda$. \cref{thm:Laplace} analyzes the expected value of the output of $\LapExp(\H,\cs,\gamma,\lambda)$.

\begin{theorem}[Quality of the  oblivious sampling experiment]\label{thm:Laplace} 
Let $\H$, $\cs$, $\gamma$, $\lambda$ and $p$ be as in \cref{Exp:Laplace},  with  $s^h_r = s_r$ for every $h\in \H$. Let $\sigma^{h}=\max_i\set{\size{ s_i-s_i^{h}}}$, let $\Unbiased=\set{h\in \H \colon \sigma^h \le  \lambda\cdot (1-p)/p}$ and $\Biased = \H \setminus \Unbiased$.  

Let $H$ be the value of $h$ and  $J$ be the halting round  (set to $r$ if  \cref{Exp:Laplace} does not halt in step~(\ref{step:halt})) in a random execution of  $\LapExp(\H,\cs,\gamma,\lambda)$. Then  $\ex{s_J^H} \ge \ex{v_H}- r \cdot e^{-\gamma/2\lambda}$, where 
$$
v_h=
\begin{cases}
\pr{J\neq r\mid H=h} \cdot \left(\frac{\gamma}2  - \frac{40  (\sigma^h)^2 }{\lambda   }\cdot \frac{p}{1-p} \right), &  h\in  \Unbiased,\\
-4\sigma^h, & h\in  \Biased.
\end{cases}
$$
If $s_i \ge \gamma$ for some $ i\in [r-1]$, then  $\pr{J\neq r\mid H=h}\ge  1/6$, for every $h\in \Unbiased$.
\end{theorem}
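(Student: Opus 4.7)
The proof combines two ingredients: (i) a differential-privacy style stability argument showing that, for $h\in \Unbiased$, the round in which the experiment halts has nearly the same distribution as it would under an ``oracle'' version that uses the population average $s_i$ in place of $s_i^{\mh}$; and (ii) a tail bound on the Laplace noise ensuring that a halt forces $s_i^{\mh}$ to be at least $\gamma/2$ except with probability at most $r e^{-\gamma/2\lambda}$. The refined $(\sigma^h)^2/\lambda$ loss (rather than a linear $\sigma^h$ loss) will come from averaging over $h\in \H$ and exploiting the identity $\sum_{h\in \H}(s_i^h-s_i)=0$.

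\textbf{Step 1: DP estimate on the halting distribution.} I introduce the oracle probabilities $p_i=\Pr[\Lap(\lambda)\ge \gamma-s_i]$ and $q_i=p_i\prod_{j<i}(1-p_j)$, and the true per-party quantities $p_i^h,q_i^h$. From the definition of $s_i^{\mh}$ I get the key sensitivity identity $s_i-s_i^{\mh}=-\tfrac{p}{1-p}(s_i^h-s_i)$, hence $|s_i-s_i^{\mh}|\le \tfrac{p\sigma^h}{1-p}$, which is at most $\lambda$ precisely for $h\in \Unbiased$. Applying \cref{fact:laplace} to both $p_i^h/p_i$ and $(1-p_i^h)/(1-p_i)$ gives all four ratios in $1\pm \eps_h$, where $\eps_h:=\tfrac{5p\sigma^h}{\lambda(1-p)}$. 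Plugging this into \cref{fact:basic2} (with $q_r=1-\sum_{j<r}q_j$ playing the role of the forced terminal success) yields $\sum_{i<r}|q_i^h-q_i|\le 4\eps_h \cdot \Pr[J\ne r\mid H=h]$.

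\textbf{Step 2: Laplace tail and good event.} Let $E$ denote the event ``$\nu_i<\gamma/2$ for every $i\in[r-1]$''. \cref{fact:LaplacePr} and a union bound give $\Pr[\bar E]\le (r/2)e^{-\gamma/2\lambda}$, and since the output lies in $[-1,1]$, this accounts exactly for the additive slack $re^{-\gamma/2\lambda}$ in the theorem. Conditioned on $E$, whenever the experiment halts at a round $i<r$ we must have $s_i^{\mh}\ge \gamma/2$, and therefore, using $s_i^h=s_i^{\mh}+(s_i^h-s_i)/(1-p)$, the output satisfies $s_i^h\ge \gamma/2-\sigma^h/(1-p)$. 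Used naively this already gives a loss linear in $\sigma^h$; the next step refines it.

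\textbf{Step 3: Quadratic loss via cancellation and the \Biased{} case.} I decompose
\[
\tfrac{1}{n}\sum_{h\in \H}\ex{s_J^h\cdot \mathbb{1}[J=i,E]\mid H=h}= \tfrac{1}{n}\sum_h q_i^h(E)\,s_i^{\mh}+\tfrac{1}{n}\sum_h q_i^h(E)\,\tfrac{s_i^h-s_i}{1-p},
\]
where $q_i^h(E):=\Pr[J=i,E\mid H=h]$. Writing $q_i^h(E)=q_i(E)+(q_i^h(E)-q_i(E))$ in the second summand and using that $q_i(E)$ is independent of $h$ while $\sum_h(s_i^h-s_i)=0$, the main term vanishes and only the residual $|q_i^h(E)-q_i(E)|\cdot \sigma^h/(1-p)$ survives. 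Summing over $i<r$ and combining with the Step 1 bound (and with the tail correction from Step 2 to pass from $q_i^h(E)$ to $q_i^h$) gives total loss at most $O\bigl(\eps_h\cdot \sigma^h/(1-p)\bigr)\cdot \Pr[J\ne r\mid H=h]=O\bigl((\sigma^h)^2 p/(\lambda(1-p)^2)\bigr)\cdot \Pr[J\ne r\mid H=h]$, matching the $40(\sigma^h)^2 p/(\lambda(1-p))$ term of $v_h$ up to constants. For $h\in \Biased$ I fall back on the bound $\ex{s_J\mid H=h}\ge -4\sigma^h$ obtained by combining the crude $s_i^h\ge -1$ bound on the $\bar E$ contribution with a second application of the good-event and sensitivity arguments restricted to $h$ itself (which still gives $s_i^h\ge -\sigma^h/(1-p)$ on a halt under $E$ and exploits that $\sigma^h>\lambda(1-p)/p$ makes $-4\sigma^h$ the dominating loss). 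Finally, for the last claim, if $s_i\ge \gamma$ then $p_i\ge 1/2$ since $\Pr[\Lap(\lambda)\ge 0]=1/2$, and combining this with the ratio bound $p_i^h\ge p_i/(1+\eps_h)$ and the worst-case value $\eps_h\le 5$ for $h\in \Unbiased$ yields $\Pr[J\ne r\mid H=h]\ge p_i^h\ge 1/12$; tightening the Laplace ratio inequality at the boundary case $\sigma^h=\lambda(1-p)/p$ delivers the claimed $1/6$.

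\textbf{Main obstacle.} The delicate point is Step 3: the cancellation identity $\sum_h(s_i^h-s_i)=0$ is only available when averaging over the full set $\H$, yet the theorem's bound is stated pointwise through $v_h$, so I must carefully isolate the \Biased{} contribution, control its loss separately by $-4\sigma^h$, and simultaneously keep the good-event conditioning consistent across the two summands of the decomposition. Reconciling the conditioned-on-$E$ halting distribution with the unconditional one (so that the DP estimate of Step 1 can be applied verbatim) is the main bookkeeping challenge; everything else is a routine application of the two cited facts.
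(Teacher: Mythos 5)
Your proposal follows essentially the same route as the paper: the sensitivity identity $s_i - s_i^{\mh} = -\tfrac{p}{1-p}(s_i^h - s_i)$, \cref{fact:laplace} applied to the per-round halting probabilities, \cref{fact:basic2} to convert the ratio bounds into a bound on $\sum_i|q_i^h - q_i|$, the cancellation $\sum_{h\in \H}(s_i^h - s_i) = 0$ to upgrade the naive linear-in-$\sigma^h$ loss to a quadratic one, and a Laplace tail bound to control rounds where $s_J^{\mh}$ is small. One stylistic difference: the paper never introduces a global good event $E$; it directly bounds $\pr{J\ne r \land s_J^{\mh} \le \gamma/2}$ and writes $\ex{s_J^{\mh}} \ge \pr{J\ne r \land s_J^{\mh} > \gamma/2}\cdot\gamma/2 - \pr{J\ne r \land s_J^{\mh}\le\gamma/2}$, which sidesteps the "bookkeeping challenge" you flag at the end — your plan to pass from $q_i^h(E)$ back to $q_i^h$ would cost you an extra additive $O(r e^{-\gamma/2\lambda})$ beyond the one you already pay from $\pr{\bar E}$, weakening the constant in the tail term. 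The one genuine defect is the final claim: your chain $p_i\ge 1/2$, then $p_i^h \ge p_i/(1+\eps_h)$ with $\eps_h\le 5$, only yields $1/12$, and the "tightening at the boundary case" is asserted but not justified. The paper's argument here is both simpler and correct: for $h\in\Unbiased$ one has $|s_i - s_i^{\mh}|\le\lambda$, so $s_i\ge\gamma$ gives $s_i^{\mh}\ge\gamma-\lambda$, hence $p_i^h \ge \pr{\Lap{\lambda}\ge\lambda} = e^{-1}/2\ge 1/6$ directly from \cref{fact:LaplacePr}, with no detour through the ratio inequality.
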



When using \cref{thm:Laplace} in our proofs, the values $\cs = \set{s_i^{h} \in [-1,1]}_{i\in [r],h\in \H}$ are calculated for a fixed transcript of the coin-flipping protocol.
\cref{thm:LaplaceGen} analysis the  expected value of the output when first a transcript $\tau$ is chosen, then the values $\cs_\tau$ are computed, and finally $\LapExp(\H,\cs_\tau,\gamma,\lambda)$ is executed.  

\begin{corollary}\label{thm:LaplaceGen} 
	Let $\H$, $\gamma$, $\lambda$ and $p$ be as in \cref{Exp:Laplace}. Let $\cs=\set{\cs_\tau=\set{s_i^h(\tau)}_{i\in [r], h\in \H}}_{\tau\in \cT}$ denote a set of numbers in $[-1,1]$ indexed by $i\in [r]$, $h\in \H$ and $\tau$ taking values in some set $\cT$. Define $\sigma^{h}(\tau)=\max_i\set{\size{ s_i(\tau)-s_i^{h}(\tau)}}$. Let $T$ be a random variable taking values in $\cT$, and let $H$ be the value of $h$ and $J$ be the halting round (set to $r$ if  \cref{Exp:Laplace} does not halt in step~(\ref{step:halt})) in a random execution of  $\LapExp(\H,\S_{\tau\la T},\gamma,\lambda)$. Further assume that there exist real numbers $\alpha$, $\beta$, $\gamma$, $\delta\in [0,1]$ such that  
	\begin{itemize}
	\item $\alpha\leq \lambda(1-p)/2p$,
	\item $\ppr{\tau\la T}{ \sigma^h(\tau)\geq \rho\cdot \alpha } \leq \frac{1}{\rho}\cdot \beta$, for every $h\in \H$ and $\rho\geq 1$,
	\item $\ppr{\tau\la T}{\max_{i\in [r]} s_i(\tau)\geq \gamma   }\geq \delta $.
	\end{itemize} 
	Then, 
	\begin{align*}
		\ex{s_J^H(T)}\geq \frac{1}{6}\cdot (\delta - \beta/2)\cdot \left (\frac{\gamma}{2} - \frac{40\cdot \alpha^2p}{\lambda (1-p)}\right )-168\alpha\beta - 8 \alpha\beta\log(1/\lambda)-\frac{r}{2}\cdot e^{-\gamma/2\lambda} .
	\end{align*}
	In particular, if $\gamma\geq \frac{1}{256\sqrt{r}}$, $\lambda=\gamma/(4\log(r))$, $\alpha\leq \frac{\gamma  \sqrt{ 4(1-p)/p}}{32\log(r)}$ and $\beta\leq \frac{\delta}{16\sqrt{(1-p)/p}}$, then $\ex{s_J^H(T)}\geq \gamma\delta/125 - \frac{1}{2r}$, {\sf for $r$ large enough}.
\end{corollary}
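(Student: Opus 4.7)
The plan is to reduce the corollary to \cref{thm:Laplace} by conditioning on $T = \tau$. For each fixed $\tau \in \cT$, \cref{thm:Laplace} applied to $\LapExp(\H, \cs_\tau, \gamma, \lambda)$ yields
\[ \ex{s_J^H(T) \mid T = \tau} \geq \ex{v_H(\tau) \mid T = \tau} - r\cdot e^{-\gamma/2\lambda}, \]
where $v_h(\tau)$ is as in \cref{thm:Laplace} relative to the $\tau$-dependent partition $\Unbiased(\tau) \cup \Biased(\tau)$ determined by $\cs_\tau$. Averaging over $\tau\la T$ reduces the task to lower bounding $\ex{v_H(T)}$, which I would split according to whether $H \in \Biased(T)$ or $H \in \Unbiased(T)$.

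\textbf{Biased contribution.} Here $v_H = -4\sigma^H(T)$, so the overall contribution equals $-\tfrac{4}{n}\sum_{h\in \H}\ex{\sigma^h(T)\,\mathbf{1}[\sigma^h(T) > M]}$ with $M := \lambda(1-p)/p$. I would bound each summand by a layer-cake integral using the tail hypothesis $\pr{\sigma^h(T) \geq \rho\alpha} \leq \beta/\rho$. Setting $\rho = M/\alpha \geq 2$, the ``flat'' regime $t \in [0, M]$ contributes at most $M \cdot \pr{\sigma^h > M} \leq \alpha\beta$, and the ``tail'' regime $t \in [M, 2]$ (recalling $\sigma^h \leq 2$) contributes $\int_M^2 \beta\alpha/t\,dt = \beta\alpha \log(2/M) = O(\alpha\beta\log(1/\lambda))$, using $p \leq 1/2$. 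Summed over $h$ and scaled by $4$, this reproduces the $-168\alpha\beta - \Theta(\alpha\beta\log(1/\lambda))$ piece of the stated error (the precise constants fall out of the layer-cake bookkeeping).

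\textbf{Unbiased contribution.} I would isolate the good event $\cG := \cE \cap \cE_H$, where $\cE := \{\max_i s_i(T) \geq \gamma\}$ (probability $\geq \delta$ by hypothesis) and $\cE_H := \{\sigma^H(T) \leq \alpha\}$, so by the tail bound with $\rho = 2$ and since $\beta/\rho$ shrinks, $\pr{\cE_H^c} \leq \beta/2$ after a slight adjustment of the threshold. A union bound (combined with $H \perp T$) gives $\pr{\cG} \geq \delta - \beta/2$. On $\cG$: the condition $\alpha \leq M/2$ puts $H$ in $\Unbiased(T)$, \cref{thm:Laplace}'s final clause gives $\pr{J\neq r\mid H, T} \geq 1/6$, and $(\sigma^H)^2 \leq \alpha^2$. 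Hence the contribution of $\cG$ is at least $\tfrac{1}{6}(\delta-\beta/2)\bigl(\gamma/2 - 40\alpha^2 p/(\lambda(1-p))\bigr)$. The remaining (possibly negative) piece from $\{H\in\Unbiased(T)\}\setminus \cG$ has magnitude at most $\tfrac{40 p}{\lambda(1-p)}\ex{(\sigma^H)^2\,\mathbf{1}[\alpha < \sigma^H \leq M]}$; using $(\sigma^H)^2 \leq M\sigma^H$ and applying the same layer-cake estimate to $\ex{\sigma^H\mathbf{1}[\sigma^H > \alpha]}$ folds this into the same $O(\alpha\beta\log(1/\lambda))$ error and is absorbed into the stated constants.

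\textbf{Specialization and main obstacle.} The ``in particular'' statement follows by plugging in: $\lambda = \gamma/(4\log r)$ makes $re^{-\gamma/2\lambda} = r\cdot r^{-2} = 1/r$; the bound on $\alpha$ makes $40\alpha^2 p/(\lambda(1-p)) = O(\gamma/\log r) \ll \gamma/2$; the bound on $\beta$ makes $\delta - \beta/2 \geq 31\delta/32$; and $\alpha\beta \leq \gamma\delta/(256\log r)$ combined with $\log(1/\lambda) = O(\log r)$ keeps $8\alpha\beta\log(1/\lambda)$ a small constant fraction of $\gamma\delta$. Careful constant bookkeeping then yields $\gamma\delta/125 - 1/(2r)$ for sufficiently large $r$. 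The main obstacle is making the layer-cake estimates tight enough to hit the stated constants ($168$ and $8$) while ensuring that the $\log(1/\lambda)$ factor attaches to $\alpha\beta$ and not to $\alpha$ alone; the latter is essential because $\alpha\log(1/\lambda)$ alone would be far too large to permit the clean specialization. Apart from these estimates, the argument is a mechanical case-split driven by \cref{thm:Laplace}.
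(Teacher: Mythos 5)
Your overall strategy matches the paper's: condition on $T = \tau$, apply \cref{thm:Laplace} for each fixed $\tau$, decompose $\ex{v_H}$ according to the size of $\sigma^H$, and control the tails by a dyadic/layer-cake argument driven by the hypothesis $\pr{\sigma^h(T)\geq\rho\alpha}\leq\beta/\rho$. The one step that does not go through as you claim is the treatment of the intermediate range $\sigma^H\in(\alpha, M]$ with $M = \lambda(1-p)/p$. You propose to bound $\frac{40p}{\lambda(1-p)}\,\ex{(\sigma^H)^2\,\mathbf{1}[\alpha<\sigma^H\leq M]}$ by first writing $(\sigma^H)^2\leq M\sigma^H$ and then layer-caking $\ex{\sigma^H\,\mathbf{1}[\sigma^H>\alpha]}$; this yields $40\,\ex{\sigma^H\,\mathbf{1}[\sigma^H>\alpha]} = O\bigl(\alpha\beta\log(M/\alpha)\bigr)$, and $\log(M/\alpha)$ is \emph{not} controlled by $\log(1/\lambda)$ — they can differ by an arbitrary amount since no lower bound on $\alpha$ is assumed — so your assertion that this ``folds into the same $O(\alpha\beta\log(1/\lambda))$ error'' is unjustified and in fact false in general. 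The paper (\cref{clm:expsigma}, inequality \eqref{eqn:exsquared}) avoids this loss precisely by layer-caking $(\sigma^h)^2$ \emph{directly}: the dyadic terms are $\alpha^2\beta\,2^{i}$, a \emph{geometric} sum dominated by its top term $i=\log(M/\alpha)$, giving the clean bound $4M\alpha\beta$ (hence contribution $\leq 160\alpha\beta$) with no logarithmic factor. Pre-multiplying by $M$ and switching to a layer-cake on $\sigma^H$ turns a geometric sum into a flat one and that is exactly where the spurious $\log(M/\alpha)$ enters. This is the missing idea, and it is load-bearing: with the log factor you cannot recover the stated $168\alpha\beta+8\alpha\beta\log(1/\lambda)$ error term nor the $\gamma\delta/125$ conclusion.

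Two smaller points. First, your good event $\cE_H=\{\sigma^H\leq\alpha\}$ only has $\pr{\cE_H^c}\leq\beta$ (take $\rho=1$), not $\beta/2$; the ``slight adjustment'' you allude to (e.g.\ $\cE_H=\{\sigma^H\leq 2\alpha\}$) does give $\beta/2$ but then $(\sigma^H)^2\leq 4\alpha^2$ on $\cG$, changing $40\alpha^2$ to $160\alpha^2$. The paper sidesteps this trade-off: it does not intersect the $\pr{J\neq r}$ term with $\{\sigma^h\leq\alpha\}$; it keeps the $\frac{\gamma}{2}\pr{J\neq r\wedge\sigma^h\in[\alpha,M]}$ term, absorbs the $\alpha$-threshold into the $(\gamma/2 - 40\alpha^2 p/\lambda(1-p))$ factor (using only the $\sigma^h\leq\alpha$ slice), and bounds $\pr{h\in\Unbiased\wedge J\neq r}\geq\frac{1}{6}(\delta-\beta/2)$ where $\pr{h\in\Biased}\leq\pr{\sigma^h\geq 2\alpha}\leq\beta/2$ comes for free from $M\geq 2\alpha$. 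Second, your reduction ``condition on $T=\tau$ and apply \cref{thm:Laplace}'' is exactly the paper's opening move, and the biased-range estimate you sketch (flat plus tail, with $\log(2/M)\leq\log(2/\lambda)$ via $p\leq 1/2$) coincides with inequality \eqref{eqn:outlier}.
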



\subsection{Proving \cref{thm:Laplace}}

\begin{proof}[Proof of  \cref{thm:Laplace}]
For  $h\in \H$ and $i\in [r]$, let $d_i^h = s_i^\mh - s_i^h$. We next compute $\ex{s_J^H}$.
\begin{align}\label{eq:Laplace:1}
\ex{s_J^H}&= \sum_{i\in [r], h\in \H} s_i^h\cdot \pr{H=h\,\wedge J=i}\\
&= \sum_{i,h}  (s_i^\mh-d^h_i)\cdot \pr{H=h\,\wedge J=i}\nonumber\\
&= \sum_{i,h}  s_i^\mh\cdot \pr{H=h\,\wedge J=i}-\sum_{i,h}   d^h_i \cdot \pr{H=h\,\wedge J=i}\nonumber\\
&=  \ex{s_J^\mH}- \sum_{i,h}   d^h_i \cdot \pr{H=h\,\wedge J=i}\nonumber\\
&=  \ex{s_J^\mH}- \frac1n \cdot \sum_{i \in [r],h\in h}   d^h_i \cdot \pr{J=i \mid H=h}\nonumber\\
&=  \ex{s_J^\mH}- \frac1n \cdot \sum_{i \in [r-1],h\in h}   d^h_i \cdot \pr{J=i \mid H=h}.\nonumber
\end{align} 
The last equality holds since, by assumption, $s_r = s_r^h$ for any $h$, thus, $d^h_r=0$.

\vspace*{\baselineskip}

\noindent 
We start by upper bounding the right hand term above (\ie $\sum_{i,h}   d^h_i \cdot \pr{J=i \mid H=h}$). For $h\in \H$ and $i\in [r-1]$, let $$p_i^h=\pr{\Lap{\lambda}+ s_i^{\mh}\geq \gamma}, \  p_r^h =1, \text{ and } q_i^h = p_i^h\cdot \prod_{j<i}(1-p^h_j).$$ 
Note that $q_i^h = \pr{J=i \mid H=h}$.  For $i\in [r]$, let $$p_i=\pr{\Lap{\lambda}+ s_i \geq \gamma} \text{ and }  q_i=  p_i\cdot \prod_{j<i}(1-p_j).$$
 Let $\sigma_i^h = s_i - s_i^h$ and $\sigma_i^{\mh} = s_ i - s_i^{\mh}$.
 Note that $d_i^h =-\sigma_i^\mh + \sigma_i^h$. Since $s_i = (1-p) \cdot s_i^{\mh} + p \cdot s_i^{h}$, it holds that   $\sigma_i^{\mh}= - p\cdot \sigma_i^h/ (1-p)$. 
In particular, for any $h \in \Unbiased$ it holds that  $\size{\sigma_i^h} \leq  \sigma_i^h \leq \lambda (1-p)/p$ and $\size{\sigma_i^{\mh}} \le \lambda$.  Hence,  \cref{fact:laplace} yields  that $p_i^h/p_i \in 1 \pm 5 \sigma^{\mh}/\lambda$ for any $h\in \Unbiased$. Therefore,  by \cref{fact:basic2} 
\begin{align}\label{eq:Laplace:2}
\sum_{i\in [r-1]} \size{q_i - q_i^h} \le   \frac{20}{\lambda} \cdot \sigma^\mh \cdot (1- q_r^h) \le \frac{20p}{\lambda (1-p) } \cdot \sigma^h \cdot (1- q_r^h).
\end{align}
for any $h\in \Unbiased$. 
Define $d^h=\max_i\set{\size{d^h_i}}$.
It follows that
\begin{align}\label{eq:Laplace:3}
\sum_{i\in [r-1],h\in \H} d^h_i \cdot \pr{J=i \mid H=h} 
&=  \sum_{i,h}  d_i^h \cdot q^h_i  \\
&=  \sum_{i,h}  d_i^h \cdot q_i  + \sum_{i,h}  d_i^h \cdot ( q_i^h  - q_i )\nonumber\\
&=  \sum_{i,h}  d_i^h \cdot( q_i^h  - q_i )\nonumber\\
&\le  \sum_{h\in \Unbiased}  d^h  \sum_{i\in [r-1] } \size{q_i^h  - q_i}  +\sum_{h\in \Biased} 2d^h \nonumber\\
&\le \frac{20p}{\lambda\cdot (1-p)} \cdot \sum_{h\in \Unbiased} d^h \cdot\sigma^h \cdot  (1- q_r^h)  +  \sum_{h\in \Biased} 2d^h\nonumber\\
&\le \frac{20p}{\lambda\cdot (1-p)} \cdot \sum_{h\in \Unbiased} 2(\sigma^h)^2 \cdot  (1- q_r^h)  +  \sum_{h\in \Biased} 4\sigma^h\nonumber.
\end{align}
The second equality holds since $\sum_{h\in \H} \sigma_i^h  = 0$ for any $i\in [r]$, and thus $\sum_{h\in \H} \sigma_i^\mh  = 0$ and $\sum_{h\in \H} d_i^h  = 0$. The last inequality holds since $p\leq 1/2$ and thus $d^h =|-p\sigma^h/(1-p)-\sigma^h|\le 2 \sigma^h$.

\vspace*{\baselineskip}

\noindent 
The next step is to lower bound $\ex{s_J^\mH}$. By \cref{fact:LaplacePr},
\begin{align}\label{eq:Laplace:4}
\pr{J \neq r \land s_J^\mH \le \gamma/2 } = \sum_{i=1}^{r-1} \pr{J =i  \land s_i^\mH \le \gamma/2 }
 \le \frac{ r}{2}\cdot e^{-\gamma/2\lambda}.
\end{align}
Hence, since $\pr{J \neq r}= \pr{J \neq r \land s_J^\mH > \gamma/2} +  \pr{J \neq r \land s_J^\mH\le \gamma/2}$,
\begin{align}\label{eq:Laplace:5}
\ex{s_J^\mH} &\ge   \pr{J \neq r  \land s_J^\mH>\gamma/2}  \cdot  \gamma/2 - 1\cdot  \pr{J \neq r  \land s_J^\mH\le \gamma/2} \\
&\ge \left ( \pr{J \neq r } -  \frac{ r}{2} \cdot e^{-\gamma/2\lambda} \right ) \cdot  \gamma/2 - \frac{ r}{2} \cdot e^{-\gamma/2\lambda}\nonumber\\
& \ge \pr{J \neq r }\cdot  \gamma/2 - r \cdot e^{-\gamma/2\lambda} \nonumber\\
& = \ex{1- q_r^H} \cdot  \gamma/2 - r \cdot e^{-\gamma/2\lambda} \nonumber \\
&\ge \left  (\frac1n \sum_{h\in \Unbiased} (1- q_r^h) \cdot \gamma/2 \right) - r\cdot e^{-\gamma/2\lambda}\nonumber.
\end{align}
Thus, by \cref{eq:Laplace:1,eq:Laplace:3,eq:Laplace:5}, 
\begin{align}\label{eq:Laplace:6}
\lefteqn{\ex{s_J^H}}\\
& \ge \frac1n \left( \sum_{h\in \Unbiased} (1- q_r^h) \cdot \left (\gamma/2 - \frac{40}{\lambda\cdot (1-p)/p} (\sigma^h)^2 \right )  \right)- \frac{1}{n}\sum_{h\in \Biased} 4\sigma^h - r\cdot e^{-\gamma/2\lambda} \nonumber .
\end{align}
To conclude the proof we need to show  that if $s_i \ge \gamma$ for some $ i\in [r-1]$, then  $(1- q_r^h) \ge 1/6$ for all $h\in \Unbiased$.  Let $i \in [r-1]$ be a round with $s_i \ge \gamma$. For every $h\in \Unbiased$, we have shown that
$\size{\sigma_i^{\mh}} \le \lambda$, thus, $s_i^\mh\geq \gamma - \lambda$. By \cref{fact:LaplacePr}, it holds that $p^h_i \ge \pr{\Lap{\lambda}\geq \lambda }=  \exp(-1)/2\geq 1/6$. Hence, $1-q_r^h \ge 1/6$, for all $h\in \Unbiased$.  
\end{proof}

\subsection{Proving \cref{thm:LaplaceGen}}

\noindent 
Before proving the theorem, we state and prove two claims regarding the expectation of the similarity gap.

\begin{claim}\label{clm:expsigma} Under the hypothesis of \cref{thm:LaplaceGen}, it holds that 
\begin{align}
\ex{\sigma^h(T)  \mid \sigma^h(T)\geq (1-p)\lambda/p }\cdot \pr{ \sigma^h(T)\geq (1-p)\lambda/p } &\leq 2\alpha\beta\log(1/\lambda)+2\alpha\beta ,\label{eqn:outlier} \\
\ex{(\sigma^{h }(T))^2  \mid \sigma^h(T)\in [\alpha, (1-p)\lambda/p] }\cdot \pr{ \sigma^h(T)\in [\alpha, (1-p)\lambda/p] } &\leq 4(1-p)\lambda\alpha\beta/p  .
\label{eqn:exsquared}
\end{align}
 for every $h\in \H$.
\end{claim}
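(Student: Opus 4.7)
The plan is to express each conditional-expectation-times-probability as an unconditional expectation of $\sigma^h(T)$ (respectively $(\sigma^h(T))^2$) times the indicator of the relevant event, and then evaluate it via the layer-cake / tail-integral identity $\ex{Y} = \int_0^\infty \pr{Y > t}\,dt$ for nonnegative $Y$. The tail bound required in each integral will be supplied by the hypothesis $\ppr{\tau\la T}{\sigma^h(\tau) \ge \rho\alpha} \le \beta/\rho$ for $\rho \ge 1$, which, rewriting $t = \rho\alpha$, reads $\pr{\sigma^h(T) > t} \le \alpha\beta/t$ for every $t \ge \alpha$. Also used is the trivial bound $\sigma^h(T) \le 2$, which follows from $s_i^h,s_i \in [-1,1]$.

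For \cref{eqn:outlier}, setting $c \eqdef (1-p)\lambda/p$ and $Y = \sigma^h(T)\cdot \mathbf{1}_{\sigma^h(T) \ge c}$, the tail identity gives
\[
\ex{Y} \;=\; c\cdot \pr{\sigma^h(T) \ge c} \;+\; \int_{c}^{2} \pr{\sigma^h(T) > t}\,dt.
\]
Since $c \ge 2\alpha \ge \alpha$ by the hypothesis $\alpha \le \lambda(1-p)/(2p)$, the tail bound applies throughout, yielding $c\pr{\sigma^h(T)\ge c} \le \alpha\beta$ and $\int_c^2 \alpha\beta/t\,dt = \alpha\beta\log(2/c)$. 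Using $p \le 1/2$ to write $\log(2/c) \le \log(1/\lambda) + \log 2$, and absorbing constants using $\lambda \le 1$, gives the asserted bound $2\alpha\beta \log(1/\lambda) + 2\alpha\beta$.

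For \cref{eqn:exsquared}, let $c_1 = \alpha$, $c_2 = (1-p)\lambda/p$, and $W = (\sigma^h(T))^2 \cdot \mathbf{1}_{\sigma^h(T) \in [c_1, c_2]}$. Applying the layer-cake identity $\ex{W} = \int_0^\infty 2t\,\pr{W > t^2}\,dt$, splitting at $t = c_1$, and using $\pr{W > t^2} \le \pr{\sigma^h(T) \ge c_1}$ for $t < c_1$ and $\pr{W > t^2} \le \pr{\sigma^h(T) > t}$ for $t \in [c_1, c_2]$, yields
\[
\ex{W} \;\le\; c_1^2 \cdot \pr{\sigma^h(T) \ge c_1} \;+\; \int_{c_1}^{c_2} 2t\cdot \frac{\alpha\beta}{t}\,dt \;=\; \alpha^2 \beta + 2\alpha\beta(c_2 - c_1).
\]
Using once more $\alpha \le \lambda(1-p)/(2p) = c_2/2$ to bound $\alpha^2 \beta \le \alpha\beta c_2 /2$ yields $\ex{W} \le \tfrac{5}{2}\alpha\beta c_2 \le 4(1-p)\lambda\alpha\beta/p$, as claimed.

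The main bookkeeping to get right is the case analysis in the layer-cake integrals (splitting at the threshold $c$ or at $c_1$) and the tightness of the constant hidden in $\log(2/c)$ versus the advertised $2\log(1/\lambda) + 2$; both reductions rely crucially on the assumption $p \le 1/2$ and on $\alpha \le \lambda(1-p)/(2p)$, which together ensure the tail bound is applicable on the whole integration range and that $\alpha$ is small compared to $c_2$. No deep ideas are needed beyond these routine estimates.
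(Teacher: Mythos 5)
Your proposal is correct, and the core idea is the same as the paper's: convert the conditional-expectation-times-probability into a tail sum/integral over the relevant range and apply the hypothesis $\pr{\sigma^h(T) \geq \rho\alpha} \leq \beta/\rho$ throughout. The only difference in packaging is that the paper sums over dyadic blocks $\alpha\cdot[2^i, 2^{i+1}]$ while you integrate continuously via the layer-cake identity; this is a routine substitution (and actually gives marginally cleaner constants for \cref{eqn:exsquared}). One small point worth flagging: your final constant absorption for \cref{eqn:outlier} needs $\lambda \leq 1$ to pass from $\alpha\beta\log(1/\lambda)+2\alpha\beta$ to $2\alpha\beta\log(1/\lambda)+2\alpha\beta$, which you note but which is not stated as a hypothesis; the paper avoids this by bounding $2\alpha\beta\log(1/c) \leq 2\alpha\beta\log(1/\lambda)$ directly from $c = (1-p)\lambda/p \geq \lambda$ (a consequence of $p \leq 1/2$), valid for any $\lambda$. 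Since $\lambda \leq 1$ holds everywhere the claim is invoked, this is not a genuine gap.
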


\begin{proof} We begin by showing \eqref{eqn:outlier}.
\begin{align*}
\lefteqn{\ex{\sigma^h(T)  \mid \sigma^h(T)\geq (1-p)\lambda/p }\cdot \pr{ \sigma^h(T)\geq (1-p)\lambda/p }}\\
&\leq \sum_{i=\log((1-p)\lambda/p\alpha)}^{\log(1/\alpha)} \alpha2^{i+1}\cdot  \pr{\sigma^h(T)\in \alpha\cdot [2^i, 2^{i+1}]}\\
&\leq \sum_{i=\log((1-p)\lambda/p\alpha)}^{\log(1/\alpha)}  \alpha2^{i+1}\cdot  \pr{\sigma^h(T)\geq  2^i\cdot \alpha}\\
&\leq \sum_{i=\log((1-p)\lambda/p\alpha)}^{\log(1/\alpha)}  \alpha2^{i+1}\cdot 2^{-i}\beta\\
&= 2 \alpha \beta \left ( \log(1/\alpha)  - \log((1-p)\lambda/p\alpha) +1  \right )\\
&= 2 \alpha \beta \log(p/(1-p)\lambda)+2\alpha\beta
\leq 2 \alpha \beta \log(1/\lambda)  +2\alpha\beta .
\end{align*} Next, we show \eqref{eqn:exsquared}.
\begin{align*}
\lefteqn{\ex{(\sigma^{h }(T))^2  \mid \sigma^h(T)\in [\alpha,  (1-p)\lambda/p ] }\cdot \pr{ \sigma^h(T)\in [\alpha,  (1-p)\lambda/p ] }}\\
&\leq \sum_{i=0}^{\log( (1-p)\lambda/p\alpha )-1} \alpha^2 2^{2i+2}\cdot  \pr{\sigma^h(T)\in \alpha\cdot [2^i, 2^{i+1}]}\\
&\leq \sum_{i=0}^{\log((1-p)\lambda/p\alpha)-1} \alpha^2 2^{2i+2}\cdot  2^{-i}\cdot \beta\\
&= 4\alpha^2\beta \sum_{i=0}^{\log((1-p)\lambda/p\alpha)-1} 2^i \\
&=  4\alpha^2\beta ((1-p)\lambda/p\alpha -1)\leq    4\alpha\beta   (1-p)\lambda/p  .
\end{align*}
\end{proof}

\begin{proof}[Proof of \cref{thm:LaplaceGen}]
Using the notation from \cref{thm:Laplace}, since $r\cdot e^{-\gamma/2\lambda}\le -\frac{1}{2r}$, for $r$ large enough, it suffices to bound $\ex{v_h\mid H=h}$ for an arbitrary $h$. To prove the claim, we will combine \cref{thm:Laplace} with \eqref{eqn:outlier} and \eqref{eqn:exsquared} from \cref{clm:expsigma}. Having fixed $H=h$, all probabilities and expectations below are conditioned on $H=h$. To alleviate notation, we will omit specifying that $H=h$, i.e.~instead of $\ex{\cdots \mid \cdots \land H=h}$ and $\pr{\cdots \mid \cdots \land H=h}$, we write $\ex{\cdots \mid \cdots }$ and $\pr{\cdots \mid \cdots }$.  Using the notation from  \cref{thm:Laplace} , we compute 

\begin{align*}
\ex{v_h}&= \ex{v_h\mid \sigma^h(T)\leq \alpha} \cdot \pr{\sigma^h(T)\leq \alpha} \\
 & \qquad+  \ex{v_h\mid \sigma^h(T)\in [\alpha, (1-p)\lambda/p]} \cdot \pr{\sigma^h(T)\in [\alpha, (1-p)\lambda/p]}\\ 
 &\qquad\qquad + \ex{v_h\mid \sigma^h(T)  \geq (1-p)\lambda/p} \cdot \pr{\sigma^h(T)  \geq (1-p)\lambda/p}  .
\end{align*} 
We compute each of these terms separately. 
First, by expanding over all possible transcripts,
\begin{align*}
&\ex{v_h\mid \sigma^h(T)\leq \alpha} \cdot \pr{\sigma^h(T)\leq \alpha} \\
& \qquad =  \pr{\sigma^h(T)\leq \alpha}\cdot \sum_{\tau \st \sigma^h(\tau)\le \alpha} \ex{v_h\mid T=\tau} \cdot \pr{T=\tau\mid \sigma^h(T)\leq \alpha}. \\
\\ &\qquad \geq \pr{\sigma^h(T)\leq \alpha}\cdot \sum_{\tau \st \sigma^h(\tau)\le \alpha} \left(\frac{\gamma}{2} - \frac{40\cdot \alpha^2}{\lambda(1-p)/p} \right )\cdot\pr{J\neq r\mid T=\tau} \cdot \pr{T=\tau\mid \sigma^h(T)\leq \alpha}  .
\end{align*}
 
\noindent 
Where the last inequality follows by the definition of $v_h$ when $h$ is similar (\cf ). Consequently, 
\begin{align}\label{eq:perfect}
 \ex{v_h\mid \sigma^h(T)\leq \alpha} \cdot \pr{\sigma^h(T)\leq \alpha}  \ge \left(\frac{\gamma}{2} - \frac{40\cdot \alpha^2}{\lambda(1-p)/p} \right )\cdot\pr{J\neq r\land  \sigma^h(T)\leq \alpha}
\end{align} 

\newcommand{\simil}{p_\mathsf{sim}}

\noindent
Next, write $\simil=\pr{\sigma^h(T)\in [\alpha, (1-p)\lambda/p]}  $. For the second term, again by expanding over all possible transcripts, we compute
\begin{align*}
\lefteqn{ \ex{v_h \mid  \sigma^h(T)\in [\alpha, (1-p)\lambda/p]} \cdot \pr{\sigma^h(T)\in [\alpha, (1-p)\lambda/p]}  }\\ 
& = \simil \cdot \sum_{\substack{\tau \st \\ \sigma^h(\tau)\in [\alpha, (1-p)\lambda/p]}} \ex{v_h \mid  T=\tau} \cdot \pr{T=\tau \mid \sigma^h(T)\in [\alpha, (1-p)\lambda/p]}   
\end{align*}
For any fixed $\tau$ such that $\sigma^h(\tau)\in [\alpha, (1-p)\lambda/p]$, by the definition of $v_h$ when $h$ is similar (\cf \cref{thm:Laplace}), it holds that 
\begin{align*}
 \ex{v_h \mid  T=\tau} & \ge \pr{J\neq r\mid T=\tau}\left (\frac{\gamma}{2} - \frac{40}{\lambda(1-p)/p} \cdot \ex{(\sigma^h)^2\mid T=\tau}\right ) \\
 &\ge \pr{J\neq r\mid T=\tau}\cdot  \frac{\gamma}{2} - \frac{40}{\lambda(1-p)/p} \cdot \ex{(\sigma^h)^2\mid T=\tau}
\end{align*}
and we deduce that 
\begin{align}
\lefteqn{ \ex{v_h \mid  \sigma^h(T)\in [\alpha, (1-p)\lambda/p]} \cdot \pr{\sigma^h(T)\in [\alpha, (1-p)\lambda/p]}  } \nonumber\\  
& \qquad\geq    \frac{\gamma}{2}\cdot\pr{J\neq r\,\wedge\, \sigma^h(T)\in [\alpha,(1-p)\lambda/p]} \nonumber\\
& \qquad\quad- \frac{40}{(1-p)\lambda/p} \cdot \ex{(\sigma^h(T))^2\mid\sigma^h(T) \in [\alpha, (1-p)\lambda/p] }\cdot \pr{ \sigma^h(T)\in [\alpha, (1-p)\lambda/p] }.\label{eq:similaro}
\end{align}
Finally, by the definition of $v_h$ when $h$ is non-similar (\cf \cref{thm:Laplace}), 
\begin{align}
\lefteqn{ \ex{v_h\mid \sigma^h(T)\geq (1-p)\lambda/p} \cdot \pr{\sigma^h(T)\geq (1-p)\lambda/p}}  \nonumber\\ 
& \qquad\geq  - 4\cdot \ex{\sigma^h(T) \mid \sigma^h(T)\geq (1-p)\lambda/p} \cdot \pr{\sigma^h(T)\geq (1-p)\lambda/p}.\label{eq:nosimilaro}
\end{align}
Add \cref{eq:perfect,eq:similaro,eq:nosimilaro} and replace the relevant expressions using \eqref{eqn:outlier} and \eqref{eqn:exsquared}:
\begin{align*}
\ex{v_h}\geq \pr{h\in \Unbiased\wedge J\neq r}\cdot \left ( \frac{\gamma}{2} - \frac{40\cdot \alpha^2}{(1-p)\lambda/p}\right ) -40\cdot 4\alpha\beta - 8\alpha\beta \log(1/\lambda)-8\alpha\beta  .
\end{align*}
Next, we lower-bound the quantity $\pr{h\in \Unbiased\wedge J\neq r}$.
\begin{align*}  \pr{h\in \Unbiased\wedge J\neq r}&\geq \pr{h\in \Unbiased\wedge J\neq r \wedge \exists s_i(T)\geq \gamma} \\
&= \pr{  J\neq r \mid  h\in \Unbiased\wedge \exists s_i(T)\geq \gamma}\cdot  \pr{h\in \Unbiased\wedge \exists s_i(T)\geq \gamma}\\
&\geq \frac{1}{6}\cdot (\pr{\exists s_i(T)\geq \gamma}- \pr{h\in \Biased})  \\
&\geq \frac{1}{6}\cdot (\delta- \beta/2)  .
\end{align*}
The last inequality follows from the fact that $\pr{\exists s_i(T)\geq \gamma}\ge \delta$ and $\pr{h\in \Biased}\leq \pr{\sigma^{h}\geq 2\alpha}\leq \beta/2$.
In summary,
\begin{align*}
\ex{v_h}\geq \frac{\delta - \beta/2}{6}\cdot \left ( \frac{\gamma}{2} - \frac{40\cdot \alpha^2}{(1-p)\lambda/p}\right )  -(168\alpha\beta + 8\alpha\beta \log(1/\lambda))  .
\end{align*}
\vspace*{\baselineskip}
\noindent 
The last part of the claim follows from the inequalities below, holding for large enough $r$:
	\begin{itemize}
		 
		\item $\delta-\dfrac{\beta}{2}\geq \dfrac{31}{32}\cdot\delta$, \quad since $\beta\le \sqrt{\frac{p}{1-p}}\cdot\frac{\delta}{16}$ and $p\le 1/2$.
		
		\item $\dfrac{\gamma}{2}-\dfrac{40\alpha^2 p}{\lambda(1-p)}\geq \dfrac{\gamma}{4}$, \quad since $\alpha^2 \le \frac{\gamma\lambda}{64\log(r)}\cdot \frac{1-p}{p} $.
		
		\item $168\alpha\beta + 8\alpha\beta\log(1/\lambda) \leq 8\alpha\beta\log(r)\leq \dfrac{\gamma\delta}{32}$,  \quad since the leading term in the far left summand is $4\alpha\beta\log(r)$ (because of the square root) and $\alpha\beta\le \frac{\gamma\delta}{256\log(r)}$.
	\end{itemize} 
	We conclude that $\ex{v_h}\geq \frac{ \gamma\delta}{25} - \frac{\gamma\delta}{32} \ge  \frac{\gamma\delta}{125}$.
\end{proof} 

\newcommand{\Nugg}{\mathsf{NuggetFinder}}
\newcommand{\itchy}{\mathsf{BuildXLoop}}
\newcommand{\allchi}{\mathsf{BuildX}}
\newcommand{\simm}{\mathsf{similar}}
\newcommand{\issim}{\mathsf{IsSimilar}}

\newcommand{\Xseq}{\mathsf{RunX}}

\section{Biasing Coin-Flipping Protocols}\label{sec:attack}

In this section we prove our main result,  an almost optimal  attack on many-party  coin-flipping protocols.

\def\MainTheorem{
There exists a fail-stop adversary $\Ac$ such that the following holds. Let $\Pi$ be a correct $\np$-party $r$-round coin-flipping protocol, and let $k\in \N$ be the smallest integer such that ${n\choose k}\geq r\log(r)^{2k}$. Then, there exists a party $\Pc$ in $\Pi$ such that $\Ac^\Pi$ controlling all parties but $\Pc$ biases the output of  $\Pc$ by $\Omega(1/\sqrt{r}\log(r)^k)$. The running time of $\Ac^\Pi$  is polynomial in the running time of $\Pi$ and $\np^k$, and it uses oracle only access to $\Pi$'s next-message function.
}

\begin{theorem}[Main theorem]\label{thm:Unimain}
	\MainTheorem
\end{theorem}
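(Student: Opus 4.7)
The plan is to mount an attack that uses $k$-tuple backup values and generalizes the two cases ($\np=r$ and $\np^2\ge r$) described in \cref{sec:intro:tec:3,sec:intro:tec:4}. For a subset $T\subseteq[\np]$ of size $k$, let $Z_i^T\in\zo$ denote the joint backup value of the parties indexed by $T$, \ie the bit they would output if all other parties aborted right after round $i$. For a family $\cP\subseteq\binom{[\np]}{k}$ of $k$-tuples, let $Z_i^\cP = \frac{1}{|\cP|}\sum_{T\in\cP} Z_i^T$. The adversary first picks a random subset $\cs\subseteq[\np]$ of size $\np/2$ (the parties it will treat as the ``corruption pool'') and a random $h\la\overline\cs$ (the honest target). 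It then builds the Doob-like \sosa weak martingale
\[
X^\cs_i \;=\; \rnd{\delta}\!\left(\ex{\,\out\,\mid\, Z^{\binom{\cs}{k}}_{i},\; X^\cs_{i-1},\; {\textstyle\sum_{j<i}} (X^\cs_{j}-X^\cs_{j-1})^2}\right),
\]
for $\delta = \Theta(1/\sqrt r\log(r)^k)$, computed on-the-fly by the efficient uniform estimation scheme sketched in \cref{sec:intro:CompDoob}: one constructs mappings $\mu_1,\dots,\mu_r$ iteratively via polynomially many samples from $\Pi$'s next-message function. Since $\binom{\np}{k}\ge r\log(r)^{2k}$, the dimension of each mapping is polynomial in $\np^k$, so the construction (and its online evaluation) is efficient. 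Because $\Xu$ is defined with respect to the \emph{approximate} mappings rather than the true conditional expectations, approximation errors do not accumulate, and with probability $1-o(1)$ over the sampling randomness, $\Xu$ is a \sosa $(\gamma,\delta)$-weak martingale with $\Xu_0=1/2$ and $\Xu_r\in\zo$; then \cref{claim:SOSAjumpsWHP} delivers a $(1/4\sqrt{r})$-gap with probability $\Omega(1)$.

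The attack then runs in rounds, and in each round $i$ chooses between three options based on local inspection. (i)~(\emph{Backup-follows-value test.}) If at some round, $|X^\cs_i - Z^{\binom{\cs}{k}}_{i-1}|$ is large, abort all parties in $\cs$; following the argument in \cref{sec:intro:CI}, we may assume by a standard reduction that this event has only small probability. (ii)~(\emph{Martingale attack.}) If the $k$-tuples restricted to $\cs$ are ``similar'' to the tuples straddling $\cs$ and $\overline\cs$, \ie $Z_i^{\binom{\cs}{k}} \approx Z_i^{\binom{\cs}{k-1}\times\overline\cs} \pm o(1/\sqrt r)$ for all $i$, then on the high-probability event of a jump, aborting all parties other than a random $(k-1)$-subset of $\cs$ together with $h$ biases the output of $\Pc_h$ by $\Omega(1/\sqrt r\log(r)^k)$, exactly as in \cref{sec:intro:tec:3}. (iii)~(\emph{Differentially private sampling attack.}) If similarity fails at the $k$-versus-$k{-}1$ level, then we look one level down: for each candidate $h$, consider the per-round gap $G^h_i$ between averages of $k$-tuples containing $h$ and $(k{-}1)$-tuples not containing $h$, and the estimated gap $G^{\setminus h}_i$ visible to the adversary. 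Either (iii.a) these are within $o(1/\sqrt[4]{r}\log(r)^{k-1})$ of each other uniformly in $h$, placing us in the setting of the oblivious sampling game with $\gamma=\Theta(1/\sqrt r)$, $\sigma=o(1/\sqrt[4]{r}\log(r)^{k-1})$, and $p=1/|\cs|$, so that \cref{thm:LaplaceGen} applied with Laplace noise $\lambda=\gamma/4\log r$ yields expected reward $\Omega(\gamma/\log(r)^k)$; or (iii.b) the gap-similarity fails and one can \emph{fix} some party $h'\in\cs$ to always be corrupted-but-not-aborted, effectively reducing the attack to the same problem at parameter $k{-}1$ (since the asymmetry produced by $h'$ then plays the role of the imbalance exploited at the lower level).

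The main proof is then an induction on $k$. The base case $k=1$ is essentially the $\np\ge r$ argument of \cref{sec:intro:tec:3}: \cref{eq:intro:appx} holds for random $\cs$ by Hoeffding (\cref{fact:Hoeffding}) whenever $|\cs|\ge r\log r$, so case (ii) always applies. The inductive step argues that \emph{one} of (i), (ii), or (iii.a) must yield a bias of $\Omega(1/\sqrt r\log(r)^k)$, else case (iii.b) holds and we recurse at level $k-1$ over a universe of $\np-1$ parties (which still satisfies the hypothesis $\binom{\np-1}{k-1}\ge r\log(r)^{2(k-1)}$ by the minimality of $k$). A standard averaging argument converts the fact that \emph{some} attack succeeds into a fixed adversary achieving the claimed bias against a specific honest party. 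Each of (i)--(iii.a) is an efficient attack, and after at most $k$ levels of recursion, the running time is $\poly(\Pi)\cdot\np^k$ as required.

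The main obstacle, and the place where I expect the bookkeeping to be most delicate, is the ``weak gap similarity'' step (\cref{eq:intro:3:p} in the introduction) and its iterative exploitation in case (iii.b). One must verify that (a) the noise magnitude $\lambda$ and threshold $\gamma$ parameters of the Laplace mechanism track the correct $\log r$ powers at each level of recursion so that \cref{thm:LaplaceGen}'s hypothesis $\alpha\le \gamma\sqrt{4(1-p)/p}/32\log r$ is satisfied; (b) the per-level loss in the bias is only a multiplicative $1/\log r$ factor, yielding the advertised $1/(\sqrt r\log(r)^k)$; and (c) the bias and similarity events are defined with respect to the same efficiently-computable martingale $\Xu^\cs$ throughout the recursion, so that the case analysis is genuinely exhaustive. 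The computability argument of \cref{sec:intro:CompDoob} is what makes (c) go through, and it is the reason we must work with the \sosa weak variant rather than the strong Doob martingale.
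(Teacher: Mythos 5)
Your plan mirrors the high-level strategy the paper implements: build a \sosa weak Doob-like martingale from $k$-tuple backup values, obtain a $\Theta(1/\sqrt{r})$-gap from the jump theorem, and choose among a martingale-style abort, a Laplace-noise abort, and a descent step that fixes a party. The paper organizes this differently from your online recursion: it first statically locates a \emph{nugget} (\cref{def:UsefulNaget}, \cref{lemma:UsefulNaget}, built by the iterative process of \cref{figure:nugget})---a level $\ks\in[k+1]$ together with tuple sets $\bS_1,\bS_0\subseteq\binom{[\np]}{k}$ and a candidate set $\H$ satisfying a level-dependent gap/similarity condition---and then supplies three attacks (\cref{lemma:short:mart,lemma:short:lap,lemma:short:sing}) matching the three ranges of $\ks$: a martingale attack for $\ks=k+1$, a DP-sampling attack for $\ks\in\set{2,\dots,k}$, and a separate Hoeffding-based \emph{singletons} attack for $\ks=1$.

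Two steps in your argument do not survive scrutiny. First, your recursion shrinks the tuple size from $k$ to $k-1$, and you justify the new hypothesis ``$\binom{\np-1}{k-1}\ge r\log(r)^{2(k-1)}$'' by the minimality of $k$; but minimality says exactly the \emph{opposite}: $\binom{\np}{k-1}<r\log(r)^{2(k-1)}$, hence also $\binom{\np-1}{k-1}<r\log(r)^{2(k-1)}$. The paper never decreases $k$: the tuple size stays $k$ throughout, the descent step restricts to the $k$-tuples passing through a fixed party, and the nugget threshold absorbs a $\coef{\ell}^{1/2}$ factor (\cref{not:UsefulNaget}) that tracks how many tuple coordinates remain free. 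That is what lets the bottom level $\ks=1$ be closed off via Hoeffding using only the original hypothesis $\binom{\np}{k}\ge r\log(r)^{2k}$, with no ``smaller-$k$'' version ever required. Second, your rounding granularity $\delta=\Theta(1/\sqrt{r}\log(r)^k)$ violates the hypothesis $\delta<1/100r$ of \cref{claim:SOSAsquares} (and hence of \cref{claim:SOSAjumpsWHP}) whenever $k$ is small enough that $\log(r)^k<100\sqrt{r}$, which is most of the interesting range; the paper fixes $\delta=1/200r$ (\cref{remark:advice}). Your sketch of the uniform sampled-mapping construction of the game-value sequence does match \cref{sec:uniGV} and is sound.
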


\begin{remark}[Interesting choice of parameters] 
Note that $\sqrt{\np}\geq 2\log(r)^2$ implies  ${\np\choose \sqrt{\np}}\geq \sqrt{\np}^{\sqrt{\np}}\geq 2^{\sqrt{\np}}\log(r)^{2\sqrt{\np}}\geq r \log(r)^{2\sqrt{\np}}$,  and therefore there exists $k\in \set{1,\ldots, \sqrt{n}}$ satisfying the hypothesis of the theorem. On the other hand, if $\sqrt{\np}< 2\log(r)^2 $, then it is easy to see that either such  $k$ does not exist, or $\log(r)^k\geq \sqrt{r}$ and in this case \citet{Cleve86}'s bound overtakes and our theorem is trivial.
\end{remark}

Let $\np'=\floor{\np/\noh}$, for some $\noh< n/2$. By noting that any $\np$-party $r$-round coin-flipping protocol is, in particular, an $\np'$-party $r$-round coin-flipping protocol, the theorem below follows \cref{thm:Unimain} by simple reduction.  

\begin{theorem}[Main theorem, fewer  corruptions variant,]\label{thm:FewerUnimain}
	Let $\Pi$ be a correct $\np$-party $r$-round coin-flipping protocol and let $n'=\floor{n/\noh}$, for some $\noh<n/2$. There exists a fail-stop adversary $\Ac$ such that the following holds. Let $k\in \N$ be the smallest integer such that ${n'\choose k}\geq r\log(r)^{2k}$. Then, there exists parties $\Pc_1,\ldots, \Pc_s$ in $\Pi$ such that $\Ac^\Pi$ controlling all parties but $\Pc_1,\ldots, \Pc_\noh$ biases the output of  $\Pc_1,\ldots, \Pc_\noh$ by $\Omega(1/\sqrt{r}\log(r)^k)$. The running time of $\Ac^\Pi$  is polynomial in the running time of $\Pi$ and $\np^k$, and it uses oracle only access to $\Pi$'s next-message function.\footnote{We require $\noh<n/2$, otherwise the resulting protocol has an honest majority, and standard MPC techniques would foil the attack.}
\end{theorem}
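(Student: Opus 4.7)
The plan is to deduce \cref{thm:FewerUnimain} from \cref{thm:Unimain} by a simple grouping reduction, exactly as hinted at in the sentence preceding the statement. Given a correct $\np$-party $r$-round coin-flipping protocol $\Pi$, I would partition its parties into $\np'=\floor{\np/\noh}$ disjoint groups $G_1,\ldots,G_{\np'}$ of size $\noh$ each, arbitrarily appending any leftover parties to $G_{\np'}$ (this only enlarges one group by fewer than $\noh$ parties and does not affect the argument). I then define an $\np'$-party $r$-round protocol $\Pi'$ in which the ``super-party'' $\Pc'_j$ maintains the joint internal state of all parties in $G_j$, sends in each round the concatenation of their prescribed messages, and outputs the common bit that they would output. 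Since $\Pi$ is a correct coin-flipping protocol, so is $\Pi'$, with the same number of rounds $r$.

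Next I would apply \cref{thm:Unimain} to $\Pi'$, obtaining a fail-stop adversary $\Ac'$ running in time polynomial in $\mathrm{time}(\Pi')$ and $(\np')^k$ that controls all super-parties but some $\Pc'_{j^*}$ and biases its output by $\Omega(1/\sqrt{r}\log(r)^k)$, where $k$ is the smallest integer with $\binom{\np'}{k}\geq r\log(r)^{2k}$. The induced adversary $\Ac$ on $\Pi$ corrupts the $\np-\noh$ parties in $\bigcup_{j\neq j^*} G_j$ and internally simulates $\Ac'$: in each round it repackages the messages sent by the honest and corrupted parties into corresponding super-party messages, feeds them to $\Ac'$, and whenever $\Ac'$ aborts a super-party $\Pc'_j$ it aborts all parties in $G_j$ simultaneously. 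Since $\Ac$ controls every party in such a group, this preserves both the fail-stop and the rushing properties, because all aborts can be scheduled after the honest parties of the current round have spoken.

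By the correctness of $\Pi$, whenever $\Ac$ induces a subset of parties to abort, the remaining honest parties produce a common bit, and by the definition of $\Pi'$ this bit coincides with the output that $\Pc'_{j^*}$ would produce when simulating its group. Consequently, the bias $\Ac'$ achieves against $\Pc'_{j^*}$ translates one-for-one into the same bias on the common output of the $\noh$ honest parties $\Pc_1,\ldots,\Pc_\noh$ sitting in $G_{j^*}$. The running-time bound follows from $\mathrm{time}(\Pi')\le \np\cdot \mathrm{time}(\Pi)$ and $(\np')^k\le \np^k$, and oracle access to $\Pi$'s next-message function suffices because every query $\Ac'$ makes to $\Pi'$'s next-message function unfolds into a constant number of queries to $\Pi$'s. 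I do not foresee any substantive obstacle; the only points that merit a careful write-up are (i) checking that the rushing schedule lifts cleanly through the grouping (which it does, since each simulated super-party is fully controlled by $\Ac$), and (ii) the bookkeeping verification that $\np'$ is large enough for a valid $k$ to exist, which is precisely the hypothesis $\binom{\np'}{k}\geq r\log(r)^{2k}$ assumed in the theorem.
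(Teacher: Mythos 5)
Your proposal is correct and takes essentially the same route the paper intends: the paper states that the theorem "follows from \cref{thm:Unimain} by simple reduction," precisely the grouping argument you spell out. The one housekeeping point worth a line in a final write-up is the leftover parties: if $\np$ is not a multiple of $\noh$, the group that ends up honest might have size larger than $\noh$, so to match the statement's promise of exactly $\np-\noh$ corruptions you should additionally corrupt the surplus parties in the honest group and have them follow the protocol honestly (a fail-stop adversary that never deviates is indistinguishable from an honest party), which preserves the bias and the fail-stop/rushing properties.
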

For presentation purposes, we will prove our theorem for the special case of non-uniform \ppt Turing machines. Later, in \cref{sec:UniformAttack}, we show how to handle the general case.  In \cref{sec:MartingalesAttack,sec:LapAttack,sec:SingeltonAttack,section:nugget}, we prove the following weaker variant of \cref{thm:Unimain}.
 
\begin{theorem}[Main theorem, non-uniform adversaries variant]\label{thm:main}
	There exists a fail-stop adversary $\Ac$ such that the following holds. Let $\Pi$ be a correct $\np$-party $r$-round coin-flipping protocol, and let $k\in \N$ be the smallest integer such that ${n\choose k}\geq r\log(r)^{2k}$. Then, there exists a party $\Pc$ in $\Pi$ and a string $\adv\in \zs$ such that $\Ac^\Pi(\adv)$ controlling all parties but $\Pc$ biases the output of  $\Pc$ by $\Omega(1/\sqrt{r}\log(r)^k)$. The running time of $\Ac^\Pi$  is polynomial in the running time of $\Pi$ and $\np^k$, and it uses only oracle access to $\Pi$'s next-message function.

\end{theorem}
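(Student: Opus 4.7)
\medskip

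\noindent\textbf{Proof proposal.} My plan is to recursively reduce the $n^k\geq r$ case to a situation in which either (i) the augmented-weak-martingale ``jump'' attack outlined in \cref{sec:intro:tec:3} succeeds, or (ii) the differentially-private oblivious-sampling attack outlined in \cref{sec:intro:tec:4} succeeds. The non-uniform advice string $\adv$ will fix, once and for all, a good subset $\cs\subseteq [\np]$ of size roughly $\np/2$, the identity of the honest party $\Pc=\Pc_h$ (with $h\in\ocs$), and the choice of which of the two attacks to mount (together with any auxiliary subsets needed for the recursion). The non-uniformity is used only to select these combinatorial objects; all online computation of the adversary is polynomial in the running time of $\Pi$ and in $\np^k$, because the attacks only require evaluating backup values of $k$-tuples and the estimated augmented martingale, whose states live in a domain of size $O(\poly(r))$.

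\smallskip

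\noindent First I would fix the combinatorial setup. For every $k$-subset $T\in\binom{[\np]}{k}$ and round $i\in[r]$, let $Z^T_i\in\zo$ denote the joint backup value: the output that the parties in $T$ would produce if all other parties aborted right after round $i$. For a subset $\cs\subseteq[\np]$ and a family $\cP\subseteq\binom{[\np]}{k}$, write $Z^\cP_i=\frac{1}{|\cP|}\sum_{T\in\cP}Z^T_i$. Fix $|\cs|\approx\np/2$; by Hoeffding (\cref{fact:Hoeffding}), with high probability over $\cs$ the values $Z^{\binom{\cs}{k}}_i$ are close to the global mean, and hence $\cs$ can be ``good'' for our purposes. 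Now consider the Doob-like sequence
\begin{equation*}
X^\cs_i \;=\; \rnd{\delta}\!\Bigl(\,\E\bigl[\,\out \;\big|\; Z^{\binom{\cs}{k}}_{\le i},\; X^\cs_{i-1},\; \sum_{j<i}(X^\cs_j-X^\cs_{j-1})^2\,\bigr]\Bigr),
\end{equation*}
with $\delta\in\Theta(1/r)$. By \cref{lemma:z:4,lemma:z:3,lemma:tower}, this is a \sosa $\delta$-weak martingale (up to the rounding error), with $X^\cs_0=1/2$ and $X^\cs_r\in\zo$. Hence \cref{claim:SOSAjumpsWHP} yields that with constant probability some consecutive pair satisfies $|X^\cs_i-X^\cs_{i-1}|\geq 1/4\sqrt{r}$. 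Because each $X^\cs_i$ is determined by a small amount of data (the previous state plus the averaged backup values of a polynomial-size family), it is efficiently computable given oracle access to $\Pi$'s next-message function.

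\smallskip

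\noindent Next I would split into cases based on ``similarity''. Say $\cs$ is \emph{similar} if for every round $i$, the averaged backup value over $\binom{\cs}{k}$ closely approximates (within $o(1/\sqrt{r})$) the averaged backup value over families that cross into $\ocs$, and moreover this similarity holds for every single party $h\in\ocs$ (after removing tuples containing $h$). In the similar case, I would mount the \emph{martingale attack}: at each round $i$, the adversary -- controlling everyone in $\cs$ together with all but one party of $\ocs$ -- computes (an estimate of) $X^\cs_i$ using the approach of \cref{sec:intro:CompDoob}, and aborts all parties but $\Pc_h$ and one random party of $\cs$ as soon as the martingale makes a $1/2\sqrt{r}$-jump. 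By similarity, the averaged backup value is a faithful proxy for the backup value of the honest party, so the jump translates into an $\Omega(1/\sqrt{r}\log(r)^k)$ bias on $\Pc_h$. In the non-similar case, I would mount the \emph{differentially-private sampling attack}: package the per-party gap quantities $s^h_i=G^h_i$ defined as in \cref{sec:intro:tec:4} into an instance of the oblivious sampling game of \cref{sec:Laplace}, and apply \cref{thm:LaplaceGen} with $\gamma=\Theta(1/\sqrt{r})$, $\lambda=\gamma/(4\log r)$, $p$ reflecting the fraction $1/|\cs|$, and $\alpha,\beta$ coming from the similarity parameter at scale $k$; the conclusion is an expected gain of $\Omega(\gamma\delta)=\Omega(1/\sqrt{r}\log(r)^k)$ on $\Pc_h$.

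\smallskip

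\noindent The genuinely hard step -- and the one I would expect to spend the most care on -- is the \emph{inductive} refinement of the case analysis for $k\geq 3$. When the ``weak gap similarity'' of \cref{eq:intro:3:p} holds at the crude scale $1/\np^{1/2}$ but fails at the finer scale $1/\np^{(k-1)/k}$ needed by the DP attack, one has to identify a party that acts as a persistent outlier and \emph{fix it as never-aborting}. This reduces the parameter from $k$ to $k-1$ on the remaining pool $[\np]\setminus\{\text{fixed party}\}$, and one recurses. To make this rigorous I would formalize, round by round, a family of ``good'' sets $\cs_1\supsetneq\cs_2\supsetneq\cdots\supsetneq\cs_k$ together with a fixed tuple of never-aborting parties, show (by Hoeffding plus union bound over $r$ rounds) that such a chain exists non-uniformly with constant probability, and then verify that either one of the $k$ intermediate DP attacks fires, or the final martingale attack on $\cs_k$ fires. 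The accumulated loss at each level of the recursion is a single $\log r$ factor (from the Laplace noise parameter in \cref{thm:LaplaceGen}), producing the final $\log(r)^k$ factor in the denominator of the bias. All other arguments reduce either to the martingale machinery of \cref{sec:martingales} or to the clean ``oblivious sampling'' bound of \cref{thm:LaplaceGen}, which were set up for precisely this purpose.
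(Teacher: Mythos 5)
Your high-level plan matches the paper's: non-uniformly fix the combinatorial data (the tuple families, the honest party, the attack type) as advice, define the game-value sequence as a \sosa weak martingale over averaged backup values with a bounded state, invoke \cref{claim:SOSAjumpsWHP} to obtain a jump, and when the needed similarity fails, fall back to the Laplace/oblivious-sampling machinery of \cref{thm:LaplaceGen}, recursing by fixing an outlier party as permanently alive so the effective $k$ drops by one. This is exactly the paper's ``nugget'' dichotomy (\cref{def:UsefulNaget,lemma:UsefulNaget}), and the $\log(r)$-per-level bookkeeping you describe is right.

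There is a genuine gap at the base of the recursion. You write that ``either one of the $k$ intermediate DP attacks fires, or the final martingale attack on $\cs_k$ fires'', but a martingale attack cannot be what fires at the bottom level. The martingale attack only turns a game-value jump into a bias on $\Pc_h$ if the averaged backup over the adversary's tuple family tracks the backup of the tuple containing $h$ (that is precisely the ``similar'' hypothesis in \cref{lemma:MartingalesAttack}), and the reason the recursion ever reaches the bottom level is that this similarity fails at every intermediate scale. What the paper actually needs there is a third, structurally different attack -- the singletons attack of \cref{lemma:SingeltonAttack}: once every $\bS_z(h)$ is a single tuple, the adversary samples a random half of $\bS_z\setminus\bS_z(h)$, and Hoeffding (over $\Theta(\np)$ singleton tuples) lets it detect the \emph{gap} $\max_i\size{B_i^{\bS_1}-B_i^{\bS_0}}$ directly with a fixed threshold; no martingale jump is involved, and none would help, since the required similarity is absent by construction. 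Without this base case the recursion has no valid terminal attack, and the claimed $\Omega(1/\sqrt{r}\log(r)^k)$ bias does not follow. (Separately, and more cosmetically: the Doob-like definition should condition on the current averaged backup $Z^{\binom{\cs}{k}}_i$ only, not on the full history $Z^{\binom{\cs}{k}}_{\le i}$ -- conditioning on the history defeats the entire purpose of switching to augmented weak martingales, as the state would no longer live in a small domain; your later sentence about ``determined by a small amount of data'' suggests you meant the former.)
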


\paragraph{Proving  \cref{thm:main}.}
Our proof follows the high-level description given in the introduction. Recall that  a backup  value associated with  a subset of  parties  \wrt a given round of a protocol execution  is the common output  these parties  would output if all other parties prematurely abort in this round round. More formally,

\begin{notation}\label{not:backups}
	We  identify the set $[\np]$ with the parties of the  $\np$-party  protocol in consideration. We refer to subset of parties (\ie subset of [\np]) as {\sf tuples}, and denote sets of such tuples using   ``blackboard bold'' (\eg $\mathbb{S}$) rather than calligraphic. For a tuple subset $\bS\subseteq {[\np]\choose k}$ and $h\in [\np]$, let $\bS(h)=\set{\cU\in \bS\colon  h\in \cU}$, i.e.~$\bS(h)$ is the set of tuples in $\bS$ that contain $h$, and   $\bS \rmv{h}=\bS\setminus \bS(h)$.  
\end{notation}

\begin{definition}[Backup values] \label{def:backups}
	The following definitions are \wrt a fixed honest execution of an  $\np$-party, $r$-round correct protocol   (determined  by the parties' random coins).  The {\sf \ith round backup value} of  a subset of parties $\cU \subseteq  [\np]$ at round $i\in [r]$,  denoted    $\backup(\cU,i)$, is defined as the common output the parties in $u$ would output, if all other parties abort in the \ith round (set to $\perp$ if the execution  has not reached this round with all parties of $\cU$ alive). The {\sf average backup value} of  a tuples subset of  $\bS$, is defined by $\AvgBackup(\bS,i) = \frac{1}{\size{\bS}}\sum_{\cU\in \bS} \backup(\cU,i)$. Furthermore, for every $\bS$, we define the random variables $B_1^{\bS},\ldots,B_r^{\bS}$ to denote the value of  $\AvgBackup(\bS,1),\ldots, \AvgBackup(\bS,r)$ in a random execution of $\Pi$.
\end{definition}

Back to the informal proof-sketch. For a subset of parties $\cs \subseteq [\np]$,  consider the average backup value of the tuples in $\bS_1 = \cs^k$, i.e.~$B_1^{\bS_1},\ldots,B_r^{\bS_1}$, where $B_i^{\bS_1}$ denotes the value of  $\AvgBackup(\bS_1,i)$ in a random execution of $\Pi$.  Let $X_0,\ldots,X_r$  be the Doob-like sequence defined by $X_i=\ex{\out\mid B_i^{\bS_1}, X_{i-1}, \sum_{j\le i} (X_j-X_{j-1})^2}$. It is easy to see that $X_0,\ldots,X_r$ is a \sosa weak martingale. In \cref{sec:martingales}, we showed that such sequences have at least one $1/\sqrt{r}$-gap between consecutive variables, with constant probability. In turn, such a gap enables a $1/\sqrt{r}$-attack, unless the sequence $B_1^{\bS_2},\ldots,B_r^{\bS_2}$  for $\bS_2 = \cs^{k-1} \times \wb{\cs}$,  and the above sequence $B_1^{\bS_1},\ldots,B_r^{\bS_1}$ are \emph{non-similar}:  there is a $1/\sqrt{r}$-gap between $B_i^{\bS_2}$ and $B_i^{\bS_2}$ in some round $i$.  If so, we can attempt to exploit the non-similarity by  applying our differential privacy-based attack, dubbed \emph{the oblivious sampling attack}, in the spirit of the oblivious sampling experiment described in \cref{sec:Laplace}. In order for the latest attack to achieve the desired bias, we require that for any two parties  $h,h'\in [\np]$, the \emph{projection} of the sequence $B_1^{\bS_1 },\ldots,B_r^{\bS_1 }$ to $h$ and $h'$, defined by  $B_1^{\bS_1(h)},\ldots,B_r^{\bS_1(h)}$ and $B_1^{\bS_1(h')},\ldots,B_r^{\bS_1(h')}$, respectively, yields similar sequences (and the same for $\bS_2$). If not, i.e.~there is a pair of parties $h$, $h'\in \cH$ and $z\in \ot$ such that $B_1^{\bS_z(h)},\ldots,B_r^{\bS_z(h)}$ and $B_1^{\bS_z(h')},\ldots,B_r^{\bS_z(h')}$ are non similar, then we can invoke the oblivious sampling attack with $\wt{\bS}_1=\bS_z(h)$ and $\wt{\bS}_2=\bS_z(h')$, which yields the desired bias, as long as all relevant pairs of parties induce projected sequences that are similar. If not, we can find another pair of parties that breaks our requirement, i.e.~similarity, and repeat the process.

The iterative process described above terminates by finding a non-similar pair of tuple-sets $(\bS_1',\bS_2')$ such that, either every projection is similar, and thus we can apply the oblivious sampling attack, or, $\bS_1'$ and $\bS_2'$ consist of tuples in which all-but-one parties are fixed, i.e.~every projection describes the distribution of a single bit. If so, we can apply a simple attack that we call the \emph{singletons attack}.  We refer to the ``level'' where the process stops as the \emph{nugget} of $\Pi$. 

The actual proof is significantly more complicated, as we have to use a different similarity measure for every level, and we have to make sure the projected sets of  tuples have the right size.

\medskip

We formally prove the theorem using the following four lemmas, proved in  \cref{sec:MartingalesAttack,sec:LapAttack,sec:SingeltonAttack,section:nugget}. \cref{lemma:UsefulNaget} state that any protocol has a nugget  (formally defined in \cref{def:UsefulNaget}), where \cref{lemma:short:mart,lemma:short:lap,lemma:short:sing} state that there is an effective attack, for all possible values of the nugget.

\begin{notation}\label{not:UsefulNaget}
Let $\coef{\ell}= \frac{(\np-1)\cdot (\np-2)\cdot \ldots \cdot (\np-k+\ell)}{(k-1)\cdot (k-2)\ldots \cdot \ell}$, letting $\coef{k}= 1$.  For $r\in \N$, let  $\Rng (r) = \set{1,1+1/r, 1+2/r, 1+3/r\ldots,r}$. We remark that $\size{\cR}=r(r-1)+1$.
\end{notation}

\def \NuggetDef{
	Let $\Pi$  be an  $\np$-party $r$-round coin-flipping protocol, and let $k\in \N$ be the smallest integer such that ${\np \choose k}\geq r\log(r)^{2k}$. Index $\ks \in [k+1]$ is  a {\sf nugget} for $\Pi$, if  there exists  $ \rho^\ast \in  \Rng(r)$, set  $\H \subseteq [\np]$, and tuple sets  $\bS_1, \bS_0\subseteq {\np \choose k}$ such that the following holds.

For a tuple-set $\bS\subseteq 2^{[\np]}$ and  $i\in[r]$, let $B_i^\bS$ denote  the value of $\AvgBackup(\bS,i)$ in a random execution of $\Pi$.  The following holds according to the value of $\ks$:
 \begin{description}
 	\item  [$\ks =1$:] ~
 	
  \begin{enumerate}
  	
  	\item  $\pr{\max_{i\in [r]}\size{ B_i^{\bS_1}-  B_i^{\bS_0}}\geq  \frac{\rho^\ast}{ 256\sqrt{r}} \cdot \frac{ \coef{\ks} ^{1/2}}{\left (64\log(r)\right)^{k-\ks}} } \geq \frac{1}{2\rho^\ast\log(r)}\cdot \frac{ 64^{-k+\ks}}{ \coef{\ks}^{1/2} }$. \label{eq:nug}

 	\item $\H   \ge n/3$, $\size{\bS_1}=\size{\bS_0}=\size{\H}$, and $\size{\bS_z(h)}=1$ for every $h\in \H$ and $z\in  \zo$.
 	
 \end{enumerate}
 	
 		\item[$\ks \in \set{2,\ldots, k}$:] ~
 		
 \begin{enumerate}
 			
 \item Same as \cref{eq:nug}  for  $\ks =1$.	 		
 		
 	\item  For  every $h, h' \in \H$, $z,z'\in \zo$,  $\cU'\in \bS_z$ and  $\rho\in \Rng(r)$:
 	
 	\begin{enumerate}
 		
 		\item 	  $\pr{\max_{i\in [r]}\size{ B_i^{\bS_z(h)}-  B_i^{\bS_z(h')}}\geq  \frac{\rho}{ 256\sqrt{r}} \cdot \frac{ \coef{\ks-1} ^{1/2}}{\left (64\log(r)\right)^{k-\ks+1}} } \leq \frac{1}{2\rho\log(r)}\cdot \frac{ 64^{-k+\ks-1}}{ \coef{\ks-1}^{1/2} }$.

 		\item $\ppr{\cU\la \bS_{z}}{h\in \cU}=\ppr{\cU\la \bS_{z}}{h'\in \cU}\le \frac{1}{2}$.
 		
 		\item $\ppr{\cU\la \bS_{z}}{h\in \cU}=\ppr{\cU\la \bS_{z'}}{h \in \cU}$.
 		
 		\item $ \ppr{\cU\la \bS_{z}}{h\notin \cU}/\ppr{\cU\la \bS_{z}}{h\in \cU} \geq \frac{1}{4}\cdot \frac{n-k+\ks-1}{\ks-1}$.
 		
 		\item $\ppr{h\la \H, \cU\la \bS_z(h)}{\cU=\cU'}=\ppr{\cU\la \bS_z}{\cU=\cU'}$ . 
 		
 	\end{enumerate}
\end{enumerate}

  \item [$\ks= k+1$:]~
 	 \begin{enumerate}
 	 	
 	 	\item $\bS_1(h)=\emptyset$ for every $h\in \H$.
 	 	
 	 	\item  	$\ppr{h\la \H, \cU\la \bS_0(h)}{\cU=\cU'}=\ppr{\cU\la \bS_0}{\cU=\cU'}$  for every  $\cU'\in \bS_0$. 
 	 	
 	 	\item $\pr{\max_{i\in [r]}\size{ B_i^{\bS_0 }-  B_i^{\bS_1 }}\geq  \frac{\rho}{ 256\sqrt{r}}   } \leq \frac{1}{2\rho\log(r)} $ for every $\rho\in \Rng(r)$.
  \end{enumerate}
 \end{description}	
}
\begin{definition}[The Nugget]\label{def:UsefulNaget}
\NuggetDef
\end{definition}

\begin{lemma}\label{lemma:UsefulNaget}
Let $\Pi$  be an  $\np$-party $r$-round coin-flipping protocol, then $\Pi$ has a nugget.
\end{lemma}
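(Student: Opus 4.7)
The plan is an iterative ``zoom-in'' construction that descends from the outermost level $\ks = k+1$ down to $\ks = 1$, maintaining at each step a candidate triple $(\H, \bS_0, \bS_1)$ satisfying the structural clauses of \cref{def:UsefulNaget} at level $\ks$ together with a non-similarity witness (Clause~3 at level $k+1$ and Clause~1 at levels $\ks \leq k$). I will initialize at level $k+1$ with $\H \subseteq [\np]$ of size $\np/2$ and $\bS_0 = \binom{\H}{k}$, $\bS_1 = \binom{[\np]\setminus\H}{k}$. The first two clauses at level $k+1$ are immediate: $\bS_1$ contains no party of $\H$, and $\bS_0$ is invariant under permutations of $\H$, so a straightforward computation gives $\ppr{h \la \H, \cU \la \bS_0(h)}{\cU = \cU'} = 1/\binom{|\H|}{k} = \ppr{\cU \la \bS_0}{\cU = \cU'}$ for every $\cU' \in \bS_0$. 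If the similarity Clause~3 holds for every $\rho \in \Rng(r)$, I declare a nugget at level $k+1$ and stop; otherwise the failure supplies a witness $\rho^\ast$ that is precisely Clause~1 at level $\ks = k$ for the same triple.

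For $2 \leq \ks \leq k$, given a candidate satisfying Clause~1 at level $\ks$ together with the symmetry Clauses~2.b--2.e, I will check whether the projection-similarity Clause~2.a holds for every pair $h, h' \in \H$, every $z \in \zo$, and every $\rho \in \Rng(r)$. If so, the triple is a nugget at level $\ks$ and the procedure halts. Otherwise, extract a witness $(h, h', z, \rho)$ and define the refined candidate at level $\ks - 1$ by restricting to the tuples in $\bS_z(h)$ that avoid $h'$ and to the tuples in $\bS_z(h')$ that avoid $h$, with the reduced party set $\H' = \H \setminus \{h, h'\}$. The non-similarity of these restricted projections should instantiate Clause~1 at level $\ks - 1$: the gap between the full projections $\bS_z(h)$ and $\bS_z(h')$ is inherited by the $h'$- and $h$-avoiding subfamilies up to a constant-factor loss absorbed into $64 \log r$. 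The symmetry Clauses~2.b--2.e are inherited because the restricted projections are uniform families over $\binom{\H'}{\ks - 1}$ adjoined with a fixed singleton, which matches the required structure at the next level. Termination at $\ks = 1$ is forced because the descent fixes one party per step, so after $k - 1$ steps each tuple has $k - 1$ fixed positions and one free position, giving $|\bS_z(h)| = 1$ for every $h \in \H$; the size budget $|\H| \geq \np/3$ is maintained because each step removes at most two parties over at most $k$ steps, and $k \ll \np$ by the hypothesis that $k$ is the smallest integer with $\binom{\np}{k} \geq r\log(r)^{2k}$.

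The hard part will be the quantitative propagation of the non-similarity threshold: Clause~1 at level $\ks$ differs from Clause~1 at level $\ks - 1$ by multiplicative factors $\coef{\ks-1}^{1/2}/\coef{\ks}^{1/2}$ on the gap and $1/(64\log r)$ on both the gap and the probability. A single violation of Clause~2.a at level $\ks$ produces only one witness pair $(h,h',z,\rho)$, but I need it to yield the full probability mass required by Clause~1 at level $\ks - 1$. This should follow from a union bound over the at most $|\H|^2 \cdot 2 \cdot |\Rng(r)|$ witness candidates, with the constant $64$ absorbing the resulting $\log r$ factor; crucially, the ratio $\coef{\ks-1}/\coef{\ks} = (\np - k + \ks - 1)/(\ks - 1)$ matches exactly the lower bound required by Clause~2.d, which is the mechanism by which a small witness set can furnish the needed probability mass. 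An additional subtlety is that restricting to $h'$-avoiding and $h$-avoiding subfamilies captures only a constant fraction of the original projections (since by Clause~2.b the probability of containing any given party is at most $1/2$), so the gap is preserved up to a constant factor that the thresholds should accommodate.
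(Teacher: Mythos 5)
Your iterative descent driven by non-similarity witnesses is the right strategy and matches the paper's proof in spirit, but your initialization is structurally wrong, and the flaw surfaces as soon as you pass from level $k+1$ to level $k$. You take $\bS_0=\binom{\H}{k}$ and $\bS_1=\binom{[\np]\setminus\H}{k}$ with $\size{\H}=\np/2$, so $\H$ is disjoint from the support of $\bS_1$; that is fine at level $k+1$, where Clause~1 requires $\bS_1(h)=\emptyset$ for $h\in\H$. But when Clause~3 fails you carry ``the same triple'' to level $k$, and for $\ks\in\set{2,\ldots,k}$ the nugget must satisfy Clause~2.c, namely $\ppr{\cU\la\bS_1}{h\in\cU}=\ppr{\cU\la\bS_0}{h\in\cU}$ for every $h\in\H$. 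Your construction gives $0$ on the left and $k/\size{\H}>0$ on the right; Clause~2.a is likewise vacuous for $z=1$ since $\bS_1(h)=\bS_1(h')=\emptyset$. So the induction never leaves the starting gate. The paper avoids this by partitioning $[\np]$ into \emph{three} equal pieces $\cA_1,\cA_0,\cP$ and setting $\bS_z^{k+1}=\bS_z^{k}=\binom{\cA_z}{1}\lVert\binom{\cP}{k-1}$: the two tuple sets differ only in which $\cA_z$ supplies one coordinate, while the remaining $k-1$ coordinates are drawn from the shared pool $\cP$. It takes $\H_{k+1}=\cA_0$ (disjoint from $\bS_1$, as level $k+1$ demands) and then \emph{resets} $\H_k=\cP$ when descending, so that for $h\in\H_k$ both projections $\bS_1(h)$ and $\bS_0(h)$ are nontrivial and have equal containment probabilities. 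The reserved $\cA_z$ sets also serve a second purpose your plan lacks at the bottom: for $\ks=1$ the nugget demands $\size{\bS_1}=\size{\bS_0}=\size{\H}\ge\np/3$ with a single free coordinate, and the paper supplies this by taking $\H$ to be the surviving $\cA_z$ (of size exactly $\np/3$), rather than the residue of a depleted free pool.

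A secondary point: your worry that a single Clause~2.a violation does not furnish ``the full probability mass required by Clause~1 at level $\ks-1$'', and the proposed union bound over $\size{\H}^2\cdot2\cdot\size{\Rng(r)}$ witness candidates, are unnecessary. Compare the two clauses directly: $\coef{\ks-1}^{1/2}$, $(64\log r)^{k-\ks+1}$, and $64^{-k+\ks-1}$ all appear with the same exponents on both sides, so the negation of Clause~2.a at level $\ks$ is \emph{verbatim} Clause~1 at level $\ks-1$ under the substitution $\bS_1\gets\bS_z(h)$, $\bS_0\gets\bS_z(h')$, $\rho^\ast\gets\rho$; the iteration in \cref{figure:nugget} just records the violating tuple with no slack consumed. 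The coupling of $\coef{\ks-1}/\coef{\ks}=(\np-k+\ks-1)/(\ks-1)$ with Clause~2.d that you noticed is real, but it is spent in the proof of \cref{lemma:LapAttack}, not in the construction of the nugget.
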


\begin{lemma}\label{lemma:short:mart}
	There exists a fail-stop adversary $\Ac$ such that the following holds. Let $\Pi$ be a correct $\np$-party $r$-round coin-flipping protocol, and let $k\in \N$ be the smallest integer such that ${n\choose k}\geq r\log(r)^{2k}$.

	 Suppose  $\Pi$ admits a nugget  $\ks=k+1$, then exists party $ h \in [\np]$ and a string $\adv$ such that $\Ac^\Pi(\adv)$ controlling all parties but $h$ biases the output of  $h$ by $\Omega(1/\sqrt{r}\log(r)^{k-\ks+1})$. The running time of $\Ac$  is polynomial in the running time of $\Pi$ and $\np^k$, and it uses only oracle access to $\Pi$'s next-message function.
\end{lemma}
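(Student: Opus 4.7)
The plan is to instantiate the augmented weak martingale attack outlined in \cref{sec:intro:tec:3} and \cref{sec:intro:CompDoob}, using the nugget data $(\H, \bS_0, \bS_1)$ guaranteed by the hypothesis $\ks = k+1$. Fix a rounding parameter $\delta < 1/(100r)$ and define $X_0 = 1/2$ and inductively
\[
X_i = \rnd{\delta}\!\left(\ex{\out \mid B_i^{\bS_1},\, X_{i-1},\, \textstyle\sum_{j<i}(X_j-X_{j-1})^2}\right) \qquad (i\in[r]).
\]
Using \cref{lemma:z:3,lemma:z:4}, the sequence $X_0,\ldots,X_r$ is a sum-of-squares-augmented $\delta$-weak martingale with $X_0=1/2$ and $X_r\in\{0,1\}$ (arranged by folding into the final conditioning enough information to determine $\out$). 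By nugget property~(1), no tuple of $\bS_1$ contains any party of $\H$, so $B_i^{\bS_1}$---and hence each $X_i$---is a deterministic function of the coins and messages of parties in $[\np]\setminus\H$, all of which the adversary controls. Following the template of \cref{sec:intro:CompDoob}, the mappings $\chi_1,\ldots,\chi_r$ that produce $X_i$ from $B_{\le i}^{\bS_1}$ are estimated offline by Monte-Carlo sampling and hardcoded into the advice string $\adv$; crucially, the estimated sequence inherits the (\sosa weak) martingale property regardless of the accuracy of the individual estimates.

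By \cref{claim:SOSAjumpsWHP} (with $\gamma$ chosen sufficiently small), with probability at least $\Omega(1)$ there exists $i\in[r]$ with $|X_i-X_{i-1}|\ge 1/(4\sqrt{r})$. Replacing $X_i$ by $1-X_i$ if necessary (which merely swaps the roles of $0$ and $1$), we may assume the first such gap is \emph{positive} with probability $\Omega(1)$. The adversary $\Ac$ controlling $[\np]\setminus\{h\}$ samples a single $\cU\la\bS_0(h)$ at the start and in each round $i$ computes $X_i$ using the advice. At the first round $J$ in which $X_J-X_{J-1}\ge 1/(4\sqrt{r})$, $\Ac$ aborts every corrupted party not in $\cU$; the remaining alive parties (namely $\cU$, which contains $h$) then agree on the common backup $\backup(\cU,J)$. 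If no such round occurs, $\Ac$ behaves honestly to the end and $h$ outputs $\out$.

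To bound the bias, nugget property~(2) yields the sampling identity $\eex{h\la\H,\,\cU\la\bS_0(h)}{\backup(\cU,i)}=\eex{\cU\la\bS_0}{\backup(\cU,i)}=\ex{B_i^{\bS_0}}$ for every $i$, while nugget property~(3) gives $|B_i^{\bS_0}-B_i^{\bS_1}|\le o(1/\sqrt{r})$ simultaneously for all $i\in[r]$ with high probability. Combined with the Doob-like relation $B_i^{\bS_1}\approx X_i$ built into the construction of the $X_i$'s, the expected output of $h$ under $\Ac$, averaged over $h\la\H$, equals $\ex{X_J}\pm o(1/\sqrt{r})$. The positive-gap stopping rule forces $\ex{X_J\mid J<r}\ge \ex{X_{J-1}\mid J<r}+1/(4\sqrt{r})$, and coupling this with the weak-martingale property applied to the no-gap event and each positive-gap round separately yields $\ex{X_J}\ge 1/2+\Omega(1/\sqrt{r})$, which translates (modulo the negligible error) into an $\Omega(1/\sqrt{r})$ bias of $h$'s output. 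Fix the party $h\in\H$ for which the $h$-averaged bias is realized by an individual $h$, and include $h$ (together with the Monte-Carlo estimates of the $\chi_i$) in $\adv$; this yields the claimed non-uniform adversary.

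The main obstacle is the careful bookkeeping of the accumulated approximation errors---the rounding by $\delta$, the sampling error in each $\chi_i$, the similarity slack from nugget property~(3), and the Doob-like slack between $X_i$ and $B_i^{\bS_1}$---and verifying that their total remains $o(1/\sqrt{r})$ so as not to overwhelm the $1/(4\sqrt{r})$ gain from the gap. A secondary subtlety is that sum-of-squares-augmented weak martingales do not enjoy the classical optional stopping theorem, so the inequality $\ex{X_J}-1/2\ge\Omega(1/\sqrt{r})$ must be argued directly from the explicit form of the stopping rule and the $1/(4\sqrt{r})$ jump it guarantees at round $J$, rather than invoked as a black-box martingale identity.
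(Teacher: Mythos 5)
Your attack aborts at the first round $J$ where the \emph{game-value gap} $X_J-X_{J-1}\ge 1/(4\sqrt r)$ occurs, and tries to conclude $\ex{X_J}\ge 1/2+\Omega(1/\sqrt r)$.  This is not merely a ``secondary subtlety'' that needs careful bookkeeping --- it is false.  The event $\{J=i\}$ depends only on $X_{\le i}$, so $J$ is a bona fide stopping time, and if the $X_i$'s were a strong martingale (the easier case, by your own remark) the optional stopping theorem would give $\ex{X_J}=X_0=1/2$ exactly, regardless of the stopping rule or the size of the jump at round $J$.  Passing to a \sosa weak martingale only makes the situation worse, not better.  Your other load-bearing step, ``the Doob-like relation $B_i^{\bS_1}\approx X_i$ built into the construction,'' is also not true and not built in:  $X_i$ is the conditional expectation of the final output \emph{given} $B_i^{\bS_1}$ (and the aggregated history), which is a completely different quantity from $B_i^{\bS_1}$ itself --- indeed $X_0=1/2$ while $B_0^{\bS_1}$ can be anything, and the whole point of the attack is to exploit rounds where the two diverge.

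The paper's attack (\cref{alg:MartingalesAttack}, \cref{sec:MartingalesAttack}) differs from yours in exactly the two places your argument breaks.  First, the abort is triggered not by a jump $X_i-X_{i-1}$ but by a \emph{one-sided deviation of the backup value from the game value}, $(-1)^{1-z}(B_i^{\bS}-X_{i+1})>1/64\sqrt r$; the martingale jump of \cref{claim:SOSAjumps} is only used indirectly, via the triangle inequality in the proof of \cref{claim:mart:abort}, to show that such a deviation (in one of the two directions $z\in\zo$) happens with constant probability.  Second, the conditioning in \cref{eq:DefX} carries an extra trigger bit $G_{i-1}$ recording whether any earlier backup-vs-game-value deviation occurred.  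This makes the abort event $\{J^{z\ast}=i\}$ measurable in the conditioning tuple $\hist_i$, which is what lets \cref{claim:mart:exp} conclude $\ex{X_{J^{z\ast}}}\in 1/2\pm 1/(200r)$ via $\ex{\ex{\out\mid\hist_i}\mid J^{z\ast}=i}=\ex{\out\mid J^{z\ast}=i}$.  The bias then comes from the chain $\ex{B_J^{\bS'}}-1/2\approx\ex{B_J^{\bS}}-\ex{X_{J^{z\ast}}}\ge\Omega(1/\sqrt r)\cdot\pr{J\neq r}$, with the first approximation furnished by \cref{claim:mart:err}.  Without the trigger bit in the conditioning, and without aborting on the $B$-vs-$X$ deviation, you have no leverage to lower-bound the honest party's expected output.
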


\begin{lemma}\label{lemma:short:lap}
	Same as  \cref{lemma:short:mart} \wrt $\ks\in \set{2,\ldots,k}$.
\end{lemma}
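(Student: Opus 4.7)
The plan is to realize the attack as an instance of the oblivious sampling experiment $\LapExp$ of \cref{Exp:Laplace} and invoke \cref{thm:LaplaceGen}. Concretely, I will consider two adversaries $\Ac_0, \Ac_1$ that run the same decision rule and differ only in which of $\bS_0(h), \bS_1(h)$ they abort to. Each $\Ac_b$ receives as non-uniform advice the nugget data $(\H, \bS_0, \bS_1, \rho^\ast)$ from \cref{lemma:UsefulNaget} together with a designated honest party $h \in \H$; it plays honestly on behalf of $[\np]\setminus\set{h}$, and at each round $i \in [r-1]$ uses its oracle access to $\Pi$'s next-message function to compute the observable gap $g_i^{\mh} = B_i^{\bS_1 \rmv{h}} - B_i^{\bS_0 \rmv{h}}$, which only involves backup values of tuples not containing $h$ and hence fully controlled by the adversary. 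It then draws fresh $\nu_i \la \Lap{\lambda}$ and aborts at the first round where $g_i^{\mh} + \nu_i \ge \gamma/2$; on aborting, it keeps alive exactly the parties of a uniformly random tuple $\cU \la \bS_b(h)$, forcing $h$ to output $\backup(\cU, i)$. Since the halting rule does not depend on $b$, the two attacks can be coupled with a shared halting round $J$, and a direct computation shows that $\mathsf{bias}_1(h) - \mathsf{bias}_0(h) = \ex{B_J^{\bS_1(h)} - B_J^{\bS_0(h)}}$; so it suffices to exhibit an $\Omega(\gamma\delta)$ lower bound on this quantity averaged over $h \la \H$, from which the existence of a fixed $(b, h)$ yielding an $\Omega(\gamma\delta)$ bias follows immediately.

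To match this with the setup of $\LapExp$, I set $s_i^h(\tau) = B_i^{\bS_1(h)}(\tau) - B_i^{\bS_0(h)}(\tau)$ for $\tau$ a transcript prefix up to round $i$. Using the decomposition $B_i^{\bS_z} = p \cdot B_i^{\bS_z(h)} + (1-p) \cdot B_i^{\bS_z \rmv{h}}$, where $p = \ppr{\cU \la \bS_z}{h \in \cU}$ is a single value in $(0, 1/2]$ independent of $h$ and $z$ by nugget conditions 2(b)-(c), one verifies that the quantity $s_i^{\mh}$ appearing in $\LapExp$ coincides with $g_i^{\mh}$, and condition 2(e) gives $\eex{h \la \H}{s_i^h} = s_i = B_i^{\bS_1} - B_i^{\bS_0}$. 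The jump portion of the nugget supplies the parameters $\gamma = \frac{\rho^\ast}{256\sqrt{r}}\cdot\frac{\coef{\ks}^{1/2}}{(64\log r)^{k-\ks}}$ and $\delta = \frac{1}{2\rho^\ast \log r}\cdot\frac{64^{-k+\ks}}{\coef{\ks}^{1/2}}$, whereas the projection-similarity condition 2(a), applied at every scale $\rho \in \Rng(r)$ and routed through the identity $B_i^{\bS_z \rmv{h}} - B_i^{\bS_z(h)} = \frac{1}{1-p}\eex{h' \la \H}{B_i^{\bS_z(h')} - B_i^{\bS_z(h)}}$, yields the scale-free tail $\ppr{\tau \la T}{\sigma^h(\tau) \ge \rho\alpha} \le \beta/\rho$ with $\alpha = \frac{1}{256\sqrt{r}}\cdot\frac{\coef{\ks-1}^{1/2}}{(64\log r)^{k-\ks+1}}$ and $\beta$ of matching order.

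The main obstacle will be the bookkeeping required to verify that these parameter choices satisfy the hypotheses of \cref{thm:LaplaceGen}, in particular $\alpha \le \gamma\sqrt{4(1-p)/p}/(32\log r)$ and $\beta \le \delta/(16\sqrt{(1-p)/p})$. These reduce to algebraic inequalities among $\coef{\ks}$, $\coef{\ks-1}$ and $(1-p)/p$, where the last quantity is at least $(\np - k + \ks - 1)/(4(\ks-1))$ by nugget condition 2(d); the identity $\coef{\ks-1}/\coef{\ks} = (\np - k + \ks - 1)/(\ks - 1)$, together with the factor-of-$64$ gaps hardwired into the nugget's exponents, will make them go through. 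Setting $\lambda = \gamma/(4\log r)$ as prescribed by \cref{thm:LaplaceGen} then yields $\ex{s_J^H} \ge \gamma\delta/125 - 1/(2r)$, which by the coupling argument above translates into a bias of order $\gamma\delta = \Omega(1/(\sqrt{r}\log(r)^{k-\ks+1}))$ for some fixed choice of $(b, h)$, matching the claim of the lemma.
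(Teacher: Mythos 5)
Your proposal matches the paper's proof: both realize $\DpAttack$ as an instance of the oblivious sampling experiment with $s_i^h = B_i^{\bS_1(h)}-B_i^{\bS_0(h)}$, invoke \cref{thm:LaplaceGen} after verifying its hypotheses via nugget conditions 2(a)--(e) and the identity $\coef{\ks-1}/\coef{\ks}=(\np-k+\ks-1)/(\ks-1)$, and conclude by coupling $\Ac_0,\Ac_1$ at a shared halting round to extract one biasing direction. The only nit is that \cref{Exp:Laplace} (and \cref{alg:LapAttack}) halts when the noisy estimate exceeds $\gamma$, not $\gamma/2$ as in your decision rule — the $\gamma/2$ appears only in the reward analysis of \cref{thm:Laplace} — but this affects nothing beyond a constant.
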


\begin{lemma}\label{lemma:short:sing}
	Same as  \cref{lemma:short:mart} \wrt $\ks=1$.
\end{lemma}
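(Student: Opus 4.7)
The plan is to execute a jump-style attack in the spirit of \citet{CleveI93}, exploiting the $\ks=1$ nugget's guarantee that the two average back-up sequences $(B_i^{\bS_0})_i$ and $(B_i^{\bS_1})_i$ differ by at least $\tsh := \frac{\rho^\ast}{256\sqrt{r}} \cdot \frac{\sqrt{\coef{1}}}{(64\log r)^{k-1}}$ at some round with probability at least $p := \frac{1}{2\rho^\ast \log r} \cdot \frac{1}{64^{k-1}\sqrt{\coef{1}}}$. Since $|\bS_z(h)|=1$ for every $h\in\H$ and $|\bS_z|=|\H|$, every party $h\in \H$ is associated with a unique tuple $\cU_z^h$ in each $\bS_z$, and the adversary can steer $h$'s output to $\backup(\cU_b^h, i^\ast)$ by aborting the remaining parties at the gap round $i^\ast$; one of the two choices of $b \in \zo$ will yield the desired bias.

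Concretely, the advice $\adv$ records $(\rho^\ast, \H, \bS_0, \bS_1)$, a direction bit $b\in\zo$, a sign bit $c\in\zo$, and a distinguished party $h\in\H$; $\Ac^\Pi(\adv)$ corrupts every party other than $h$. In each round $i$, it uses the next-message oracle to simulate an abort of every party outside each $\cU \in \bS_0 \cup \bS_1$, obtaining $\backup(\cU,i)$ and hence $B_i^{\bS_0}$ and $B_i^{\bS_1}$. Let $i^\ast$ be the first round with $(-1)^c (B_i^{\bS_1} - B_i^{\bS_0}) \ge \tsh$, if any. At round $i^\ast$, the parties that $\Ac$ controls outside $\cU_b^h$ send abort while those inside $\cU_b^h \setminus \{h\}$ continue honestly, so that $h$ outputs $\backup(\cU_b^h, i^\ast)$. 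Otherwise, $\Ac$ follows the protocol honestly and $h$ outputs the honest outcome $\out$.

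For the analysis, average over $h\la \H$ and invoke the bijective correspondence $h \mapsto \cU_z^h$ to conclude $\eex{h\la\H}{\backup(\cU_z^h,i)} = B_i^{\bS_z}$. Let $E^+ = \{\exists i : B_i^{\bS_1} - B_i^{\bS_0} \ge \tsh\}$ and $E^-$ the analogous event with signs flipped; the nugget gives $\pr{E^+ \cup E^-} \ge p$, so WLOG $\pr{E^+} \ge p/2$ (pinned in $\adv$ via $c=0$). Letting $i^+$ denote the first round in $E^+$, the expected output under direction $b$ is
\begin{align*}
\mu_b = \ex{\mathbf{1}_{E^+} \cdot B_{i^+}^{\bS_b} + \mathbf{1}_{\bar E^+} \cdot \out},
\end{align*}
whence
\begin{align*}
\mu_1 - \mu_0 = \ex{\mathbf{1}_{E^+} (B_{i^+}^{\bS_1} - B_{i^+}^{\bS_0})} \ge \pr{E^+} \cdot \tsh \ge p\cdot \tsh /2.
\end{align*}
The triangle inequality yields $\max(|\mu_0 - 1/2|, |\mu_1 - 1/2|) \ge p \cdot \tsh /4$; substituting and absorbing the resulting constant-to-the-$k$ factors into $(\log r)^k$, this is $\Omega(1/(\sqrt{r}(\log r)^k))$. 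Fixing $b, c$ to the better direction and $h\in \H$ to a party attaining at least the average-over-$\H$ bias (all recorded in $\adv$) delivers an adversary biasing $h$'s output by $\Omega(1/(\sqrt{r}(\log r)^k))$. The runtime is polynomial in the runtime of $\Pi$ and $\np$, since at most $2\np$ back-up simulations are performed per round.

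The main subtlety I expect will be justifying the bijective correspondence $\H \leftrightarrow \bS_z$, namely that each $\cU \in \bS_z$ contains exactly one party of $\H$. For $\ks \ge 2$ this is an explicit nugget condition (drawing $h\la\H$ and then $\cU\la\bS_z(h)$ reproduces the uniform distribution on $\bS_z$); in the $\ks = 1$ regime the same property is what makes the $\eex{h\la\H}{\backup(\cU_z^h,i)} = B_i^{\bS_z}$ identity clean, so I expect it to be supplied by the construction underlying \cref{lemma:UsefulNaget}. Failing that, the argument still goes through after replacing $B_i^{\bS_z}$ throughout by the $\H$-weighted average $\eex{h\la\H}{\backup(\cU_z^h, i)}$, provided the nugget's gap bound is shown to hold for this modified quantity.
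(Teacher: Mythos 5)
There is a genuine gap that makes your proposed attack non-implementable, and it lies precisely at a point your proposal glosses over: you instruct the adversary to compute $B_i^{\bS_0}$ and $B_i^{\bS_1}$ in every round. But for $\ks=1$ the nugget guarantees $\size{\bS_z(h)}=1$ for \emph{every} $h\in\H$, so $\bS_z$ contains a tuple $\cU_z^h$ with $h\in\cU_z^h$. An adversary corrupting all parties except $h$ cannot simulate $\backup(\cU_z^h,i)$, since that requires $h$'s private coins, and hence cannot compute $B_i^{\bS_z}=\frac{1}{\size{\bS_z}}\sum_{\cU\in\bS_z}\backup(\cU,i)$. Contrast this with the $\ks=k+1$ (martingale) case, where the nugget explicitly guarantees $\bS_1(h)=\emptyset$ so that $B_i^{\bS_1}$ \emph{is} observable by the attacker; you cannot carry that observability over to $\ks=1$. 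The ``main subtlety'' you flag --- whether each $\cU\in\bS_z$ contains exactly one party of $\H$ --- is genuine but is in fact guaranteed by the construction behind \cref{lemma:UsefulNaget}; the fatal issue is the one you did not flag, namely that this very property puts $h$ inside one of the tuples whose backup you are averaging.

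Replacing $B_i^{\bS_z}$ by the observable quantity $B_i^{\bS_z\rmv{h}}$, as you suggest in your fallback, does not make the argument go through either, because it breaks the key identity $\eex{h\la\H}{\backup(\cU_b^h,i^\ast)}=B_{i^\ast}^{\bS_b}$: once the abort test depends on $h$ (through $\bS_z\rmv{h}$), the aborting round $i^\ast=i^\ast(h)$ varies with $h$, and the average $\eex{h}{\backup(\cU_b^h,i^\ast(h))}$ is no longer a fixed $B^{\bS_b}_{i^\ast}$. Worse, the dependence is adversarial: the rounds where $B_i^{\bS_z\rmv{h}}$ triggers are exactly those where $\backup(\cU_z^h,i)$ tends to be atypically small, so the correlation can eat the entire gain. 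This is why the paper's \cref{alg:SingeltonAttack} uses a \emph{randomly chosen} half $\bE_z\subseteq\bS_z\rmv{h}$ and pays for the resulting estimation error with Hoeffding's inequality (\cref{claim:singleton:err}), and why \cref{claim:singleton:gap} conditions on both the gap event $G_{r,\alpha}$ and the good-estimation event $\neg E_{r,\alpha}$ before applying the clean averaging argument. Your write-up needs to (i) switch to an observable estimator, (ii) control its deviation from $B_i^{\bS_z}$ uniformly over $i$ via a concentration bound, and (iii) rework the expectation calculation so that the $h$-dependence of the abort round is explicitly handled, rather than averaging as if $i^\ast$ were fixed.
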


\begin{proof}[Proof of \cref{thm:main}]
Immediately follows from  \cref{lemma:UsefulNaget,lemma:MartingalesAttack,lemma:LapAttack,lemma:SingeltonAttack}.
\end{proof}
\noindent 

In the following we  assume \wlg that $r$ is larger than some constant to be determined by the analysis. This latest assumption does not incur any loss of generality, and we use it to make sure that the term $1/\sqrt{r}\log(r)^{k-\ks+1}$ dominates over other terms.
 

\subsection{The Game Value Jump  Attack}\label{sec:MartingalesAttack}

\begin{lemma}[Restatement of \cref{lemma:short:mart}] \label{lemma:MartingalesAttack}
	There exists a fail-stop adversary $\Ac$ such that the following holds. Let $\Pi$ be a correct $\np$-party $r$-round coin-flipping protocol, and let $k\in \N$ be the smallest integer such that ${n\choose k}\geq r\log(r)^{2k}$.  Suppose there exist tuple sets  $\bS,\bS'\subseteq {[\np]\choose k}$ and set of parties $\H\subseteq [\np]$, satisfying
	\begin{enumerate}

		\item $ \bS(h)=\emptyset$ for every $h\in \H$, where $\bS(h)$ is defined according to \cref{not:backups}.
		
		\item  For every $\cu'\in \bS'$, $\ppr{h\la \H, \cu\la \bS'(h)}{\cu=\cu'}=\ppr{\cu\la \bS'}{\cu=\cu'}$.\label{item:thm:mart}

		\item $ \pr{\max_i\size{ B_i^{\bS} -  B_i^{\bS'}}\geq  \frac{\rho}{256\sqrt{r}} }\leq \frac{1}{2\rho\log(r)}$, for every  $\rho\in \set{1, 1+1/r,\ldots, r}$, where  $B_i^\bS = B_i^\bS(\Pi)$  is defined  according to \cref{def:backups}.
	
	\end{enumerate}

	Then  there exists $h\in \H$ and as a string advice $\adv$ such that $\Ac^\Pi(\adv)$ corrupting all parties but $h$ biases the output of $h$ by $\Omega(1/\sqrt{r})$. 
	
	Furthermore, the running time of $\Ac^\Pi(\adv)$ is polynomial in the running time of $\Pi$ and $\np^k$, and only  uses  oracle access to $\Pi$'s next-message function. 
\end{lemma}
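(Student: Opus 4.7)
The plan is to mount a martingale-gap attack in the spirit of \citet{CleveI93}, using the sum-of-squares-augmented weak martingale framework of \cref{sec:martingales}. The key enabling observation is that $\bS(h) = \emptyset$ for every $h \in \H$, so no tuple in $\bS$ contains $h$; this allows an adversary corrupting all parties except some fixed $h \in \H$ to compute $B_i^{\bS}$ at every round $i$ directly from the transcript (by simulating an abort of $h$ and averaging the resulting common outputs of the corrupted tuples $\cU \in \bS$). The game-value martingale is therefore \emph{computable} by the adversary, sidestepping the usual efficiency bottleneck of Doob-like constructions.

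I define the discretized Doob-like sequence $Y_0 = 1/2$ and
\[
Y_i = \rnd{\delta}\bigl(\ex{\out \mid B_i^{\bS},\, Y_{i-1},\, \textstyle\sum_{j<i}(Y_j-Y_{j-1})^2}\bigr)
\]
with $\delta = 1/r^2$. By \cref{lemma:z:3,lemma:z:4} this is a \sosa $\delta$-weak martingale with $Y_i \in [0,1]$, $Y_0 = 1/2$, and $Y_r = \out \in \zo$ (WLOG by padding the protocol with a vacuous final round so that $B_r^{\bS} = \out$). Since $\delta < 1/(100r)$, \cref{claim:SOSAjumpsWHP} yields with probability $\Omega(1)$ that some round $i$ satisfies $\size{Y_i - Y_{i-1}} \geq 1/(4\sqrt r)$. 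The advice fixes both $h \in \H$ and a bias direction; WLOG with constant probability some round witnesses $Y_i - Y_{i-1} \geq 1/(4\sqrt r)$. The adversary hard-wires the lookup table of the martingale function via non-uniform advice (of size $\poly(r, n^k)$, since $B_i^{\bS}$, $Y_{i-1}$, and the discretized sum of squares each range over a polynomially-sized domain), corrupts all parties except $h$, runs $\Pi$ honestly, computes $Y_i$ and $B_i^{\bS}$ after each round $i$, and aborts all corrupted parties upon detecting $Y_i - B_{i-1}^{\bS} \geq 1/(8\sqrt r)$, leaving $h$'s output equal to $\backup(\{h\}, i-1)$.

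For the bias analysis, I apply a WLOG step analogous to \cref{eq:CIGapZZ}: either the honest-party backup tracks the game value, i.e., $\size{Y_i - \ex{\backup(\{h\}, i) \mid B_{\le i}^{\bS}}} = o(1/\sqrt r)$ throughout with overwhelming probability, or this gap is itself $\Omega(1/\sqrt r)$ with constant probability, in which case an even simpler threshold attack already biases $h$'s output by $\Omega(1/\sqrt r)$. Under tracking, a detected jump $Y_i - B_{i-1}^{\bS} \geq 1/(8\sqrt r)$ certifies a gap of $\Omega(1/\sqrt r)$ between $Y_i$ and the expected post-abort output $\ex{\backup(\{h\}, i-1) \mid \cdots}$, and a standard martingale-swap argument (the honest execution has expected output $1/2 = Y_0$, the attacked execution replaces the post-jump $Y_I$ by the pre-jump value with constant probability) yields an overall bias of $\Omega(1/\sqrt r)$. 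The main obstacle is precisely the tracking step --- rigorously linking $Y_i$, built from tuples that \emph{avoid} $h$, to $h$'s private single-party backup. This requires combining the closeness hypothesis $\pr{\max_i \size{B_i^{\bS} - B_i^{\bS'}} \geq \rho/(256\sqrt r)} \leq 1/(2\rho\log r)$ (summed over scales $\rho \in \Rng(r)$ into a single uniform high-probability bound, which passes from $\bS$ to $\bS'$) with the symmetry condition \ref{item:thm:mart} on $\bS'$ --- that a uniform $\cU \la \bS'$ is equidistributed with drawing $h' \la \H$ then $\cU \la \bS'(h')$ --- to pass from $B_i^{\bS'}$ to $h$'s individual contribution, while also ensuring $\delta \leq 1/(100r)$ keeps the lookup table polynomially sized.
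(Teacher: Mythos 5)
Your attack structure is wrong in a way that the lemma's hypotheses cannot repair. You instruct the adversary to \emph{abort all corrupted parties}, leaving $h$ alone, so the honest party's output is the singleton backup $\backup(\set{h}, J)$. But nothing in the hypotheses controls this quantity: Item~\ref{item:thm:mart} is a symmetry statement about the $k$-tuples $\bS'(h)$, and Item~3 relates $B_i^{\bS}$ to $B_i^{\bS'}$, both of which are averages over $k$-tuples. The singleton backup value never appears. Your proposed ``tracking'' fix — combining Items~2 and~3 ``to pass from $B_i^{\bS'}$ to $h$'s individual contribution'' — has no mathematical content, because $\bS'(h)$ consists of $k$-tuples containing $h$, not the singleton $\set{h}$; and your fallback ``simpler threshold attack'' requires computing $\ex{\backup(\set{h}, i)\mid \cdots}$, which is neither supplied by the advice nor known to be efficiently computable.

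The correct attack (\cref{alg:MartingalesAttack}) instead aborts all parties \emph{except a uniformly random tuple $\cU \la \bS'(h)$}, so the honest party's output is the $k$-tuple backup $\backup(\cU, J)$, whose expectation over $\cU$ is $B_J^{\bS'(h)}$; averaging over $h \la \H$ and invoking Item~\ref{item:thm:mart}, this equals $\ex{B_J^{\bS'}}$, which Item~3 links to $\ex{B_J^{\bS}}$ and thence to $X_{J^\ast}$ and $1/2$. The symmetry condition is precisely the device that converts the honest party's post-abort output into a quantity the adversary can reason about via $\bS$ and $\bS'$; your version discards that mechanism. Two smaller omissions compound this: you drop $B_{i-1}^{\bS}$ and the bit $G_{i-1}$ from the conditioning, which the paper needs so the adversary can distinguish whether the observed deviation is at $B_{i-1}$ or $B_i$ (aborting before vs.\ after sending the round-$i$ messages) and so that the trigger fires only on the \emph{first} observed jump; and your sum over $\rho$-scales bounding $\ex{\max_i \size{B_i^{\bS} - B_i^{\bS'}}}$ is the right idea for \cref{claim:mart:err}, but on its own it cannot compensate for the wrong abort target.
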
 

\subsubsection{The Game-Value Sequence}  \label{sec:gamevalue}

The cornerstone of the so-called game value jump  attack is that the adversary computes the expected outcome of the protocol, referred to as the \emph{game-value} and denoted $X_i$ for round $i$. Then, at every round $i\in [e]$, she compares this value to the backup value at hand, and  decides to abort if the backup value deviates from the expected outcome of the protocol significantly. Next, we formally define  the game-value sequence and  follow up with a discussion regarding some of its properties.

\vspace*{\baselineskip}
\noindent 
Define  $g:[0,1]^3\times \set{0,1}\mapsto \zo$  by
\begin{align} \label{eq:trigger}
	 g(x,y,y',\aux )=&    
	\begin{cases} 
		  \aux & \text{if }  |y-x|< 1/64\sqrt{r} \,\lor\, |y'-x|< 1/64\sqrt{r},\\ 
	  	0 & \text{otherwise;}
	\end{cases}
\end{align} 

\newcommand{\GameValue}[1]{
Let $X_0=\ex{\out }$. For $i\in [r]$, define $X_i$ such that
\begin{align*} 
X_i=\rnd{\delta}\left (\ex{\out\mid B^\bS_{i}, B^{\bS}_{i-1}, X_{i-1}, \sum_{\ell< i } (X_{\ell} - X_{\ell-1})^2, G_{i-1} }\right ) , 
\end{align*}
where $G_i=g(B^\bS_{i}, B^{\bS}_{i-1}, X_{i}, G_{i-1})$, letting $G_0=1$ and $g$ be defined according to \cref{#1}.}

\begin{definition}[Game-value sequence]\label{eq:DefX}
\GameValue{eq:trigger}
\end{definition}

\remove{
\noindent Namely, using the terminology of \cref{sec:martingales}, the sequence $X_1,\ldots, X_r$ is the Doob augmented weak martingales of $(Z,f,\hp,\delta)$ where $\hp$ is a normal form function that augments $g$. As the name suggests, the sequence $X_1,\ldots, X_r$ admits the ($2\delta$-weak) martingale property (c.f.~\cref{def:DMartingales}). By \cref{theorem:martingale}, there exists a normal form function $\hp$ that augments $g$ such that \cref{eq:Xjump} holds true, \ie with constant probability, two consecutive points of the the sequence will be at least $1/32\sqrt{r}$ far apart. We elaborate further on this sequence.
}

\noindent
Notice that $X_i$, for $i\in [r]$, is simply the discretized expected outcome of the protocol given a ``short'' aggregated accounted of the history so far. Namely, each variable $X_i$ is an approximation of the expected outcome of the protocol given the two preceding points of the backup sequence $B_i^\bS$ and $B_{i-1}^\bS$, the previous game-value $X_{i-1}$ and the sum-of squares $\sum_{\ell< i } (X_{\ell} - X_{\ell-1})^2 $, as well as a bit $G_{i-1}$ indicating whether either $B^\bS_{j}$ or $B^\bS_{j-1}$ deviated from the value of $X_j$ by more than $1/64\sqrt{r}$, for some $j<i-1$.

 \begin{remark}[Computing $X_1,\ldots, X_r$] \label{remark:advice}
	Each $X_i$ is fully determined by the index $i$ and the value of the $5$-tuple $(B^\bS_{i}, B^{\bS}_{i-1}, X_{i-1}, \sum_{\ell< i } (X_{\ell} - X_{\ell-1})^2, G_{i-1} )$. Recall that  $n^k\ge r$, $\size{\supp(B^\bS_{i})}= {\np \choose k}\in O(\np^k)$ and  $\size{\supp(X_i)}=1/\delta$, for every $i\in [r]$. Hence, there exists a table of size $\log(1/\delta)\cdot  2\cdot{ \np \choose k}^2 \cdot \frac{r^2}{\delta^3}$, such that the value of $X_i$, for all $i\in [r]$, can be computed from $B^\bS_{\le i}$ using this table. Hereafter, we fix $\delta=1/200r$ and thus the table is described by a string of polynomial-size in $n^k$.  
\end{remark} 

\noindent 
We show that the sequence $X_0,\ldots, X_r$ satisfies the martingale property according to \cref{def:SOSADMartingale}. Such sequences typically have large gaps which we will exploit in our attack.

\begin{claim}
The sequence $X_0,\ldots, X_r$ is a \sosa $2\delta$-weak martingale sequence, i.e.~for all $i\in [r]$, 
\begin{align*}
\ex{X_i\mid X_{i-1}, \sum_{\ell<i} (X_i-X_{i-1})^2} \in X_{i-1} \pm 2\delta  .
\end{align*}
\end{claim}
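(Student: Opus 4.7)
My plan is to decompose the error into two contributions of size $\delta$ each: one coming from the outer discretization in the definition of $X_i$, and one from an inner application of \cref{lemma:z:3,lemma:z:4}. I would start by abbreviating $W_i \eqdef \ex{\out \mid B^\bS_i, B^\bS_{i-1}, X_{i-1}, S_{i-1}, G_{i-1}}$, where $S_{i-1} \eqdef \sum_{\ell<i}(X_\ell - X_{\ell-1})^2$, so that $X_i = \rnd{\delta}(W_i)$ and in particular $|X_i - W_i|\le \delta$ pointwise.

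The first step would be to absorb the rounding: since $|X_i - W_i|\le \delta$, passing to conditional expectations immediately yields $|\ex{X_i\mid X_{i-1}, S_{i-1}} - \ex{W_i\mid X_{i-1}, S_{i-1}}|\le \delta$. Then, because $(X_{i-1}, S_{i-1})$ is already part of the tuple conditioned on inside $W_i$, the Tower Law (\cref{lemma:tower}) collapses the outer expectation to $\ex{W_i\mid X_{i-1}, S_{i-1}} = \ex{\out\mid X_{i-1}, S_{i-1}}$, reducing the claim to showing $\ex{\out\mid X_{i-1}, S_{i-1}} \in X_{i-1}\pm \delta$.

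For this remaining step, the key observation is that $X_{i-1}=\rnd{\delta}(W_{i-1})$ where $W_{i-1}$ is a conditional expectation of $\out$ on the tuple $T_{i-1}\eqdef (B^\bS_{i-1}, B^\bS_{i-2}, X_{i-2}, S_{i-2}, G_{i-2})$; crucially, both $X_{i-1}$ and $S_{i-1} = S_{i-2} + (X_{i-1}-X_{i-2})^2$ are deterministic functions of $T_{i-1}$. \cref{lemma:z:4} therefore gives $\ex{\out\mid W_{i-1}, S_{i-1}} = W_{i-1}$, and \cref{lemma:z:3} (applied with $B=W_{i-1}$, $C=S_{i-1}$) yields $\ex{\out\mid \rnd{\delta}(W_{i-1}), S_{i-1}} \in \rnd{\delta}(W_{i-1}) \pm \delta = X_{i-1}\pm \delta$, finishing the bound.

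The base case $i=1$ is even simpler: $X_0=\ex{\out}$ is a constant and $S_0=0$, so the tower step trivializes and only the outer-rounding error remains. I do not foresee any real technical obstacle. The only point that requires a line of care is verifying that $S_{i-1}$ is measurable \wrt the conditioning tuple $T_{i-1}$ of $W_{i-1}$, which is immediate once one observes (inductively) that $X_{i-1}$ is itself a deterministic function of $T_{i-1}$.
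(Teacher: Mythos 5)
Your proposal is correct and follows essentially the same route as the paper's proof: you decompose the $2\delta$ error into a rounding term (absorbing $X_i = \rnd{\delta}(W_i)$) and an inner term obtained via the Tower Law followed by \cref{lemma:z:4,lemma:z:3}, which is precisely the paper's argument up to an index shift ($i-1,i$ here versus $i,i+1$ there). The careful point you flag — that $S_{i-1}$ is measurable with respect to the tuple $T_{i-1}$ on which $W_{i-1}$ is conditioned — is indeed the only nontrivial measurability check, and the paper handles it implicitly in the same way.
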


\begin{proof}
Fix $i\in [r]$ and define $\wt{X}_i=\ex{\out\mid B_{i}, B_{i-1}, X_{i-1}, \sum_{\ell< i } (X_{\ell} - X_{\ell-1})^2, G_{i-1})}$, $\wt{X}_{i+1}=\ex{\out\mid B_{i+1}, B_{i}, X_{i}, \sum_{\ell< i+1 } (X_{\ell} - X_{\ell-1})^2, G_{i})}$ i.e.~without rounding. By tower law (\cref{lemma:tower}), $\ex{\wt{X}_{i+1}\mid X_{i}, \sum_{\ell\le i } (X_{\ell} - X_{\ell-1})^2}=\ex{\out\mid  X_{i}, \sum_{\ell\le i } (X_{\ell} - X_{\ell-1})^2}$. Furthermore, by \cref{lemma:z:4}  $\ex{\out\mid  \wt{X}_{i}, \sum_{\ell\le i } (X_{\ell} - X_{\ell-1})^2}=\wt{X}_i$. Consequently, since $X_i$ is the rounded value of $\wt{X}_{i}$ to the closest $\delta$-multiple, by \cref{lemma:z:3}, $\ex{\out\mid  X_{i}, \sum_{\ell\le i } (X_{\ell} - X_{\ell-1})^2}\in X_i \pm \delta$. Finally, since $X_{i+1}$ is the rounded value of $\wt{X}_{i+1}$ to the closest $\delta$-multiple, we deduce that 
\begin{align*}
\ex{X_{i+1}\mid X_{i}, \sum_{\ell\le i } (X_{\ell} - X_{\ell-1})^2}&\in \ex{\wt{X}_{i+1}\mid X_{i}, \sum_{\ell\le i } (X_{\ell} - X_{\ell-1})^2}\pm \delta \\ 
&\in \ex{\out\mid  X_{i}, \sum_{\ell\le i } (X_{\ell} - X_{\ell-1})^2} \pm \delta \\
& \in X_{i} \pm 2\delta   .
\end{align*}
\end{proof}

\noindent 
By \cref{claim:SOSAjumps}, since $X_0,\ldots, X_r$ is a \sosa $2\delta$-weak martingale sequence with $X_0=1/2$, $X_r\in \set{0,1}$, and $\delta<1/200r$, it holds that

\begin{align}\label{eq:Xjump}
\pr{\exists i\in [r]\st  \size{X_i-X_{i-1}}\ge \frac{1}{4\sqrt{r}} }\ge \frac{1}{20}   . 
\end{align}
 
\subsubsection{The Attack}
 
We start with  a high-level overview of the attack. The adversary biasing  party $h\in \H$, to be chosen at random, towards zero is defined as follows (the attack biasing toward one is defined analogously). After receiving the honest party messages for round $i-1$, it computes the values of $y_{i}=B^\bS_i$, $y_{i-1}=B^\bS_{i-1}$, $x_{i}=X_i$ and $g_i=G_i$, for $X_i$ and $G_i$ being according to \cref{eq:DefX}. 

If $y_{i-1}$ is below $x_i$ by more than $1/64\sqrt{r}$, then it  aborts all parties but a random tuple of $\bS'$ that contains $h$, \emph{without} sending the \ith-round messages of the aborting parties. The surviving corrupted parties are instructed to terminate the protocol honestly.

 If $y_{i}$ is below $x_i=X_i$ by more than $1/64\sqrt{r}$, it aborts all parties but a random tuple in $\bS'$ that contains $h$, \emph{after} sending the \ith-round messages of the aborting parties. The surviving corrupted parties are instructed to terminate the protocol honestly.

The  attacker is formally defined as follows.  
\begin{algorithm}[The martingale attack $\GvAttack$] \label{alg:MartingalesAttack}~
	\item Parameters: $\bS$, $\bS'\subseteq {[\np] \choose k }$, $z\in \set{0,1}$, honest party $h\in [\np]$ and a string $\adv\in \set{0,1}^{*}$.
	\item Description:
	\begin{enumerate} 
		\item   Compute $B_{1}^{\bS}$ according to the protocols specifications. If $ (-1)^{1-z}\cdot (B_{1}^{\bS}-   \frac{1}{2}) > 1/64\sqrt{r}$, {\sf without} sending their 1st round messages, abort all parties except a random tuple in $\bS'(h)$.  
			 \begin{itemize}[--]
			\item The remaining corrupted parties are instructed to terminate the protocol honestly.
			\end{itemize}
			
		\item For $i=1,\ldots, r$: 
		\begin{enumerate}
			\item Upon receiving the \ith round messages of $h$, compute $B_{i}^{\bS}$, $B_{i+1}^{\bS}$, $X_{i+1}$ and $G_i$ using the messages received so far and the string $\adv$.
			\item If $(-1)^{1-z}\cdot (B_{i }^{\bS}-   X_{i+1}) > 1/64\sqrt{r}$ and $G_i=1$, {\sf without} sending their messages for round $i$ abort all parties except a random tuple in $\bS'(h)$. \label{ab:i}
			 \begin{itemize}[--]
			\item The remaining corrupted parties are instructed to terminate the protocol honestly.
			\end{itemize}
			
			 \item If $ (-1)^{1-z}\cdot (B_{i+1}^{\bS}-   X_{i+1}) > 1/64\sqrt{r}$ and $G_i=1$, {\sf after} sending their messages for round $i$, abort all parties except a random tuple in $\bS'(h)$. \label{ab:iplusx}
			 \begin{itemize}[--]
			\item The remaining corrupted parties are instructed to terminate the protocol honestly.
			\end{itemize}
		\end{enumerate}
	\end{enumerate} 
 \end{algorithm}

\noindent 
Let $\GvAttack(\bS, \bS', z, h, \adv)$ denote the martingale attacker with parameters $\bS, \bS', z, h, \adv$. We refer to the round in which the adversary instructs some parties in its  control  to abort as  the \emph{aborting round}, set to $r$ if no abort occurred.
 
\subsubsection{Success probability of \cref{alg:MartingalesAttack}.}\label{section:martsuccess} Let $\bS$, $\bS'$ and $\H$ be as in \cref{lemma:MartingalesAttack}, and let $H$ denote an element of $\H$ chosen uniformly at random. Following the discussion of \cref{remark:advice}, let $\adv$ denote a string of size polynomial in $n^k$ that fully describes the sequence $X_1,\ldots, X_r$ that is defined according to \cref{eq:DefX}. We show that either $\Ac_1(H)=\GvAttack(\bS, \bS', 1, H,\adv)$ or $\Ac_0(H)=\GvAttack(\bS, \bS', 0, H, \adv)$ succeeds in obtaining the bias of \cref{lemma:MartingalesAttack}.

Before proceeding with the proof, we introduce a last piece of notation. For $z\in \set{0,1}$, let $J^{z\ast}$ denote the round-index where the adversary $\Ac^{z}$ decided to abort certain parties, and let $J^{z}$ denote the round-index of the last messages sent by those aborting parties. Namely, in \stepref{ab:i} of \cref{alg:MartingalesAttack} we have $J^{z}=J^{z\ast}-1=i$ and in \stepref{ab:iplusx} of \cref{alg:MartingalesAttack} we have $J^{z}=J^{z\ast}=i+1$. If no abort occurred, $J^{z}=J^{z\ast}=r$.

\cref{lemma:MartingalesAttack} follows from the claims below.

\begin{claim}\label{claim:mart:abort}
	$\pr{J^{1}\neq r}+\pr{J^{0}\neq r}\geq 1/20$.
\end{claim}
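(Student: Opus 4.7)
The plan is to argue the contrapositive: if neither $\Ac^1$ nor $\Ac^0$ ever aborts, then the augmented weak martingale $X_0,\ldots,X_r$ cannot contain a consecutive jump of size at least $1/4\sqrt r$, contradicting \cref{eq:Xjump} (which guarantees such a jump with probability at least $1/20$). Since $J^z \ne r$ is precisely the event that $\Ac^z$ aborts, this yields $\pr{J^1\ne r}+\pr{J^0\ne r}\ge \pr{J^1\ne r \,\lor\, J^0\ne r}\ge 1/20$.

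First I would unpack the dual nature of the two adversaries. At each round $i$ of the attack, step (a) tests $|B_i^\bS - X_{i+1}|$ and step (b) tests $|B_{i+1}^\bS - X_{i+1}|$ against the threshold $1/64\sqrt r$; because the flag $z\in\{0,1\}$ flips the sign of the inequality, whenever one of these absolute gaps exceeds the threshold (with the auxiliary bit $G_i=1$), exactly one of $\Ac^0,\Ac^1$ fires. Therefore, on the event $\{J^1 = J^0 = r\}$ we may assume that, whenever $G_i=1$, both bounds
\[
|B_i^\bS - X_{i+1}|\le \tfrac{1}{64\sqrt r}, \qquad |B_{i+1}^\bS - X_{i+1}|\le \tfrac{1}{64\sqrt r}
\]
hold, and that the initial check enforces $|B_1^\bS - 1/2|\le 1/64\sqrt r$.

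Next I would show that under the no-abort assumption $G_i$ in fact stays equal to $1$ throughout. By the definition of $g$ in \eqref{eq:trigger}, $G$ can only reset to $0$ at round $i$ when \emph{both} $|B_{i-1}^\bS - B_i^\bS|\ge 1/64\sqrt r$ and $|X_i - B_i^\bS|\ge 1/64\sqrt r$; but the second inequality is precisely the condition checked in step (b) at round $i-1$. So if $G$ were to flip to $0$ for the first time at some round $i^\ast$, then step (b) at round $i^\ast - 1$ would already have fired for one of $\Ac^0,\Ac^1$, contradicting no-abort. Consequently the two small-gap bounds of the previous step are in force at every round, and the triangle inequality gives, for all $i\ge 2$,
\[
|X_i - X_{i-1}| \le |X_i - B_{i-1}^\bS| + |B_{i-1}^\bS - X_{i-1}| \le \tfrac{1}{64\sqrt r}+\tfrac{1}{64\sqrt r} = \tfrac{1}{32\sqrt r} < \tfrac{1}{4\sqrt r},
\]
where the two summands come from step (a) at round $i-1$ and step (b) at round $i-2$ respectively. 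The $i=1$ boundary will be handled analogously, combining the initial check $|B_1^\bS - X_0|\le 1/64\sqrt r$ with the $\delta$-rounded Doob-like definition of $X_1$ (which, given that the conditioning variables $B_0^\bS$, $X_0$, the sum of squares and $G_0$ are all deterministic, forces $X_1$ to lie within $O(1/\sqrt r)$ of its prior $X_0=1/2$). This contradicts \cref{eq:Xjump} and finishes the proof.

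The main obstacle will be the bookkeeping around the indexing: aligning the $G$-reset rule from \eqref{eq:trigger} with the step-(a)/step-(b) tests of \cref{alg:MartingalesAttack} so that the ``one side of $g$'s OR failing'' coincides with a triggered attack, and treating the first-round boundary separately since the algorithm's loop begins at $i=1$ while the sequence $X$ begins at $i=0$. Once these two items are cleanly settled, the argument reduces to the one-line triangle inequality above together with a union bound against \cref{eq:Xjump}.
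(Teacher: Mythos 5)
Your proposal takes essentially the same route as the paper's: lower-bound $\pr{J^0\neq r}+\pr{J^1\neq r}$ by the probability of a consecutive jump in the $X$-sequence, then invoke \eqref{eq:Xjump}. The treatment of the $G_i=1$ condition is equivalent --- the paper simplifies $\bigvee_i(G_i=1\wedge\trigg_{i+1})\equiv\bigvee_i\trigg_{i+1}$ via $G_i=1\Leftrightarrow\bigwedge_{j\le i}\neg\trigg_j$, while you argue contrapositively that $G$ never resets on the no-abort event; these exploit the same identity.

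There is, however, a concrete gap in your boundary handling, and your proposed repair does not work. Your triangle inequality $|X_i-X_{i-1}|\le|X_i-B_{i-1}^\bS|+|B_{i-1}^\bS-X_{i-1}|$ draws its second summand from ``step (b) at round $i-2$'', which exists only for $i\ge 3$, so \emph{both} $|X_1-X_0|$ and $|X_2-X_1|$ are uncovered (you write ``$i\ge 2$'', but at $i=2$ the second summand would require a round-$0$ test that the algorithm does not perform). For $|X_1-X_0|$ you suggest that because $B_0^\bS$, $X_0$, the sum of squares, and $G_0$ are all constants, the Doob-like definition forces $X_1$ to lie within $O(1/\sqrt r)$ of $1/2$. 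That inference is false: those constants only give $X_1=\rnd{\delta}\bigl(\ex{\out\mid B_1^\bS}\bigr)$, and $\ex{\out\mid B_1^\bS}$ may sit anywhere in $[0,1]$ even conditioned on the event $|B_1^\bS-\tfrac12|\le 1/64\sqrt r$; what is true is $\ex{X_1}\approx 1/2$, an average statement, not a pointwise one. What is actually needed for the two missing increments is a bound on $|B_1^\bS-X_1|$, which the adversary's checks as stated do not supply. A clean repair is to strengthen step~1 of \cref{alg:MartingalesAttack} to additionally test $(-1)^{1-z}(B_1^\bS-X_1)>1/64\sqrt r$ (note that $X_1$ is, at that point, an efficiently computable function of $B_1^\bS$ alone); with that, the no-abort event also yields $|B_1^\bS-X_1|\le 1/64\sqrt r$ and the triangle inequality covers $i\in\{1,2\}$ as well.
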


\begin{claim}\label{claim:mart:exp}
	For $z\in \set{0,1}$, $\ex{X_{J^{z\ast}}}\in 1/2 \pm \frac{1}{200r} $.
\end{claim}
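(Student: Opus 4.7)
The plan is to decompose the claim into a rounding bound plus an optional-stopping-type identity. By the definition of the game-value sequence (\cref{eq:DefX}), $X_i = \rnd{\delta}(\wt{X}_i)$ where $\wt{X}_i = \ex{\out \mid \mathcal{V}_i}$ and $\mathcal{V}_i$ is the $\sigma$-algebra generated by $(B_i^{\bS}, B_{i-1}^{\bS}, X_{i-1}, \sum_{\ell<i}(X_{\ell}-X_{\ell-1})^2, G_{i-1})$. In particular $|X_i - \wt{X}_i| \le \delta = 1/200r$ pointwise, so $\ex{X_{J^{z\ast}}} \in \ex{\wt{X}_{J^{z\ast}}} \pm 1/200r$, and it suffices to prove $\ex{\wt{X}_{J^{z\ast}}} = \ex{\out} = 1/2$.

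To establish this identity, I would expand $\ex{\wt{X}_{J^{z\ast}}} = \sum_{i=1}^{r} \ex{\wt{X}_i \cdot \mathbb{1}[J^{z\ast} = i]}$ and apply the tower law (\cref{lemma:tower}) to each summand. Since $\wt{X}_i = \ex{\out \mid \mathcal{V}_i}$, provided the indicator $\mathbb{1}[J^{z\ast} = i]$ is $\mathcal{V}_i$-measurable (or $\mathcal{V}_{i+1}$-measurable in the case of \stepref{ab:iplusx}), each summand reduces to $\ex{\out \cdot \mathbb{1}[J^{z\ast} = i]}$; summing over $i$ then gives $\ex{\out} = 1/2$. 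The required measurability is verified by direct inspection: the aborting decisions at \stepref{ab:i} and \stepref{ab:iplusx} are deterministic functions of the quantities appearing in $\mathcal{V}_i$ and $\mathcal{V}_{i+1}$, respectively, together with the record of whether any previous round already triggered an abort (which is itself built recursively from the $\mathcal{V}_j$'s).

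Combining the two ingredients yields $\ex{X_{J^{z\ast}}} \in 1/2 \pm 1/200r$, as claimed. The main obstacle is the rigorous bookkeeping for the tower-law step: the conditioning states $\mathcal{V}_i$ do not form a filtration, and the \stepref{ab:iplusx} trigger uses $\mathcal{V}_{i+1}$-information while the game value $\wt{X}_i$ conditions only on $\mathcal{V}_i$. The reduction to $\ex{\out \cdot \mathbb{1}[J^{z\ast} = i]}$ must therefore be performed via a case analysis on which aborting step triggered $J^{z\ast}$, shifting the tower-law application to $\mathcal{V}_{i+1}$ in the second case and absorbing this into the accumulated indicator for subsequent rounds.
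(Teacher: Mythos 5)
Your proof is correct and is essentially the paper's: strip off the $\delta=1/200r$ rounding error, then apply the tower law, using the observation that the event $\{J^{z\ast}=i\}$ is a function of the tuple $(B_i^\bS,B_{i-1}^\bS,X_{i-1},\sum_{\ell<i}(X_\ell-X_{\ell-1})^2,G_{i-1})$ that defines $X_i$. Your closing worry about a $\cV_{i+1}$-offset in the \stepref{ab:iplusx} case is a non-issue and requires no case analysis: $J^{z\ast}$ is by definition the index of the game-value consulted at the abort (a \stepref{ab:iplusx} abort at loop iteration $i$ gives $J^{z\ast}=i+1$, not $i$), so the trigger for the event $J^{z\ast}=i$ is always a function of $B_{i-1}^\bS,B_i^\bS,X_i,G_{i-1}$, each of which is $\cV_i$-measurable.
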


\begin{claim}\label{claim:mart:err}
	$\ex{\max_i |B_{i}^{\bS}-B_{i}^{\bS'}|}\leq 1/128\sqrt{r} $, for $r$ large enough.
\end{claim}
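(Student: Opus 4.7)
The plan is to bound the expectation by integrating the tail using the hypothesis of \cref{lemma:MartingalesAttack}. Let $Y = \max_{i\in[r]} |B_i^\bS - B_i^{\bS'}|$. Since backup values lie in $\zo$, averages lie in $[0,1]$, so $Y\in [0,1]$ and
\begin{align*}
\ex{Y} = \int_0^1 \pr{Y \ge t}\, dt.
\end{align*}

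First I would split this integral at $t_0 = 1/(256\sqrt r)$. The contribution from $[0, t_0]$ is trivially at most $1/(256\sqrt r)$. For the tail $[t_0, 1]$, I would substitute $t = \rho/(256\sqrt r)$ so that $dt = d\rho/(256\sqrt r)$, giving
\begin{align*}
\int_{t_0}^1 \pr{Y \ge t}\, dt = \frac{1}{256\sqrt r}\int_1^{256\sqrt r} \pr{Y \ge \rho/(256\sqrt r)}\, d\rho.
\end{align*}
The hypothesis bounds $\pr{Y \ge \rho/(256\sqrt r)} \le 1/(2\rho \log r)$ only for $\rho$ on the discrete grid $\Rng(r) = \{1, 1+1/r, \ldots, r\}$, but since the tail probability is monotone nonincreasing in $\rho$ and the grid spacing is $1/r$, for any continuous $\rho' \in [\rho, \rho+1/r]$ with $\rho \in \Rng(r)$, we have $\pr{Y \ge \rho'/(256\sqrt r)} \le \pr{Y \ge \rho/(256\sqrt r)} \le 1/(2\rho\log r)$; the approximation loss from passing to the continuous version is negligible (a multiplicative factor of $1 + O(1/r)$).

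Using this, the tail integral is bounded by
\begin{align*}
\frac{1}{256\sqrt r} \cdot \frac{1}{2\log r} \int_1^{256\sqrt r} \frac{d\rho}{\rho} = \frac{1}{256\sqrt r} \cdot \frac{\ln(256\sqrt r)}{2\log r} = \frac{1}{256\sqrt r}\left(\frac{\ln 2}{4} + \frac{\ln 256}{2\log r}\right),
\end{align*}
where I used $\ln r = (\log r)\ln 2$. For $r$ large enough, the second summand inside the parentheses is smaller than any fixed constant, so the whole expression is at most $1/(512\sqrt r)$ for $r$ sufficiently large. Combining with the head contribution $1/(256\sqrt r)$, we get $\ex{Y} \le 1/(256\sqrt r) + 1/(512\sqrt r) < 1/(128\sqrt r)$, as claimed.

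There is no real obstacle here beyond book-keeping: the main point is simply to integrate the discrete tail bound carefully and check that the constant $1/128$ (versus $1/256$) absorbs both the head term and the logarithmic tail contribution, which it does with room to spare once $r$ exceeds a constant.
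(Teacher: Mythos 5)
Your proof is correct and follows essentially the same route as the paper: both bound $\ex{\max_i|B_i^{\bS}-B_i^{\bS'}|}$ by splitting off the head below $1/(256\sqrt r)$ and then bounding the tail using the hypothesis $\pr{\cdot\geq\rho/(256\sqrt r)}\leq 1/(2\rho\log r)$. The only difference is presentational: the paper bounds the tail contribution via a dyadic decomposition (summing over $256\sqrt r\cdot B_{\max}\in[2^{j-1},2^j]$), whereas you integrate the tail probability directly in $\rho$; the integral yields a marginally tighter constant ($\ln 2/4$ vs.\ $1/2$ in front of $1/(256\sqrt r)$), but both comfortably fall below the target $1/(128\sqrt r)$ for large $r$.
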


\noindent Before proving each of these claims, we show how to combine them to obtain the lemma.

\begin{proof}[Proof of \cref{lemma:MartingalesAttack}]
	By \cref{claim:mart:abort}, we may assume without loss of generality that $\pr{J^1\neq r}\geq 1/10$. Next, we compute the bias caused by the attacker $\Ac_1(H)$. By \cref{item:thm:mart} of \cref{lemma:MartingalesAttack}, the output of the honest party is identically distributed with $B_{J^1}^{\bS'}$. Compute
\begin{align}
	 \ex{B^{\bS'}_{J^1}}-1/2 &\geq  \ex{B^{\bS}_{J^1}}-1/2  -  \ex{B^{\bS}_{J^1}}+\ex{B^{\bS'}_{J^1}}  \nonumber\\
	&\geq  \ex{B^{\bS}_{J^1}}-\ex{X_{{J^{1\ast}}}}    -  \ex{\max_i \size{B^{\bS}_i - B^{\bS'}_i}}\cdot  \pr{J^1\neq r} - \frac{1}{200r} \label{eq:mart:exp}\\
	&\geq  \pr{J^1\neq r}\cdot \left(\ex{B^{\bS}_{J^1} - X_{{J^{1\ast}}}\mid J^1\neq r}  - \ex{\max_i \size{B^{\bS}_i - B^{\bS'}_i}}\right )  -\frac{1}{200r} \label{eq:mart:err}  \\
	&\geq \pr{J^1\neq r}\left (\frac{1}{64\sqrt{r}} - \frac{1}{128\sqrt{r}} \right )  -\frac{1}{200r} \nonumber \\ &
	\geq   \frac{1}{40} \cdot \frac{1}{128\sqrt{r}}-\frac{1}{200r}. \nonumber
\end{align}  \cref{eq:mart:exp} follows from triangle inequality, union bound, \cref{claim:mart:exp} and the fact that $B^{\bS}_r=B^{\bS'}_r$. \cref{eq:mart:err} follows from \cref{claim:mart:err} and the fact that $B^{\bS}_{J^1} - X_{J^{1\ast}}\geq 1/64\sqrt{r}$, whenever $J^1\neq r$.
\end{proof}

\paragraph*{Proof of \cref{claim:mart:abort}.} First, we lower-bound the probability of abort by the probability of having a large increment in the $X$-sequence alone. For convenience, we introduce the following notation. For $z\in \set{0,1}$, let $\trigg_{i+1}^z$ denote the predicate $(-1)^{1-z}(B_{i}^{\bS} -X_{i+1}) \geq 1/64\sqrt{r} \;\lor\; (-1)^{1-z}( B^{\bS}_{i+1}-X_{i+1}) \geq 1/64\sqrt{r} $ and let $\trigg_{1}^z $ denote the predicate $ (-1)^{1-z}(B_{1}^{\bS} - \frac{1}{2}) \geq 1/64$. We remark that $\trigg_{\ell}^z$ denotes whether the attack biasing towards $z\in \zo$ is potentially ``triggered'' at round $\ell$. Write $\trigg_{i+1}=\trigg_{i+1}^0 \lor \trigg_{i+1}^1$.  Recall that 
\begin{align*}
	\pr{J^z\neq r}&=\pr{\trigg_{1}^z \vee \bigvee_{i=1}^{r-1}\left (G_i=1 \land  \trigg_{i+1}^z\right )} .
\end{align*}
Thus, by union bound,
\begin{align*}
	\pr{J^0\neq r}+\pr{J^1\neq r}&\geq \pr{\trigg_1\vee \bigvee_{i=1}^{r-1}\left (G_i=1\land \trigg_{i+1} \right ) }.
\end{align*}

\noindent Recall that $G_i=1$ is equivalent to $\bigwedge_{j=1}^{i}  \neg \trigg_j\equiv \neg\left ( \bigvee_{j=1}^{i} \trigg_j\right )$. It follows that 
\begin{align*}
	\bigvee_{i=1}^{r-1}\left (G_i=1\land \trigg_{i+1} \right )&\equiv \bigvee_{i=1}^{r-1}\left (\trigg_{i+1} \land \neg\left ( \bigvee_{j=1}^{i} \trigg_j\right ) \right )\\
	& \equiv \bigvee_{i=1}^{r-1} \trigg_{i+1}.
\end{align*}

\noindent We can thus lower-bound $\pr{J^0\neq r}+\pr{J^1\neq r}$ by $\pr{\bigvee_{i=1}^{r-1} \trigg_{i+1}}$.

\begin{align*}
	\pr{J^0\neq r}+\pr{J^1\neq r}&\geq\pr{\trigg_{1}\vee \bigvee_{i=1}^{r-1} \trigg_{i+1}}\\
	&=\pr{\size{B_1^\bS-X_1}\vee \bigvee_{i=1}^{r-1}\left ( \size{B_{i}^{\bS} -X_{i+1}}\geq 1/64\sqrt{r}\vee \size{B^{\bS}_{i+1}-X_{i+1}}\geq 1/64\sqrt{r}\right ) }\\
	&=\pr{ \bigvee_{i=1}^{r-1}\left ( \size{B_{i}^{\bS} -X_{i}}\geq 1/64\sqrt{r}\vee \size{B^{\bS}_{i}-X_{i+1}}\geq 1/64\sqrt{r}\right ) }\\
	&\geq \pr{ \bigvee_{i=1}^{r-1} \size{X_{i+1}  -X_{i} }\geq 1/32\sqrt{r} }
\end{align*}
By \cref{eq:Xjump}, we conclude that $\pr{ \bigvee_{i=1}^{r-1} \size{X_{i+1}  -X_{i} }\geq 1/32\sqrt{r} }\geq 1/20 $.

\qed

\paragraph*{Proof of \cref{claim:mart:exp}.} Recall that $\delta=1/200r$ and $ X_{i}  =   \rnd{\delta}(\ex{\out\mid B_{i}^\bS,B_{i-1}^\bS, X_{i-1}, \sum_{\ell<i} (X_\ell-X_{\ell-1})^2 , G_i}) $. For conciseness, write $\hist_i$ for the $5$-tuple $ (B_{i}^\bS,B_{i-1}^\bS, X_{i-1}, \sum_{\ell<i} (X_\ell-X_{\ell-1})^2 , G_i)$. We compute $\ex{X_{J^{z\ast}}}=\sum_i \ex{X_{i}\mid J^{z\ast}=i}\cdot \pr{J^{z\ast}=i}$. Let us focus on the term $\ex{X_{i}\mid J^{z\ast}=i}$.
\begin{align}
	\ex{X_{i}\mid J^{z\ast}=i}&= \ex{ \rnd{\delta}(\ex{\out\mid \hist_i})\mid J^{z\ast}=i}\\
	 &\in \ex{  \ex{\out\mid  \hist_i}\mid J^{z\ast}=i} \pm \delta. \nonumber
\end{align}
Since $ \hist_i$ fully determines $X_{i}$, $B^\bS_i$, $B^\bS_{i-1}$ and $J^{z\ast}\geq i$, it follows that $\hist_i$ fully determines $J^{z\ast}=i$, which implies that 
\begin{align}
	\ex{\ex{\out\mid \hist_i}\mid J^{z\ast}=i}=\ex{\out \mid J^{z\ast}=i}.
\end{align}
Since,  by assumption,  $\ex{\out}=1/2$, it follows that $\ex{X_{J^{z\ast}}}\in\sum_{i\in r}\ex{\out \mid J^{z\ast}=i}\cdot \pr{J^{z\ast}=i}\pm \delta = \ex{\out} \pm \delta=1/2\pm \frac{1}{200r}$.
\qed

\paragraph*{Proof of \cref{claim:mart:err}.}

From the hypothesis of \cref{lemma:MartingalesAttack}, it holds that 

\begin{align*}
	\forall \rho\in \set{1,1+1/r,\ldots, r}, \quad \pr{\max_{i\in [r]} \size{B_i^\bS-B_i^{\bS'}}\geq \rho\cdot\frac{1}{ 256\sqrt{r}} }\leq \frac{1}{2\rho \log(r)}  .
\end{align*} 
For convenience, write $B_{\max}=\max_{i\in [r]} \size{B_i^\bS-B_i^{\bS'}}$ and let us compute $\ex{B_{\max}}$. 
\begin{align*}
	 \ex{B_{\max}}&=\ex{B_{\max}\mid B_{\max}\leq 1/256\sqrt{r}}\cdot \pr{ B_{\max}\leq 1/256\sqrt{r}}\\
	& +\sum_{j=1}^{\log(256\sqrt{r})} \ex{B_{\max}\mid 256\sqrt{r}\cdot B_{\max}\in [2^{j-1}, 2^j]}\cdot \pr{256\sqrt{r}\cdot B_{\max}\in [2^{j-1}, 2^j]} \\
	&\leq \frac{1}{256\sqrt{r}} + \sum_{j=1}^{\log(256\sqrt{r})} \frac{2^{j }}{256\sqrt{r}}\cdot \frac{1}{2^{j-1}\cdot 2\log(r)} \\
	& = \frac{1}{256\sqrt{r}} + \frac{1}{256\sqrt{r}\log(r)} \cdot \left (\frac{1}{2}\cdot \log(r)+ \log(256) \right ) \\
	&\leq \frac{1}{128\sqrt{r}}   ,
\end{align*} where the last inequality holds for large enough $r$. 
\qed

\subsection{The Differential Privacy Based Attack}\label{sec:LapAttack}

\begin{lemma}[Restatement of \cref{lemma:short:lap}]\label{lemma:LapAttack}
	There exists a fail-stop adversary $\Ac$ such that the following holds. Let $\Pi$ be a correct $\np$-party $r$-round coin-flipping protocol, and let $k\in \N$ be the smallest integer such that ${n\choose k}\geq r\log(r)^{2k}$.   Suppose there exists $\ks\in \set{2,\ldots, k}$,  $\rho^\ast\in \set{1,1+1/r, \ldots, r}$,	tuple sets $\bS_1$ and $\bS_0\subseteq {[\np]\choose k}$ and party set $\H\subseteq [\np]$, such that
	\begin{itemize} 
		 \item For every $h, h' \in \H$, $z,z'\in \zo$ and $\cU'\in \bS_z$:
			
			\begin{itemize}
				\item $\ppr{\cU\la \bS_{z}}{h\in \cU}=\ppr{\cU\la \bS_{z}}{h'\in \cU}\le \frac{1}{2}$.
				\item $\ppr{\cU\la \bS_{z}}{h\in \cU}=\ppr{\cU\la \bS_{z'}}{h \in \cU}$.
				
				\item $ \ppr{\cU\la \bS_{z}}{h\notin \cU}/\ppr{\cU\la \bS_{z}}{h\in \cU} \geq \frac{1}{4}\cdot \frac{n-k+\ks-1}{\ks-1}$.
				
				\item $\ppr{h\la \H, \cU\la \bS_z(h)}{\cU=\cU'}=\ppr{\cU\la \bS_z}{\cU=\cU'}$ .
				
			\end{itemize}
		\item Letting  $B^{\bS_z}_i = B_i^{\bS_z}(\Pi)$ and $\coef{\cdot}$ be according is according to \cref{def:backups,not:backups}:
		\begin{align}\label{eq:lemma:lapgap}
		\pr{\max_{i\in [r]}\size{ B_i^{\bS_1}-  B_i^{\bS_0}}\geq  \frac{\rho^\ast}{ 256\sqrt{r}} \cdot \frac{ \coef{\ks} ^{1/2}}{\left (64\log(r)\right)^{k-\ks}} } \geq \frac{1}{2\rho^\ast\log(r)}\cdot \frac{ 64^{-k+\ks}}{ \coef{\ks}^{1/2} }.
		\end{align}
		\item Letting $B^{\bS_z (h)}_i=B^{\bS_z(h)}_i(\Pi)$  be according to \cref{def:backups}, for every $z\in\zo$,  $h,h'\in \H$ and  $\rho\in \set{1,1+1/r, \ldots, r}$, it holds that:
		\begin{align}\label{eq:lemma:lapsim}
			&\pr{\max_{i\in [r]}\size{ B_i^{\bS_z(h)}-  B_i^{\bS_z(h')}}\geq  \frac{\rho}{ 256\sqrt{r}} \cdot \frac{ \coef{\ks-1} ^{1/2}}{\left (64\log(r)\right)^{k-\ks+1}} } \leq \frac{1}{2\rho\log(r)}\cdot \frac{ 64^{-k+\ks-1}}{ \coef{\ks-1}^{1/2} }.
		\end{align} 
	\end{itemize} Then, there exists $h\in \H$ such that $\Ac^\Pi(\bS_1,\bS_0,\H,k^\ast,\rho^\ast)$ corrupting all parties but $h$ biases the output of $h$ by $\Omega(1/\sqrt{r}\log(r)^{k-\ks+1})$.

 Furthermore, the running time of $\Ac^\Pi(\bS_1,\bS_0,\H,k^\ast,\rho^\ast)$ is polynomial in the running time of $\Pi$ and $\np^k$, and it uses only oracle access to $\Pi$'s next-message function.
 \end{lemma}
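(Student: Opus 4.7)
The plan is to instantiate the oblivious sampling framework of Corollary~\ref{thm:LaplaceGen} on a game whose per-round values, indexed by candidate honest parties $h\in\H$, encode the ``projected gap'' between $\bS_1$- and $\bS_0$-backups. Concretely, I would define the adversary to sample $h\la\H$ uniformly and a bias direction $z\in\zo$, and in each round $i$ compute the observable $s_i^{\mh}=B_i^{\bS_1\rmv{h}}-B_i^{\bS_0\rmv{h}}$ (depending only on backup values of tuples not containing $h$, hence computable by the corrupted parties), draw $\nu_i\la\Lap{\lambda}$, and if $(-1)^{1-z}(s_i^{\mh}+\nu_i)\ge\gamma$ abort all parties except $h$ together with a random tuple in $\bS_z(h)$, instructing the surviving corrupted parties to play honestly. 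This mirrors Experiment~\ref{Exp:Laplace} applied to the protocol-induced values $s_i^h=B_i^{\bS_1(h)}-B_i^{\bS_0(h)}$, with $p=\ppr{\cU\la\bS_z}{h\in\cU}$, which is independent of $h$ by the symmetry hypothesis and satisfies $p\le 1/2$.

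Next I would set $\gamma=\frac{\rho^*}{256\sqrt{r}}\cdot\frac{\coef{\ks}^{1/2}}{(64\log r)^{k-\ks}}$, $\lambda=\gamma/(4\log r)$, $\alpha=\frac{1}{256\sqrt{r}}\cdot\frac{\coef{\ks-1}^{1/2}}{(64\log r)^{k-\ks+1}}$, $\beta=\frac{1}{2\log r}\cdot\frac{64^{-k+\ks-1}}{\coef{\ks-1}^{1/2}}$, $\delta=\frac{1}{2\rho^*\log r}\cdot\frac{64^{-k+\ks}}{\coef{\ks}^{1/2}}$, and verify the three hypotheses of Corollary~\ref{thm:LaplaceGen}: the bound $\alpha\le \lambda(1-p)/2p$, the tail bound $\pr{\sigma^h(\tau)\ge\rho\alpha}\le\beta/\rho$, and $\pr{\max_i s_i(\tau)\ge\gamma}\ge\delta$. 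The last follows directly from Equation~(\ref{eq:lemma:lapgap}). For the tail bound, I would use a dyadic decomposition over $\rho\in\Rng(r)$ and combine Equation~(\ref{eq:lemma:lapsim}) for pairs $(h,h')$ via a union bound, noting that $s_i-s_i^h = \eex{h'\la\H}{s_i^{h'}-s_i^h}$ is bounded by the worst pairwise projection difference. The central arithmetic identity, $(1-p)/p\ge\tfrac14\cdot\coef{\ks-1}/\coef{\ks}$, obtained by rewriting the ratio hypothesis using the definition of $\coef{\cdot}$, is precisely what makes both quantitative constraints $\alpha\le\gamma\sqrt{4(1-p)/p}/(32\log r)$ and $\beta\le\delta/(16\sqrt{(1-p)/p})$ go through; these reduce to matching exponents of $64\log r$ and of $\coef{\ks-1}^{1/2}/\coef{\ks}^{1/2}$, using $\rho^*\ge 1$.

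Corollary~\ref{thm:LaplaceGen} then yields $\ex{s_J^H(T)}\ge \gamma\delta/125-1/2r$. The symmetry condition $\ppr{h\la\H,\cU\la\bS_z(h)}{\cU=\cU'}=\ppr{\cU\la\bS_z}{\cU=\cU'}$ guarantees that the honest party's output, when the $z$-attack triggers in round $j$, is distributed exactly as $B_j^{\bS_z(h)}$; consequently $s_J^H$ equals (up to sign) the induced bias on a random $h\in\H$. Averaging the outcomes of running both directions $z\in\zo$ and fixing the better choice, together with an averaging argument over $h$ and over the random tapes of $\Pi$, identifies a fixed party $h^\ast\in\H$ and a fixed polynomial-size advice string (encoding the tables needed to evaluate the $B_i^{\bS_z\rmv h}$'s, as in Remark~\ref{remark:advice}) achieving bias $\Omega(\gamma\delta)=\Omega(1/(\sqrt r\cdot\log(r)^{k-\ks+1}))$, after the $\coef{\cdot}$ factors cancel in the product $\gamma\delta$. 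The main obstacle will be turning the pairwise similarity in Equation~(\ref{eq:lemma:lapsim}) into the tail bound on $\sigma^h$ that Corollary~\ref{thm:LaplaceGen} demands: the assumption only controls differences between projections $B_i^{\bS_z(h)}$ and $B_i^{\bS_z(h')}$, so a careful union bound across the $r(r-1)+1$ dyadic scales in $\Rng(r)$ and across pairs is needed, while also ensuring the absorbed constants do not exhaust the $1/\log(r)^{k-\ks+1}$ budget.
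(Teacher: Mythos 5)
Your proposal follows the paper's route closely: instantiate \cref{thm:LaplaceGen} with $s_i^h=B_i^{\bS_1(h)}-B_i^{\bS_0(h)}$, observe the noisy quantity $B_i^{\bS_1\rmv h}-B_i^{\bS_0\rmv h}$, take $p=\ppr{\cU\la\bS_z}{h\in\cU}$ (well-defined by the symmetry hypotheses), use $\coef{\ks-1}/\coef{\ks}=\frac{n-k+\ks-1}{\ks-1}$ to translate the ratio hypothesis into the $\alpha,\beta$ constraints of the corollary, get $\ex{s_J^H}\ge\gamma\delta/125-1/2r$, and then split into ``either $\ex{B_J^{\bS_1(H)}}\ge 1/2+\eps/2$ or $\ex{B_J^{\bS_0(H)}}\le 1/2-\eps/2$''. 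This is exactly the structure of the paper's \cref{alg:LapAttack} and its analysis via \cref{claim:obsampattack}.

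One place where your route, as written, would not go through: you propose to derive the tail bound on $\sigma^h(\tau)=\max_i\size{s_i(\tau)-s_i^h(\tau)}$ from \cref{eq:lemma:lapsim} ``via a union bound \ldots across pairs''. A union bound over $h'\in\H$ (let alone over pairs) costs a factor of $\size\H$, which is far more than the $1/\log(r)^{k-\ks+1}$ slack in the tail probability and destroys the bound. The paper instead uses the hypothesis $\ppr{h\la\H,\cU\la\bS_z(h)}{\cU=\cU'}=\ppr{\cU\la\bS_z}{\cU=\cU'}$, which gives $B_i^{\bS_z}=\eex{h'\la\H}{B_i^{\bS_z(h')}}$, so that $B_i^{\bS_z(h)}-B_i^{\bS_z}$ is (pointwise) a \emph{uniform average} of the pairwise differences $B_i^{\bS_z(h)}-B_i^{\bS_z(h')}$; the only union bound then taken is over the constant-size split into the $\bS_1$- and $\bS_0$-contributions (triangle inequality, two terms) and over the two events at half scale, costing a factor of $4$ which is absorbed into the $\rho^\ast$-dependent rescaling of $\alpha,\beta$. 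You should replace ``union bound across pairs'' with this averaging argument. Two smaller calibration issues: (i) your $\delta$ is missing the factor $1/2$ that comes from passing (``w.l.o.g.'') from the two-sided event $\size{B_i^{\bS_1}-B_i^{\bS_0}}\ge\gamma$ to the one-sided event that the corollary needs; (ii) the paper keeps the abort condition $B_i^{\bS_1\rmv h}-B_i^{\bS_0\rmv h}+\nu_i>\gamma$ identical for $z=0$ and $z=1$ (only the surviving tuple depends on $z$), which makes the aborting round $J$ common to $\Ac_0$ and $\Ac_1$ and makes the final ``one of the two works'' step immediate; your sign-flipped version $(-1)^{1-z}$ would require running two different oblivious-sampling games and only one of them is supported by the one-sided gap guarantee. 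Also, the advice for this attack is simply $(\bS_1,\bS_0,\H,\ks,\rho^\ast)$; the table of \cref{remark:advice} is only needed for the game-value attack of \cref{lemma:MartingalesAttack}, not here.
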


\subsubsection{The Attack}

We start with  a high-level overview of the attack using the notation of \cref{lemma:LapAttack}.  
  
The adversary corrupts all parties except a random  party $h\in \H$. After receiving the honest party \ith  message, it adds Laplace noise to the quantity $B_i^{\bS_1\rmv{h}}-B_i^{\bS_0\rmv{h}}$, \ie the difference between the average backup values for those tuples that do not contain $h$. If the resulting quantity is above some value $\gamma$, the adversary aborts all parties except a random tuple in $\bS_z(h)$, for $z\in  \zo$ being  the direction of the bias the adversary wishes to attack towards\footnote{The choice of $\gamma$ and of the Laplace parameter is dictated by the magnitude of the gap between $B_i^{\bS_1}$ and $B_i^{\bS_0}$ as stated in \cref{eq:lemma:lapgap}.}.

Since, by assumption, the values  $B_i^{\bS_z\rmv{h}}$ and $B_i^{\bS_z(h)}$  are not too far apart, adding Laplace noise  ``decorrelates'' the abort decision from the identity of the honest party party $h$.  Thus, $B_i^{\bS_z(h)}$ is roughly distributed like the mean $B_i^{\bS_z }$ (and by extension $B_i^{\bS_z\rmv{h} }$ as well). Therefore, either the adversary biasing towards one or the adversary biasing towards zero succeeds in its attack, since either $\ex{B_J^{\bS_1}}> 1/2$ or $\ex{B_J^{\bS_0}}<1/2$, where $J$ denote the aborting  round.

The formal description of the attack is given below.

\begin{algorithm}[$\DpAttack$: The differential privacy based attack]\label{alg:LapAttack}~ 
	\item Parameters: $\bS_1$, $\bS_0\subseteq {\np \choose k }$, $z\in \zo$, party $h\in [\np]$ and $\gamma\in [0,1]$.
	\item Notation: Let $\lambda=\gamma/4\log(r)$.
	\item Description:
	\begin{enumerate} 
		\item For $i=1,\ldots, r$: 
		\begin{enumerate}
		 	\item Upon receiving the \ith-round messages of $h$, compute $B_i^{\bS_1\rmv{h}}$ and $B_i^{\bS_0\rmv{h}}$.
			\item Sample $\nu_i \leftarrow \Lap{\lambda}$.
			\item If $B_i^{\bS_1\rmv{h}}-B_i^{\bS_0\rmv{h}} + \nu_i> \gamma$, without sending their messages for round $i$, abort all parties except a random tuple in $\bS_z(h)$. 
			\begin{itemize}
			\item[--] The remaining corrupted parties are instructed to terminate the protocol honestly. 
			\end{itemize} 
		\end{enumerate}
	\end{enumerate} 
\end{algorithm} 

Let $\DpAttack(\bS_1, \bS_0, z, h, \gamma)$ denote the above attacker with parameters $\bS_1, \bS_0, z, h, \gamma$. We refer to the round in which the adversary instructs some parties in its  control  to abort as  the \emph{aborting round}, set to $r$ is not abort happen. 
\subsubsection{Success probability of \cref{alg:LapAttack}} 

Let $H$ be a uniform element of $\H$, and let $\gamma =   \frac{\rho^\ast}{256\sqrt{r}}  \cdot \frac{\coef{\ks}^{1/2}}{\left  (64\log(r)\right )^{k-\ks}}  $. We show that either $\Ac_1(H)=\DpAttack(\bS_1, \bS_0, 1, H, \gamma)$ or $\Ac_0(H)=\DpAttack(\bS_1, \bS_0, 0, H,\gamma)$ succeeds in obtaining the bias of \cref{lemma:LapAttack}. Let $J$ denote the smallest round $i$ such that $B_i^{\bS_1\rmv{H}}-B_i^{\bS_0\rmv{H}}+\Lap{\lambda}\geq \gamma$, and $J=r$ if no such round exists. \cref{lemma:LapAttack} follows from the next claim.

\begin{claim}\label{claim:obsampattack} 
	$\ex{B_{J}^{\bS_1(H)}-B_{J}^{\bS_0(H)}}\geq \frac{1}{2^{16}}\cdot \frac{1}{\sqrt{r}\log{r}}\cdot \left (\frac{1}{64^2\log(r)}\right )^{k-\ks}$.
\end{claim}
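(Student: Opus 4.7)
My plan is to reduce the analysis of \cref{alg:LapAttack} to an application of \cref{thm:LaplaceGen}. For each $h \in \H$ and $i \in [r]$, define the random variable $s_i^h = B_i^{\bS_1(h)} - B_i^{\bS_0(h)}$, where randomness is over a random honest execution of $\Pi$; this plays the role of the data of the oblivious sampling experiment, with the transcript of $\Pi$ playing the role of $\tau$ in \cref{thm:LaplaceGen}. Using the flatness condition $\ppr{h \la \H, \cU \la \bS_z(h)}{\cU = \cU'} = \ppr{\cU \la \bS_z}{\cU = \cU'}$, one verifies that $s_i := \frac{1}{|\H|}\sum_h s_i^h = B_i^{\bS_1} - B_i^{\bS_0}$. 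Letting $p = \ppr{\cU \la \bS_z}{h \in \cU}$, which by hypothesis is the same constant for all $h$ and $z$, the decomposition $B_i^{\bS_z} = p\,B_i^{\bS_z(h)} + (1-p)\,B_i^{\bS_z \rmv h}$ yields $s_i^{\mh} = \tfrac{1}{1-p}(s_i - p\,s_i^h) = B_i^{\bS_1 \rmv h} - B_i^{\bS_0 \rmv h}$, matching exactly the decision variable of \cref{alg:LapAttack}. Hence, $J$ coincides with the halting round of $\LapExp(\H, \{s_i^h\}, \gamma, \lambda)$ with $\gamma$ and $\lambda = \gamma/(4\log r)$ as chosen by the attack, and $\ex{B_J^{\bS_1(H)} - B_J^{\bS_0(H)}} = \ex{s_J^H}$.

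The parameters for \cref{thm:LaplaceGen} are then read off the hypotheses. From \cref{eq:lemma:lapgap}, the threshold $\gamma = \tfrac{\rho^\ast}{256\sqrt{r}} \cdot \tfrac{\coef{\ks}^{1/2}}{(64\log r)^{k-\ks}}$ and the lower bound $\delta = \tfrac{1}{2\rho^\ast \log r} \cdot \tfrac{64^{-k+\ks}}{\coef{\ks}^{1/2}}$ on the probability of the gap event (working \wlg in the direction of the attack, replacing $\bS_1 \leftrightarrow \bS_0$ if needed). For the similarity parameter $\sigma^h = \max_i |s_i - s_i^h|$, the pointwise bound $|s_i - s_i^h| \leq \tfrac{1}{|\H|}\sum_{h'}\bigl(|B_i^{\bS_1(h)} - B_i^{\bS_1(h')}| + |B_i^{\bS_0(h)} - B_i^{\bS_0(h')}|\bigr)$ combined with the pairwise tail estimate \cref{eq:lemma:lapsim} suggests setting $\alpha$ proportional to $\tfrac{1}{\sqrt{r}} \cdot \tfrac{\coef{\ks-1}^{1/2}}{(64\log r)^{k-\ks+1}}$, with $\beta$ scaling accordingly. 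The key algebraic input is the identity $\coef{\ks-1}/\coef{\ks} = (n-k+\ks-1)/(\ks-1)$ together with the hypothesis $(1-p)/p \geq \tfrac{1}{4}\coef{\ks-1}/\coef{\ks}$, which gives $\sqrt{4(1-p)/p} \geq \sqrt{\coef{\ks-1}/\coef{\ks}}$ and makes the constraints $\alpha \leq \gamma\sqrt{4(1-p)/p}/(32\log r)$ and $\beta \leq \delta/(16\sqrt{(1-p)/p})$ provable by direct substitution.

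The main obstacle is establishing the correct tail bound for $\sigma^h$. A naive union bound over $h'$ in $\sigma^h \leq \max_{h'} D_{h,h'}$ (with $D_{h,h'} = \max_i|s_i^{h'} - s_i^h|$) introduces a spurious factor of $|\H|$ in $\beta$, which would violate the constraint $\beta \leq \delta/(16\sqrt{(1-p)/p})$. A sharper argument is required, exploiting the fact that $\{B_i^{\bS_z(h)}\}_h$ are all derived from a single protocol transcript so that the event in \cref{eq:lemma:lapsim} effectively holds simultaneously across many pairs without incurring the $|\H|$ loss. Once the preconditions are verified, \cref{thm:LaplaceGen} gives $\ex{s_J^H} \geq \gamma\delta/125 - 1/(2r)$, and a direct computation yields $\gamma\delta = \tfrac{1}{512\sqrt{r}\log r} \cdot \tfrac{1}{(64^2 \log r)^{k-\ks}}$; since $1/(512 \cdot 125) > 1/2^{16}$, absorbing the $1/(2r)$ additive term into the leading constant for $r$ large enough yields the claimed bound $\tfrac{1}{2^{16}} \cdot \tfrac{1}{\sqrt{r}\log r} \cdot \bigl(\tfrac{1}{64^2 \log r}\bigr)^{k-\ks}$.
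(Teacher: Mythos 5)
Your proposal is essentially the paper's proof: the paper likewise defines $s_i^h=B_i^{\bS_1(h)}-B_i^{\bS_0(h)}$, uses the flatness condition to identify $s_i=B_i^{\bS_1}-B_i^{\bS_0}$, the decomposition $B_i^{\bS_z}=p\,B_i^{\bS_z(h)}+(1-p)\,B_i^{\bS_z\rmv h}$ to match $s_i^{\mh}$ with the attacker's decision statistic, the identity $\coef{\ks-1}/\coef{\ks}=(n-k+\ks-1)/(\ks-1)$ to verify the $(\alpha,\beta)$ constraints of \cref{thm:LaplaceGen}, and then invokes that corollary.

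The ``main obstacle'' you flag is, however, a real one, and the paper does not dispel it: it simply asserts, ``from the hypothesis \dots{} it holds that $\pr{\max_i |B_i^{\bS_z(h)}-B_i^{\bS_z}|\ge \rho\alpha/2\rho^\ast}\le\beta\rho^\ast/2\rho$,'' even though \cref{eq:lemma:lapsim} only gives tail bounds on the \emph{pairwise} quantities $\max_i|B_i^{\bS_z(h)}-B_i^{\bS_z(h')}|$. Since by flatness $B_i^{\bS_z}=\frac{1}{|\H|}\sum_{h'}B_i^{\bS_z(h')}$, one has $\max_i|B_i^{\bS_z(h)}-B_i^{\bS_z}|\le \frac{1}{|\H|}\sum_{h'}\max_i|B_i^{\bS_z(h)}-B_i^{\bS_z(h')}|$, i.e.~an \emph{average} of the pairwise gaps, not a maximum; you are right that passing this average through the $1/\rho$-shaped tail bound is not a union bound (which would indeed cost a factor $|\H|$ and kill the argument). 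The paper gives no explicit bridge here, so your instinct to pause on this step is sound; if you want to close it you should show directly that the moments of $\sigma^h$ actually used inside \cref{thm:LaplaceGen} (via \cref{clm:expsigma}) are controlled by the corresponding moments of the pairwise gaps, which pass through the averaging by linearity and Jensen without the $|\H|$ loss, rather than trying to transfer the tail bound itself.

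Two smaller bookkeeping points. First, your $\delta$ omits the extra factor $\tfrac12$ the paper inserts ($\delta=\tfrac12\cdot\tfrac{1}{2\rho^\ast\log r}\cdot\tfrac{64^{-k+\ks}}{\coef{\ks}^{1/2}}$); that factor is exactly what pays for your ``\wlg{} in the direction of the attack'' step, so if you keep it, your $\gamma\delta$ becomes $\tfrac{1}{1024\sqrt{r}\log r}(64^2\log r)^{\ks-k}$ and $\gamma\delta/125$ then falls slightly below $2^{-16}$ rather than above it. Second, the identity $s_i^{\mh}=B_i^{\bS_1\rmv h}-B_i^{\bS_0\rmv h}$ and the verification of the $\alpha,\beta$ constraints are exactly right and are the substantive part of the reduction.
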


\begin{proof}[Proof of \cref{lemma:LapAttack}]
	If $\ex{B_{J}^{\bS_1(H)}-B_{J}^{\bS_0(H)}}\geq \eps$, then either $\ex{B_{J}^{\bS_1(H)}}\geq 1/2 +\eps/2$ or $\ex{B_{J}^{\bS_0(H)}}\leq 1/2 - \eps/2$. By using the appropriate $\eps$ from \cref{claim:obsampattack} and observing that, under adversary $\Ac_z$, the honest party's output is identically distributed with $B^{\bS_z(H)}_{J}$, we obtain the desired statement.
\end{proof}

\paragraph*{Proof of \cref{claim:obsampattack}.} Define $\delta=\frac{1}{2}\cdot \frac{1}{2\rho^\ast\log(r)} \cdot   \frac{64^{ -k+\ks }}{\coef{\ks}^{1/2}}$, $\alpha = \frac{\gamma}{32\log(r)}\cdot \frac{\sqrt{\np-k+\ks-1}}{\sqrt{\ks-1}} $ and $\beta = \frac{\delta}{16}\cdot \frac{ \sqrt{\ks-1}}{\sqrt{\np-k+\ks-1}}$. From the hypothesis of \cref{lemma:LapAttack} and the definition of $\alpha$, $\beta$, $\gamma$ and $\delta$, it holds that $\pr{\max_i \size{B_i^{\bS_1}- B_i^{\bS_0}}\geq \gamma }\geq 2\delta$ and $\pr{\max_i \size{B_i^{\bS_z(h)}- B_i^{\bS_z }}  \geq  \rho \cdot  \alpha/2\rho^\ast}\le  \beta\cdot \rho^\ast/2\rho $, for every $z\in \zo$, $h\in \cH$ and $\rho\in \cR$. Thus, by triangle inequality, union bound and the fact that $\rho\in \cR$ can be chosen arbitrarily, the following inequalities hold without loss of generality.

\begin{align}
	&\pr{\max_i B_i^{\bS_1}- B_i^{\bS_0}\geq \gamma }\geq \delta , \\ 
	\forall h\in \H,\forall \rho\in \cR\colon \qquad &\pr{\max_i \size{\left (B_i^{\bS_1(h)}- B_i^{\bS_0(h)}\right ) - \left ( B_i^{\bS_1}- B_i^{\bS_0}\right )}\geq  \rho \cdot \alpha}\leq  \beta/\rho  .\label{eq:attack:promise}
\end{align}

\noindent Let $\tau$ denote an arbitrary transcript of $\Pi$ and let $s^h_i(\tau)$ and $s^{\rmv{h}}_i(\tau)$ denote the value of $B_i^{\bS_1(h)}- B_i^{\bS_0(h)}$ and $B_i^{\bS_1\rmv{h}}- B_i^{\bS_0\rmv{h}}$, respectively, for transcript $\tau$. Further define $s_i(\tau)=\frac{1}{n}\sum_{h\in \H} s^h_i(\tau)$, and, for arbitrary $h\in \H$ and $z\in \zo$, let $p=\ppr{\cU\la \bS_z}{h\in \cU}$. We remark that the value of $p$ does not depend on $h$ or $z$. Next, by the definition of $s_i^h(\tau)$ and the hypothesis of the theorem, we observe that  
\begin{enumerate}
\item $	p\cdot s^h_i(\tau) + (1-p)\cdot s^{\rmv{h}}_i(\tau)= s_i(\tau)$, and 
\item  $\frac{1-p}{p} \geq \frac{1}{4}\cdot\frac{n-k+\ks-1}{\ks-1}   $.
\end{enumerate}

 By definition, the adversary $\Ac_z$ aborts (some parties)  if it finds out that  $s^{\rmv{h}}_{i}(\tau) + \Lap{\lambda}\geq \gamma$. Let $T$ be the value of $\tau$,  and $J$ be the aborting round in a random execution of $\Pi$ in which the adversary $\Ac_z$ attacking the honest party $H$. Using the terminology of \cref{sec:Laplace}, the value of $s^{H}_{J}(T)$ is equal to the output of an oblivious sampling experiment with parameters $\H$, $\set{s^h_i(\tau\la T)}_{h,i}$, $\gamma$, $p$, $\lambda$. From the choice of $\alpha$ and $\beta$, and under the guarantee of \cref{eq:attack:promise}, \cref{thm:LaplaceGen} yields that $\ex{s^H_{J}(T)}\geq \gamma\delta/125 - \frac{1}{2r}\in \Omega(1/\sqrt{r}\log(r)^{k^\ast - k +1})$, for $r$ large enough.

\qed

 
\subsection{The Singletons Attack} \label{sec:SingeltonAttack}

\begin{lemma}[Restatement of \cref{lemma:short:sing}]\label{lemma:SingeltonAttack}
	 There exists a fail-stop adversary $\Ac$ such that the following holds. Let $\Pi$ be a correct $\np$-party $r$-round coin-flipping protocol, and let $k\in \N$ be the smallest integer such that ${n\choose k}\geq r\log(r)^{2k}$.  Suppose there exists $\rho^\ast\geq 1$, tuple sets $\bS_1,\bS_0\subseteq{[\np]\choose k}$ and party set $\H\subseteq[\np]$   such that:
	 \begin{enumerate} 
	 	\item $\cH\ge n/3$ and  $\size{\bS_0}=\size{\bS_1}=\size{\H}$.
	 	
	 	\item[]   \hspace{-.2 in} For every $h\in \H$ and $z\in  \zo$: 
		\item $\size{\bS_z(h)}=1$, letting $\bS_z(h)$ be according to \cref{not:backups}.

		\item $\pr{\max_{i\in [r]}\size{ B_i^{\bS_1}-  B_i^{\bS_0 }}\geq  \frac{\rho^*}{256\sqrt{r}} \cdot \dfrac{   {\np-1 \choose k-1}^{1/2}}{\left (64\log(r)\right)^{k-1}}} \geq \frac{1}{2\rho^*\log(r)}\cdot   \frac{64^{-k+1} }{  {\np -1\choose k-1}^{1/2} } $ \label{eq:lemma:approxgap}
	
		letting $B_i^{\bS_z}$ be according to \cref{def:backups}.
	\end{enumerate} 
	Then, there exists $h\in \H$ such that $\Ac^\Pi(\bS_1,\bS_0,\H,k^\ast,\rho^\ast)$ corrupting all parties but $h$, biases the output of $h$ by $\Omega(1/\sqrt{r}\log(r)^k)$. 
	
	Furthermore, the running time of $\Ac^\Pi(\bS_1,\bS_0,\H,k^\ast,\rho^\ast)$ is polynomial in the running time of $\Pi$ and $\np^k$, and it uses only oracle access to $\Pi$'s next-message function.
 \end{lemma}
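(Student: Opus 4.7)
The plan is to mount a martingale-style attack along the lines of \cref{lemma:MartingalesAttack}, exploiting the singleton property $|\bS_z(h)|=1$ to reduce essentially to a two-party Cleve--Impagliazzo scenario. The adversary would pick a random honest party $H\gets \H$, corrupt all others, and use the fact that for every $h\in \H$ the ``leave-one-out'' estimate $\tilde B^{\bS_z}_i \eqdef B^{\bS_z\setminus \bS_z(h)}_i$ is computable by the adversary (who controls every party in every tuple of $\bS_z\setminus \bS_z(h)$) and satisfies $|\tilde B^{\bS_z}_i - B^{\bS_z}_i|\le 1/|\H|=O(1/n)$. Since $\binom{n}{k}\ge r$ forces $n$ to be polynomially larger than $\sqrt{r}$, this error is asymptotically smaller than the promised gap $\gamma^*\eqdef \frac{\rho^*}{256\sqrt{r}}\cdot \binom{n-1}{k-1}^{1/2}/(64\log r)^{k-1}$.

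First I would define a Doob-like \sosa $2\delta$-weak martingale sequence $X_0,\ldots, X_r$ along the lines of \cref{eq:DefX}, but conditioned on both $B^{\bS_1}_i$ and $B^{\bS_0}_i$, namely
\begin{align*}
X_i=\rnd{\delta}\!\left(\ex{\out\mid B^{\bS_1}_i, B^{\bS_0}_i, X_{i-1}, \sum_{\ell<i}(X_\ell-X_{\ell-1})^2, G_{i-1}}\right),
\end{align*}
with $G_{i-1}$ the appropriate trigger bit. As in \cref{remark:advice}, the advice string $\adv$ encodes the polynomial-size table needed to compute each $X_i$ from the observable quantities. By \cref{claim:SOSAjumps}, with constant probability this sequence exhibits a jump $|X_i-X_{i-1}|\ge 1/4\sqrt{r}$. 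The adversary would combine two abort rules: (i)~a martingale-jump trigger as in \cref{alg:MartingalesAttack}, firing when $\tilde B^{\bS_z}_i$ deviates from $X_{i+1}$ by at least $1/64\sqrt{r}$ in the chosen direction; and (ii)~a realized-gap trigger, firing when $|\tilde B^{\bS_1}_i-\tilde B^{\bS_0}_i|\ge \gamma^*/2$. Upon firing, the adversary would abort all parties except those in $\bS_z(H)$ for the $z\in \zo$ matching the target direction.

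The success analysis splits in two. If the martingale-jump trigger fires, the argument of \cref{section:martsuccess} yields a $\Omega(1/\sqrt r)$ bias, dominating the stated bound. Otherwise, by hypothesis the realized-gap trigger fires with probability at least $p^*\eqdef \frac{1}{2\rho^*\log r}\cdot 64^{-k+1}/\binom{n-1}{k-1}^{1/2}$, modulo the $O(1/n)$ approximation error; conditioned on firing at round $J$, one has $B^{\bS_1}_J-B^{\bS_0}_J\ge \gamma^*/2-O(1/n)=\Omega(\gamma^*)$, and by symmetry between biasing toward $0$ and toward $1$ either the $z=1$ or the $z=0$ attack achieves a conditional bias of $\Omega(\gamma^*)$ in the corresponding direction, for a total bias of $\Omega(p^*\gamma^*)=\Omega(1/\sqrt r\log^k r)$ as claimed.

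The main technical obstacle I anticipate is controlling the coupling between the random honest party $H$ and the abort event, since the adversary's decisions use $\tilde B^{\bS_z}$ rather than $B^{\bS_z}$. The plan is to argue via a union bound and a Hoeffding-style concentration (\cref{fact:Hoeffding}) that for a $1-o(1)$ fraction of $h\in \H$ the leave-one-out error is uniformly bounded across all $r$ rounds, decoupling the analysis. A secondary concern is that $\ex{\backup(\bS_z(H), J)}$ (averaged over random $H\gets \H$) is the $\H$-weighted average of backups of tuples in $\bS_z$, rather than the uniform average $B^{\bS_z}_J$; however, combining $|\bS_z|=|\H|$ with $\sum_{h\in \H}|\bS_z(h)|=|\H|$, these two averages differ by a quantity of order $\Var_{\cU\la \bS_z}(|\cU\cap \H|)/|\H|$, which is negligible whenever the realized gap is as large as $\gamma^*$, the event under consideration.
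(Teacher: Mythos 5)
Your plan diverges from the paper's singletons attack in two ways, and both introduce genuine problems.  First, you build a Doob--like sequence $X_i$ conditioned on the \emph{full} averages $B^{\bS_1}_i, B^{\bS_0}_i$ and propose a martingale-jump trigger.  But the adversary here \emph{cannot} compute $B^{\bS_z}_i$: each $\bS_z$ contains the tuple $\bS_z(H)$ that includes the honest party $H$, and evaluating $\backup(\bS_z(H),i)$ requires simulating $H$'s continuation (its future randomness), which the fail-stop adversary does not have.  This is precisely why the martingale attack of \cref{lemma:MartingalesAttack} requires $\bS(h)=\emptyset$ for all $h\in\H$, a hypothesis that is \emph{negated} in the singletons regime ($\size{\bS_z(h)}=1$).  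So the $X$-sequence and its trigger are not realizable by the claimed adversary, and the paper correspondingly uses \emph{no} martingale component in the singletons attack --- it is a pure threshold test on the observed gap.

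Second, and more quantitatively, your justification that the deterministic leave-one-out error $O(1/n)$ is ``asymptotically smaller'' than the promised gap $\gamma^* = \frac{\rho^*}{256\sqrt{r}}\cdot\binom{n-1}{k-1}^{1/2}/(64\log r)^{k-1}$ is based on the claim that $\binom{n}{k}\ge r$ forces $n$ polynomially larger than $\sqrt{r}$.  That claim is false as soon as $k\ge 3$ (we only get $n\ge r^{1/k}$), and in addition $\gamma^*$ itself is typically much smaller than $1/\sqrt{r}$ because of the $(64\log r)^{-(k-1)}$ factor, so even ``$n\gg\sqrt r$'' would not close the gap.  The comparison that actually needs to be made is $\gamma^* n = \alpha\sqrt n$ versus a constant, which can be lower-bounded using $\alpha \ge \frac{\rho^*\sqrt{k}\log r}{256\cdot 64^{k-1}}$ together with the bound $n\ge k\log(r)^2$ coming from the \emph{minimality} of $k$ --- an ingredient you never invoke.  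The paper instead sidesteps this entirely: it estimates via a uniformly random \emph{half}-subset $\bE_z\subseteq\bS_z\rmv{h}$ and invokes Hoeffding (\cref{fact:Hoeffding}) so that the estimator's fluctuation is tuned to scale $\alpha/\sqrt{n} = \gamma$ directly, with exponentially small failure probability $\exp(-\Theta(\alpha^2))$; this is what $\alpha\ge\Omega(\log r)$ is really needed for in \cref{claim:singleton:err}, not the comparison $1/n$ versus $\gamma^*$.  Your closing remarks about controlling the coupling of $H$ with the abort event via Hoeffding and a variance term are not developed enough to assess, but as stated they do not supply the missing quantitative comparison.
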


\subsubsection{The Attack}

We start with  a high-level overview of the attack. The adversary biasing  a party    $h\in \H$, to be chosen at random, towards zero is defined as follows (the attack biassing toward one is defined analogously). Before the protocols starts, the adversary samples \emph{half} of the tuples in $\bS_1$ and $\bS_0$ not containing $h$, denoted $\bE_1 $ and $\bE_0$ respectively.  Upon receiving  the \ith message from  $h$, it computes the difference between the average backup values of the tuples in $\bE_1 $ and $\bE_0$, denoted $B_i^{\bE_1 } - B_i^{\bE_0}$. If the resulting quantity is above  $3\gamma/4$,  it aborts all parties except the unique tuple in $\bS_0(h)$.\footnote{The choice of $\gamma$ is dictated by the magnitude of the gap between $B_i^{\bS_1}$ and $B_i^{\bS_0}$ as stated in Assumption \eqref{eq:lemma:approxgap}.} For  the attack to go through, it is required that  $B_i^{\bE_z }$ and $B_i^{\bS_z }$  are not too far apart. Thankfully, standard concentration bounds guarantee that to be the case.

The formal description of the attack is given below.
\begin{algorithm}[The singletons attacker $\ApproxAttack$] \label{alg:SingeltonAttack}~ 
	\item Parameters: tuple subsets $\bS_1,\bS_0\subseteq {\np \choose k }$, $z\in  \zo$, honest party  $h\in [\np]$ and $\gamma\in [0,1]$.
	\item Description:
	\begin{enumerate} 
		\item For $z\in \zo$, let $\bE_z \subseteq\bS_z\rmv{h}$ denote random subset of size $\size{\bS_z}/2$.
		\item For $i=1,\ldots, r$: 
		\begin{enumerate}
			\item Upon receiving the \ith-round messages of $h$, compute $B_i^{\bE_1}$ and $B_i^{\bE_0}$.
			\item If $B_{i }^{\bE_1}- B_{i }^{\bE_0}  > 3\gamma/4$, without sending their messages for round $i$, abort all parties except the unique random tuple in $\bS_z(h)$.
			\begin{itemize}
			\item[--] The remaining corrupted parties are instructed to terminate the protocol honestly. 
			\end{itemize} 
		\end{enumerate}
	\end{enumerate} 
\end{algorithm}

\noindent 
Let $\ApproxAttack(\bS_1, \bS_0, z, h,\gamma)$ denote the singletons attacker with parameters $\bS_1, \bS_0, z, h$. We refer to the round in which the adversary instructs some parties in its  control  to abort as  the \emph{aborting round}, set to $r$ is not abort happen.

\subsubsection{Success probability of \cref{alg:SingeltonAttack}}  Let $H$ be a uniform element of  $\H$. Let  $\gamma =  \alpha/\sqrt{\np}$ letting $\alpha=  \frac{\rho^*}{256\sqrt{r}}\cdot\frac{\sqrt{n}\cdot {\np-1 \choose k-1}^{1/2}}{ \left (64\log(r)\right )^{k-1}} $. We show that either $\Ac_1(H)=\ApproxAttack(\bS_1, \bS_0, 1, H,\gamma)$ or $\Ac_0(H)=\ApproxAttack(\bS_1, \bS_0, 0, H,\gamma)$ succeeds in obtaining the bias of \cref{lemma:SingeltonAttack}. Let $J$ denote the smallest round $i$ such that $B_i^{\bE_1}-B_i^{\bE_0}\geq 3\gamma/4$, and $J=r$ if no such round exists. Furthermore, define $\beta=\frac{1}{2\rho^*\log(r)}\cdot   \frac{64^{-k+1} }{  {\np -1\choose k-1}^{1/2} } $,  let $G_{r,\alpha}$ denote the event $\max_i \set{B_i^{ \bS_1 }-B_i^{\bS_0}}\geq \alpha/\sqrt{\np}$, let $E_{r,\alpha}$ denote the event $(\max_i \set{\size{B_i^{ \bE_1 }-B_i^{\bS_1}}}\geq \alpha/8\sqrt{\np}) \lor (\max_i \set{\size{B_i^{ \bE_0 }-B_i^{\bS_0}}}\geq \alpha/8\sqrt{\np})$. \cref{lemma:SingeltonAttack} follows from \cref{claim:singleton:gap,claim:singleton:err}.

\begin{claim}\label{claim:singleton:gap}
	$\pr{J\neq r\mid G_{r,\alpha} \land \neg E_{r,\alpha}} = 1$.
\end{claim}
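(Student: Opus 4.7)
The claim is a deterministic (worst-case) statement about the joint occurrence of two events, so the plan is entirely a triangle-inequality argument; there is essentially no obstacle.

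My plan is as follows. Condition on $G_{r,\alpha} \land \neg E_{r,\alpha}$. By the definition of $G_{r,\alpha}$, there exists some index $i^\ast \in [r]$ with
\[
B_{i^\ast}^{\bS_1} - B_{i^\ast}^{\bS_0} \;\geq\; \alpha/\sqrt{n} \;=\; \gamma .
\]
By the definition of $\neg E_{r,\alpha}$, for every $i \in [r]$ and both $z \in \{0,1\}$ we have
\[
\bigl|\,B_i^{\bE_z} - B_i^{\bS_z}\,\bigr| \;<\; \alpha/(8\sqrt{n}) \;=\; \gamma/8 .
\]
Applying this at $i = i^\ast$ and using the triangle inequality, I then obtain
\[
B_{i^\ast}^{\bE_1} - B_{i^\ast}^{\bE_0} \;>\; \bigl(B_{i^\ast}^{\bS_1} - B_{i^\ast}^{\bS_0}\bigr) - 2\cdot \gamma/8 \;\geq\; \gamma - \gamma/4 \;=\; 3\gamma/4 .
\]

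Thus, at round $i^\ast$, the aborting condition in \cref{alg:SingeltonAttack} ($B_i^{\bE_1} - B_i^{\bE_0} > 3\gamma/4$) is satisfied. Consequently the adversary aborts no later than round $i^\ast \le r-1$ (or, more accurately, the stopping time $J$ as defined is the \emph{smallest} such index and is therefore well-defined and strictly less than $r$, unless $i^\ast = r$, in which case the abort would have been triggered even earlier at $i^\ast$ since the condition must have held). Either way, $J \neq r$, which yields $\pr{J\neq r \mid G_{r,\alpha} \land \neg E_{r,\alpha}} = 1$, as required. The companion claim, to be proved next, will upper bound the probability of $E_{r,\alpha}$ via Hoeffding's inequality (\cref{fact:Hoeffding}) over the random choice of the subsets $\bE_1, \bE_0$, which combined with \cref{eq:lemma:approxgap} will give a lower bound on $\pr{J \neq r}$.
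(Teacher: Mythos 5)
Your proof is correct and follows the same triangle-inequality argument as the paper: under $\neg E_{r,\alpha}$ the quantity $B_i^{\bE_1}-B_i^{\bE_0}$ stays within $\gamma/4$ of $B_i^{\bS_1}-B_i^{\bS_0}$, so a $\gamma$-gap in the latter forces a $3\gamma/4$-gap in the former at the same index, triggering the abort. One small wrinkle: your parenthetical about the case $i^\ast=r$ is muddled (if the condition held only at round $r$, then $J=r$ by definition); the clean resolution, which neither you nor the paper spells out, is that $B_r^{\bS_1}=B_r^{\bS_0}$ is the common output, so $G_{r,\alpha}$ can only be witnessed at some $i^\ast<r$.
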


\begin{claim}\label{claim:singleton:err}
	$\pr{E_{r,\alpha}}\leq 4r\cdot\exp(-\alpha^2/192)\leq \frac{1}{r}$, for $r$ large enough. 
\end{claim}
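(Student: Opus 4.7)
The plan is a routine union-bound--plus--concentration argument, driven by Hoeffding's inequality (Fact~\ref{fact:Hoeffding}). I would first condition on a fixed transcript $T$ of $\Pi$ and on a fixed choice of the corrupted party $h\in\H$; under this conditioning every backup value $\{\backup(\cU,i)\}_{\cU\in\bS_z\rmv{h},\,i\in[r]}$ is a deterministic element of $\zo$, and the only remaining randomness is the uniform choice, for each $z\in\zo$ independently, of $\bE_z\subseteq \bS_z\rmv{h}$ of size $\size{\bS_z}/2$.

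Next, for each pair $(z,i)\in\zo\times[r]$, I would split via the triangle inequality:
\[
\size{B_i^{\bE_z} - B_i^{\bS_z}}\ \le\ \size{B_i^{\bE_z} - B_i^{\bS_z\rmv{h}}}\ +\ \size{B_i^{\bS_z\rmv{h}} - B_i^{\bS_z}}.
\]
The second (deterministic) term is easy: since $\size{\bS_z(h)}=1$ and backup values lie in $[0,1]$, removing the one tuple containing $h$ changes the mean by at most $1/\size{\bS_z}\le 3/n$, which is below $\alpha/16\sqrt n$ for $r$ large enough. For the first term I would apply Fact~\ref{fact:Hoeffding} to the population $\{\backup(\cU,i)\}_{\cU\in\bS_z\rmv{h}}$ of size $\size{\bS_z\rmv{h}}\ge n/3-1$ (the discrepancy between $\size{\bE_z}=\size{\bS_z}/2$ and the ``exact half'' in the statement of the fact is a sub-constant factor absorbed into the constant below), choosing the error parameter $\eps$ so that $\eps/\sqrt{\size{\bS_z\rmv{h}}}=\alpha/16\sqrt n$. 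Using $\size{\bS_z\rmv{h}}\ge n/3-1$ gives $\eps^2\ge \alpha^2/192$ for $n$ large enough, so the per-$(z,i)$ tail probability is at most $2\exp(-\alpha^2/192)$.

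Finally, a union bound over the $2r$ events indexed by $(z,i)\in\zo\times[r]$ yields $\pr{E_{r,\alpha}\mid T,h}\le 4r\exp(-\alpha^2/192)$, and averaging over $T$ and $h$ preserves this bound, establishing the first inequality. For the second inequality $\le 1/r$, I would use the minimality of $k$ in the form $\binom{\np-1}{k-1}=\frac{k}{\np}\binom{\np}{k}\ge \frac{k}{\np}\,r\log(r)^{2k}$; plugging into the definition $\alpha=\frac{\rho^*\sqrt n\,\binom{\np-1}{k-1}^{1/2}}{256\sqrt r\,(64\log r)^{k-1}}$ and simplifying gives $\alpha^2=\Omega\!\bigl(\log(r)^{2}/64^{2(k-1)}\bigr)$, whence $\alpha^2/192 \ge 2\log r + \log 4$ for $r$ large enough and thus $4r\exp(-\alpha^2/192)\le 1/r$.

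The only non-trivial point is bookkeeping: getting the constant $192$ while juggling the mismatch between $\size{\bE_z}=\size{\bS_z}/2$ and $\size{\bS_z\rmv{h}}=\size{\bS_z}-1$ and the $1/\size{\bS_z}$ deterministic slack. Both are sub-constant and absorbed by taking $r$ large enough, so the cleanest route is to leave enough room in the $\alpha/16\sqrt n$ allowance and in the exponent constant to swallow them.
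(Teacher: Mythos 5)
Your proof takes essentially the same route as the paper: a union bound over the $2r$ events indexed by $(z,i)$, combined with Hoeffding's inequality for a random subset of the population $\bS_z\rmv{h}$. The difference is that you make explicit several bookkeeping points that the paper elides entirely — namely, that $\bE_z$ is drawn from $\bS_z\rmv{h}$ rather than $\bS_z$ (handled via your triangle-inequality split), and that $\size{\bE_z}=\size{\bS_z}/2$ is slightly more than half of $\size{\bS_z\rmv{h}}=\size{\bS_z}-1$, whereas \cref{fact:Hoeffding} is stated for an exactly-half subset. The paper simply says ``apply union bound and Hoeffding's inequality'' and records $4r\exp(-\alpha^2/192)$, with $192$ coming from $\eps/\sqrt{n/3}=\alpha/8\sqrt{n}$, i.e.~the deterministic $O(1/n)$ shift is silently absorbed; your treatment is the more honest version of the same calculation.

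One small slip to fix: with your even $\alpha/16\sqrt{n}$ split, the Hoeffding parameter becomes $\eps=\alpha\sqrt{\size{\bS_z\rmv{h}}}/(16\sqrt{n})$, which gives $\eps^2\approx\alpha^2/768$, not $\alpha^2/192$ as you wrote in that sentence. To actually recover the stated constant you should do what you suggest in your last paragraph — allocate only the sub-constant slack $3/n$ to the deterministic term rather than half the budget. This is purely cosmetic, since both the paper's bound and your conclusion only use the fact that the exponent is $\Omega(\log(r)^2)$ and hence dominates $\log(4r^2)$ for $r$ large enough, so the claim goes through either way.
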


\noindent We prove \cref{lemma:SingeltonAttack} assuming the two claims above.

\begin{proof}[Proof of \cref{lemma:SingeltonAttack}]
First we observe that, under adversary $\Ac_z(H)$, the honest party's output is identically distributed with $B^{\bS_z(H)}_{J}$. Thus, like in the proof of \cref{lemma:LapAttack}, it suffices to lower-bound $\ex{B^{\bS_1(H)}_{J}-B^{\bS_0(H)}_{J}}$. By the choice of $\alpha$ and $\beta$ and \cref{eq:lemma:approxgap} of \cref{lemma:SingeltonAttack}, it holds that $\pr{G_{r,\alpha}}\geq \beta$. Consequently,
\begin{align*}
	\ex{B_J^{\bS_1(H)}-B_J^{\bS_0(H)} } &\geq \ex{B_J^{\bS_1(H)}-B_J^{\bS_0(H)}\mid G_{r,\alpha} \land \neg E_{r,\alpha} }\cdot\pr{G_{r,\alpha}\land\neg E_{r,\alpha}} - \pr{E_{r,\alpha}}\\
	& \geq  \left (\ex{B_J^{\bE_1}-B_{J}^{\bE_0}\mid G_{r,\alpha} \land \neg E_{r,\alpha} } - \frac{\alpha}{4\sqrt{\np}}\right )\cdot \pr{G_{r,\alpha}\land\neg E_{r,\alpha}} - \pr{E_{r,\alpha}} \\
	& \geq \frac{1}{2}\cdot \frac{\alpha}{\sqrt{\np}} \cdot   \pr{G_{r,\alpha}}  - 2\cdot \pr{E_{r,\alpha}}\geq \frac{\alpha\beta}{2\sqrt{\np}} -2\cdot\pr{E_{r,\alpha}}\\
	& \geq \frac{1}{1024\sqrt{r}\log(r)}\cdot \left (\frac{1}{64^2\cdot \log(r)}\right )^{k-1} - \frac{2}{r}  .
\end{align*}
\end{proof}

\paragraph*{Proof of \cref{claim:singleton:gap}.} If $E_{r,\alpha}$ did not occur, then $B_i^{ \bE_1 }-B_i^{\bE_0}$ differs from  $B_i^{ \bS_1 }-B_i^{\bS_0}$ by at most $\frac{\alpha}{4\sqrt{\np}}$. If the latter is greater than $\alpha/\sqrt{\np}$, then the former is greater than $\frac{3\alpha}{4\sqrt{\np}}=3\gamma/4$.

\paragraph*{Proof of \cref{claim:singleton:err}.} By assumption, ${\np \choose k}\geq r\log(r)^{2k}$. It follows that 

\begin{align*}
	\alpha&= \frac{\rho^*}{256\sqrt{r}}\cdot\frac{\sqrt{k}\cdot {\np \choose k}^{1/2}}{ \left (64\log(r)\right )^{k-1}}  \geq \frac{\rho^*\sqrt{k}\cdot \log(r)}{2^{ 6k+8}}  .
\end{align*}
Thus, by noting that $\size{\H}\geq \np/3$, apply union bound and Hoeffding's inequality (\cref{fact:Hoeffding})\footnote{Hoeffding's inequality holds for any fixing of the random inputs, and thus it also holds over the probability space of those random inputs}, and deduce that
\begin{align*}
	\pr{E_{r,\alpha}}\leq 4r\cdot\exp(-\alpha^2/192)\leq 4r\cdot\exp(-2\log(2r))  ,
\end{align*}
where the last inequality holds for $r$ large enough, since $\exp(-\alpha^2/192)\le e^{O(-\log(r)^2)}$.

\subsection{Proof of \cref{lemma:UsefulNaget} }\label{section:nugget}

\begin{notation} The concatenation  of two tuple subsets $\bS_1, \bS_0 \subseteq 2^{[\np]}$,  denoted  $\bS_1 \lVert \bS_0$, is defined by $\set{\cU_1 \cup \cU_0 \colon\cU_1  \in \bS_1 , \cU_0  \in \bS_0}$.   
\end{notation}

\noindent 
For reference, we recall of the nugget \cref{def:UsefulNaget}.

\begin{definition}[Restatement of \cref{def:UsefulNaget}]\label{sef}
\NuggetDef
\end{definition}

Next we prove that any protocol admits a nugget.

\begin{proof}[Proof of \cref{lemma:UsefulNaget}]

	We prove the lemma by explicitly constructing the sets (in \cref{figure:nugget}). The algorithm stops as soon as it finds $\bS_1$, $\bS_0$ and $\cH$ such that $\max_i \size{B^{\bS_z(h)}_i - B^{\bS_z(h)}_i}$ is small with the probability specified by \cref{figure:nugget}, for every $z\in \zo$ and $h\in \cH$. Furthermore, if $k^\ast <k$, the construction guarantees that $\max_i \size{B^{\bS_1}_i - B^{\bS_0}_i}$ is large with probability specified by \cref{figure:nugget}. We verify that all the other technical requirements of \cref{lemma:UsefulNaget} are met. By construction, there exists $\cQ\subseteq \cP$ of size $(k-k^\ast-1)$, parties $p_1, p_0\in (\cP\setminus \cQ)$ and tuple set  $\cC\in \set{\cA_1,\cA_0}$, such that $\bS_z$ and $\H$ are of the form
	\begin{align*}	 
	\bS_z&= 
	\begin{cases} 
	 {\cP\choose k-1}\lVert{\cA_{z}\choose 1}  & \text{ if } \ks\in \set{k, k+1} \\
	\cQ \lVert \set{p_z} \lVert { \cP\setminus (\cQ \cup \set{p_z}) \choose k^\ast-1}\lVert {\cC\choose 1} & \text{ if } \ks\in \set{1,\ldots, k-1}\\ 
	\end{cases} \\
	\H&=  
	\begin{cases} 
	\cA_{0} & \text{ if } \ks\in \set{k+1} \\ 
	\P \setminus (\cQ\cup\set{p_1,p_0}) & \text{ if } \ks\in \set{2,\ldots, k}  \\
	\cC  & \text{ if } \ks=1 
	\end{cases}
	\end{align*} 
	
	It is easy to verify that the case $\ks= k+1$. For $\ks =1$, we remark that because $\bS_z$ is of the form $\cQ \lVert \set{p_z} \lVert {\cH \choose 1}$, for some fixed set of parties $\cQ$ and party $p_z$, it is immediate that $\size{\bS_z(h)}=1$, for every $h\in \cH$. It remains to verify the conditions for $\ks \in \set{2,\ldots,k}$. We remind the reader that $\ks\le k <\sqrt{\np}$. Clearly, for every $h\in \cH$, $\ppr{\cu\la \bS_{z}}{h\in \cu}=\frac{\ks-1}{\size{\H}}$ or $\frac{\ks-1}{\size{\H}+1}\leq \frac{1}{2}$, and $\ppr{\cU\la \bS_{z}}{h\in \cU}=\ppr{\cU\la \bS_{z'}}{h \in \cU}$. Furthermore, 
	\begin{align*}
		\dfrac{\ppr{u\la \bS_{z}}{h\notin u}}{\ppr{u\la \bS_{z}}{h\in u}} &\geq \dfrac{1-(\ks-1)/\size{\H}}{(\ks-1)/\size{\H}}  = \dfrac{\size{\H}}{\ks-1} -1\\
		& \ge \dfrac{n/3 - (k-\ks-1) -2}{\ks-1}-1 = \dfrac{n/3 - k+\ks-1}{\ks-1} -1\\
		& \ge \frac{1}{4}\cdot \dfrac{n  - k+\ks-1}{\ks-1}  ,
	\end{align*} 
	where the last follows from $n/3\ge 4k +1$, for $n$ large enough since $\sqrt{n}>k$. Finally, $\ppr{h\la \H, \cU\la \bS_z(h)}{\cU=\cU'}=\ppr{\cU\la \bS_z}{\cU=\cU'}$  follows immediately from the definition $\bS_z$ and $\H$.
\end{proof}


\begin{figure} [H]
	
	\vspace{.1in}
	
		Let $\cA_1,\cA_0, \cP\subset [\np]$ denote an arbitrary equal-size partition of $[\np]$  (\ie $\cA_1$, $\cA_0$ and $\cP$ are pairwise disjoint and ${\cA_1}\cup{\cA_0}\cup{\cP}=[\np]$,  \wlg  $\np$ is a multiple of $3$). \smallskip
		
		 Define $\ks\in [k+1]$, $ \bS_{1} , \bS_{0}\subseteq {[\np]\choose k}, \H\subseteq [\np]$ and $\rho^\ast\in \Rng(r) $ by the following iterative process:
		\begin{enumerate}
			\item Let $\bS^{k+1}_{1} ={\cA_1\choose 1}\lVert  {\cP\choose k-1}$, $\bS_0^{k+1}={\cA_0\choose 1} \lVert  {\cP\choose k-1}$, $\H_{k+1}=\cA_0$.
			
			\item Let $\bS^{k}_{1} ={\cA_1\choose 1}\lVert  {\cP\choose k-1}$, $\bS_0^{k}={\cA_0\choose 1} \lVert  {\cP\choose k-1}$, $\H_k=\cP$, and $c^{k}_1=c^{k}_0=\emptyset$.
			
			\item If  $\exists\rho\in \Rng(r)$ such that	$	\pr{\max_{i\in [r]}\size{ B_i^{\bS^{k+1}_1} -  B_i^{\bS^{k+1}_0  }}\geq  \frac{\rho}{256\sqrt{r}} }\geq \frac{1}{2\rho\log(r)}$:\label{figure:item:findrho}
			 
			 \begin{enumerate}
			  \item Set $\rho_{k}=\rho$.

				 \item  For $\ell=k,\ldots, 2$:   
				 \begin{itemize}
					 \item[]\hspace{-0.27in} If  $\exists  z\in \{1,0\}$, $h, h'\in \H_{\ell}\setminus c^{\ell}_1\cup c^{\ell}_0$,  $\rho\in \Rng(r) $ such that \label{figure:item:goDOWN}
					 \begin{align*}
				 	\hspace{-0.1in}	&\pr{\max_{i\in [r]}\size{ B_i^{\bS_z^{\ell}(h)} \! - \! B_i^{\bS_z^{\ell}(h')}}\geq  \frac{\rho}{ 256\sqrt{r}} \cdot \dfrac{ \coef{\ell-1} ^{1/2}}{\left (64\log(r)\right)^{k-\ell+1}} }   \geq \frac{1}{2\rho\log(r)}\cdot \frac{64^{-k+\ell-1}}{ \coef{\ell-1}^{1/2} }
				 	\end{align*} 
				 	
				 	\hspace{-0.18in}  define:
				 	\begin{enumerate}
			 			\item $\bS^{\ell-1}_{1}=\bS^{\ell}_{z}(h)$, $\bS^{ \ell-1}_{0}=\bS^{ \ell}_{z}(h')$,
			 			\item $\H_{\ell-1}=\H_\ell\setminus c^\ell_z$,
			 			\item $c^{\ell-1}_1=\set{h}$ and $c^{\ell-1}_0=\set{h'}$,
			 			\item $\rho_{\ell-1}=\rho$.
				 	\end{enumerate}  
					 \item[] \hspace{-0.27in}  Else, define $\ks= \ell$, $\rho^*=\rho_{\ell}$, $(\bS_1,\bS_0)=(\bS_1^\ell,\bS_0^\ell)$ and  $\H=\H_{\ks} \setminus c_{\ks}^1\cup c_{\ks}^0$.
	 			\end{itemize}
	 	
	 	         \item If $\ks$ was  not  assigned, set $\ks=1$, $\rho^\ast=\rho_1$, $(\bS_1,\bS_0)=(\bS_1^1,\bS_0^1)$, and let $\H=\cA_1$ if $\bS_1$ and $\bS_0$ are obtained as a concatenation of $\A_1$ with some other tuple set, and  $\H=\cA_0$ otherwise.
	 	         
	 	         \end{enumerate}
			 
			 Else, let $\ks=k+1$, $\rho^\ast=1$ and $(\bS_1,\bS_0,\H)=(\bS_1^{k+1},\bS_0^{k+1},\H_{k+1})$. 
		\end{enumerate} 
		\caption{The Nugget} \label{figure:nugget}
\end{figure}


 
\subsection{Uniform Adversaries} \label{sec:UniformAttack} 
In this section, we show how to replace the nonuniform adversary with a uniform one. For reference, we restate our main theorem.
 
\begin{theorem}[Restatement of \cref{thm:Unimain}]
	\MainTheorem
\end{theorem}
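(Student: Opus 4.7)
The plan is to remove the advice string $\adv$ from the non-uniform adversary of \cref{thm:main} by showing that every piece of non-uniform information can be constructed on-line via sampling, using the given oracle access to $\Pi$'s next-message function. Inspecting the three attacks, the advice is used in exactly two places: (i) to select which branch of the nugget to attack (the index $\ks$, the tuple sets $\bS_1,\bS_0$, the party set $\H$, and the threshold parameter $\rho^\ast$ produced by the iterative construction of \cref{figure:nugget}); and (ii) in the case $\ks=k+1$, to store a table describing the game-value sequence $X_0,\ldots,X_r$ of \cref{eq:DefX}, as observed in \cref{remark:advice}. The attacks of \cref{lemma:LapAttack,lemma:SingeltonAttack} require only ingredient (i).

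For (i), the entire construction in \cref{figure:nugget} depends only on whether certain probabilities of the form $\pr{\max_{i\in[r]}\size{B_i^{\bS}-B_i^{\bS'}}\ge t}$ exceed given thresholds, for $O(n^{2k}\cdot r)$ many pairs $(\bS,\bS')$ and values $t\in \rho\cdot \Rng(r)$. Each such probability can be estimated within additive error $1/\poly(r,n^k)$ using $\poly(r,n^k)$ independent honest executions of $\Pi$ (invoked via the next-message oracle), recording $B_i^{\bS}(\Pi)$ in each execution and applying a Hoeffding/union bound. Choosing the estimation accuracy much smaller than $1/r\log r$, we can slightly inflate/deflate the right-hand sides appearing in \cref{def:UsefulNaget} so that the uniform algorithm's decisions coincide with those made by the ideal construction with overwhelming probability. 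This yields a uniform procedure $\mathsf{FindNugget}^\Pi$ that outputs a valid nugget $(\ks,\bS_1,\bS_0,\H,\rho^\ast)$ satisfying \cref{def:UsefulNaget} (up to negligible slack in the constants, which is absorbed into the $\Omega$-notation) with probability $1-\negl(r)$.

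For (ii), we use the iterative sampling construction sketched in \cref{sec:intro:CompDoob}. Process the rounds $i=1,\ldots,r$ in order, and for each round construct a mapping $\mu_i$ whose input is a $5$-tuple of the same form as the conditioning set in the definition of $X_i$. The value of $\mu_i$ on each input tuple is obtained by sampling $\poly(r,n^k)$ fresh honest executions of $\Pi$, rejecting those whose recorded values disagree with the input tuple (where the stored $X_{\le i-1}$-values are replaced by $\mu_{\le i-1}$-evaluations on the same execution), then rounding the empirical mean of $\out$ to the nearest multiple of $\delta=1/200r$. Define $\widehat{X}_i$ as the random variable obtained by applying $\mu_i$ to the quantities induced by a random execution. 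Crucially, $\widehat{X}_i$ is a function of $B_{\le i}^{\bS}$ only through the chain $\mu_1,\ldots,\mu_i$, so the conditional-expectation equality that defines the augmented-weak-martingale property is checked against $\mu_{i-1}$'s output rather than the true $X_{i-1}$; consequently errors do not compound and, provided each $\mu_i$ is $\delta$-accurate with probability $1-1/r^{10}$ (which Hoeffding grants with sample size $\poly(r,n^k)/\delta^2$), the sequence $\widehat{X}_0,\ldots,\widehat{X}_r$ is a \sosa $(\gamma,2\delta)$-weak martingale for some $\gamma=o(1)$, with overwhelming probability over the sampling randomness.

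With the $\mu_i$'s at hand, the uniform adversary runs exactly the martingale attack of \cref{alg:MartingalesAttack} using $\widehat{X}_i$ in place of $X_i$. The main (and only real) obstacle is to verify that the analysis of \cref{section:martsuccess} goes through unchanged: we now invoke \cref{claim:SOSAjumpsWHP} instead of \cref{claim:SOSAjumps}, losing only an additive $\gamma=o(1)$ term in the jump probability of \cref{eq:Xjump}; the proof of \cref{claim:mart:exp} uses the tower law on $\widehat{X}_i$'s defining conditioning set, which is still valid because $\widehat{X}_i=\mu_i(\cdot)$ is measurable \wrt that set; and \cref{claim:mart:err} is unaffected since it depends only on the input protocol $\Pi$. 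Combining the uniform nugget finder with the uniform martingale attack (for $\ks=k+1$) and with the already-uniform attacks of \cref{lemma:LapAttack,lemma:SingeltonAttack} (for $\ks\le k$), and taking a union bound over all sampling events, yields a uniform fail-stop adversary $\Ac$ achieving the bias of \cref{thm:main} with probability $1-\negl(r)$, and thus the same asymptotic bias in expectation. The running time is polynomial in $\np^k$ and in the running time of $\Pi$, as required.
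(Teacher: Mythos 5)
Your proposal follows essentially the same two-step decomposition as the paper's own proof in \cref{sec:UniformAttack}: (i) replace the non-uniform choice of nugget by a sampling-based estimator of the relevant tail probabilities, accepting an additive $\eps$-slack that can be driven negligibly small (the paper's $\Nugg$ algorithm), and (ii) replace the non-uniform game-value table by an iteratively sampled family of maps $\mu_1,\ldots,\mu_r$, with the key observation—identical to the paper's—that the augmented-weak-martingale property of $\widehat{X}$ is evaluated against the \emph{approximated} $\mu_{i-1}$ rather than the true $X_{i-1}$, so estimation errors do not accumulate, and one can then invoke \cref{claim:SOSAjumpsWHP} for $(\gamma,\delta)$-weak martingales in place of \cref{claim:SOSAjumps}. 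The proposal is correct and matches the paper's argument in all essential respects.
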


There are two barriers when   emulating the proof of the nonuniform case; the first one is  finding \emph{uniformly} the nugget and its parameters $k^*$ and $\rho^*$ according to \cref{def:UsefulNaget}.  The second barrier is mounting the ``martingale'' attack with a uniform variant of the game-value sequence $X_0,\ldots, X_r$, in the case that $k^*=k+1$. In \cref{sec:nuggfind}, we show that there is an algorithm $\Nugg$ that finds the nugget with some ``$\eps$-loss'', i.e.~the gap and similarity are guaranteed modulo $\eps$, with probability $1-e^{-1/\eps}$. That being said, because $\eps$ can be taken arbitrarily small, say $1/r^{1000}$, it bears no consequence to the analysis of our attacks. Thus, if $k^*\neq k+1$, we can already deduce a bias of magnitude $1/\sqrt{r}\log(r)^k$ for the uniform adversary setting. The case $k^*= k+1$ requires careful treatment because of the game-value sequence, and we address it  in \cref{sec:uniGV}.

\subsubsection{Finding the Right Nugget}\label{sec:nuggfind}

We show that there is an algorithm $\Nugg$ that finds the nugget with some ``$\eps$-loss'', i.e.~the gap and similarity are guaranteed modulo $\eps$, with probability $1-e^{-1/\eps}$.

\begin{proposition}
There exists an algorithm $\Nugg$ taking input a coin-tossing protocol $\Pi$ and a number $\eps\in (0,1)$ such that the following holds. With probability $1-e^{-1/\eps}$, $\Nugg(\Pi,\eps)$ outputs $(k^*, \rho^*, \bS_0, \bS_1,\cH)$ such that the following holds according to the value of $\ks$:
 \begin{description}
 	\item  [$\ks =1$:] ~
 	
  \begin{enumerate}
  	
  	\item  $\pr{\max_{i\in [r]}\size{ B_i^{\bS_1}-  B_i^{\bS_0}}\geq  \frac{\rho^\ast}{ 256\sqrt{r}} \cdot \frac{ \coef{\ks} ^{1/2}}{\left (64\log(r)\right)^{k-\ks}} } \geq \frac{1}{2\rho^\ast\log(r)}\cdot \frac{ 64^{-k+\ks}}{ \coef{\ks}^{1/2} } - \eps$. \label{eq:nugEPS}

 	\item $\H   \ge n/3$, $\size{\bS_1}=\size{\bS_0}=\size{\H}$, and $\size{\bS_z(h)}=1$ for every $h\in \H$ and $z\in  \zo$.
 	
 \end{enumerate}
 	
 		\item[$\ks \in \set{2,\ldots, k}$:] ~
 		
 \begin{enumerate}
 			
 \item Same as \cref{eq:nugEPS}  for  $\ks =1$.	 		
 		
 	\item  For  every $h, h' \in \H$, $z,z'\in \zo$,  $\cU'\in \bS_z$ and  $\rho\in \Rng(r)$:
 	
 	\begin{enumerate}
 		
 		\item 	  $$\pr{\max_{i\in [r]}\size{ B_i^{\bS_z(h)}-  B_i^{\bS_z(h')}}\geq  \frac{\rho}{ 256\sqrt{r}} \cdot \frac{ \coef{\ks-1} ^{1/2}}{\left (64\log(r)\right)^{k-\ks+1}} } \leq \frac{1}{2\rho\log(r)}\cdot \frac{ 64^{-k+\ks-1}}{ \coef{\ks-1}^{1/2} } + \eps  .$$

 		\item $\ppr{\cU\la \bS_{z}}{h\in \cU}=\ppr{\cU\la \bS_{z}}{h'\in \cU}\le \frac{1}{2}$.
 		
 		\item $\ppr{\cU\la \bS_{z}}{h\in \cU}=\ppr{\cU\la \bS_{z'}}{h \in \cU}$.
 		
 		\item $ \ppr{\cU\la \bS_{z}}{h\notin \cU}/\ppr{\cU\la \bS_{z}}{h\in \cU} \geq \frac{1}{4}\cdot \frac{n-k+\ks-1}{\ks-1}$.
 		
 		\item $\ppr{h\la \H, \cU\la \bS_z(h)}{\cU=\cU'}=\ppr{\cU\la \bS_z}{\cU=\cU'}$  
 		
 	\end{enumerate}
\end{enumerate}

  \item [$\ks= k+1$:]~
 	 \begin{enumerate}
 	 	
 	 	\item $\bS_1(h)=\emptyset$ for every $h\in \H$.
 	 	
 	 	\item  	$\ppr{h\la \H, \cU\la \bS_0(h)}{\cU=\cU'}=\ppr{\cU\la \bS_0}{\cU=\cU'}$  for every  $\cU'\in \bS_0$. 
 	 	
 	 	\item $\pr{\max_{i\in [r]}\size{ B_i^{\bS_0 }-  B_i^{\bS_1 }}\geq  \frac{\rho}{ 256\sqrt{r}}   } \leq \frac{1}{2\rho\log(r)} +\eps $ for every $\rho\in \Rng(r)$.
  \end{enumerate}
 \end{description} 
 The running time $\Nugg$ is polynomial in the running time of $\Pi$ and $1/\eps$, and it uses only oracle access to $\Pi$'s next-message function.
\end{proposition}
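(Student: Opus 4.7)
The plan is to emulate the iterative nugget construction of \cref{figure:nugget} uniformly, replacing each exact probability by an empirical estimate obtained from oracle calls to the next-message function of $\Pi$. The construction performs only polynomially many (in $\np^k$ and $r$) tests, each of the form ``$\pr{\max_{i\in [r]}\size{B_i^{\bS^\ell_z(h)} - B_i^{\bS^\ell_z(h')}} \geq \tau} \geq t$'' or the symmetric ``$\leq t$'' variant, and its output is determined entirely by the outcomes of these tests. Hence, if every test is estimated to additive accuracy $\eps/2$ with high probability, then running the same procedure on the estimates yields a tuple satisfying the $\eps$-relaxed conditions in the proposition.

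First I would enumerate every event that the construction can possibly query. Inspecting \cref{figure:nugget}, each event is indexed by the fixed tripartition $(\cA_1, \cA_0, \cP)$ of $[\np]$, a subset $\cQ \subseteq \cP$ of size at most $k-1$, parties $p_0, p_1 \in \cP \setminus \cQ$, a distinguished tuple-set $\cC \in \set{\cA_0, \cA_1}$, a direction $z \in \zo$, a pair $h, h' \in \cP \setminus (\cQ \cup \set{p_0, p_1})$, and a scale $\rho \in \Rng(r)$, giving a total count of $\np^{O(k)}\cdot r^2$. For each such event $E$, I would draw $N = \Theta(\eps^{-3}\log(\np^k r))$ independent random honest executions of $\Pi$ through the next-message oracle; for each execution I would compute every backup value $B_i^{\bS}$ by re-running the oracle on the tail of the execution in which $[\np]\setminus \cU$ aborts at round $i$ and the remaining parties continue honestly. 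Letting $\hat\Pr[E]$ denote the empirical frequency of $E$ across the $N$ samples, Hoeffding's inequality applied to each event together with a union bound over all $\np^{O(k)}\cdot r^2$ events guarantees that every estimate is within $\eps/2$ of its true value simultaneously with probability at least $1 - e^{-1/\eps}$.

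Conditioned on accurate estimates, I would then execute \cref{figure:nugget} verbatim, replacing each test ``$\Pr[E] \geq t$'' by ``$\hat\Pr[E] \geq t - \eps/2$'' and symmetrically for the complementary tests. The output $(\ks, \rho^\ast, \bS_0, \bS_1, \cH)$ then satisfies all probability inequalities of \cref{def:UsefulNaget} up to an additive $\eps$, as claimed. Crucially, every remaining combinatorial requirement---the bound $\size{\cH} \geq \np/3$, the singleton condition $\size{\bS_z(h)} = 1$ when $\ks = 1$, the identities $\ppr{\cU\la \bS_z}{h\in \cU} = \ppr{\cU\la \bS_{z'}}{h\in \cU}$, the ratio bound $\ppr{\cU\la \bS_z}{h\notin \cU}/\ppr{\cU\la \bS_z}{h\in \cU} \geq \tfrac{1}{4}\cdot\tfrac{\np - k + \ks - 1}{\ks - 1}$, and the $h$-symmetry identity $\ppr{h\la \cH, \cU\la \bS_z(h)}{\cU = \cU'} = \ppr{\cU\la \bS_z}{\cU = \cU'}$---is a function purely of the parametric form of the sets produced by \cref{figure:nugget}, and is therefore unaffected by the estimation noise.

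The main obstacle is the bookkeeping step of confirming that the number of distinct probability tests arising in \cref{figure:nugget} is polynomial in $\np^k$ and $r$; this is what keeps both the sample count and the failure probability tractable. Once this parametric enumeration is in place, the rest---computing each $B_i^{\bS}$ from the oracle, invoking Hoeffding, and running \cref{figure:nugget} on approximate tests---is essentially routine.
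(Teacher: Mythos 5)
Your proposal is correct and follows the same strategy as the paper's proof sketch: emulate the iterative construction of \cref{figure:nugget}, replacing each probability test with an empirical estimate from fresh oracle runs of $\Pi$, and control the cumulative estimation error via Hoeffding's inequality and a union bound over the polynomially many tests. Your write-up is somewhat more explicit than the paper's sketch (enumerating the $\np^{O(k)}\cdot r^2$ candidate events up front, giving a concrete sample count, and noting that the purely combinatorial conditions on $\bS_0,\bS_1,\cH$ are unaffected by estimation noise), but the underlying argument is the same.
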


\noindent 
The proof, sketched below,  follows standard approximation via sampling argument . 
\begin{proof}[Proof's sketch]
	
$\Nugg$ samples a random partition $\cA_1,\cA_0,\P$ of $[n]$ and simply follows the steps of \cref{figure:nugget} with the following caveats: \begin{itemize}
\item In \cref{figure:item:findrho}, $\Nugg$ approximates the value of $\pr{\max_{i\in [r]}\size{ B_i^{\bS^{k+1}_1} -  B_i^{\bS^{k+1}_0  }}\geq  \frac{\rho}{256\sqrt{r}} }$, by running $\Pi$ a number of $1/\eps^s$ times, for some constant $s$.
\item Similarly, in \cref{figure:item:goDOWN}, $\Nugg$ approximates the value of $\pr{\max_{i\in [r]}\size{ B_i^{\bS_z^{\ell}(h)} \! - \! B_i^{\bS_z^{\ell}(h')}}\geq  \frac{\rho}{ 256\sqrt{r}} \cdot \dfrac{ \coef{\ell-1} ^{1/2}}{\left (64\log(r)\right)^{k-\ell+1}} } $, by running $\Pi$ a number of $1/\eps^s$ times, for some constant $s$.
\end{itemize} 
For suitable choice of $s$, Hoeffding's inequality guarantees that, with probability $1-e^{-1/\eps}$, $\Nugg$'s approximation is within $\eps$ of the ``true'' value, at every step.  
\end{proof}

\newcommand{\Gu}{\widehat{G}}
\subsubsection{Computing the Game-Value Sequence}\label{sec:uniGV}
We now explain how to give  a uniform variant of the game-value sequence to mount a successful ``martingale'' attack. At the heart of the non-uniform attack is the sequence of random variable $X= (X_0,\ldots,X_r)$ defined as follows, \wrt to the sequence of backup values $B^\bS_{1},\ldots,B^{\bS}_{r}$ and output of the protocol $\out$. Recall function $g:[0,1]^3\times \zo\mapsto \zo$ defined by 
\begin{align} \label{eq:trigger:recall}
g(x,y,y',\aux )=&    
\begin{cases} 
\aux & \text{if }  |y-x|< 1/64\sqrt{r} \,\lor\, |y'-x|< 1/64\sqrt{r},\\ 
0 & \text{otherwise;}
\end{cases}
\end{align}

\begin{definition}[Restatement of \cref{eq:DefX}]
	\GameValue{eq:trigger:recall}
\end{definition}

The attacked used the following two properties of  $X$:
\begin{enumerate}
	\item $X_i \in \ex{\out\mid B^\bS_{i}, B^\bS_{i-1},X_{i-1}, \sum_{\ell< i } (X_{\ell} - X_{\ell-1})^2, G_{i-1}} \pm 1/200r$, and
	\item $\pr{\exists i \in [r] \colon \size{X_i - X_{i-1}} \ge 1/32 \sqrt{r} } \in \Omega(1)$.
\end{enumerate}
The first item holds by definition. The second item follows by the \sosa weak martingale property of $X$. Hence, to prove the uniform case, all we need to find is a  uniformly constructed sequence $\Xu = (\Xu_1,\ldots,\Xu_r)$ for which the above two properties hold.

We show how to construct a sequence that almost achieves the above properties, and still suffices for our purposes. Specifically, we show that with high probability (\ie $1- e^{-r}$) over a choice of some initialization randomness $\Mu$, there exists a a sequence of random variables $\Xu = (\Xu_1,\ldots,\Xu_r)$ where each $\Xu_i \in [0,1]$ is efficiently constructed from $B^{\bS}_{\le i}$, and the following holds for every $i\in [r]$: 

\begin{align}\label{def:Xu}
\pr{\Xu_i \notin \ex{\out\mid B^\bS_{i}, B^\bS_{i-1}, \Xu_{i-1}, \sum_{\ell< i } (\Xu_{\ell} - \Xu_{\ell-1})^2, \Gu_{i-1}} \pm 1/200r} \le 1/r^2
\end{align}
letting  $\Gu_i = g(\Xu_i, B^\bS_{i}, B^{\bS}_{i-1}, \Gu_{i-1})$.   Namely, $\Xu$ is close to being a \sosa weak martingales sequence. Fortunately, we prove that  such a sequence still  has a jump with constant probability. Specifically,  \cref{claim:SOSAjumpsWHP} yields that
\begin{align}\label{eq:XuHasJump}
\pr{\exists i \in [r] \colon \size{\Xu_i - \Xu_{i-1}} \ge 1/32 \sqrt{r} } \in \Omega(1)
\end{align}
It is easy to verify that the proof of \cref{lemma:MartingalesAttack} still goes through \wrt such a sequence. In the rest of this section we define   a uniformly constructed sequence $\Xu$ for which \cref{def:Xu} holds.

\begin{remark} We emphasize that our goal is to construct a sequence $(\Xu_1,\ldots,\Xu_r)$ satisfying \cref{def:Xu,eq:XuHasJump}, with little regard to how close it is to the ``real'' sequence $X_1,\ldots, X_r$. As mentioned in the introduction, because of the recursive nature of $X_1,\ldots, X_r$, approximating such a sequence may be hopeless. For further discussion, we refer the reader to \cref{sec:intro:martingales}. 
\end{remark}

\begin{notation}
Let $\cB$ and $\cD$ denote the sets $\supp(B_{\cdot}^\bS)$ and $\set{0,\frac{1}{(200r)^2}, \frac{2}{(200r)^2},\ldots , r}$, respectively. 
\end{notation}

\newcommand{\numiter}{r^{50}}

\begin{algorithm}[Algorithm $\allchi$ for constructing $\set{\Mu_i}_{i=0,\ldots, r}$] \label{algo:allchi}  
\item Parameters:   $\bS\subseteq {[n] \choose k}$ 
\item Description:
	\begin{enumerate}
		\item Sample $\set{(b_{i,1}^\ell,\ldots, b_{i,r}^\ell)\leftarrow (B^{\bS}_1,\ldots, B^{\bS}_r) }_{\substack{  i=0,\ldots, r \\ \ell=1,\ldots, \numiter}}$, by running $r\cdot \numiter$ instances of protocol $\Pi$ \Inote{ using the next message function of $\Pi$?}\Nnote{see edit}.
		\item For $i=0,\ldots,,  r$:

		\quad Compute $\Mu_i = \itchy\left (i ,   \set{\Mu_{j}}_{j<i}, \set{(b_{i,1}^\ell,\ldots, b_{i,r}^\ell)}_{\ell=1,\ldots, \numiter} \right )$.
		
	\end{enumerate} 
		\item Output: $\set{\Mu_{i} }_{i=0}^r$  \\
\end{algorithm}

\begin{algorithm}[Algorithm $\itchy$] \label{algo:chi} 
	\item  Parameters: $i\in [r]$, $\set{\Mu_{j}:\cB^2\times \cD^2\times \set{0,1}\mapsto \cD}_{j<i}$, $\set{(b_{1}^\ell,\ldots, b_{r}^\ell)}_{\ell=1,\ldots, \numiter}$ 
	\item Description:
	\begin{enumerate} 
		\item Set $\Mu_i= 0$ 
		\item For $\ell=1,\ldots, \numiter$  
		\begin{enumerate}  
		\item Set $\sigma_0^\ell=0$, $\tau_0^\ell=1$.
		\item For $j=1\ldots,   i-1  $, compute  
		\begin{enumerate}   
		\item $x_{j}^\ell=\Mu_{j}(b_{j}^\ell,b_{j-1}^\ell, x_{j-1}^\ell,\sigma_{j-1}^\ell,\tau_{j-1}^\ell )$,
		\item $\sigma_{j}^\ell= (x_{j}^\ell-x_{j-1}^\ell)^2 + \sigma_{j-1}^\ell$,
		\item $\tau_{j}^\ell=g(b_{j}^\ell,b_{j-1}^\ell,x_{j}^\ell,\tau_{j-1}^\ell)$.
		\end{enumerate}
		\end{enumerate}

		\item For every $\c=(b,b',x,\sigma,\tau)\in  \cB^2\times \cD^2\times \set{0,1}$, compute 
		\begin{enumerate}
		\item  $q_{\c}=\size{\set{\ell\in [\numiter]\,:\, ( b_i^\ell,b_{i-1}^\ell, x_{i-1}^\ell,\sigma_{i-1}^\ell,\tau_{i-1}^\ell)=(b,b',x,\sigma, \tau)}}$
		
		 $ p_{\c}=\size{\set{\ell\in [\numiter]\,:\, ( b_i^\ell,b_{i-1}^\ell,x_{i-1}^\ell, \sigma_{i-1}^\ell,\tau_{i-1}^\ell)=(b,b',x,\sigma, \tau)\land b_r^\ell=1}}$.
		\item If $q_\c\neq 0$,  set $ \Mu_{i}(b,b',x,\sigma,\tau) =\rnd{1/200r}\left ( p_{\c}/q_{\c}\right )$ . 
		\end{enumerate} 
	\end{enumerate}
	\item Output: $\Mu_i$.

\end{algorithm}  

Namely, \cref{algo:allchi}  operates in a sequence of $r$ iterations as follows. At iteration $i$, the algorithm outputs a function of $\mu_i$ such that $\mu_i(b,b',x,\sigma, \tau)$ approximates $\ex{\out\mid B^\bS_i=b,B^\bS_{i-1}=b', \hist_{\Mu_{<i}}(B^\bS_{<i})=(x,\sigma, \tau)}$, where $\Mu_{<i}=\Mu_{0},\ldots,\Mu_{i-1}$ corresponds to the functions constructed at the previous iterations and the function $\hist_{\Mu_{<i}}$ maps sequences $B^\bS_{<i}$ to $3$-tuples in $\cD^2\times \set{0,1}$. Intuitively, the function $\hist_{\Mu_{<i}}$ encodes an aggregated account of the sequence $B^\bS_{<i}$ consisting of -- the previous (approximated) game-value -- the sum of (approximated) squares -- and the ``trigger'' of the attack, i.e.~whether a backup value diverged significantly from the (approximated) game-value at any given round of the protocol.

\begin{remark}[Running time of $\allchi$] It is immediate to see that \cref{algo:chi} runs in time polynomial in the running time of $\Pi$ and $r$.
\end{remark}

Formally, for an output  $\Mu = \set{\Mu_i}_{i\in [r]}$ of $\allchi$, we define the sequence $\Xu^{\Mu} = (\Xu_0^{\Mu},\ldots, \Xu^{\Mu}_r)$ as follows.

\begin{definition}[$\Xu^{\Mu}$] \label{algo:Xseq} 
For  fixed value of $\Mu = \set{\Mu_i}_{i\in [r]}$ output  by  $\allchi$, the  sequence  $\Xu^{\Mu}= (\Xu^{\Mu}_0,\ldots, \Xu^{\Mu}_r )$ is defined as follows. Let $\Xu^{\Mu}_0 = 1/2$ . For $i\in [r]$, define $\Xu^{\Mu}_i$ such that 
\begin{align*} 
\Xu^{\Mu}_i &= \Mu_{i}(B^\bS_i,B^\bS_{i-1}, \Xu^{\Mu}_{i-1}, \sum_{j\le i-1}(\Xu^{\Mu}_{j}-\Xu^{\Mu}_{j-1} )^2, \Gu^{\mu}_{i-1})   
\end{align*}
where $\Gu^{\Mu}_i  =g(B^\bS_i,B^\bS_{i-1},\Xu^{\Mu}_i, \Gu^{\mu}_{i-1})$, letting $\Gu_{0}=1$ and $ g $ be defined according to \cref{eq:trigger:recall}.
\end{definition} 

We conclude the section by proving the following claim.
\begin{claim}
	With save but probability $O(e^{-r})$ over $\Mu  \la \allchi(\Pi, \bS)$, the following holds for every $i\in [r]$:
\begin{align} 
\pr{\Xu^{\Mu}_i \notin \ex{\out\mid B^\bS_{i},B^\bS_{i-1}, \Xu^{\Mu}_{i-1}, \sum_{\ell< i } (\Xu^{\Mu}_{\ell} - \Xu^{\Mu}_{\ell-1})^2, \Gu^{\Mu}_{i-1}} \pm 1/200r} \le 1/r^2.
\end{align} 
\end{claim}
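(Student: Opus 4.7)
The plan is to prove the claim by fixing the index $i\in[r]$, showing that the per-$i$ bound holds except with probability $e^{-\Omega(r)}/r$ over the random functions $\Mu$, and then union-bounding over the $r$ choices of $i$. The key observation that enables the analysis is the independence structure built into $\allchi$: inspection of \cref{algo:allchi} shows that the $\numiter$ sample-sequences $\set{(b^\ell_{i,1},\ldots,b^\ell_{i,r})}_\ell$ used at iteration $i$ are drawn from fresh executions of $\Pi$ and are therefore independent of the previously constructed functions $\Mu_{<i}$. This lets me condition on an arbitrary fixing of $\Mu_{<i}$, under which $\Xu^\Mu_j$ and $\Gu^\Mu_j$ for $j<i$ become deterministic functions of $B^\bS_{\le j}$ via the recursion in \cref{algo:Xseq}, so that the ``true'' conditional expectation $\Ex[\out\mid B^\bS_i,B^\bS_{i-1},\Xu^\Mu_{i-1},\sum_{\ell<i}(\Xu^\Mu_\ell-\Xu^\Mu_{\ell-1})^2,\Gu^\Mu_{i-1}]$ becomes a well-defined quantity (in the probability space of a fresh execution of $\Pi$) that $\Mu_i$ is estimating via the i.i.d.\ cell-wise samples $p_\c/q_\c$.

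The core technical step will be a standard Chernoff-plus-Hoeffding-plus-union-bound argument over the finite tuple space $\T = \cB^2\times \cD^2 \times \zo$, whose size $T = O(\np^{2k}r^{6})$ is polynomial in the protocol parameters. I will partition $\T$ into \emph{frequent} cells with $\pr{\c}\ge \delta$ and \emph{rare} cells, for the threshold $\delta = 1/(2r^2 T)$. For each frequent $\c$, Chernoff gives $q_\c \ge \numiter\cdot \delta/2$ except with probability $e^{-\Omega(\numiter\delta)}$; conditioning on this, Hoeffding gives $|p_\c/q_\c - \Ex[\out\mid \c]|$ small except with probability exponentially small in $\numiter\delta$. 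Since $\numiter = r^{50}$ vastly dominates any fixed polynomial, each failure probability is $\ll e^{-r}/(rT)$, and a union bound over the $T$ cells yields that $|\Mu_i(\c) - \Ex[\out\mid\c]| \le 1/200r$ holds simultaneously for \emph{every} frequent $\c$, except with probability $e^{-\Omega(r)}/r$ (the rounding $\rnd{1/200r}$ in \cref{algo:chi} being absorbed by choosing the sampling accuracy strictly finer than $1/200r$). For rare cells, no approximation guarantee is needed: their total mass is $\sum_{\c\text{ rare}}\pr{\c}\le T\delta \le 1/(2r^2)$, so the probability that a fresh execution falls into any rare cell (where $\Mu_i$ may be arbitrary) is already at most $1/2r^2$.

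Combining these two contributions yields that, except with probability $e^{-\Omega(r)}/r$ over $\Mu_{<i}$ and the $i$-th batch of samples, the bad event for index $i$ has probability at most $1/r^2$; union-bounding over $i\in[r]$ gives the global $O(e^{-r})$ bound on $\Mu$. The main obstacle I expect is arithmetic bookkeeping around the rounding granularity: the claim asserts slack exactly $1/200r$ while the rounding operator alone already contributes $1/200r$ of error, so to make the argument go through cleanly I plan to treat the $1/200r$ in the displayed error as a loose constant and sample/round to a finer grid (say $1/400r$), which is compatible with the polynomial time/sample budget and yields the inclusion in $\Ex[\out\mid\cdots]\pm 1/200r$ exactly as stated. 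All other steps (independence of batches, Chernoff/Hoeffding, the two-scale frequent/rare split, and the final union bound) are routine given the setup.
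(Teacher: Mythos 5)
Your proposal is correct and follows essentially the same route as the paper: per-cell Hoeffding concentration on the sampling estimates $p_\c,q_\c$, a union bound over rounds $i$ and cells $\c$ to obtain the global $e^{-r}$ failure probability, and a frequent/rare split of the cell space in which the rare cells account for the residual $1/r^2$ probability (the paper threshold being $\pr{Z_i^\Mu=\c}\ge \eps$ with $\eps=r^{-10}$, rather than your $\delta = 1/(2r^2T)$, but these are inessential numerical choices). The one point worth noting is that you correctly observe that the rounding $\rnd{1/200r}$ in the construction of $\Mu_i$ already costs the full $\pm 1/200r$ budget, so strictly speaking the paper's own argument only yields $\pm(1/200r + 3\eps)$; your proposed fix of sampling and rounding on a finer grid is a valid (minor) tightening of what the paper states loosely.
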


\begin{proof} Let $\eps=r^{-10}$. For conciseness, write $Z_i^\Mu$ for the $5$-tuple $(B^\bS_{i},B^\bS_{i-1}, \Xu^{\Mu}_{i-1}, \sum_{\ell< i } (\Xu^{\Mu}_{\ell} - \Xu^{\Mu}_{\ell-1})^2, \Gu^{\Mu}_{i-1})$ and notice that $\size{\supp(Z_i^\Mu)}\le \size{\cB^2\times \cD^2\times \set{0,1}} \le r^{8}$. Using the notation from  \cref{algo:chi}, for every $i\in [r]$ and $\c\in \supp(Z_i)$, it holds that
\begin{align*}
\ppr{\Mu_i}{\size{ \pr{\out=1\land  Z_i^\Mu=\c} - \frac{p_{\c}}{\numiter}}\ge \eps^2}&\le 2\cdot \exp(-2\cdot \numiter\cdot \eps^4),\\
\ppr{\Mu_i}{\size{ \pr{ Z_i^\Mu=\c} - \frac{q_{\c}}{\numiter}}\ge \eps^2}&\le 2\cdot \exp(-2\cdot \numiter\cdot \eps^4) .
\end{align*}
Both inequalities follow by Hoeffding's inequality. Consequently, for every $\c\in \supp(Z_{\cdot}^\Mu)$, we deduce that
\begin{align*} 
\ppr{\Mu_0,\ldots, \Mu_r}{\exists  i\in [r] :\size{ \pr{\out=1\land  Z_i^\Mu=\c} - \frac{p_{\c}}{\numiter}}\ge \eps^2 \lor \size{ \pr{ Z_i^\Mu=\c} - \frac{q_{\c}}{\numiter}}\ge \eps^2 }&\le e^{-r}
\end{align*}
Hereafter, we fix a mapping $\Mu=(\Mu_0,\ldots, \Mu_r)$ satisfying, for every $i\in [r]$ and $\c\in \supp(Z_i)$, 
\begin{align}
\size{ \pr{\out=1\land  Z_i^\Mu=\c} - \frac{p_{\c}}{\numiter}}< \eps^2, \nonumber \\ 
\size{ \pr{ Z_i^\Mu=\c} - \frac{q_{\c}}{\numiter}}< \eps^2. \label{eq:goodmap}
\end{align} 
Next, we fix $i$ and $\c$ such that $\pr{ Z_i^\Mu=\c}\ge \eps$. Using the fact that $1/(1+x)\in 1\pm 2x$, for small enough $x$, we deduce that 
\begin{align}
	\frac{p_{\c}}{q_{\c}}&\in \frac{\pr{\out=1\land  Z_i^\Mu=\c}\pm \eps^2}{\pr{    Z_i^\Mu=\c}\pm \eps^2 } \\
	& \in \frac{\pr{\out=1\land  Z_i^\Mu=\c}\pm \eps^2}{\pr{  Z_i^\Mu=\c} } \pm 2\cdot \eps\nonumber\\
	&\in  \ex{\out\mid Z_i^\Mu=\c } \pm 3\eps\nonumber
	\end{align}
	Our choice of $\eps$ yields that $\Xu^\mu(\c)=\rnd{1/200r}(p_{\c}/q_{\c})\in \ex{\out\mid Z_i^\Mu=\c } \pm 1/200r$. We conclude by noticing that the probability of running into an element $\c$ at round $i$ such that $\pr{ Z_i^\Mu=\c}< \eps$ is bounded above by $\eps \cdot\size{\supp(Z_i^\Mu)}\le \frac{1}{r^2}$.
\end{proof}

\bibliographystyle{abbrvnat}
\bibliography{crypto}

\appendix

\newcommand{\wh}[1]{\widehat{#1}}

\section{Missing Proofs}\label{sec:MissinProofs}

\begin{proof}[Proof of \cref{fact:laplace}]
 We distinguish four cases, depending on the signs of $\gamma$ and $\gamma'$.
	
	\paragraph*{Case 1. $(\gamma \geq 0, \gamma'\geq 0)$.} $p/p'=\frac{\frac{1}{2}\cdot e^{-\gamma}}{\frac{1}{2}\cdot e^{-\gamma+\eps}}= e^{-\eps}\in 1\pm 2\eps$.
	
	\paragraph*{Case 2. $(\gamma \geq 0, \gamma'< 0)$.}  $p/p'=\frac{\frac{1}{2}\cdot e^{-\gamma}}{1-\frac{1}{2}\cdot e^{\gamma-\eps}}= \frac{1}{2e^{\gamma}-e^{2\gamma-\eps}}$. Since $\gamma\geq 0$ and $\gamma-\eps< 0$, it follows that $0\leq \gamma\leq \eps <1 $ and thus $-\eps< \gamma-\eps<\eps$. Thus $\frac{1}{2e^{\gamma}-e^{2\gamma-\eps}}=e^{-\eps}\cdot \frac{1}{2e^{\gamma-\eps}-e^{2(\gamma-\eps)}}\in e^{\eps}\cdot (1\pm \eps^2)\in 1\pm 5\eps$.
	
	\paragraph*{Case 3. $(\gamma < 0, \gamma'\geq 0)$.} $p/p'=\frac{1-\frac{1}{2}\cdot e^{\gamma}}{\frac{1}{2}\cdot e^{-\gamma+\eps}}= 2\cdot e^{\gamma-\eps} - e^{2\gamma-\eps} $. Similarly to the previous case, since $\gamma< 0$ and $\gamma-\eps\geq 0$, it follows that $0> \gamma\geq \eps >-1 $ and thus $\eps<\gamma<-\eps$. Thus $2\cdot e^{\gamma-\eps} - e^{2\gamma-\eps}=e^{-\eps}\cdot (2e^{\gamma}-e^{2\gamma})\in e^{-\eps}\cdot (1\pm \eps^2)\in 1\pm 5\eps$.
	
	\paragraph*{Case 4. $(\gamma < 0, \gamma'< 0)$:} $p/p'=\frac{1-\frac{1}{2}\cdot e^{\gamma}}{1-\frac{1}{2}\cdot e^{\gamma-\eps}}= \frac{1-\frac{1}{2}\cdot e^{\gamma'-\eps'}}{1-\frac{1}{2}\cdot e^{\gamma'}}=\frac{1-\frac{1}{2}\cdot e^{\gamma'} \cdot e^{-\eps'}}{1-\frac{1}{2}\cdot e^{\gamma'}}$. Let $\mu=1-\frac{1}{2}\cdot e^{-\gamma'}$ and notice that $\mu\in [1/2, 1]$. Compute $
	\frac{1-\frac{1}{2}\cdot e^{\gamma'-\eps'}}{\mu}= 1+ \frac{1-\mu}{\mu} - \frac{1-\mu}{\mu} \cdot e^{-\eps'}\in 1+ \frac{1-\mu}{\mu} - \frac{1-\mu}{\mu} \cdot (1\pm 2\eps)
	\in 1 \pm 2\eps$. 
\end{proof}
\begin{lemma}\label{fact:basic3}
	Consider an iterative sequence of $r$ independent Bernoulli trials, where the success probability of the \ith trial is $p_i\in [0,1]$. Assume that $p_r=1$. For $i\in[r]$, let $q_i=  p_i\cdot \prod_{j<i}(1-p_j)$ be the probability of the first success occurring in the \ith trial. It holds that $\sum_{i=1}^{r} q_i\cdot (\sum_{j\leq i} p_j)=1$.
\end{lemma}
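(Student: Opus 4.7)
The plan is to interpret the left-hand side probabilistically. Let $T$ be the round at which the first success occurs; then $\pr{T=i} = q_i$ by definition, and since $p_r=1$ we have $T\le r$ with probability one, so $\sum_{i=1}^r q_i = 1$. The quantity $\sum_{i=1}^r q_i \cdot \sum_{j\le i} p_j$ is then precisely the expectation $\ex{\sum_{j\le T} p_j}$, i.e.\ the expected value of the partial sum of success probabilities evaluated at the (random) first-success round.

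To evaluate this expectation, I would swap the order of summation and recognize the resulting tail sum:
\begin{align*}
\sum_{i=1}^r q_i \sum_{j=1}^i p_j \;=\; \sum_{j=1}^r p_j \sum_{i=j}^r q_i \;=\; \sum_{j=1}^r p_j \cdot \pr{T\ge j}.
\end{align*}
The key observation is then that $\pr{T\ge j}$ is exactly the probability that all of the first $j-1$ trials fail, i.e.\ $\pr{T\ge j} = \prod_{k<j}(1-p_k)$. Hence $p_j\cdot \pr{T\ge j} = p_j\prod_{k<j}(1-p_k) = q_j$, and combining the two observations gives
\begin{align*}
\sum_{i=1}^r q_i \sum_{j\le i} p_j \;=\; \sum_{j=1}^r q_j \;=\; 1,
\end{align*}
where the last equality again uses $p_r=1$.

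There is essentially no real obstacle here; the proof is an order-of-summation swap followed by recognition of the first-success PMF. The only subtlety is the role of the hypothesis $p_r=1$: it enters twice, once implicitly to guarantee that $T$ is a.s.\ finite and once explicitly to close $\sum_j q_j$ to exactly $1$. Without this hypothesis, the identity would degrade to $\sum_i q_i \sum_{j\le i} p_j = \pr{T\le r}$, which is the correct general form but fails to equal $1$.
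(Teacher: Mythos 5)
Your proof is correct, and it takes a genuinely different and cleaner route than the paper's. The paper argues by induction on $r$, proving for arbitrary $p_r\in[0,1]$ the strengthened closed form $\sum_{i=1}^r q_i\bigl(\sum_{j\le i}p_j\bigr) = 1 - \bigl(\prod_{i\le r}(1-p_i)\bigr)\bigl(1+\sum_{i\le r}p_i\bigr)$ and then specializes to $p_r=1$. You instead interpret the left-hand side as $\ex{\sum_{j\le T}p_j}$, swap the order of summation, and use the survival-function identity $\pr{T\ge j}=\prod_{k<j}(1-p_k)$ to collapse everything to $\sum_j q_j=1$. Your argument is shorter and more transparent; the paper's buys the more general closed form as a by-product (though it never actually uses it elsewhere).

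One inaccuracy in your closing remark: when $p_r<1$, the identity does \emph{not} degrade to $\pr{T\le r}$. The step $\sum_{i=j}^r q_i = \pr{T\ge j}$ is where $p_r=1$ is really used; in general $\sum_{i=j}^r q_i = \pr{j\le T\le r} = \prod_{k<j}(1-p_k) - \prod_{k\le r}(1-p_k)$, and carrying the extra term through the swap yields precisely the paper's $1 - \bigl(\prod_{i}(1-p_i)\bigr)\bigl(1+\sum_i p_i\bigr)$, not $\pr{T\le r}$. A one-line counterexample to the claimed general form: $r=1$, $p_1=1/2$ gives a left-hand side of $1/4$ but $\pr{T\le 1}=1/2$. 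This aside does not affect the correctness of your proof of the lemma as stated, where $p_r=1$ is assumed.
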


\begin{proof} 
	We prove the claim by proving a stronger statement. Namely, for arbitrary $p_r\in [0,1]$, we show that 
	\begin{equation}\label{eq:sumbiases}
	\sum_{i=1}^r q_i \left (\sum_{j\leq i} p_j \right )= 1- \left (\prod_{i\leq r} (1-p_i) \right)\left (1+\sum_{i\leq r} p_i\right )\enspace . 
	\end{equation} 
	Notice that our claim is a special case of \cref{eq:sumbiases} for $p_r=1$. We proceed to prove the equation by induction on $r$. For $r=1$, take arbitrary $p_1\in [0,1]$ and notice that $q_1 p_1=1-(1-p_1)(1+p_1)$. Next, assume that \cref{eq:sumbiases} is true and let $p_{r+1}\in[0,1]$. The calculation below concludes the proof. 
	\begin{align*}
	\sum_{i=1}^{r+1} q_i \left (\sum_{j\leq i} p_j \right )&= 1- \left (\prod_{i\leq r} (1-p_i) \right)\left (1+\sum_{i\leq r} p_i\right ) + p_{r+1} \left (\prod_{i\leq r} (1-p_i) \right)\left (p_{r+1}+\sum_{i\leq r} p_i\right ) \\
	&=1- \left (\prod_{i\leq r} (1-p_i) \right)\left (1-p_{r+1}^2+ (1-p_{r+1})\sum_{i\leq r} p_i\right )\\
	&=1- \left (\prod_{i\leq r+1} (1-p_i) \right)\left (1+ \sum_{i\leq r+1} p_i\right ),
	\end{align*}
	where the last transition follows using $1-p_{r+1}^2 = (1-p_{r+1})(1+p_{r+1})$.
\end{proof}

\begin{lemma}\label{fact:basic4}  
		Consider two iterative sequences, each of $r$ independent Bernoulli trials.  Let $p_i,p'_i\in [0,1]$ denote the success probability of the \ith trial of the first and second sequence, respectively. Assume that $p_r=p'_r=1$. 
		Let $\eps$ be such that  for all $i\in[r]$, it holds that $\frac{p_i}{p'_i},\frac{p'_i}{p_i}, \frac{(1-p'_i)}{(1-p_i)},\frac{(1-p_i)}{(1-p'_i)}\in (1\pm \eps)$. Then, for every $i\in [r]$,
	\begin{equation}\label{eq:active}
	\size{\prod_{j\le i} (1-p_j') - \prod_{j\le i} (1-p_j)}\leq 3\eps\left (\prod_{j\le i} (1-\min(p_j,p_j')) \right )\left (\sum_{j\le i} \min(p_j,p_j') \right ) \enspace .
	\end{equation}
	\Enote{I am guessing that you want to go over $j\le i$ and not $j<i$.}
\end{lemma}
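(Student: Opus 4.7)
}

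The plan is to apply the standard telescoping identity
\begin{align*}
\prod_{j\le i}(1-p_j') - \prod_{j\le i}(1-p_j) = \sum_{k\le i} \Big(\prod_{j<k}(1-p_j')\Big)\cdot \big((1-p_k') - (1-p_k)\big)\cdot \Big(\prod_{k<j\le i}(1-p_j)\Big),
\end{align*}
so that the task reduces to controlling each summand. For brevity write $m_j = \min(p_j,p_j')$ and $M_j = \max(p_j,p_j')$. Since $p_j, p_j' \ge m_j$, we immediately have $(1-p_j'), (1-p_j) \le 1-m_j$, so each of the two products on the outside is at most $\prod_{j\le i,\, j\neq k}(1-m_j)$. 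The only place where anything delicate happens is in bounding $|p_k - p_k'|$.

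First I would observe that the four-way closeness hypothesis yields \emph{two} useful bounds on the gap: from $p_k/p_k' \in (1\pm\eps)$ we get $|p_k-p_k'|\le \eps\, m_k$, and from $(1-p_k')/(1-p_k) \in (1\pm\eps)$ together with its reciprocal we get $|p_k-p_k'|\le \eps\,(1-M_k) \le \eps\,(1-m_k)$. Combining the two and absorbing the outer products gives
\begin{align*}
\Big|\prod_{j\le i}(1-p_j') - \prod_{j\le i}(1-p_j)\Big| \;\le\; \eps\sum_{k\le i}\prod_{j\le i,\,j\neq k}(1-m_j)\cdot \min\!\big(m_k,\,1-m_k\big),
\end{align*}
where terms with $m_k=1$ can be dropped since then $p_k=p_k'=1$ makes the original summand vanish. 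Pulling the full product $\prod_{j\le i}(1-m_j)$ outside, the $k$-th term becomes $\dfrac{\min(m_k, 1-m_k)}{1-m_k}$.

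The last step is the elementary inequality $\dfrac{\min(m_k,\,1-m_k)}{1-m_k} \le 2\, m_k$, which I would verify by splitting on $m_k \le 1/2$ versus $m_k > 1/2$: in the first case the ratio equals $m_k/(1-m_k) \le 2m_k$, and in the second case it equals $1\le 2m_k$. Substituting yields
\begin{align*}
\Big|\prod_{j\le i}(1-p_j') - \prod_{j\le i}(1-p_j)\Big| \;\le\; 2\eps\,\Big(\prod_{j\le i}(1-m_j)\Big)\Big(\sum_{k\le i} m_k\Big),
\end{align*}
which is stronger than the claimed $3\eps$ bound. I do not expect a serious obstacle: the only subtlety is handling the degenerate case where $m_k=1$ (where $1-m_k$ in the denominator would be $0$), but this is harmless since the corresponding term in the telescoping sum is exactly zero.
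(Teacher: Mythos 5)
Your proof is correct, and it takes a genuinely different route from the paper's. The paper argues by induction on $i$: it first establishes the multiplicative approximations $1-p_i \in (1\pm 3\eps p_i)(1-p_i')$ and $1-p_i' \in (1\pm 3\eps p_i')(1-p_i)$, then propagates the bound from $i$ to $i+1$ via the triangle inequality. You instead unroll the product difference with a telescoping identity and bound each summand directly. The key extra observation in your version is that the four-way closeness hypothesis gives the two bounds $|p_k-p_k'|\le \eps\,m_k$ and $|p_k-p_k'|\le \eps\,(1-M_k)$, whose minimum combined with the elementary inequality $\min(m_k,1-m_k)/(1-m_k)\le 2m_k$ yields the result with the sharper constant $2\eps$ rather than $3\eps$. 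Your handling of the degenerate $m_k=1$ case is fine; it could be stated slightly more cleanly by noting that if $m_{j^*}=1$ for any $j^*\le i$ then both sides of the claimed inequality are identically zero (since then $1-p_{j^*}=1-p_{j^*}'=0$), so one may assume $m_j<1$ throughout.
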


\begin{proof}
	 First, observe that $1-p_i  \in (1\pm 3\eps\cdot  p_i)(1-p_i')$ and $1-p_i'  \in (1\pm 3\eps\cdot  p_i')(1-p_i)$, for every $i\in [r]$. We hint on how to verify the former (the latter is symmetric). If $p_i\ge 1/3$ or if $p'_i\le 2/3$, then verifying $1-p_i  \in (1\pm 3\eps\cdot  p_i)(1-p_i')$ is easy. Otherwise, if $p_i< 1/3$ and $p'_i > 2/3$, then $\eps p_i >1/3$, and hence $3\eps p_i > 1$. Thus, $(1\pm 3\eps\cdot  p_i)(1-p_i')>1> 1-p_i$.  
	
	We prove \cref{eq:active} by induction on $i$. For every $j\in [i]$, let $\wt{p}_j=\min(p_j,p_j')$. For the base case, $\size{(1-p_1)-(1+p_1')}\leq 2\eps \wt{p}_1(1-\wt{p}_1)$. Next, assume that \cref{eq:active} is true up to some $i\in [r]$. Without loss of generality, further assume that $\wt{p}_{i+1}=p_{i+1}$ and let $u\in [0,1]$ such that $1-p_{i+1}=(1+3u \eps p_{i+1})(1-p'_{i+1})$. For the induction step, compute 
	\begin{align*}
	\size{ \prod_{j\le i+1} (1-p_j') -  \prod_{j\le i+1} (1-p_j)}&\leq (1-p_{i+1}')  \size{ \prod_{j\le i} (1-p_j')- (1+3\eps u p_{i+1})\prod_{j\le i} (1-p_j)}\\
	&\le (1-p'_{i+1})\size{\prod_{j\le i} (1-p_j')-\prod_{j\le i} (1-p_j)}+\size{3u\eps p_{i+1}(1-p'_{i+1})\prod_{j\le i} (1-p_j) } \\
	&\le (1-\wt{p}_{i+1})3\eps\left (\prod_{j\le i} (1-\wt{p}_j) \right )\left (\sum_{j\le i} \wt{p}_j \right )+3\eps p_{i+1}\prod_{j\le i+1} (1-\wt{p}_j) \\
	&= 3\eps \left (\prod_{j\leq i+1} (1-\wt{p}_j)  \right ) \left (\wt{p}_{i+1} +\sum_{j\le i} \wt{p}_j \right )\enspace .
	\end{align*} The second inequality is by the triangle inequality. The third inequality follows by the induction hypothesis and the fact that for every $j\in[i+1]$ it holds that $1-p_j, 1-p'_j \leq 1-\wt{p}_j$.  The last transition is true by the assumption that $\wt{p}_{i+1} ={p}_{i+1}$.
\end{proof}

\begin{lemma}\label{fact:basic5} Consider two iterative sequences, each of $r$ independent Bernoulli trials.  Let $p_i,p'_i\in [0,1]$ denote the success probability of the \ith trial of the first and second sequence, respectively. Assume that $p_r=p'_r=1$. For $i\in[r]$, let $q_i=  p_i\cdot \prod_{j<i}(1-p_j)$ and $q'_i=  p'_i\cdot \prod_{j<i}(1-p'_j)$. 
		Let $\eps$ be such that  for all $i\in[r]$, it holds that $\frac{p_i}{p'_i},\frac{p'_i}{p_i}, \frac{(1-p'_i)}{(1-p_i)},\frac{(1-p_i)}{(1-p'_i)}\in (1\pm \eps)$. 
		Then, for every $i\in [r]$, it holds that  
	\begin{equation}
	\size{q_i-q_i'}\leq 3\eps\cdot\min(p_i,p_i')\cdot \left (\prod_{j<i} (1-\min(p_j,p_j')) \right )\left (\frac{1}{3}+\sum_{j\leq i} \min(p_j,p_j') \right )\enspace .
	\end{equation}  
\end{lemma}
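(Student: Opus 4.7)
The plan is to prove \cref{fact:basic5} by a telescoping split of the difference $q_i-q_i'$ and then invoking the preceding \cref{fact:basic4} for the tail product and the ratio hypothesis for the leading factor. Writing $A=\prod_{j<i}(1-p_j)$ and $A'=\prod_{j<i}(1-p_j')$, note that one has the two identities
\[
p_i A - p_i' A' \;=\; p_i(A-A') + (p_i-p_i')A' \;=\; p_i'(A-A') + (p_i-p_i')A.
\]
By choosing the first identity when $p_i\le p_i'$ and the second when $p_i\ge p_i'$, the leading coefficient in front of $|A-A'|$ becomes exactly $\wt{p}_i := \min(p_i,p_i')$, while the coefficient in front of $|p_i-p_i'|$ becomes either $A$ or $A'$, which is in any case upper-bounded by $\max(A,A')$. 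Hence
\[
|q_i-q_i'|\;\le\;\wt{p}_i\,|A-A'|\;+\;|p_i-p_i'|\cdot\max(A,A').
\]

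Next, the hypothesis $p_i/p_i',p_i'/p_i\in 1\pm\eps$ gives $|p_i-p_i'|\le\eps\,\wt{p}_i$. Since $1-p_j\le 1-\wt{p}_j$ and $1-p_j'\le 1-\wt{p}_j$ for every $j<i$, both $A$ and $A'$ are dominated by $\prod_{j<i}(1-\wt{p}_j)$, and \cref{fact:basic4} (applied to the index range $j<i$ rather than $j\le i$) yields
\[
|A-A'|\;\le\;3\eps\Bigl(\prod_{j<i}(1-\wt{p}_j)\Bigr)\Bigl(\sum_{j<i}\wt{p}_j\Bigr).
\]

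Substituting both bounds gives
\[
|q_i-q_i'|\;\le\;\eps\,\wt{p}_i\prod_{j<i}(1-\wt{p}_j)\Bigl[1+3\sum_{j<i}\wt{p}_j\Bigr]\;=\;3\eps\,\wt{p}_i\prod_{j<i}(1-\wt{p}_j)\Bigl[\tfrac{1}{3}+\sum_{j<i}\wt{p}_j\Bigr],
\]
and enlarging the sum to $j\le i$ yields the statement of the lemma. No genuine obstacle is expected: the only subtle point, and the reason for the asymmetric split above, is that the naive splitting $(p_i-p_i')A + p_i'(A-A')$ picks up a factor of $p_i'$ rather than $\wt{p}_i$ in front of $|A-A'|$, which would cost a spurious $(1+\eps)$ and require an extra $\wt{p}_i$ term in the bracket. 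Choosing the identity adaptively based on $\mathrm{sign}(p_i-p_i')$ avoids this loss and produces exactly the constant $3$ stated in the lemma.
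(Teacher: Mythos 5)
Your proof is correct and follows essentially the same route as the paper's: the paper assumes WLOG $\wt{p}_i=p_i$ and writes $p_i'=(1+u\eps)p_i$, which is exactly your ``adaptive'' choice of identity, and both then invoke \cref{fact:basic4} together with the bounds $|p_i-p_i'|\le\eps\wt{p}_i$ and $A,A'\le\prod_{j<i}(1-\wt{p}_j)$ to assemble the stated inequality. The observation about avoiding the symmetric split that would carry a factor $\max(p_i,p_i')$ is a nice explanation of why the WLOG/adaptive choice matters, but it does not constitute a different argument.
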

\begin{proof}
For every $j\in [i]$, let $\wt{p}_j=\min(p_j,p_j')$.  Without loss of generality, assume that $\wt{p}_{i}=p_{i}$ and let $u\in [0,1]$ such that $p'_{i}=(1+u\eps) p_{i}$ (there exists such $u$ since $p_i'\in p_i(1\pm \eps)$).  
	\begin{align*}
	\size{ p_i\prod_{j<i} (1-p_j) -  p'_i\prod_{j<i} (1-p'_j)}&\le p_{i}  \size{ \prod_{j<i} (1-p_j) - \prod_{j< i} (1-p'_j)}+\size{\eps u p_i\prod_{j< i} (1-p'_j)}\\
	&\le 3\eps\wt{p}_{i}\left (\prod_{j< i} (1-\wt{p}_j) \right )\left (\sum_{j< i} \wt{p}_j \right )+\eps \wt{p}_{i}\prod_{j< i} (1-\wt{p}_j)  \\
	&\le 3\eps \wt{p}_i\left (\prod_{j< i} (1-\wt{p}_j)  \right ) \left ( \frac{1}{3}+\sum_{j< i} \wt{p}_j     \right )\enspace .
	\end{align*} The first inequality is by the triangle inequality. The second inequality follows by \cref{fact:basic4} and the fact that for every $j\in[i]$ it holds that $1-p'_j \leq 1-\wt{p}_j$.  
\end{proof}

\begin{lemma}[Restating \cref{fact:basic2}]\label{fact:basic2-appendix}
	Consider two iterative sequences, each of $r$ independent Bernoulli trials.  Let $p_i,p'_i\in [0,1]$ denote the success probability of the \ith trial of the first and second sequence, respectively. Assume that $p_r=p'_r=1$. For $i\in[r]$, let $q_i=  p_i\cdot \prod_{j<i}(1-p_j)$ and $q'_i=  p'_i\cdot \prod_{j<i}(1-p'_j)$. 
		Let $\eps$ be such that  for all $i\in[r]$, it holds that $\frac{p_i}{p'_i},\frac{p'_i}{p_i}, \frac{(1-p'_i)}{(1-p_i)},\frac{(1-p_i)}{(1-p'_i)}\in (1\pm \eps)$. Then, $\sum_{i=1}^{r-1} \size{q_i -q_i'} \le 4\eps(1 - q_r)$.
\end{lemma}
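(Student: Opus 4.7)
The plan is to combine the pointwise bound from Lemma~\ref{fact:basic5} with the telescoping identity of Lemma~\ref{fact:basic3}, applied to the \emph{minimum} sequence $\widetilde{p}_j \eqdef \min(p_j, p_j')$. Define $\widetilde{q}_i \eqdef \widetilde{p}_i \prod_{j<i}(1-\widetilde{p}_j)$. By Lemma~\ref{fact:basic5} applied termwise,
\[
\sum_{i=1}^{r-1} |q_i - q_i'| \;\le\; 3\eps \sum_{i=1}^{r-1} \widetilde{q}_i \left(\tfrac{1}{3} + \sum_{j<i} \widetilde{p}_j\right) \;=\; \eps \sum_{i=1}^{r-1} \widetilde{q}_i \;+\; 3\eps \sum_{i=1}^{r-1} \widetilde{q}_i \sum_{j<i}\widetilde{p}_j.
\]

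To bound the second term, I would apply Lemma~\ref{fact:basic3} to the auxiliary sequence $(\widetilde{p}_1,\ldots,\widetilde{p}_{r-1}, 1)$ (which has last probability $1$, satisfying the hypothesis of that lemma, and whose associated ``first-success'' probabilities agree with $\widetilde{q}_i$ for $i<r$ and equal $\widetilde{q}_r = \prod_{j<r}(1-\widetilde{p}_j)$ at index $r$). This yields
\[
\sum_{i=1}^{r-1} \widetilde{q}_i \Bigl(\sum_{j\le i}\widetilde{p}_j\Bigr) \;+\; \widetilde{q}_r\Bigl(1 + \sum_{j<r}\widetilde{p}_j\Bigr) \;=\; 1,
\]
and dropping the nonnegative $\widetilde{q}_r \sum_{j<r}\widetilde{p}_j$ term and replacing $\sum_{j\le i}$ by the smaller $\sum_{j<i}$ gives $\sum_{i=1}^{r-1}\widetilde{q}_i \sum_{j<i}\widetilde{p}_j \le 1-\widetilde{q}_r$. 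Similarly, $\sum_{i=1}^{r-1}\widetilde{q}_i = 1 - \widetilde{q}_r$. Plugging in,
\[
\sum_{i=1}^{r-1}|q_i - q_i'| \;\le\; \eps(1-\widetilde{q}_r) + 3\eps(1-\widetilde{q}_r) \;=\; 4\eps(1-\widetilde{q}_r).
\]

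Finally, since $\widetilde{p}_j \le p_j$ for every $j$, we have $\widetilde{q}_r = \prod_{j<r}(1-\widetilde{p}_j) \ge \prod_{j<r}(1-p_j) = q_r$, so $1-\widetilde{q}_r \le 1-q_r$, which gives the desired inequality $\sum_{i=1}^{r-1}|q_i - q_i'| \le 4\eps(1-q_r)$. The whole argument is essentially bookkeeping; the one mildly subtle step is noticing that Lemma~\ref{fact:basic3} must be invoked on the \emph{padded} sequence $(\widetilde{p}_1,\ldots,\widetilde{p}_{r-1}, 1)$ rather than on $(\widetilde{p}_1,\ldots,\widetilde{p}_r)$ directly (since we have no guarantee that $\widetilde{p}_r = 1$), and that doing so introduces the factor $\widetilde{q}_r$ precisely where we need the slack $(1-q_r)$.
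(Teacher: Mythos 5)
Your argument follows the paper's proof essentially verbatim: pass to the minimum sequence $\widetilde p_j=\min(p_j,p'_j)$, apply \cref{fact:basic5} termwise, and control the resulting weighted sum via \cref{fact:basic3}. Two small corrections are in order.

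First, when you invoke \cref{fact:basic5} you wrote $\frac13+\sum_{j<i}\widetilde p_j$, but the lemma's bound has $\frac13+\sum_{j\le i}\widetilde p_j$; replacing $\sum_{j\le i}$ by the smaller $\sum_{j<i}$ at that stage \emph{weakens} the upper bound and is not justified. Fortunately it also costs nothing: the identity from \cref{fact:basic3} gives $\sum_{i=1}^{r-1}\widetilde q_i\bigl(\sum_{j\le i}\widetilde p_j\bigr)\le 1-\widetilde q_r$ directly (after discarding the nonnegative term $\widetilde q_r\sum_{j<r}\widetilde p_j$), so the very same bound $1-\widetilde q_r$ controls the sum with $\sum_{j\le i}$ and there is no need to shave off the diagonal term. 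If you keep $\sum_{j\le i}$ throughout, the chain $\eps(1-\widetilde q_r)+3\eps(1-\widetilde q_r)=4\eps(1-\widetilde q_r)\le 4\eps(1-q_r)$ goes through exactly as you intend.

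Second, the ``mildly subtle'' padding step is not needed. The hypothesis of the lemma is $p_r=p'_r=1$, so $\widetilde p_r=\min(p_r,p'_r)=1$ automatically and the sequence $(\widetilde p_1,\ldots,\widetilde p_r)$ already satisfies the last-trial condition of \cref{fact:basic3}; your ``padded'' sequence is literally the same sequence. This is how the paper's proof proceeds, invoking \cref{fact:basic3} on $\widetilde p$ without modification. The remark is harmless but reflects an oversight about what the hypothesis already grants you.
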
 

\begin{proof}
For every $j\in [r]$, let $\wt{p}_j=\min(p_j,p_j')$, and for every $i\in[r]$ let $\wt{q}_i=  \wt{p}_i\cdot \prod_{j<i}(1-\wt{p}_j)$. Since the $\wt{p}_j$s define an iterative sequence of Bernoulli trials, from \cref{fact:basic5} and \cref{fact:basic3} it follows that,
 \begin{align}
\sum_{i=1}^{r-1} \size{q_i -q_i'} &\le \sum_{i=1}^{r-1}  3\eps\cdot\wt{p}_j\cdot \left (\prod_{j<i} (1-\wt{p}_j) \right )\left (\frac{1}{3}+\sum_{j\leq i} \wt{p}_j \right )\\
&\le 3\eps\cdot\sum_{i=1}^{r-1} \wt{q}_j\left (\frac{1}{3}+\sum_{j\leq i} \wt{p}_j \right )\\
&= \eps \left(\sum_{i=1}^{r-1} \wt{q}_j\right) + 3\eps\cdot\left(\sum_{i=1}^{r} \wt{q}_j\left (\sum_{j\leq i} \wt{p}_j \right )\right ) - 3\eps\cdot\wt{q}_r\left (\sum_{j\leq r} \wt{p}_j \right )\\
&\le 4\eps -4\eps\wt{q}_r\le 4\eps(1-q_r). 
\end{align}
The second to last inequality uses the fact that $\wt{p}_r = p_r=p_r'=1$. The last inequality follows since $1-\wt{q}_r\leq 1-q_r$.	
\end{proof}


\begin{proof}[Proof of \cref{lemma:tower}]
	Straightforward computation. Fix $b\in \supp(B)$
	\begin{align*}
	\ex{\ex{A\mid B, C}\mid B=b}&=\sum_{c}  \ex{A\mid B=b, C=c} \pr{C=c\mid B=b}\\ 
	&=\sum_{c} \sum_{a} a\cdot \pr{A=a\mid B=b, C=c}  \pr{C=c\mid B=b}\\ 
	&=\sum_{a }a\cdot\sum_{c}  \pr{A=a\land  C=c \mid B=b }   \\ 
	&=\sum_{a}   a\cdot \pr{A=a  \mid B=b }   \\
	&=\ex{A \mid B=b}
	\end{align*}
\end{proof}

\begin{proof}[Proof of \cref{lemma:z:4}]
	Immediate consequence of the fact that $\ex{A\mid \ex{A\mid B}, f(B)}=\ex{\ex{A\mid B}\mid \ex{A\mid B}, f(B)}$, since $B$ fully determines $\ex{A\mid B}$ and $f(B)$.
\end{proof}

\begin{proof}[Proof of \cref{lemma:z:3}.]
	Let $B'= \rnd{\delta}(B)$ and fix $b'\in \supp(B)\subseteq \R$ and $c\in \supp(C)$. 
	\begin{align*}
	\ex{A\mid B'=b' \land C=c }& =\sum_a a\cdot \pr{A=a\mid B'=b'\land C=c}\\
	&= \sum_a a \cdot \sum_{b\in [b', b'+\delta]}\pr{A=a\land B=b\mid B'=b'\land C=c}\\
	& =  \sum_a a \cdot \frac{1}{\pr{B'=b'\mid C=c}}\sum_{b\in [b', b'+\delta]}\pr{A=a\land B=b\mid   C=c} \\
	&= \frac{1}{\pr{B'=b'\mid C=c}} \sum_{b\in [b', b'+\delta]}\pr{B=b\mid C=c}\cdot \sum_a a \cdot \pr{A=a\mid  B=b \land C=c}\\
	\\
	&= \frac{1}{\pr{B'=b'\mid C=c}} \sum_{b\in [b', b'+\delta]}\pr{B=b\mid C=c}\cdot   \ex{A \mid  B=b , C=c}\\
	&= \frac{1}{\pr{B'=b'\mid C=c}}\sum_{b\in [b', b'+\delta]}b\cdot \pr{B=b\mid C=c} \\
	& \in [b', b'+\delta] 
	\end{align*}
\end{proof}







 \end{document}